\def\dOi{12(3:12)2016}
 \subjclass{[{\bf Theory of computation}]: Computation by abstract devices ---Complexity Measures and Classes; [{\bf Theory of computation}]: Mathematical logic and formal languages---Mathematical logic}
\newtheorem{cLa}{Claim}
\newcommand{\amp}{_{\it amplitude}}
\newcommand{\spa}{_{\it space}}
\newcommand{\tim}{_{\it time}}
\newcommand{\cltw}{\mbox{\bf CL12}}
\newcommand{\areleven}{\mbox{\bf CLA11}}
\newcommand{\thr}{\areleven_{\mathcal A}^{\mathcal R}}
 \newcommand{\blank}{\mbox{{\scriptsize {\sc Blank}}}}
 \newcommand{\transition}{{\sc Update Sketch}}
 \newcommand{\call}{{\sc Fetch Symbol}}
 \newcommand{\main}{{\sc Make History}}
 \newcommand{\simm}{\mbox{\sc Sim}}
 \newcommand{\mainn}{\mbox{\sc Main}}
 \newcommand{\loopp}{\mbox{\sc Loop}}
 \newcommand{\routine}{\mbox{\sc Routine}}
 \newcommand{\repeatt}{\mbox{\sc Repeat}}
\newcommand{\restart}{\mbox{\sc Restart}}
 \newcommand{\bit}{\mbox{\it Bit}}
 \newcommand{\even}{\mbox{\scriptsize {\it even}}}
 \newcommand{\odd}{\mbox{\scriptsize {\it odd}}}
\newcommand{\intimpl}{\mbox{\hspace{2pt}$\circ$\hspace{-0.14cm} \raisebox{-0.043cm}{\Large --}\hspace{2pt}}}
\newcommand{\successor}{\mbox{\hspace{1pt}\boldmath $'$}}
\newcommand{\oo}{\bot}            
\newcommand{\pp}{\top}            
\newcommand{\xx}{\wp}
\newcommand{\seq}[1]{\langle #1 \rangle}
\newcommand{\gneg}{\mbox{\small $\neg$}}                  
\newcommand{\mli}{\hspace{2pt}\mbox{\small $\rightarrow$}\hspace{2pt}}                      
\newcommand{\cla}{\mbox{$\forall$}}      
\newcommand{\cle}{\mbox{$\exists$}}        
\newcommand{\mld}{\hspace{2pt}\mbox{\small $\vee$}\hspace{2pt}}     
\newcommand{\mlc}{\hspace{2pt}\mbox{\small $\wedge$}\hspace{2pt}}   
\newcommand{\ade}{\mbox{\large $\sqcup$}}      
\newcommand{\ada}{\mbox{\large $\sqcap$}}      
\newcommand{\add}{\hspace{2pt}\mbox{\small $\sqcup$}\hspace{2pt}}                     
\newcommand{\tlg}{\bot}               
\newcommand{\twg}{\top}               
\theoremstyle{italic}
\theoremstyle{italic}
\theoremstyle{plain}
\theoremstyle{plain}
\theoremstyle{plain}
\theoremstyle{plain}
\newenvironment{idea}{{\em Proof idea.} }{\  \rule{1.5mm}{1.5mm} \vspace{.15in} }
\theoremstyle{plain} 
\begin{document}

\title[Build your own clarithmetic II]{Build your own clarithmetic II: Soundness}

\author[G.~Japaridze]{Giorgi Japaridze}
\address{Department of Computing Sciences, Villanova University, 800 Lancaster Avenue, Villanova, PA 19085, USA}	
\urladdr{http://www.csc.villanova.edu/$\sim$japaridz/}  
\email{giorgi.japaridze@villanova.edu}  

\keywords{Computability logic; Interactive computation; Implicit computational complexity;  Game semantics;  Peano arithmetic; Bounded arithmetic}

\begin{abstract}
  \noindent {\em Clarithmetics} are number theories based on {\em computability logic}. Formulas of these 
theories represent interactive computational problems, and their ``truth'' is understood as existence of an algorithmic
solution. Various complexity constraints on such solutions induce various versions of clarithmetic. The
present paper introduces a parameterized/schematic version $\mbox{\bf CLA11}_{P_4}^{P_1,P_2,P_3}$. By  tuning the three parameters $P_1,P_2,P_3$ in an essentially mechanical manner, one automatically obtains sound and complete theories with respect to a wide range of target {\em tricomplexity}
classes, i.e., combinations of time (set by $P_3$), space (set by $P_2$) and so called amplitude (set by $P_1$) complexities. Sound in the sense that
every theorem $T$ of the system represents an interactive number-theoretic computational problem with
a solution from the given tricomplexity class and, furthermore, such a solution can be automatically extracted from a
proof of $T$. And complete in the sense that every interactive number-theoretic problem with a solution
from the given tricomplexity class is represented by some theorem of the system. Furthermore, through tuning the 4th parameter $P_4$, at the cost of sacrificing recursive axiomatizability but not simplicity or elegance, 
the above {\em extensional
completeness} can be strengthened to {\em intensional completeness}, according to which every formula 
representing a problem with a solution from the given tricomplexity class is a theorem of the system. This article
is published in two parts. The previous Part I has introduced the system and proved its completeness, while the
present Part II is devoted to proving soundness. 
\end{abstract}

\maketitle

\tableofcontents

\section{Getting started} 
Being a continuation of \cite{AAAI}, this article fully relies on the terminology and notation introduced in its predecessor, with which --- or, at least, with the first two sections of which --- the reader is assumed to be already familiar. 
Just like \cite{AAAI}, this article further relies on \cite{cl12}, and  familiarity with that self-contained, tutorial-style paper (with proofs omitted) is another prerequisite for reading the present piece of writing.

The  sole purpose if the present article is to prove the soundness of the system $\areleven$ introduced in \cite{AAAI}. Specifically, the goal is to prove clause 3 of Theorem 2.6 of \cite{AAAI}, which, slightly paraphrased,  reads:

\begin{quote}{\em If a theory $\thr$ is  regular, then  there is an effective procedure that takes an arbitrary extended $\areleven^{\mathcal R}_{{\mathcal A}!}$-proof of an arbitrary sentence $X$ and constructs an    
${\mathcal R}$ tricomplexity  solution for $X$. 
}\end{quote}

\noindent Our  soundness proof is written so that it can be read independently of the completeness proof given in 
\cite{AAAI}. 

 Let us get started right now. 
Assuming that a theory $\thr$ is regular, the above-displayed statement  can be verified  by induction on the number of steps in an extended $\areleven^{{\mathcal R}}_{A!}$-proof   of $X$. The basis of this induction is a rather straightforward observation that all axioms have ${\mathcal R}$ tricomplexity  solutions. Namely,  in the case of Peano axioms  
such a ``solution'' is simply a machine that does nothing. All axioms from $\mathcal A$ have ${\mathcal R}$ tricomplexity  solutions by condition 1 of Definition 2.5 of \cite{AAAI}; furthermore, according to the same condition, such solutions can be effectively obtained even when $\mathcal A$ is infinite. Finally, the Successor, Log and Bit axioms  can be easily seen to have linear  amplitude, logarithmic space and polynomial (in fact, linear) time solutions and, in view of conditions 2 and 3  of Definition 2.2 of \cite{AAAI}, such solutions are automatically also ${\mathcal R}$ tricomplexity solutions.  
As for the inductive step, it will be taken care of  by the later-proven Theorems \ref{fe}, \ref{feb9ee} and \ref{feb9eei},  according to which   the rules of   Logical Consequence, $\mathcal R$-Comprehension and $\mathcal R$-Induction preserve --- in a constructive sense --- the property of having an ${\mathcal R}$ tricomplexity solution.

\section{Soundness of Logical Consequence}\label{sfe}

As we remember from \cite{cl12}, CoL understands algorithmic  strategies as interactive Turing machines called HPMs (Hard-Play Machines).\label{xHPM} 

\begin{thm}\label{fe} 
Consider any regular boundclass triple $\mathcal R$. 
There is an ($\mathcal R$-independent) effective procedure\footnote{Here and later in the similar Theorems \ref{feb9ee} and \ref{feb9eei},  as one can easily guess, $\mathcal R$-independence of a procedure  means that the procedure is the same regardless of what particular value $\mathcal R$ assumes.}
that takes an arbitrary $\cltw$-proof \ $\mathbb{P}$ of an arbitrary $\mathbb{L}$-sequent {\em $E_1,\ldots,E_n$ $ \intimpl F$}, 
arbitrary HPMs ${\mathcal N}_1,\ldots,{\mathcal N}_n$ and constructs an HPM $\mathcal M$ such that,  if ${\mathcal N}_1,\ldots,{\mathcal N}_n$ are $\mathcal R$ tricomplexity solutions of $E_1,\ldots,E_n$, respectively, then $\mathcal M$ is an $\mathcal R$  tricomplexity solution of $F$. 
\end{thm}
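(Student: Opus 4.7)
The plan is to combine two ingredients: the (effective) soundness of $\cltw$ proved in \cite{cl12}, which turns the proof $\mathbb{P}$ into an interactive strategy, and the hypothesized HPMs ${\mathcal N}_1,\ldots,{\mathcal N}_n$, which solve the antecedents. First I would feed $\mathbb{P}$ to the $\cltw$-soundness procedure, obtaining an HPM $\mathcal L$ that wins the sequent $E_1,\ldots,E_n\intimpl F$ in the usual game-semantical sense: $\mathcal L$ plays the role of ``user'' in each $E_i$-component and the role of ``author'' in the $F$-component, so its $F$-play is guaranteed to be winning provided its adversary plays each $E_i$ truthfully.

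I would then define $\mathcal M$ as a simulator that runs $\mathcal L$ together with ${\mathcal N}_1,\ldots,{\mathcal N}_n$ in parallel. Concretely, $\mathcal M$ maintains simulated run-records for $\mathcal L$ and for each ${\mathcal N}_i$. When $\mathcal L$ issues a move in the $F$-component, $\mathcal M$ replays it as its own move in $F$; when $\mathcal L$ issues a move in an antecedent $E_i$ (which from $\mathcal L$'s local perspective is an environment move in $E_i$), $\mathcal M$ hands it to ${\mathcal N}_i$ as its environment's move, and forwards ${\mathcal N}_i$'s response back to $\mathcal L$ as an adversary move in the $E_i$-component. Real environment moves made against $\mathcal M$ in $F$ are similarly forwarded to $\mathcal L$. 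Since each ${\mathcal N}_i$ wins $E_i$, the streams feeding $\mathcal L$ in the antecedents are legal and winning for $\mathcal L$'s simulated adversary, whence the soundness of $\cltw$ guarantees that $\mathcal L$'s $F$-play is winning in $F$; so $\mathcal M$ wins $F$.

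The substantive content is verifying that $\mathcal M$ stays within the $\mathcal R$ tricomplexity bound. Here the regularity of $\mathcal R$ is essential: it is precisely the closure properties built into the definition of ``regular'' (together with conditions 2 and 3 of Definition 2.2 of \cite{AAAI}) that let the complexities of $\mathcal L$ and the ${\mathcal N}_i$ compose into complexities still within $\mathcal R$. For amplitude, every move $\mathcal M$ makes in $F$ is either generated by $\mathcal L$ from a bounded local alphabet or relayed from some ${\mathcal N}_i$, and is therefore controlled by the amplitudes of the ${\mathcal N}_i$ up to the absorbing constants permitted by $\mathcal R$. For time, one $\mathcal M$-step corresponds to a burst of $\mathcal L$- and ${\mathcal N}_i$-steps whose total length is polynomially controlled by the size of the current run record, which in turn is bounded using the antecedent machines' time bounds. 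For space, $\mathcal M$ stores only the current configurations of $\mathcal L$ and of the ${\mathcal N}_i$ plus a bounded bookkeeping overhead.

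I expect the space accounting to be the main obstacle, for two reasons. First, $\mathcal L$, as extracted from a $\cltw$-proof, may over the course of the play generate an unboundedly growing history of antecedent moves, and $\mathcal M$ must not literally store this history on its tapes; instead, $\mathcal M$ should be designed so that run-records are reconstructed on demand from the current configurations of the simulated machines, using the same rewind/recompute trick that is standard in this line of work. Second, one must verify that the HPM realizing $\mathcal L$ itself can be implemented with space requirements negligible compared to those of the ${\mathcal N}_i$, so that its contribution is dissolved into the regular $\mathcal R$-bound. Once the simulator is described with this level of care, the remaining verifications---legality of the simulated $E_i$-plays, winningness of $\mathcal M$ in $F$, and uniformity/effectiveness of the construction in $\mathbb{P}$ and ${\mathcal N}_1,\ldots,{\mathcal N}_n$---follow directly from $\cltw$-soundness and the hypotheses on the ${\mathcal N}_i$.
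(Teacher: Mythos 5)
Your proposal essentially re-derives, at the sketch level, what the paper simply cites: Theorem 11.1 of \cite{cl12}. That theorem already packages up the whole construction you describe — it takes the $\cltw$-proof $\mathbb{P}$ and the solutions ${\mathcal N}_1,\ldots,{\mathcal N}_n$ and directly produces the combined HPM $\mathcal M$ together with explicit composed complexity bounds: $\mathcal M$ is an $\mathfrak{a}^{\mathfrak{b}}(\ell)$ amplitude, $O\bigl(\mathfrak{s}(\mathfrak{a}^{\mathfrak{b}}(\ell))\bigr)$ space and $O\bigl((\mathfrak{t}(\mathfrak{a}^{\mathfrak{b}}(\ell)))^{\mathfrak{d}}\bigr)$ time solution of $F$, for certain constants $\mathfrak{b},\mathfrak{d}$. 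The paper's proof is therefore just a matter of verifying the four hypotheses of that theorem (which your Steps do cover implicitly via linear closure and closure under syntactic variation) and then using the closure conditions of a \emph{regular} boundclass triple (conditions 2–4 of Definition 2.2 of \cite{AAAI} plus Remark 2.4) to show $\mathfrak{a}^{\mathfrak{b}}(\ell)\preceq{\mathcal R}\amp$, $\mathfrak{s}(\mathfrak{a}^{\mathfrak{b}}(\ell))\preceq{\mathcal R}\spa$ and $(\mathfrak{t}(\mathfrak{a}^{\mathfrak{b}}(\ell)))^{\mathfrak{d}}\preceq{\mathcal R}\tim$.

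Your route can be completed, but as written it has a genuine gap precisely where the cited theorem does its real work: you never exhibit what the composed bounds on $\mathcal M$ actually are, and you never show that those bounds land inside $\mathcal R$. In particular, the remark that $\mathcal L$'s space contribution should be ``negligible compared to those of the ${\mathcal N}_i$'' mischaracterizes the situation: the construction genuinely \emph{does} iterate $\mathfrak{a}$ with itself a constant number of times and raise $\mathfrak{t}$ to a constant power, which is not negligible but a nontrivial composition. It is tamed not by being small but by the closure-under-composition conditions built into regularity, and exhibiting this is the entire content of the theorem. If you intend to re-prove Theorem 11.1 of \cite{cl12} rather than cite it, you need to carry out the recompute-on-demand space accounting in full and then run the boundclass-closure calculation; otherwise, the short route — invoke the existing uniform-procedure theorem and finish with the closure argument — is the one the paper takes and is the one you should take.
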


\begin{proof} Such an effective procedure is nothing but the one whose existence is stated in Theorem 11.1 of \cite{cl12}. Consider
an arbitrary $\cltw$-proof $\mathbb{P}$ of an arbitrary $\mathbb{L}$-sequent $E_1,\ldots,E_n \intimpl F$, and  
arbitrary HPMs ${\mathcal N}_1,\ldots,{\mathcal N}_n$. Let $\mathcal M$\label{xmm1} be the HPM constructed for/from these parameters by the above procedure. 

Assume  ${\mathcal R}$ is a regular boundclass triple, and   ${\mathcal N}_1,\ldots,{\mathcal N}_n$ are ${\mathcal R}$ tricomplexity solutions of $ E_1,\ldots,E_n$, respectively.   All three components of  ${\mathcal R}$ are linearly closed by condition 3 of Definition 2.2 of \cite{AAAI} and, being boundclasses, they are also closed under syntactic variation. This means that, for some common triple 
$\bigl(\mathfrak{a}(x),\mathfrak{s}(x),\mathfrak{t}(x)\bigr)\in{\mathcal R}\amp\times {\mathcal R}\spa\times 
{\mathcal R}\tim$ of unary bounds, all $n$ machines run in tricomplexity $(\mathfrak{a},\mathfrak{s},\mathfrak{t})$. That is, we have:        

\begin{enumerate}[label={\bf(\roman*)}]
\item For each $i\in\{1,\ldots,n\}$, ${\mathcal N}_i$ is an
  $\mathfrak{a}$ amplitude, $\mathfrak{s}$ space and $\mathfrak{t}$
  time solution of $E_i$. 
\end{enumerate}
In view of conditions 2 and  5 of Definition 2.2 of \cite{AAAI}, we may further assume that:
\begin{enumerate}[label={\bf(\roman*)},start=2]
\item For any $x$, $\mathfrak{a}(x)\geq x$.

\item For any $x$, $\mathfrak{s}(x)\geq \log (x)$.

\item For any $x$, $\mathfrak{t}(x)\geq x$ and $\mathfrak{t}(x)\geq\mathfrak{s}(x)$.
\end{enumerate}

\noindent Now, remembering that $E_i$ stands for $E_{i}^{\dagger}$, our condition (i) is the same as condition (i) of Theorem 11.1 of \cite{cl12} with $^\dagger$ in the role of $^*$. Next, taking into account that $0$ is the only constant that may appear in the $\mathbb{L}$-sequent $E_1,\ldots,E_n\intimpl F $ and hence the native magnitude of the latter is $0$, 
our condition (ii) is the same as condition (ii) of Theorem 11.1 of \cite{cl12}. Finally, our conditions (iii) and (iv) are the same as conditions (iii) and (iv) of Theorem 11.1 of \cite{cl12}.

 Then, according to that theorem, there are numbers 
$\mathfrak{b}$ and $\mathfrak{d}$ such that 
$\mathcal M$ is an $\mathfrak{a}^{\mathfrak{b}}(\ell)$ amplitude, 
$O\bigl( \mathfrak{s}(\mathfrak{a}^{\mathfrak{b}}(\ell))\bigr)$ space and 
$O\bigl(  (\mathfrak{t}\bigl(\mathfrak{a}^{\mathfrak{b}}(\ell) ) )^{\mathfrak{d}}\bigr)$ time solution of $F$. 
 But, by  condition 2 (if $\mathfrak{b}=0$) or 4 (if $\mathfrak{b}>0$) of Definition 2.2 of \cite{AAAI}, we have $\mathfrak{a}^{\mathfrak{b}}(\ell)\preceq {\mathcal R}\amp$, meaning that  $\mathcal M$ runs in amplitude ${\mathcal R}\amp$. The  fact $\mathfrak{a}^{\mathfrak{b}}(\ell)\preceq {\mathcal R}\amp$, again by    condition 4 of Definition 2.2 of \cite{cl12}, further implies that  $\mathfrak{s}\bigl(\mathfrak{a}^{\mathfrak{b}}(\ell)\bigr)\preceq {\mathcal R}\spa$ and $ \mathfrak{t} \bigl(\mathfrak{a}^{\mathfrak{b}}(\ell) \bigr) \preceq {\mathcal R}\tim$. The fact $ \mathfrak{t} \bigl(\mathfrak{a}^{\mathfrak{b}}(\ell) \bigr) \preceq {\mathcal R}\tim$,  in turn, by condition 3 of Definition 2.2 of \cite{cl12}, further implies that 
$\bigl(\mathfrak{t} (\mathfrak{a}^{\mathfrak{b}}(\ell) )\bigr)^{\mathfrak{d}}\preceq {\mathcal R}\tim$.  
Now, by Remark 2.4 of  \cite{cl12}, the facts  $\mathfrak{s}\bigl(\mathfrak{a}^{\mathfrak{b}}(\ell)\bigr)\preceq 
{\mathcal R}\spa$ and $\bigl(\mathfrak{t} (\mathfrak{a}^{\mathfrak{b}}(\ell) )\bigr)^{\mathfrak{d}}\preceq 
{\mathcal R}\tim$, together with the earlier observation that $\mathcal M$ runs in $O\bigl( \mathfrak{s}(\mathfrak{a}^{\mathfrak{b}}(\ell))\bigr)$ space and $O\bigl(  (\mathfrak{t}\bigl(\mathfrak{a}^{\mathfrak{b}}(\ell) ) )^{\mathfrak{d}}\bigr)$ time, imply that $\mathcal M$ runs in space ${\mathcal R}\spa$
and time ${\mathcal R}\tim$. 
To summarize, $\mathcal M$ runs in tricomplexity ${\mathcal R}$, as desired. 
\end{proof}

\section{Soundness of Comprehension}\label{sss4}

\begin{thm}\label{feb9ee} 
Consider any regular boundclass triple $\mathcal R$. 
There is an ($\mathcal R$-independent) effective procedure that takes an arbitrary application\footnote{Here and elsewhere in similar contexts, an ``application'' means an ``instance'', i.e., a particular premise-conclusion pair. In the case of $\mathcal R$-Comprehension, it is fully determined by the comprehension formula and the comprehension bound.} of $\mathcal R$-Comprehension, an arbitrary HPM $\mathcal N$ and constructs an HPM $\mathcal M$ such that, if  $\mathcal N$ is an $\mathcal R$ tricomplexity solution of the premise, then $\mathcal M$ is an $\mathcal R$ tricomplexity solution of the conclusion. 
\end{thm}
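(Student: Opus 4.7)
The plan is to construct $\mathcal M$ as a subroutine-style wrapper around $\mathcal N$ driven by the comprehension bound $b$. Concretely, $\mathcal M$ first waits for the environment to make the moves that instantiate any universally quantified input variables of the conclusion, then computes the resulting value of $b$. It then enters a loop indexed by $i = 0, 1, \ldots, b$: at iteration $i$, $\mathcal M$ simulates $\mathcal N$ against a virtual environment that replays the received input moves and additionally selects $i$ for the universally quantified variable ranging over the comprehension bound in the premise; $\mathcal M$ reads off the bit that $\mathcal N$'s simulated response commits to in the premise's $\add$-disjunction (the truth value of the comprehension formula at $i$), and appends that bit to an output buffer written directly on the run tape. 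When the loop terminates, the contents of the buffer constitute $\mathcal M$'s move for the existentially quantified output of the conclusion, coding the characteristic function of the comprehension formula up to $b$.

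The tricomplexity bookkeeping hinges on strict space reuse across iterations. Before starting iteration $i+1$, $\mathcal M$ erases every cell used to simulate $\mathcal N$ during iteration $i$, retaining only the counter $i$, the cached bound $b$, and the (run-tape-resident) output buffer. Hence the work-tape footprint at any time is the single-run footprint of $\mathcal N$, namely $\mathfrak{s}(\mathfrak{a}(\ell))$, plus $O(\log b)$ for the counter. Since $b \preceq \mathcal R\amp$ by the very definition of an $\mathcal R$-Comprehension bound, closure of $\mathcal R\spa$ under composition with $\mathcal R\amp$ (together with the logarithmic lower bound guaranteed by condition 3 of Definition 2.2 of \cite{AAAI}) keeps this in $\mathcal R\spa$. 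The amplitude of $\mathcal M$ is bounded by the maximum of $\mathcal N$'s amplitude and the magnitude of $b$, both $\preceq \mathcal R\amp$. The total running time is $O(b \cdot \mathfrak{t}(\mathfrak{a}(\ell)))$ plus lower-order bookkeeping; since $\mathcal R\amp \preceq \mathcal R\tim$ and $\mathcal R\tim$ is closed under polynomial composition, this remains in $\mathcal R\tim$. As in the proof of Theorem \ref{fe}, Remark 2.4 of \cite{cl12} absorbs the $O(\cdot)$ factors.

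The main obstacle I expect is the game-semantic correctness of the simulation rather than the complexity accounting. One must verify that whenever the environment of $\mathcal M$ plays legally in the conclusion, the virtual environments $\mathcal M$ presents to $\mathcal N$ in each iteration are legal plays of the premise (so the assumed solution property of $\mathcal N$ actually applies to them), and that the bits harvested from the independent runs assemble into a winning choice for $\mathcal M$ in the existential of the conclusion. Consistency across the $b+1$ independent simulations is automatic, since each simulated instance of $\mathcal N$ sees the same parameter inputs up to the varying index $i$; so the real work is to trace carefully through the formal definition of $\mathcal R$-Comprehension in \cite{AAAI} and confirm that this loop is exactly the extraction of a conclusion-solution from a premise-solution that the rule semantically prescribes.
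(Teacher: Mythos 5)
Your proposal matches the paper's proof in all essential respects: $\mathcal M$ first computes the value of the comprehension bound, then iterates over indices, simulating $\mathcal N$ afresh on each index (reusing work-tape space between iterations, keeping only a counter and the bound), harvesting one bit of the answer per simulation and assembling the result in the buffer before making a single move; and your space/time/amplitude accounting is the same as the paper's. The one implementation detail the paper attends to that you elide is that the iteration must run from the top index \emph{downward} and suppress leading zeros --- this is what the paper's two-step \routine\ does --- so that the constant eventually moved is a properly formatted numeral of size at most $\mathfrak{b}|\vec{a}|$; also, in the HPM model the accumulating move lives in the machine's \emph{buffer} (as in the paper, which starts by placing $\#$ there), not on the run tape, though this does not affect your complexity argument.
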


The rest of this section is devoted to a proof of the above theorem. Consider a regular boundclass triple $\mathcal R$. Further consider an HPM $\mathcal N$, and an application  
\begin{equation}\label{r3}
\frac{p(y)\add\gneg p(y)}{\ade |x| \leq \mathfrak{b}|\vec{s}|\cla y<\mathfrak{b}|\vec{s}| \bigl(\bit(y,x)\leftrightarrow p(y)\bigr)}\end{equation}
 of $\mathcal R$-Comprehension.  
Let $\vec{v}=v_1,\ldots,v_n$ be a list of all free variables of $p(y)$ other than $y$, and let us correspondingly rewrite (\ref{r3}) as 
\begin{equation}\label{r33}
\frac{p(y,\vec{v})\add\gneg p(y,\vec{v})}{\ade |x| \leq \mathfrak{b}|\vec{s}|\cla y<\mathfrak{b}|\vec{s}| \bigl(\bit(y,x)\leftrightarrow p(y,\vec{v})\bigr)}.\end{equation}
By condition 1  of Definition 2.2 of \cite{AAAI}, from the bound $\mathfrak{b}(\vec{s})$ we can effectively extract an $\mathcal R$ tricomplexity solution of $\ada \ade z(z=\mathfrak{b}|\vec{s}|)$. Fix such a solution/algorithm and call it {\sc Algo}. 

Assume $\mathcal N$ is an $(\mathfrak{a},\mathfrak{s},\mathfrak{t})\in{\mathcal R}\amp\times{\mathcal R}\spa\times
{\mathcal R}\tim  $ tricomplexity solution of the  premise of (\ref{r33}).   
We want to construct an $ {\mathcal R} $ tricomplexity solution $\mathcal M$\label{xmm2} for the conclusion of (\ref{r33}). 
It should be noted that, while our claim of $\mathcal M$'s being an $ {\mathcal R} $ tricomplexity solution of the conclusion of (\ref{r33}) relies on the assumption that we have just made regarding $\mathcal N$,  our  {\em construction} of $\mathcal M$ itself does not depend on that
 assumption. It should also be noted that we construct $\mathcal M$ as  a single-work-tape machine. 

   This is how $\mathcal M$ works. At the beginning, it puts the symbol $\#$ into its buffer. Then it waits till Environment specifies constants $\vec{a}$ and $\vec{b}$ for the free variables  $\vec{s}$ and $\vec{v}$ of the conclusion of (\ref{r33}). If Environment never does so, then $\mathcal M$ is an automatic winner. Otherwise, the game is brought down to $\ade |x| \leq \mathfrak{b}|\vec{a}|\cla y<\mathfrak{b}|\vec{a}| \bigl(\bit(y,x)\leftrightarrow p(y,\vec{b})\bigr)$. Now,  using {\sc Algo},   
$\mathcal M$ computes  and remembers the value $c$ of  $\mathfrak{b}|\vec{a}|$. Condition 5 of Definition 2.2 of \cite{AAAI}
guarantees that    $c$ can be remembered with ${\mathcal R}\spa$ space. Thus, recalling  that {\sc Algo} runs in $\mathcal R$ tricomplexity, the steps taken by $\mathcal M$ so far do not take us beyond $\mathcal R$ and hence, in view of Remark 2.4 of \cite{AAAI},   
 can be ignored in our asymptotic analysis when arguing that $\mathcal M$ runs in $\mathcal R$ tricomplexity.
After these initial steps, $\mathcal M$ starts  acting according to the following procedure:\vspace{10pt}

\noindent{\bf Procedure} \routine:

{\em Step 1}. If $c= 0$, enter a move state and retire.   Otherwise, if $c\geq 1$, simulate the play of the  premise of (\ref{r33}) by $\mathcal N$ in the scenario where, at the very beginning of the play, $\mathcal N$'s adversary chose the same constants $\vec{b}$ for the  variables $\vec{v}$ as Environment did in the real play of the conclusion   and, additionally, chose $j$ for $y$, where $j=c-1$.  If (when) the simulation shows that, at some point, $\mathcal N$ chose the $\add$-disjunct 
$\gneg p(j,\vec{b})$, decrement the value of $c$ by $1$ and repeat the present step. And if (when) the simulation shows that, at some point, $\mathcal N$   chose the $\add$-disjunct $p(j,\vec{b})$, decrement the value of $c$ by $1$, put the bit  $1$ into the buffer, and go to Step 2.

{\em Step 2}. If $c= 0$, enter a move state and retire. Otherwise, if $c\geq 1$, simulate the play of the premise of (\ref{r33}) by $\mathcal N$ in the scenario where, at the very beginning of the play, $\mathcal N$'s adversary chose the same constants $\vec{b}$ for the   variables $\vec{v}$ as Environment did in the real play of the conclusion   and, additionally, chose $j$ for $y$, where $j=c-1$.  If (when) the simulation shows that, at some point, $\mathcal N$ chose the $\add$-disjunct $\gneg p(j,\vec{b})$ (resp. $p(j,\vec{b})$), decrement the value of $c$ by $1$, put the bit $0$ (resp. $1$) into the buffer, and repeat the present step.\vspace{10pt}

It is not hard to see that, what $\mathcal M$ did while following the above routine was that it constructed, in its buffer, the 
 constant $d$  with   $|d| \leq \mathfrak{b}|\vec{a}|\mlc \cla y<\mathfrak{b}|\vec{a}| \bigl(\bit(y,d)\leftrightarrow p(y,\vec{b})\bigr)$, and then made $\#d$ as its only move in the play. 
 This means that $\mathcal M$ is a solution of the conclusion of (\ref{r33}),  as desired. And, of course, 
our construction of $\mathcal M$ is effective. It thus remains to see that $\mathcal M$ runs in $ {\mathcal R}$ tricomplexity. 
In what follows, we implicitly rely on Remark 2.4 of \cite{AAAI}, the monotonicity of bounds and the obvious fact that  the background of 
any cycle of the simulated  $\mathcal N$ does not exceed the background of (the cycles of) $\mathcal M$ throughout its work within  \routine. 
The latter  is the case because all moves that reside on $\mathcal N$'s imaginary run tape --- namely, the moves (containing) $\vec{b}$ --- also reside on $\mathcal M$'s run tape. 

Since $\#d$ is the only move that $\mathcal M$ makes, our earlier observation  $|d| \leq \mathfrak{b}|\vec{a}|$ immediately implies that $\mathcal M$ runs in amplitude $\mathfrak{b}\in{\mathcal R}\amp$, as desired.

Next, observe that the space that $\mathcal M$ consumes while performing \routine\ is just the space needed to remember the value of the variable $c$, plus the space needed to simulate $\mathcal N$. The value of $c$  never exceeds  $\mathfrak{b}|\vec{a}|$, remembering which, as we have already observed, does not take us beyond the target ${\mathcal R}\spa$. 
In order to simulate $\mathcal N$, on its work tape $\mathcal M$ does not need to keep track   of $\mathcal N$'s run tape, because information on that content is available on $\mathcal M$'s own run tape. So, $\mathcal M$ (essentially) only needs to keep track 
of $\mathcal N$'s work-tape contents. By our assumption, $\mathcal N$ runs in space $\mathfrak{s}$. Therefore,  keeping track of its 
work-tape contents takes $O(\mathfrak{s})$ space, which is again within ${\mathcal R}\spa$. To summarize, $\mathcal M$ runs in space ${\mathcal R}\spa$, as desired. 

Finally, taking into account that  $\mathcal N$ runs in time $\mathfrak{t}$ and space $\mathfrak{s}$, it is clear that the time needed for any given iteration of either step of \routine\ is $O(\mathfrak{t}\times \mathfrak{s})$. This is so because simulating each step of $\mathcal N$ takes $O(\mathfrak{s})$ time, and there are $O(\mathfrak{t}
)$ steps to simulate. Altogether, there are $O(\mathfrak{b})$ iterations of either Step 1 or Step 2 of \routine. So, $\mathcal M$ runs in time $O(\mathfrak{t}\times \mathfrak{s}\times \mathfrak{b})$. Then, in view of the fact  that both 
$\mathfrak{s}\in{\mathcal R}\spa\preceq {\mathcal R}\tim$ and $\mathfrak{b}\in{\mathcal R}\amp\preceq 
{\mathcal R}\tim$  (condition 5 of Definition 2.2 of \cite{AAAI}), we find that   $\mathcal M$ runs in time 
$O(\mathfrak{t}\times\mathfrak{t}_1\times\mathfrak{t}_2)$ for some $\mathfrak{t}_1,\mathfrak{t}_2\in {\mathcal R}\tim$. But ${\mathcal R}\tim$ is polynomially closed  (condition 3 of Definition 2.2 of \cite{AAAI}), thus containing $\mathfrak{t}\times\mathfrak{t}_1\times\mathfrak{t}_2$.  So, $\mathcal M$ runs in time ${\mathcal R}\tim$, as desired.

\section{Providence, prudence, quasilegality and unconditionality}\label{sq}

In this section we establish certain terminology and  facts necessary for our subsequent proof of the soundness of the induction rule. 

A {\bf numeric (lab)move}\label{xnm} means a (lab)move ending in $\# b$ for some constant $b$. We shall refer to such a $b$ as the {\bf numer}\label{xnmr} of the (lab)move. To make the ``numer'' function total, we stipulate that the  numer of a non-numeric move is $0$ (is the empty string $\epsilon$, that is). 

Consider a bounded formula  $F$. Let $n$ be the number of 
occurrences of choice quantifiers  in $F$, and $\mathfrak{b}_1(\vec{z}_1),\ldots,\mathfrak{b}_n(\vec{z}_n)$ be the bounds used in those occurrences. Let $f(z)$ be the unarification (cf. \cite{cl12}, Section 12) of 
$\max(\mathfrak{b}_1(\vec{z}_1),\ldots,\mathfrak{b}_n(\vec{z}_n))$. Here and elsewhere, as expected,  
$\max(x_1,\ldots,x_n)$\label{xmax}   stands for the greatest   of the numbers $x_1,\ldots,x_n$, and is understood as $0$ if $n=0$. 
Finally, let $\mathfrak{G}$ be the function defined by $\mathfrak{G}(z)=\max(f(z),f^2(z),\ldots,f^n(z))$. Here, as in Section \ref{sfe},  $f^i(z)$ denotes the $n$-fold composition of 
$f$ with itself, i.e., $f(f(\ldots(f(z))\ldots))$, with ``$f$'' repeated $i$ times.  Then we call the functions $f$ and $\mathfrak{G}$ the {\bf subaggregate bound}\label{xsubggf} and the  
  {\bf superaggregate bound}\label{xab} of $F$, respectively. As an aside, for our purposes, a ``much smaller'' function could have been taken in the role of superaggregate bound, but why try to economize.

\begin{lem}\label{lagg}
Assume $\mathcal R$ is a regular boundclass triple, $F$ is an ${\mathcal R}\spa$-bounded formula, and $\mathfrak{G}$ is the superaggregate bound of $F$. Then  $\mathfrak{G}\preceq {\mathcal R}\spa$.
\end{lem}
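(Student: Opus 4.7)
The approach is to track how membership in $\mathcal{R}\spa$ propagates through the three operations that build $\mathfrak{G}$ from the raw quantifier bounds of $F$: taking a finite maximum, unarifying, and iterating a fixed constant number of times.

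First, I would unfold the assumption that $F$ is $\mathcal{R}\spa$-bounded to obtain $\mathfrak{b}_i \preceq \mathcal{R}\spa$ for each $i=1,\ldots,n$. Because $\mathcal{R}\spa$ is a boundclass and is thus closed under finite maxima and under syntactic variation (cf.\ the relevant clauses of Definition 2.2 of \cite{AAAI}, invoked in the same spirit as in the proof of Theorem \ref{fe}), both $\max(\mathfrak{b}_1,\ldots,\mathfrak{b}_n)$ and its unarification $f$ remain $\preceq \mathcal{R}\spa$. At this point the subaggregate bound is under control.

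Second, I would argue by induction on $i$ that $f^i \preceq \mathcal{R}\spa$ for every $i \leq n$. This is where the regularity of $\mathcal{R}$ does real work: it requires a closure-under-composition property for $\mathcal{R}\spa$, analogous to the way condition 4 of Definition 2.2 of \cite{AAAI} was used in the proof of Theorem \ref{fe} to deduce $\mathfrak{a}^{\mathfrak{b}}(\ell) \preceq \mathcal{R}\amp$. Crucially, $n$ is a fixed constant determined once and for all by $F$, so only a bounded number of self-compositions of $f$ enters the picture. Finally, since $\mathfrak{G}(z) = \max\bigl(f(z), f^2(z), \ldots, f^n(z)\bigr)$ is the maximum of finitely many functions each $\preceq \mathcal{R}\spa$, closure of $\mathcal{R}\spa$ under finite max yields $\mathfrak{G} \preceq \mathcal{R}\spa$ as desired.

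The main obstacle is the iteration step: one has to extract from the regularity conditions the precise statement that $\mathcal{R}\spa$ is closed under a constant number of self-compositions. For $\mathcal{R}\amp$ this is condition 4 of Definition 2.2 of \cite{AAAI}, as witnessed in Theorem \ref{fe}; for $\mathcal{R}\spa$ I would either invoke the analogous space-side clause directly, or route the argument through $\mathcal{R}\amp$ using condition 5 (which interrelates $\mathcal{R}\amp$, $\mathcal{R}\spa$, and $\mathcal{R}\tim$), exploiting the observation, made in passing by the author, that $\mathfrak{G}$ is a deliberately loose choice and hence leaves plenty of slack in any such routing.
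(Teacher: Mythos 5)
Your proposal takes essentially the same approach as the paper: show that the subaggregate bound $f$ lands in ${\mathcal R}\spa$ via closure under syntactic variation, then use condition 4 of Definition 2.2 of \cite{AAAI} to propagate $f\preceq{\mathcal R}\spa$ to $f^2,\ldots,f^n$, and finally combine to get $\mathfrak{G}\preceq{\mathcal R}\spa$. The one small implementation difference is that you invoke ``closure under finite maxima'' at both the first and last steps, whereas the paper avoids assuming max-closure and instead uses closure under $+$ together with the elementary inequalities $f(z)\preceq\mathfrak{b}_1(\vec{z}_1)+\cdots+\mathfrak{b}_n(\vec{z}_n)$ (after syntactic-variation collapse to a single variable) and $\mathfrak{G}(z)\preceq f(z)+f^2(z)+\cdots+f^n(z)$; it also separately dispatches the degenerate case where all tuples $\vec{z}_i$ are empty via linear closure. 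These are cosmetic; the engine of the argument --- constant-depth iteration controlled by condition 4 --- is the same in both.
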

 
\begin{proof} Assume the conditions of the lemma. Further let $n$, $\mathfrak{b}_1(\vec{z}_1),\ldots,\mathfrak{b}_n(\vec{z}_n)$, $f$ be as in the paragraph preceding Lemma \ref{lagg}. Take a note of the fact that $\mathfrak{b}_1(\vec{z}_1),\ldots,\mathfrak{b}_n(\vec{z}_n)\in{\mathcal R}\spa$. 
If all  tuples $\vec{z}_1,\ldots,\vec{z}_n$ are empty, then ($f$ and hence) $\mathfrak{G}$ is  a constant function and, 
by the linear closure of ${\mathcal R}\spa$, \ $\mathfrak{G}\preceq {\mathcal R}\spa$. Suppose now at least one of the tuples $\vec{z}_1,\ldots,\vec{z}_n$ is nonempty. Pick one variable $z$ among $\vec{z}_1,\ldots,\vec{z}_n$, and consider the pterm $\mathfrak{u}(z)$ obtained from $\mathfrak{b}_1(\vec{z}_1)+\ldots+\mathfrak{b}_n(\vec{z}_n)$ as a result of replacing all variables $\vec{z}_1,\ldots,\vec{z}_n$ by $z$. Since  ${\mathcal R}\spa$ is closed under syntactic variation as well as under $+$, we have $\mathfrak{u}(z)\in {\mathcal R}\spa$. But obviously $f(z)\preceq \mathfrak{u}(z)$. Thus, $f(z)\preceq  {\mathcal R}\spa$. 
In view of condition 4 of Definition 2.2 of \cite{AAAI}, $f(z)\preceq {\mathcal R}\spa$ can be seen to imply 
$f^2(z)\preceq {\mathcal R}\spa$, $f^3(z)\preceq {\mathcal R}\spa$, \ldots. Consequently, by the closure of ${\mathcal R}\spa$ under $+$, 
$f(z)+f^2(z)+\ldots+f^{n}(z)\preceq {\mathcal R}\spa$. But $\mathfrak{G}(z)\preceq f(z)+f^2(z)+\ldots+f^{n}(z)$. Thus, 
$\mathfrak{G}\preceq {\mathcal R}\spa$.
\end{proof}

Recall from \cite{cl12} that a {\bf provident}\label{xprovid} computation   branch  of a given HPM $\mathcal M$ is one containing infinitely many configurations with empty buffer contents (intuitively meaning that $\mathcal M$ has actually made all moves that it has ever started to construct in its buffer).  Then, given a constant game $G$, $\mathcal M$ is said to play $G$ {\bf providently} iff every computation branch of $\mathcal M$ that spells a $\oo$-legal run of $G$ is provident. And $\mathcal M$ is a {\bf provident solution} of $G$ iff $\mathcal M$ is a solution of $G$ and plays it providently. 

Let $H(\vec{y})=H(y_1,\ldots,y_n)$ be a bounded formula with all free variables displayed, $\mathfrak{G}$ be the superaggregate bound of $H(\vec{y})$, and  $\vec{c}=c_1,\ldots,c_n$ be an $n$-tuple of constants. We say that a move $\alpha$ is a {\bf prudent} move\label{xprm} of $H(\vec{c})$ iff the size of the numer of $\alpha$ does not exceed  $\mathfrak{G}|\max(\vec{c})|$.  
The {\bf $H(\vec{c})$-prudentization}\label{xpz}  of $\alpha$ is defined as the following move $\alpha'$. If $\alpha$ is a prudent move of $H(\vec{c})$, then $\alpha'=\alpha$. Suppose now $\alpha$ is not a prudent move of $H(\vec{c})$, meaning that $\alpha$ is a numeric move $\beta\#b$ with an   ``oversized'' numer $b$. In this case we stipulate that $\alpha'=\beta\# a$, where $a$ (as a bitstring) is the longest initial segment of $b$ such that $\beta\# a$ is a prudent move of $H(\vec{c})$. 

Further consider any run $\Gamma$ and either player $\xx\in\{\pp,\oo\}$. We say that $\Gamma$ is a 
{\bf $\xx$-prudent run}\label{xprrun} of $H(\vec{c})$ iff all $\xx$-labeled moves of $\Gamma$ are prudent moves of $H(\vec{c})$.  
When we simply say ``prudent''  without indicating a player, it means both $\pp$-prudent and $\oo$-prudent. 

Further consider any machine $\mathcal M$. By saying that $\mathcal M$  {\bf plays $H(\vec{c})$  prudently},\label{xprpl} we shall mean that, whenever 
$\seq{\oo c_1,\ldots,\oo c_n,\Gamma}$ is a $\oo$-legal run of $\ada H(\vec{y})$ 
 generated by $\mathcal M$, $\Gamma$ is a $\pp$-prudent run of $H(\vec{c})$.  On the other hand, when we say that $\mathcal M$ {\bf plays $H(\vec{y})$ prudently}, we mean that, for any $n$-tuple $\vec{c}$ of constants, $\mathcal M$ plays $H(\vec{c})$ prudently.   
 A {\bf prudent  solution}\label{xprsol} of $H(\vec{y})$ means an HPM that wins $H(\vec{y})$ --- wins $\ada H(\vec{y})$, that is --- and plays $H(\vec{y})$ prudently.

\begin{lem}\label{reason}
There is an effective procedure that takes an arbitrary  bounded formula $H(\vec{y})$,  
 an arbitrary HPM $\mathcal N$ and constructs an HPM $\mathcal L$ such that, for any regular boundclass triple $\mathcal R$, if $H(\vec{y})$ is ${\mathcal R}\spa$-bounded and $\mathcal N$ is an $\mathcal R$ tricomplexity solution of $H(\vec{y})$, then $\mathcal L$ is a provident and prudent $\mathcal R$ tricomplexity solution of $H(\vec{y})$. 
\end{lem}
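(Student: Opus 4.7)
The plan is to build $\mathcal L$ as a single-work-tape HPM that simulates $\mathcal N$ step by step on its work tape while intercepting every move $\mathcal N$ wishes to make and replacing it by its $H(\vec c)$-prudentization. Concretely, $\mathcal L$ first waits for Environment to supply constants $\vec c$ for $\vec y$ (if Environment never does, $\mathcal L$ wins vacuously), then computes and stores the threshold $B=\mathfrak G|\max(\vec c)|$, which by Lemma~\ref{lagg} fits within $\mathcal R\spa$ space. Thereafter $\mathcal L$ maintains on its work tape a snapshot of $\mathcal N$'s state, work-tape contents, head positions, and a \emph{truncated} copy of $\mathcal N$'s buffer: the full address-prefix $\beta$ (whose length is a constant depending only on $H$) together with the first $B$ symbols of the numer portion $b$, silently dropping anything beyond. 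When the simulated $\mathcal N$ enters a move state, $\mathcal L$ flushes this stored $\alpha'$ into its own buffer, enters a move state to commit the move, clears both buffers, and resumes the simulation as if $\mathcal N$ had just played its intended $\alpha$.

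Prudence of $\mathcal L$ is immediate from the construction. Providence is also immediate: $\mathcal L$'s own buffer stays empty throughout the entire simulation apart from the short contiguous window in which a committed move is being flushed, after which it is cleared again, so every infinite branch contains infinitely many configurations with empty buffer. The hard part is to show that $\mathcal L$ is actually a solution of $H(\vec y)$. My approach is an induction on the length of the run, with invariant: $\mathcal L$'s actual run and $\mathcal N$'s simulated run agree symbol-for-symbol and are $\oo$-legal runs of $\ada H(\vec y)$. Environment moves preserve the invariant trivially. On a $\pp$-move, because the simulated run so far is $\oo$-legal and $\mathcal N$ is a solution, $\mathcal N$'s intended move $\alpha$ is legal in the remaining game; it then suffices to verify that any legal choice move in $H(\vec c)$ has numer of size at most $B$, which follows by a routine induction on the depth of the move: the $k$-th choice move has numer of size at most $f^k(|\max(\vec c)|)\leq \mathfrak G|\max(\vec c)|=B$, where $f$ is the subaggregate bound of $H$. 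Hence $\alpha$ is prudent, $\alpha'=\alpha$, and the invariant persists. I expect this inductive legality-implies-prudence argument to be the main nontrivial ingredient.

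Finally, $\mathcal L$ stays within $\mathcal R$ tricomplexity. Amplitude is inherited: on $\oo$-legal plays the invariant guarantees $\mathcal L$'s moves coincide with $\mathcal N$'s, so $\mathcal L$'s amplitude equals $\mathfrak a\in \mathcal R\amp$. For space, the simulated snapshot of $\mathcal N$'s work tape costs $O(\mathfrak s)$, the truncated buffer copy costs $O(B)$, and storing the numeric value $B$ costs $O(\log B)$; each of these is within $\mathcal R\spa$ by Lemma~\ref{lagg}. For time, each simulated step of $\mathcal N$ costs $O(\mathfrak s+B)$ time (the cost of navigating the simulated tape), $\mathcal N$ performs $O(\mathfrak t)$ steps before its next move, and each flush adds $O(B)$ cycles, so the per-move overhead is polynomial in $\mathfrak t$, $\mathfrak s$, and $B$; polynomial closure of $\mathcal R\tim$, together with $\mathfrak s,B\in \mathcal R\spa\preceq \mathcal R\tim$, places everything inside $\mathcal R\tim$. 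The construction itself and this complexity bookkeeping are routine; the single non-mechanical step is the induction above.
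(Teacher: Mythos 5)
Your proposal has a fatal gap at exactly the spot you flag as ``the main nontrivial ingredient''. You claim that on a $\oo$-legal run, any legal choice move made by $\mathcal N$ has numer of size at most $B=\mathfrak G|\max(\vec c)|$, so that $\alpha'=\alpha$ always and $\mathcal L$'s run coincides symbol-for-symbol with $\mathcal N$'s. This is false: in computability logic a choice move in $\ade x(|x|\leq \mathfrak b|\vec s|\mlc\ldots)$ or $\ada x(|x|\leq\mathfrak b|\vec s|\mli\ldots)$ is \emph{legal} for any constant whatsoever; the inequality $|x|\leq\mathfrak b|\vec s|$ is a clause of the formula, not a constraint on legality. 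A legal move can thus have an arbitrarily large numer (it is merely a losing move in that subcomponent), and $\mathcal N$, being a generic $\mathcal R$-tricomplexity solution, may well make such moves. If legality entailed prudence, the entire prudentization machinery would be vacuous and the lemma would reduce to Lemma~10.1 of~\cite{cl12}, which the paper instead takes only as a starting point.

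Once you abandon the false ``legal implies prudent'' invariant, your construction no longer works, because the simulation's run tape and $\mathcal L$'s actual run tape genuinely differ: $\mathcal N$'s imaginary run tape holds the original (possibly oversized) moves $\alpha$ while $\mathcal L$'s run tape holds only the truncated $\alpha'$. Your proposal reads $\mathcal N$'s run-tape symbols off $\mathcal L$'s own run tape, which then returns wrong answers, and storing $\mathcal N$'s full run tape on the work tape can exceed $\mathcal R\spa$ when $\mathcal N$ plays imprudently. This mismatch between the imaginary and real run tapes is precisely what the paper's proof idea singles out as the central difficulty. The paper resolves it by keeping only a bounded \emph{sketch} (including only the truncation of the buffer and no run-tape content) and regenerating any needed symbol of a past $\pp$-move on demand through recursive resimulation (the mutually recursive procedures \transition\ and \call), with the global history serving as a compact guide and a constant bound on recursion depth (Lemma~\ref{mmm}) keeping space under control. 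Your proof has none of this and would need to be rebuilt around that idea.
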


\begin{idea}    $\mathcal L$ is a machine that waits till $\ada H(\vec{y})$ is brought down to $H(\vec{c})$ for some constants $\vec{c}$ and then,
 through  simulating and mimicking $\mathcal N$ within the specified complexity constraints,  
plays $H(\vec{c})$ just  as $\mathcal N$ would play it, with essentially the only difference that  each (legal) move $\alpha$ made by $\mathcal N$ is made by $\mathcal L$ in the prudentized form $\alpha'$.  This does not decrease the chances of ${\mathcal L}$ (compared with  those of $\mathcal N$) to win:  
imprudent  moves are at best inconsequential and at worst disadvantageous (resulting in a loss of the corresponding subgame) for a player, so,  if the machine wins the game while it makes the imprudent  move  $\alpha$, it would just as well (and ``even more so'') win the game if it had made the prudent move  $\alpha'$ instead. This is how prudence is achieved. As for providence, $\mathcal L$ achieves it by never putting anything into its buffer unless it has already decided to make a move, after seeing that the simulated $\mathcal N$ has moved. 

Of course, the above strategy may yield  some discrepancies between the contents of $\mathcal L$'s run tape and $\mathcal N$'s imaginary run tape: it is possible  that the latter is showing a ($\pp$-labeled) move $\alpha$ while the former is showing only a properly smaller part (prudentization) $\alpha'$ of $\alpha$.   To neutralize this problem, every time the simulated $\mathcal N$ is trying to read some symbol $b$ of $\alpha$ on its run tape, $\mathcal L$ finds $b$ through resimulating the corresponding portion of the work of $\mathcal N$. This, of course,  results in $\mathcal L$'s being slower than $\mathcal N$; yet, due to $\mathcal R$'s   being regular, things can be arranged so that the  running time of $\mathcal L$ still remains within the admissible limits. 
\end{idea}

A detailed proof of Lemma \ref{reason}, which  materializes the above idea, is given in Appendix \ref{sap1}. 
It can be omitted rather safely by a reader so inclined. The same applies to the forthcoming Lemma \ref{vasa}, whose proof idea is presented in this section and whose relatively detailed proof is given in Appendix \ref{sap2}. 
 
When $\Gamma$ is a run, we let 
\[\mbox{\it $\Gamma^\top$ \ (resp. $\Gamma^\bot$)}\label{xcb}\]
denote the result of deleting in $\Gamma$ all $\oo$-labeled (resp. $\pp$-labeled) moves. 

For a constant game $A$ and run $\Gamma$,  we say that $\Gamma$ is a {\bf $\pp$-quasilegal}\label{xqr} (resp. {\bf $\oo$-quasilegal}) run of $A$ iff there is a legal run $\Delta$ of $A$ such that $\Delta^\top=\Gamma^\top$ (resp. $\Delta^\bot=\Gamma^\bot$). If we say ``{\bf quasilegal}''\label{xqwsda} without the prefix ``$\pp$-'' or ``$\oo$-'', it is to be understood as ``both $\pp$-quasilegal and $\oo$-quasilegal''. 
We say that an HPM $\mathcal M$ {\bf plays $A$ quasilegally}\label{xqp} iff every run generated by $\mathcal M$ is a $\pp$-quasilegal run of $A$. A {\bf quasilegal solution}\label{xqs} of $A$ is a solution of $A$ that plays $A$ quasilegally. 

Our definitions of ``$\mathcal M$ plays \ldots\ providently'' and ``$\mathcal M$ plays \ldots\ prudently'', just like our earlier \cite{cl12} definitions of running within  given  complexity bounds,  only look at  (computation branches  that spell) $\oo$-legal runs of a given game. Below we define  stronger --- ``unconditional'' ---  versions of such concepts, where the adversary's  having made an illegal move is no longer an excuse for the player to stop acting in the expected manner. Namely:

We say that an HPM $\mathcal M$  plays {\bf unconditionally providently},\label{xupp} or that $\mathcal M$ is {\bf unconditionally provident}, iff all computation branches of $\mathcal M$ are provident (note that the game that is being played is no longer relevant). 

Consider an HPM $\mathcal M$, a bounded formula  $H=H(\vec{y})=H(y_1,\ldots,y_n)$ with all free variables displayed, and an $n$-tuple $\vec{c}=c_1,\ldots,c_n$ of constants.
We say that $\mathcal M$  {\bf plays $H(\vec{c})$ unconditionally prudently}\label{xup} iff, whenever
$\seq{\oo c_1,\ldots,\oo c_n,\Gamma}$ is a run (whether it be $\oo$-legal or not) generated by $\mathcal M$, $\Gamma$ is a $\pp$-prudent run of $H(\vec{c})$.  Next, when we say that $\mathcal M$ {\bf plays $H(\vec{y})$ unconditionally prudently}, we mean that, for any $n$-tuple $\vec{c}$ of constants,  $\mathcal M$ plays $H(\vec{c})$ unconditionally prudently.

The following definition of the unconditional versions of our   complexity concepts is  obtained from Definition 5.2 of \cite{cl12} by simply dropping the condition ``$\oo$-legal'' on the plays considered, and also removing any mention of a game $A$ that is being played because the latter is no longer relevant.

\begin{defi}\label{deftcsunc}
Let $\mathcal M$ be an HPM, and $h$ a unary arithmetical function (if $h$ is not unary, then it should be replaced by its unarification according to Convention 12.2 of \cite{cl12}).  We say that: 
\begin{enumerate}[label=\arabic*.]
\item {\bf $\mathcal M$ runs (plays) in unconditional amplitude\label{x0amplitude} $h$} iff,  in every computation branch of $\mathcal M$, whenever   $\mathcal M$ makes a move $\alpha$, the magnitude of $\alpha$  does not exceed   $h(\ell)$,  where $\ell$ is the background of $\alpha$; 

\item {\bf $\mathcal M$  runs (plays) in unconditional space $h$} iff,  in every computation branch of $\mathcal M$,  the spacecost of any given clock cycle $c$ does not exceed  $h(\ell)$,   where $\ell$ is the background of $c$; 

\item {\bf $\mathcal M$ runs (plays) in unconditional time $h$} iff, in every computation branch of $\mathcal M$, whenever   $\mathcal M$ makes a move $\alpha$,   the timecost of $\alpha$ does not exceed  $h(\ell)$,  where $\ell$ is the background of $\alpha$.  
\end{enumerate}
\end{defi}\medskip

\noindent The above definition and the related concepts naturally --- in the same way as in the old, ``conditional'' cases --- extend from bounds 
(as functions)  to boundclasses, as well as  bound triples or boundclass triples.  For instance,  where  $\mathcal C$ is a boundclass, we say that 
 $\mathcal M$ runs (plays) in {\bf unconditional time $\mathcal C$} iff it runs in unconditional time $h$ for some $h\in{\mathcal C}$; where 
$\mathcal R$ is a boundclass triple, we say that $\mathcal M$ runs (plays) in {\bf unconditional tricomplexity $\mathcal R$} iff it  runs in unconditional 
 amplitude ${\mathcal  R}\amp$, unconditional space ${\mathcal R}\spa$ and unconditional time ${\mathcal R}\tim$; etc.

\begin{lem}\label{vasa}
There is an effective procedure that takes an arbitrary bounded formula $H(\vec{y})$,
 an arbitrary HPM $\mathcal L$ and constructs an HPM $\mathcal M$ such that, as long as $\mathcal L$ is a provident solution of $H(\vec{y})$, 
 the following conditions are satisfied: 
\begin{enumerate}[label=\arabic*.]
\item $\mathcal M$ is a quasilegal and unconditionally provident solution of $H(\vec{y})$.

\item If $\mathcal L$ plays $H(\vec{y})$ prudently, then $\mathcal M$ plays $H(\vec{y})$ unconditionally prudently.

\item For any arithmetical functions $\mathfrak{a},\mathfrak{s},\mathfrak{t}$,    
if $\mathcal L$ plays $H(\vec{y})$ in 
tricomplexity $(\mathfrak{a},\mathfrak{s},\mathfrak{t})$, then $\mathcal M$ plays in unconditional tricomplexity $(\mathfrak{a},\mathfrak{s},\mathfrak{t})$.
\end{enumerate}
\end{lem}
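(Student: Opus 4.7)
The plan is to construct $\mathcal{M}$ as a \emph{filtered} simulator of $\mathcal{L}$. At every point of the simulation, $\mathcal{M}$ will maintain (implicitly, in its finite control) an \emph{imagined} version of $\mathcal{L}$'s run tape which differs from $\mathcal{M}$'s actual run tape only in that some $\oo$-labeled moves may have been suppressed. Each time a new $\oo$-labeled move appears on $\mathcal{M}$'s run tape, $\mathcal{M}$ tests whether appending it to the imagined history would keep the evolving run $\oo$-legal with respect to $\ada H(\vec{y})$; if so, the move is added to the imagined history and made visible to the simulated $\mathcal{L}$, and otherwise it is silently dropped as far as the simulation is concerned. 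Whenever the simulated $\mathcal{L}$ writes a symbol into its buffer or converts buffer contents into a $\pp$-labeled move, $\mathcal{M}$ mirrors the action on its own buffer and run tape.

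First I would verify clause 1. By construction, the imagined history is always a $\oo$-legal run of $\ada H(\vec{y})$; since $\mathcal{L}$ is a provident solution of $H(\vec{y})$ by assumption, this forces the simulated $\mathcal{L}$ to win the imagined run and to empty its buffer infinitely often. The latter property immediately yields \emph{unconditional} providence of $\mathcal{M}$. For the solution and quasilegality parts: $\mathcal{M}$'s $\pp$-labeled moves coincide with those of the (legal) imagined run, so the actual run is $\pp$-quasilegal; and either no $\oo$-move was ever filtered, in which case imagined and actual runs coincide and $\mathcal{M}$ wins because $\mathcal{L}$ does, or some illegal $\oo$-move was filtered out, in which case $\mathcal{M}$ wins automatically because the actual run is $\oo$-illegal. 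Clauses 2 and 3 then come essentially for free: the $\pp$-moves produced by $\mathcal{M}$ and the resource consumption of $\mathcal{M}$ are those of the simulated $\mathcal{L}$ playing inside a $\oo$-legal imagined run, so $\mathcal{L}$'s prudence and exact tricomplexity bounds on $\oo$-legal runs transfer to $\mathcal{M}$ unconditionally.

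The principal technical obstacle I anticipate is the engineering required by clause 3, which demands the exact bounds $(\mathfrak{a},\mathfrak{s},\mathfrak{t})$ rather than an $O$-blow-up of them. To honor this, the legality-checking state --- essentially a pointer into the fixed finite syntactic tree of $H(\vec{y})$ --- must be kept in $\mathcal{M}$'s finite control rather than on its work tape, and the correspondence between $\mathcal{M}$'s real run tape and $\mathcal{L}$'s imagined one must be reconstructed on the fly by a scanning procedure rather than cached in workspace. Amplitude transfers trivially, since $\mathcal{M}$'s moves are bit-for-bit those of the simulated $\mathcal{L}$. The delicate part is verifying that the on-the-fly run-tape scans do not inflate space or time: this relies on the fact that $H(\vec{y})$ is a fixed formula whose parsing overhead is a constant depending only on $H$, which can be absorbed into $\mathcal{M}$'s state set and therefore vanishes from both the spacecost and timecost accounting. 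With this arrangement the construction is patently effective, and the detailed engineering can be relegated to the appendix.
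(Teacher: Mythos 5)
Your approach --- filter out the environment's illegal $\oo$-moves, maintain an ``imagined'' $\oo$-legal history, and keep simulating $\mathcal{L}$ against it --- is genuinely different from the paper's. The paper instead has $\mathcal{M}$ work \emph{exactly} like $\mathcal{L}$, moving all heads and writing to tapes precisely as $\mathcal{L}$ would, while keeping in its finite control the \emph{compression} of the so-far-seen semiposition; the moment this compression becomes incompatible with any legal scenario, $\mathcal{M}$ \emph{retires}, and if its buffer is non-empty at that moment it first appends a short windup string and makes the resulting move so that the buffer is emptied (this is what secures unconditional providence). Because an unretired $\mathcal{M}$ behaves cycle-by-cycle like $\mathcal{L}$ in a suitable $\oo$-legal branch (the $W$-induced branch), every spacecost and timecost of $\mathcal{M}$ is inherited from $\mathcal{L}$ in a $\oo$-legal branch, and a retired $\mathcal{M}$ costs nothing further; the exact $(\mathfrak{a},\mathfrak{s},\mathfrak{t})$ transfer of clause 3 is then immediate.

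Your construction does not preserve clause 3, and this is a genuine gap, not relegatable engineering. Once the environment makes an illegal $\oo$-move, the \emph{size} of that move is completely unconstrained while its contribution to the background $\ell$ can be zero (for instance, a non-numeric move a billion symbols long, or a move $\beta\# 0$ with an enormous prefix $\beta$). Your $\mathcal{M}$, instead of stopping, keeps simulating $\mathcal{L}$ against the filtered history; whenever the simulated $\mathcal{L}$ reads a run-tape cell that lies past the filtered junk, $\mathcal{M}$ must physically walk its run-tape head across that junk. That traversal costs clock cycles proportional to the junk's length, not to $\ell$, so the timecost of $\mathcal{M}$'s next $\pp$-move can vastly exceed $\mathfrak{t}(\ell)$, violating the \emph{unconditional} time bound. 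Contrary to what you claim, this cannot be ``absorbed into $\mathcal{M}$'s state set'': finite control can remember \emph{which} entries were filtered (their compressions form a finite alphabet), but not \emph{where} on an arbitrarily long run tape they sit, nor can it make the skip cost disappear. The retire-and-windup scheme sidesteps this entirely because $\mathcal{M}$ never has to read past the first symbol at which illegitimacy becomes visible.
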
    

\noindent\begin{idea}\label{0id2} 
In our preliminary attempt of constructing $\mathcal M$, 
we let it be a machine that works exactly like $\mathcal L$, except that $\mathcal M$ retires as soon as it detects that the play has gone illegal. This way, unlike $\mathcal L$, $\mathcal M$ is precluded from using Environment's illegal actions as an excuse for some undesirable  behavior of its own, such as making inherently illegal or oversized moves, or using excessive resources. That is, while $\mathcal L$  ``behaves well'' only on the condition of Environment playing legally, $\mathcal M$ is guaranteed to ``behave well'' unconditionally, because in legal cases $\mathcal M$'s behavior coincides with that of $\mathcal L$, and in illegal cases $\mathcal M$ simply does not ``behave'' at all. An unretired or not-yet-retired $\mathcal M$ consumes exactly the same amount of time and space as $\mathcal L$ does, because keeping track of whether the play has gone illegal only requires maintaining a certain bounded amount of information, which can be done through state (rather than work-tape) memory and hence done without any time or space overhead whatsoever. The only problem with the above solution is that $\mathcal M$'s buffer may not necessarily be empty at the time we want it to retire, and if so, then $\mathcal M$ is not unconditionally provident. This minor complication is neutralized by letting $\mathcal M$, before retiring, extend (if necessary) the buffer content to a shortest possible move adding which to the already generated run does not destroy its $\pp$-quasilegality, and  then empty the buffer by making such a move in the play.  \end{idea}

In what follows, we will be using the word ``{\bf reasonable}''\label{xreas} (``{\bf reasonably}'') as an abbreviation of ``quasilegal(ly) and unconditionally prudent(ly)''. ``{\bf Unreasonable}'' (``{\bf unreasonably}''), as expected, means ``not reasonable'' (``not reasonably''). 
We can now strengthen Lemma \ref{reason} as follows:

\begin{lem}\label{mr22}
There is an effective procedure that takes an arbitrary bounded formula $H(\vec{y})$,
 an arbitrary HPM $\mathcal N$ and constructs an HPM $\mathcal M$ such that, for any regular boundclass triple $\mathcal R$, if $H(\vec{y})$ is ${\mathcal R}\spa$-bounded and $\mathcal N$ is an $\mathcal R$ tricomplexity solution of $H(\vec{y})$, then $\mathcal M$ is a reasonable, unconditionally provident and unconditionally $\mathcal R$ tricomplexity solution of $H(\vec{y})$. 
\end{lem}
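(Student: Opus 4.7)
The plan is to obtain $\mathcal{M}$ by simply composing the two procedures already established in this section, so that the new lemma is essentially a bookkeeping corollary of Lemmas \ref{reason} and \ref{vasa}. Specifically, first I would feed $H(\vec{y})$ and $\mathcal{N}$ into the procedure of Lemma \ref{reason} to produce an intermediate HPM $\mathcal{L}$; then I would feed $H(\vec{y})$ together with this $\mathcal{L}$ into the procedure of Lemma \ref{vasa} to produce the desired $\mathcal{M}$. The combined construction depends only on $H(\vec{y})$ and $\mathcal{N}$ (not on $\mathcal{R}$), and both subprocedures are effective, so the overall procedure is effective as required.

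To verify correctness, I would assume the hypotheses of the lemma: $\mathcal{R}$ is a regular boundclass triple, $H(\vec{y})$ is $\mathcal{R}\spa$-bounded, and $\mathcal{N}$ is an $\mathcal{R}$ tricomplexity solution of $H(\vec{y})$. Under these assumptions, Lemma \ref{reason} immediately tells us that $\mathcal{L}$ is a provident and prudent $\mathcal{R}$ tricomplexity solution of $H(\vec{y})$. This, in particular, secures the sole hypothesis needed to apply Lemma \ref{vasa} (namely, $\mathcal{L}$ being a provident solution of $H(\vec{y})$), and moreover activates the ``if'' clauses of parts 2 and 3 of that lemma.

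From the three conclusions of Lemma \ref{vasa}, I would then read off the three desired properties of $\mathcal{M}$: clause 1 yields that $\mathcal{M}$ is a quasilegal and unconditionally provident solution of $H(\vec{y})$; clause 2, activated by $\mathcal{L}$'s prudent play, yields that $\mathcal{M}$ plays $H(\vec{y})$ unconditionally prudently; and clause 3, applied to any triple $(\mathfrak{a},\mathfrak{s},\mathfrak{t}) \in \mathcal{R}\amp\times\mathcal{R}\spa\times\mathcal{R}\tim$ in which $\mathcal{L}$ runs, yields that $\mathcal{M}$ plays in unconditional tricomplexity $(\mathfrak{a},\mathfrak{s},\mathfrak{t})$, hence in unconditional $\mathcal{R}$ tricomplexity. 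Since ``reasonable'' is by definition ``quasilegal and unconditionally prudent'', these three facts are exactly the conjunction asserted by the lemma.

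There is no genuine obstacle here: all of the technical content — the prudentization/resimulation construction that preserves tricomplexity, and the buffer-emptying trick that reconciles quasilegality with unconditional providence — has already been handled inside the proofs of Lemmas \ref{reason} and \ref{vasa} (deferred to the appendices). The only thing I would have to be slightly careful about is to verify that the tricomplexity bound witnessing $\mathcal{L}$'s being an $\mathcal{R}$ tricomplexity solution is a single triple from $\mathcal{R}\amp\times\mathcal{R}\spa\times\mathcal{R}\tim$, so that clause 3 of Lemma \ref{vasa} can be invoked with that fixed triple; but this is guaranteed by the linear closure and closure under syntactic variation of the components of $\mathcal{R}$, exactly as noted in the proof of Theorem \ref{fe}.
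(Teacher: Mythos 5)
Your proposal is correct and coincides with the paper's own proof, which simply reads ``Immediately from Lemmas \ref{reason} and \ref{vasa}'': compose the two effective procedures, use Lemma \ref{reason} to get a provident and prudent $\mathcal{R}$ tricomplexity $\mathcal{L}$, then read off the three conclusions of Lemma \ref{vasa}. (Your concluding caution is unnecessary though harmless: being an $\mathcal{R}$ tricomplexity solution already means, by definition, running in a single triple $(\mathfrak{a},\mathfrak{s},\mathfrak{t})\in\mathcal{R}_{\mathit{amplitude}}\times\mathcal{R}_{\mathit{space}}\times\mathcal{R}_{\mathit{time}}$, and even if the three witnesses were produced separately they could be packaged into one triple without invoking any closure properties.)
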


\begin{proof} Immediately from Lemmas \ref{reason} and \ref{vasa}. \end{proof}

\section{Soundness of Induction}\label{ssind}

\begin{thm}\label{feb9eei} 
Consider any regular boundclass triple $\mathcal R$. 
There is an ($\mathcal R$-independent) effective procedure that takes an arbitrary application of $\mathcal R$-Induction,  arbitrary HPMs ${\mathcal N},{\mathcal K}$ and constructs an HPM $\mathcal M$ such that, if  $\mathcal N$ and $\mathcal K$ are $\mathcal R$ tricomplexity solutions of the two premises,  then $\mathcal M$ is an $\mathcal R$ tricomplexity solution of the conclusion.
\end{thm}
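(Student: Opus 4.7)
My plan is to mirror the strategy used in Theorems \ref{fe} and \ref{feb9ee}, but with $\mathcal M$ built as an iterated composition of the given machines rather than a single composition. The composition will proceed as a cut-elimination-style cascade. First, invoke Lemma \ref{mr22} to replace $\mathcal N$ and $\mathcal K$ by reasonable, unconditionally provident and unconditionally $\mathcal R$ tricomplexity solutions $\mathcal N',\mathcal K'$ of the base and step premises, respectively. The unconditional guarantees are essential, because in the cascade each copy of $\mathcal K'$ will see, as its adversary, another simulated copy of $\mathcal K'$ or of $\mathcal N'$, and we need the complexity and prudence bounds of the simulated machines to survive even when the simulated ``environment'' appears to misbehave.

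Write the $\mathcal R$-Induction instance as $F(0,\vec{v})$ (base), $F(x,\vec{v})\intimpl F(x',\vec{v})$ or the corresponding length-induction step (step), and $\cla x F(x,\vec{v})$ (conclusion). After waiting for Environment to specify constants $\vec{b}$ and $a$ for the free variables, $\mathcal M$ maintains a stack of simulations whose depth is the natural induction parameter on $a$. The bottom layer runs $\mathcal N'$ on $F(0,\vec{b})$; each higher layer $i$ runs a fresh copy of $\mathcal K'$ on the step instance whose antecedent is the formula solved at layer $i-1$ and whose succedent is the next formula in the chain. $\mathcal M$ couples the succedent of each layer to the antecedent of the one above it, and the succedent of the topmost layer to the real game with Environment; every incoming move, whether from Environment or from a simulated layer, is then propagated through the affected portion of the stack until the simulations stabilize.

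Correctness of $\mathcal M$ as a winning strategy follows by a standard composition argument. Because $\mathcal K'$ is reasonable, the antecedent of each layer $i$ receives only prudent and $\oo$-quasilegal moves from layer $i-1$, so the winning assumption on $\mathcal K'$ applies layer-by-layer; together with $\mathcal N'$'s winning the base, an induction on layer number shows that the succedent of the topmost layer, and hence $\mathcal M$'s actual play of $F(a,\vec{b})$, is won.

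The main obstacle is the complexity analysis of the cascade, which is what forces the regularity hypothesis on $\mathcal R$ to be used in full. Amplitude is the easiest: $\mathcal M$ only makes moves in the topmost layer's succedent, so its amplitude coincides with $\mathcal K'$'s, which lies in $\mathcal R\amp$. For space one must budget $O(\mathfrak{s})$ per layer and multiply by the stack depth, using the closure properties of $\mathcal R\spa$ from condition 5 of Definition 2.2 of \cite{AAAI}, together with Lemma \ref{lagg} for the bounds associated with $F$, to keep the total in $\mathcal R\spa$. For time one must account for the fact that a single environment move can cascade through the whole stack, each layer contributing up to $\mathfrak{t}$ time, and additionally for any re-simulation overhead arising from the prudentization scheme inherited via Lemma \ref{reason}; polynomial closure of $\mathcal R\tim$ (condition 3 of Definition 2.2 of \cite{AAAI}) is what ultimately absorbs the overhead. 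Orchestrating the single work tape so that each simulated layer enjoys a coherent view of its own imaginary tapes, and showing that the propagation of every move terminates using the providence delivered by Lemma \ref{mr22}, is the technically delicate heart of the construction.
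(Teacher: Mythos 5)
Your high‑level architecture matches the paper's initial intuition, but the parallel‑stack realization runs into exactly the obstruction the paper goes to great lengths to avoid, and your space argument glosses over it. The stack has depth $a=k$, the constant Environment picks for $x$, and $k$ can be as large as $\mathfrak{b}|\vec{d}|$ where $\mathfrak{b}\in{\mathcal R}\tim$. Maintaining a coherent simulated configuration of $\mathcal K'$ at every layer costs $\Theta(\mathfrak{s})$ per layer, so the total is $\Theta(k\cdot\mathfrak{s})$. There is no closure property of ${\mathcal R}\spa$ that absorbs a factor of $k$: $k$ is a \emph{runtime value} of a time bound, not a length, and in the intended instantiations (e.g.\ logarithmic space with polynomial time) $k\cdot\mathfrak{s}(\ell)$ leaves ${\mathcal R}\spa$ entirely (polynomial instead of logarithmic). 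Condition 5 of Definition~2.2 of \cite{AAAI} and Lemma~\ref{lagg} control compositions like $\mathfrak{s}\circ\mathfrak{G}$ and unary bound growth, not multiplication by a game‑dependent integer of time‑bound magnitude. So ``budget $O(\mathfrak{s})$ per layer and multiply by stack depth'' is precisely the step that fails.

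This is not a cosmetic issue that a sharper bookkeeping fixes; it forces a different machine design. The paper keeps in memory only a constant‑size ``aggregation'' (at most $2\mathfrak{e}_\top+1$ organs per entry, and a constant number of entries) plus one active simulation, and whenever information from an intermediate layer $\mathcal{H}_n$ is needed it is \emph{recomputed} by rerunning the relevant portion of the cascade through the procedure $\simm$, discarding the result afterward. The price is time (the same work is redone many times), which is why the time analysis is the hard part and why Lemmas~\ref{m29a}, \ref{beijing}, \ref{29a}, \ref{29b} and the rank argument exist: they bound how many times $\mainn$ iterates and how long each iteration can run. Your proposal pays that price nowhere because you never trade space for recomputation. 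You would need to replace the explicit stack with a recomputation scheme of this kind — with only a bounded window of entries retained — and separately argue that the resulting blowup in time still lands in ${\mathcal R}\tim$, before the proof can go through. The use of Lemma~\ref{mr22} to get reasonable, unconditionally provident and unconditionally ${\mathcal R}$‑tricomplexity machines, and the layer‑by‑layer winning argument, are in the right spirit and reappear in the paper's proof; but the central technical content is the sequential, space‑recycling simulation, which your sketch omits.
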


The rest of this long section is devoted to a proof of the above theorem. It should be noted that some ideas used in our proof are borrowed from \cite{cla5}. 

Consider any regular boundclass triple  $\mathcal R$  and any   application 
\begin{equation}\label{r2}
 \frac{ F(0) \hspace{30pt} F(x)\mli F(x\successor) }{ x\leq \mathfrak{b}|\vec{s}|\mli F(x) }\end{equation}
 of ${\mathcal R} $-Induction. 
Assume  \[\vec{v}=v_1,\ldots,v_{\mathfrak{v}}\] --- fix this number $\mathfrak{v}$\label{xvaq} --- are exactly the free variables of $F(x)$ other than $x$ listed in the lexicographic order, and let
 us correspondingly rewrite (\ref{r2}) as 
\begin{equation}\label{r22}
 \frac{ F(0,\vec{v}) \hspace{30pt} F(x,\vec{v})\mli F(x\successor,\vec{v}) }{ x\leq \mathfrak{b}|\vec{s}|\mli F(x,\vec{v}) }.\end{equation}
Further, assume that  $\mathcal N$ and $\mathcal K$ are $ {\mathcal R}$ tricomplexity  solutions of the left and the right premise of (\ref{r22}), respectively. In view of Lemma \ref{mr22}, we may and will assume that $\mathcal N$ and $\mathcal K$ are reasonable, unconditionally provident  and unconditionally  $ {\mathcal R}$ tricomplexity  solutions of the corresponding premises.
In view of the closure of all three components of $\mathcal R$ under syntactic variation, in combination with the other relevant closure conditions from Definition 2.2 of \cite{AAAI},  there is one common triple  \[(\mathfrak{a},\mathfrak{s},\mathfrak{t})\in   {\mathcal R}\amp\times {\mathcal R}\spa\times {\mathcal R}\tim\]  of unary bounds  such that both $\mathcal N$ and $\mathcal K$ run in unconditional $(\mathfrak{a},\mathfrak{s},\mathfrak{t})$ tricomplexity.  Fix these $\mathfrak{a},\mathfrak{s},\mathfrak{t}$ for the rest of this section.

We want to construct an ${\mathcal R}$ tricomplexity solution $\mathcal M$\label{xmm3} of the conclusion of (\ref{r22}).  It is important to point out that, as in the case of Comprehension, our construction of $\mathcal M$ does not rely on the assumptions on $\mathcal N$ and $\mathcal K$ that we have just made.  Also, the pathological case of $F(x,\vec{v})$ having no 
free occurrences of $x$ is trivial and,  for the sake of simplicity, we exclude it from our considerations. $\mathcal M$ will be designed as a machine with a single work tape.

As usual in such cases, we adopt the  Clean Environment Assumption (cf. Section 8 of \cite{cl12}), according to which $\mathcal M$'s adversary never makes illegal moves of the game under consideration.

At the beginning,  our $\mathcal M$  waits for Environment to choose constants for all free  variables of the conclusion of (\ref{r22}).  We rule out the possibility that the adversary never does so, because then $\mathcal M$ is an automatic winner trivially running in zero amplitude, zero space and zero time unless it deliberately tries not to. 
For the rest of this section, assume 
 $k\label{xkqa}$ is the  constant chosen for the variable $x$,
 $\vec{c}=\vec{c}_1,\ldots,\vec{c}_{\mathfrak{v}}$ are the constants chosen for $\vec{v}$, and $\vec{d}$ are the constants chosen for $\vec{s}$. Since the case of $k= 0$ is straightforward and not worth paying separate attention, for further simplicity considerations we will  assume for the rest of this section that $k\geq 1$. From now on, we shall write $F'(x)$\label{xfp} as an abbreviation of $F(x,\vec{c})$. 

The above event of Environment's initial choice of constants brings the conclusion of (\ref{r22}) down to $k\leq \mathfrak{b}|\vec{d}|\mli F(k,\vec{c})$, i.e., to $k\leq \mathfrak{b}|\vec{d}|\mli F'(k)$. $\mathcal M$ computes $\mathfrak{b}|\vec{d}|$ and compares it with $k$.  By condition 1 of Definition 2.2 of \cite{AAAI},  this can be done in space ${\mathcal R}\spa$ and time 
${\mathcal R}\tim$. If 
 $k\leq \mathfrak{b}|\vec{d}|$ is false, $\mathcal M$ retires, obviously being the winner and satisfying the expected complexity conditions. For the rest of this section, we rule out this straightforward case and, in the scenarios that we consider, assume that   $k\leq \mathfrak{b}|\vec{d}|$ is true.\label{212014}  

We shall write ${\mathcal H}_{0}$\label{xhnill1} as an abbreviation of the phrase ``${\mathcal N}$  in the scenario where the adversary, at the beginning of the play, has chosen the constants $\vec{c}$ for the variables $\vec{v}$\hspace{2pt}''. So, for instance, when saying that ${\mathcal H}_0$ moves on cycle $t$, it is to be understood as that, in the above scenario, ${\mathcal N}$ moves on cycle $t$. As we see, strictly speaking, ${\mathcal H}_0$ is not a separate ``machine'' but rather it is just ${\mathcal N}$ in a certain partially fixed scenario.\footnote{The beginning of that scenario is fixed but the continuations may vary.}  Yet, for convenience and with some abuse of language, in the sequel we may terminologically and even conceptually treat ${\mathcal H}_0$ as if it was a machine in its own right --- namely, the machine that works just like ${\mathcal N}$ does in the scenario where the adversary, at the beginning of the play, has chosen the constants $\vec{c}$ for the variables $\vec{v}$.  Similarly, for any $n\geq 1$, we will write 
${\mathcal H}_{n}$\label{xhnill2} for the ``machine'' that works just like ${\mathcal K}$ does in the scenario where the adversary, at the beginning of the play, has chosen the constants $\vec{c}$ for the variables $\vec{v}$ and the constant $n- 1$ for the variable $x$. So, ${\mathcal H}_{0}$ (thought of as a machine) wins the constant game $F'(0)$ and, for each $n\geq 1$, ${\mathcal H}_{n}$ wins the constant game $F'(n-1)\mli F'(n)$.

In the same style as the notation ${\mathcal H}_n$ is used, we write ${\mathcal M}_k$\label{xmk} for the ``machine'' that works just like ${\mathcal M}$ does after the above event of Environment's having chosen $k$,  $\vec{c}$ and $\vec{d}$ for $x$, $\vec{v}$ and $\vec{s}$, respectively. So, in order to complete our description of $\mathcal M$, it will suffice to simply define ${\mathcal M}_k$ and say that, after Environment has chosen constants for all free variables of the conclusion of (\ref{r22}),  $\mathcal M$ continues playing like (``turns itself into'') ${\mathcal M}_k$. Correspondingly,  in showing that $\mathcal M$ wins $\ada \bigl(x\leq \mathfrak{b}|\vec{s}|\mli F(x,\vec{v})\bigr)$, it will be sufficient to show that ${\mathcal M}_k$ wins 
 $k\leq \mathfrak{b}|\vec{d}|\mli F'(k)$. 

\begin{rem}\label{apology}
It should be noted that our treating of ${\mathcal H}_0,\ldots,{\mathcal H}_k$ and ${\mathcal M}_k$ as ``machines''  may occasionally generate some ambiguity or terminological inconsistencies, for which the author  wants to apologize in advance. For instance, when talking about the content of ${\mathcal H}_0$'s run tape or the run spelled by a given computation branch of ${\mathcal H}_0$, ${\mathcal N}$'s adversary's initial moves $\oo c_1,\ldots,\oo c_{\mathfrak{v}}$ may or may not be meant to be included. Such ambiguities or inconsistencies, however, can usually be easily resolved based on the context. 
\end{rem}

In the informal description below, 
we use the term ``{\bf synchronizing}''\label{xsync} to mean applying copycat between two (sub)games of the form $A$ 
and $\gneg A$. This means mimicking  one player's moves in $A$ as the other player's moves in $\gneg A$, and vice versa. The effect achieved this way is that the games to which $A$ and $\gneg A$  eventually evolve (the final positions hit by them, that is) will be of the form $A'$ and $\gneg A'$ --- that is, one will remain the negation of the other, so that one will be won by a given player iff the other is lost by the same player.  

The idea underlying the work of ${\mathcal M}_k$ can be summarized by saying that what ${\mathcal M}_k$ does is a synchronization  between $k+ 2$ games, real or imaginary (simulated). Namely:
\begin{itemize}
\item It synchronizes the imaginary play of $F'(0)$ by ${\mathcal H}_{0}$ with the antecedent of the imaginary play of $F'(0)\mli F'(1)$ by ${\mathcal H}_{1}$.
\item For each $n$ with $1\leq n< k$, it synchronizes the consequent of the imaginary play of $F'(n- 1)\mli F'(n)$ by ${\mathcal H}_{n}$ with the 
antecedent of the imaginary play of $F'(n)\mli F'(n+ 1)$ by ${\mathcal H}_{n+ 1}$.
\item It (essentially) synchronizes the consequent of the imaginary play of  $F'(k- 1)\mli F'(k)$ by ${\mathcal H}_{k}$ with the real play in the consequent of $k\leq \mathfrak{b}|\vec{d}|\mli F'(k)$.
\end{itemize}
Therefore, since ${\mathcal H}_{0}$ wins $F'(0)$ and each ${\mathcal H}_{n}$ with $1\leq n\leq k$ wins $F'(n- 1)\mli F'(n)$, ${\mathcal M}_k$ wins $k\leq \mathfrak{b}|\vec{d}|\mli F'(k)$ and thus $\mathcal M$ wins (the $\ada$-closure of) $x\leq \mathfrak{b}|\vec{s}|\mli F(x,\vec{v})$, as desired. 

If space complexity was of no concern, a synchronization in the above-outlined style could be achieved by simulating all imaginary plays in parallel. Our general case does not allow us doing so though, and synchronization should be conducted in a very careful way. Namely, a parallel simulation of all plays is not
possible, because there are  up to $\mathfrak{b}|\vec{s}|$ simulations to perform, and there is no guarantee that this does not take us beyond the ${\mathcal R}\spa$ space limits.  So, instead, simulations should be performed is some sequential rather than parallel manner, with subsequent simulations recycling the space used by the previous ones, and with the overall procedure keeping forgetting the results of most previous simulations and recomputing the same information over and over many times.  We postpone our description of how ${\mathcal M}_k$ exactly works to Section \ref{sn}, after having elaborated all necessary preliminaries in Sections \ref{sagree}-\ref{saggr}.     

\subsection{Soon enough or never}\label{sagree}

\begin{nota}\label{not1}
{\em We agree that throughout the rest of Section  \ref{ssind}: 
\begin{enumerate}[label=\arabic*.]
\item $\mathfrak{l}$\label{xl} denotes the length $|a|$ of the greatest constant $a$ among $k,\vec{c},\vec{d}$.

\item $\mathfrak{e}_\top$ (resp. $\mathfrak{e}_\bot$) is the maximum number of $\pp$-labeled (resp. $\oo$-labeled) moves in any legal run of $F'(0)$,\label{xe} and $\mathfrak{e}=\mathfrak{e}_\top+\mathfrak{e}_\bot$. 

\item $\mathfrak{G}$\label{xagrba} is the superaggregate bound of $F(x,\vec{v})$.

\item $\mathfrak{L}(w,u)$\label{xlla}  abbreviates   
\[\mathfrak{r}\times (u+1)^{\mathfrak{g}}\times\Bigl((\mathfrak{v}+1)\times (w+2)+2\mathfrak{e}\bigl(\mathfrak{G}(w) + \mathfrak{h} +2\bigr)+1\Bigr) \times \mathfrak{q}^{\mathfrak{g}u} \times 2\mathfrak{e},\] where  $\mathfrak{v}$, as we remember, is the number of variables in $\vec{v}$, and:
\begin{itemize}
  \item $\mathfrak{r}$\label{xrqa}  is the maximum number of states  of the two  machines ${\mathcal N}$ and ${\mathcal K}$;
  \item $\mathfrak{g}$\label{xgqa}  is the maximum number of work tapes  of the two  machines ${\mathcal N}$ and ${\mathcal K}$;
  \item $\mathfrak{q}$\label{xqqa}  is the maximum number of symbols that may ever appear on any of the  tapes of the two machines ${\mathcal N}$ and ${\mathcal K}$;
  \item $\mathfrak{h}$\label{xhqa} is the length of the longest string $\beta$ containing no $\#$ such that   $\beta$ is a prefix of some move of some legal run of $F'(0)$. 
\end{itemize}
\end{enumerate} }  
\end{nota}\medskip  

\noindent In the sequel, we may say about a machine or its adversary that it plays so and so (reasonably, prudently, etc.) without mentioning the context-setting game that is played. As expected, it will be understood that, in such cases, the game is:
 $\ada  \bigl(x\leq \mathfrak{b}|\vec{s}|\mli F(x,\vec{v})\bigr)$ if the machine is $\mathcal M$;
 $\ada F(0,\vec{v})$ if the machine is $\mathcal N$;
 $\ada \bigl(F(x,\vec{v})\mli F(x\successor,\vec{v})\bigr)$ if the machine is $\mathcal K$;
 $F'(0)$ if the machine is ${\mathcal H}_0$;
 $F'(n-1)\mli F'(n)$ if the machine is ${\mathcal H}_n$ with $1\leq n\leq k$;
 and $k\leq \mathfrak{b}|\vec{d}|\mli F'(k)$ if the machine is ${\mathcal M}_k$.  

Below, $\Upsilon_0$ denotes the sequence of $\mathfrak{v}$   $\oo$-labeled moves signifying the choice of the constants $\vec{c}$ for the free variables $\vec{v}$  of $F(0,\vec{v})$ --- that is, $\Upsilon_0=\seq{\oo \#c_1,\ldots,\oo \#c_{\mathfrak{v}}}$.
And  
$\Upsilon_n$, for $n\in\{1,\ldots,k\}$, denotes the sequence of $\mathfrak{v}+1$ $\oo$-labeled moves signifying the choice of the constants $n-1$ and $\vec{c}$ for the free variables $x$ and $\vec{v}$ of $F(x,\vec{v})\mli F(x\successor,\vec{v}) $, respectively.

Whenever we say that ${\mathcal H}_n$'s {\bf adversary plays quasilegally},\label{xgh}
 we shall mean that we are only considering the runs $\Gamma$ generated by ${\mathcal H}_n$ (i.e., runs $\seq{\Upsilon_0,\Gamma}$ generated by $\mathcal N$ and runs  $\seq{\Upsilon_n,\Gamma}$ generated by $\mathcal K$) such that $\Gamma$ is a $\oo$-quasilegal  run of $F'(0)$ (if $n=0$) or $F'(n-1)\mli F'(n)$ (if $n\geq 1$). Similarly for the adversary's playing {\bf unconditionally prudently} or  {\bf reasonably}. By  the {\bf symbolwise length}\label{xsyl} of a position $\Phi$ we shall mean  the number of cells that $\Phi$ takes when spelled on the run tape. Similarly for labmoves. 

\begin{lem}\label{m29aaa}
For any $n\in\{0,\ldots,k\}$, at any time in any play by  ${\mathcal H}_n$, as long as ${\mathcal H}_n$'s  adversary
plays reasonably, the symbolwise length of the position
spelled on the run tape of ${\mathcal H}_n$ does not exceed $(\mathfrak{v}+1)\times (\mathfrak{l}+2)+2\mathfrak{e}\bigl(\mathfrak{G}(\mathfrak{l}) + \mathfrak{h} +2\bigr)$.
\end{lem}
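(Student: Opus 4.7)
The plan is to split the position on ${\mathcal H}_n$'s run tape into the initial adversary moves $\Upsilon_n$ (which instantiate $\vec{v}$ and, for $n\geq 1$, the induction variable $x$) and the subsequent run $\Gamma$ of the relevant context game (namely $F'(0)$ if $n=0$, and $F'(n-1)\mli F'(n)$ otherwise), and then bound the symbolwise lengths of these two parts separately.

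First I would dispatch $\Upsilon_n$. It consists of at most $\mathfrak{v}+1$ labmoves, each of the form $\oo\#c$ for some constant $c$ drawn from $\{n-1\}\cup\vec{c}$. Since $n\leq k$ and $\mathfrak{l}$ was chosen as the length of the largest of $k,\vec{c},\vec{d}$, every such $c$ has length $\leq \mathfrak{l}$, and so every labmove in $\Upsilon_n$ occupies at most $\mathfrak{l}+2$ cells (the label, the $\#$, and the digits of $c$). This already accounts for the first summand $(\mathfrak{v}+1)(\mathfrak{l}+2)$ of the claimed bound.

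Next I would bound the number of labmoves in $\Gamma$. The crucial point here is the reasonableness of both participants: by the invocation of Lemma \ref{mr22} (through which ${\mathcal N}$ and ${\mathcal K}$ were obtained), ${\mathcal H}_n$ is quasilegal, so every move it appends keeps $\Gamma$ $\pp$-quasilegal; and by our hypothesis on the adversary, $\Gamma$ is also $\oo$-quasilegal. Hence the numbers of $\pp$-labeled and of $\oo$-labeled moves of $\Gamma$ are bounded separately by their counterparts in some legal run of the context game. Since $F'(n-1)\mli F'(n)$ is a parallel composition of two copies of a formula having the same tree structure as $F'(0)$ (with the roles of the players swapped on one side), this yields at most $2\mathfrak{e}_\top+2\mathfrak{e}_\bot=2\mathfrak{e}$ labmoves; the case $n=0$ gives the strictly smaller count $\mathfrak{e}$.

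Finally I would bound the symbolwise length of each labmove of $\Gamma$. Since both players are prudent and the constants that determine the context game lie in $\{0,n-1,n\}\cup\vec{c}$ --- all of length at most $\mathfrak{l}$ --- every move's numer has size at most $\mathfrak{G}(\mathfrak{l})$; the non-numeric address prefix has length at most $\mathfrak{h}$ by definition; and one symbol each is spent on the label and on $\#$. Multiplying the resulting per-move bound $\mathfrak{G}(\mathfrak{l})+\mathfrak{h}+2$ by the $2\mathfrak{e}$ from the previous paragraph and adding the $\Upsilon_n$ contribution yields exactly the stated total. The part that most deserves care is the per-move estimate: since $\mathfrak{h}$ is defined via $F'(0)$ while the context game for $n\geq 1$ is $F'(n-1)\mli F'(n)$, one has to verify that the extra outer disjunct marker is absorbed by the additive constants the author has budgeted into the $+2$ summand; this is the only step that is not a direct reading off of the relevant definitions.
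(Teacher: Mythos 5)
Your proposal reproduces the paper's own proof: the same split of the run-tape content into $\Upsilon_n$ (at most $\mathfrak{v}+1$ labmoves of symbolwise length $\leq\mathfrak{l}+2$ each) and $\Gamma$ (at most $2\mathfrak{e}$ labmoves, each of symbolwise length $\leq\mathfrak{G}(\mathfrak{l})+\mathfrak{h}+2$ by the reasonableness of both players). The subtlety you flag about the outer ``$0.$''/``$1.$'' marker for $n\geq 1$ is real---$\mathfrak{h}$ is defined via $F'(0)$ and the $+2$ is already spent on the label and the $\#$ symbol, so a strict reading would give $\mathfrak{G}(\mathfrak{l})+\mathfrak{h}+4$ per move when $n\geq 1$---but the paper silently tolerates this constant slack (it just asserts ``it is obvious''), and it is harmless because the bound is only consumed inside $\mathfrak{L}$ for the pigeonhole argument of Lemma~\ref{m29a}.
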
 

\begin{proof}   Any position spelled on the run tape of ${\mathcal H}_n$ looks like $\seq{\Upsilon_n,\Gamma}$. The symbolwise length of the $\Upsilon_n$ part is at most $(\mathfrak{v}+1)\times (\mathfrak{l}+2)$, with $\mathfrak{v}+1$ being the (maximum) number of labmoves in $\Upsilon_n$ and $\mathfrak{l}+2$ being the maximum symbolwise length of each labmove, including the prefix  $\oo\#$. 
 By our assumption, ${\mathcal H}_n$\footnote{$\mathcal N$ (if $n\geq 1$) or $\mathcal K$ (if $n=0$), to be more precise.}  plays  reasonably.  The present lemma additionally assumes that so does ${\mathcal H}_n$'s adversary. If so, it is obvious that the symbolwise length  of no labmove in the $\Gamma$ part  can exceed $\mathfrak{G}(\mathfrak{l}) + \mathfrak{h} +2$; and there are at most 
$2\mathfrak{e}$ such labmoves. The symbolwise length of the $\Gamma$ part is thus at most $2\mathfrak{e}\bigl(\mathfrak{G}(\mathfrak{l}) + \mathfrak{h} +2\bigr)$.
\end{proof}

The following lemma states that the ${\mathcal H}_n$'s move soon enough or never, with $\mathfrak{L}$ acting as a ``statute of limitations'' function:   
\begin{lem}\label{m29a}
Consider any machine ${\mathcal H}_n\in\{{\mathcal H}_0,\ldots,{\mathcal H}_k\}$, and any cycle (step, time) $c$ of any play by ${\mathcal H}_n$. Assume that $u$ is the spacecost of 
cycle $c+\mathfrak{L}(\mathfrak{l},u)$. Further assume that  
the adversary of ${\mathcal H}_n$ plays reasonably, and it does not move  at any time $d$ with $d> c$.  Then ${\mathcal H}_n$ does not move at any time $d$ with $d> c+\mathfrak{L}(\mathfrak{l},u)$.
\end{lem}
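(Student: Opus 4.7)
The plan is to argue by contradiction via a standard pigeonhole argument on internal configurations: if $\mathcal H_n$ fails to move for too long while its adversary is silent, it must revisit some configuration inside a no-move stretch and then loop forever, never moving again.

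First I would suppose for contradiction that $\mathcal H_n$ moves at some cycle $d>c+\mathfrak L(\mathfrak l,u)$. Let $d_1<\cdots<d_m$ enumerate all cycles after $c$ at which $\mathcal H_n$ moves (so $d_m\geq d$), and set $d_0=c$. Since $\mathcal H_n$ plays quasilegally and every legal run of the underlying game ($F'(0)$ when $n=0$ and $F'(n-1)\mli F'(n)$ when $n\geq 1$) contains at most $\mathfrak e_\top+\mathfrak e_\bot=\mathfrak e$ many $\pp$-labeled moves, we have $m\leq\mathfrak e$. Because spacecost is monotone along any computation branch and $u$ is by hypothesis the spacecost at cycle $c+\mathfrak L(\mathfrak l,u)$, every cycle $t$ with $c<t\leq c+\mathfrak L(\mathfrak l,u)$ has spacecost at most $u$.

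For each $i\in\{1,\ldots,m\}$ I would consider the no-move window $W_i=[d_{i-1}+1,d_i-1]$. Inside $W_i$ neither player moves, so the content of $\mathcal H_n$'s run tape stays fixed throughout $W_i$; hence $\mathcal H_n$'s behaviour inside $W_i$ is completely determined by its internal configuration (state, work-tape contents and head positions, run-tape head position, and buffer content). Counting: the number of states is at most $\mathfrak r$; work-tape contents on $\mathfrak g$ tapes with at most $u$ cells over an alphabet of size $\mathfrak q$ give at most $\mathfrak q^{\mathfrak g u}$ possibilities; work-tape head positions give at most $(u+1)^{\mathfrak g}$; and by Lemma \ref{m29aaa} the run tape's content occupies at most $(\mathfrak v+1)(\mathfrak l+2)+2\mathfrak e\bigl(\mathfrak G(\mathfrak l)+\mathfrak h+2\bigr)$ cells, giving one more possible head position. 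Multiplying these four quantities yields exactly $C:=\mathfrak L(\mathfrak l,u)/(2\mathfrak e)$. If some $W_i$ contained more than $C$ cycles, the pigeonhole principle would provide two cycles in $W_i$ sharing an identical configuration (with identical surrounding run-tape content), forcing $\mathcal H_n$ into an infinite loop that never reaches $d_i$ --- a contradiction. Thus $|W_i|\leq C$, i.e.\ $d_i-d_{i-1}\leq C+1$; summing gives $d_m-c\leq m(C+1)\leq\mathfrak e(C+1)\leq 2\mathfrak e C=\mathfrak L(\mathfrak l,u)$ (using $C\geq 1$), contradicting $d_m>c+\mathfrak L(\mathfrak l,u)$.

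The main technical obstacle I anticipate is the treatment of the buffer inside the configuration count: strictly speaking, the configuration includes the buffer, so for the count to match $C$ the buffer's contribution must be absorbed into the factor $\mathfrak q^{\mathfrak g u}$, for instance by viewing the buffer as one additional bounded work tape. Because $\mathcal H_n$ is reasonable and unconditionally provident and its adversary plays reasonably, every buffer content that appears between consecutive moves is a proper prefix of a forthcoming prudent move and is therefore bounded in length. Once this bookkeeping is made explicit, the rest of the proof is a routine configuration-repetition argument.
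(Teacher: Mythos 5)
Your window-decomposition is a genuinely different route from the paper's, which instead performs a single pigeonhole over the whole interval $[c,\,c+\mathfrak{L}(\mathfrak{l},u)]$: it counts \emph{all} configurations there (explicitly including the run-tape content as a component, which is what the fifth factor $2\mathfrak{e}$ pays for), finds a repeated one among $\mathfrak{L}(\mathfrak{l},u)+1$ cycles, and concludes that the machine is trapped in a loop containing no moves (a move would enlarge the run-tape content, contradicting the repeat). Unfortunately, your variant has a real gap: the hypothesis only tells you the spacecost at cycle $c+\mathfrak{L}(\mathfrak{l},u)$ is $u$, so by monotonicity you get spacecost $\leq u$ for cycles in $[c,\,c+\mathfrak{L}(\mathfrak{l},u)]$ --- but nothing beyond that cycle. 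Your contradiction hypothesis places $d_m>c+\mathfrak{L}(\mathfrak{l},u)$, so the later windows $W_i$ run into territory where the spacecost can exceed $u$; there the factors $(u+1)^{\mathfrak{g}}$ and $\mathfrak{q}^{\mathfrak{g}u}$ no longer bound the number of configurations, and the inequality $|W_i|\leq C$ that you sum over is unjustified for exactly the windows that matter. Restricting attention to the interval $[c,\,c+\mathfrak{L}(\mathfrak{l},u)]$ and counting configurations directly, as the paper does, sidesteps this entirely and also makes the numerology immediate ($\mathfrak{L}(\mathfrak{l},u)+1$ cycles, at most $\mathfrak{L}(\mathfrak{l},u)$ configurations), whereas your summation $d_m-c\leq\mathfrak{e}(C+1)\leq 2\mathfrak{e}C$ depends on delicate slack that is worth double-checking.

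The obstacle you flag about the buffer is actually a non-issue, and you should not try to absorb the buffer into the work-tape count. An HPM never \emph{reads} its buffer; the buffer is write-only from the machine's perspective, so it has no influence on the transition function. The paper therefore simply \emph{omits} the buffer component from the configuration it counts: a repeat of the buffer-free configuration already forces deterministic periodicity of everything the machine can observe and do, and in particular (since the run-tape content \emph{is} part of the counted configuration) forces the loop to be move-free. Trying instead to bound the buffer's size and fold it into $\mathfrak{q}^{\mathfrak{g}u}$ would both complicate the proof and throw off the exact match with $\mathfrak{L}(\mathfrak{l},u)$ that the paper relies on.
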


\begin{proof} Assume the conditions of the lemma and, remembering that (not only ${\mathcal H}_n$'s adversary but also) ${\mathcal H}_n$ plays reasonably,  answer the following question: How many different configurations of ${\mathcal H}_n$ --- ignoring the buffer content component --- are there that may emerge in the play between (including) steps $c$ and $c+\mathfrak{L}(\mathfrak{l},u)$? We claim that this quantity cannot exceed  $\mathfrak{L}(\mathfrak{l},u)$. Indeed, there are at most $\mathfrak{r}$ possibilities for the state component of such a configuration. These possibilities are accounted for by the 1st of the five factors of $\mathfrak{L}(\mathfrak{l},u)$. Next, clearly there are at most $(u+1)^{\mathfrak{g}}$ possibilities for the locations of the work-tape heads,\footnote{Remember that a scanning head of an HPM can never move beyond the leftmost blank cell.}  which is accounted for by the 2nd factor of $\mathfrak{L}(\mathfrak{l},u)$. Next, in view of Lemma \ref{m29aaa}, there are at most $(\mathfrak{v}+1)\times (\mathfrak{l}+2)+2\mathfrak{e}\bigl(\mathfrak{G}(\mathfrak{l}) + \mathfrak{h} +2\bigr)+1$ possible locations of the run-tape head, and this number is accounted for by the 3rd factor of $\mathfrak{L}(\mathfrak{l},u)$. Next,  there are at most $\mathfrak{q}^{\mathfrak{g}u}$ possibilities for the contents of the $\mathfrak{g}$ work tapes, and this number is accounted for by the 4th factor of 
$\mathfrak{L}(\mathfrak{l},u)$. Finally, the run-tape content can change (be extended) at most $2\mathfrak{e}$ times,  and this number is accounted for by the 5th factor of $\mathfrak{L}(\mathfrak{l},u)$. Thus, there are at most $\mathfrak{L}(\mathfrak{l},u)$ possible configurations (ignoring the buffer content component), as promised. If so, some configuration repeats itself between steps $c$ and $c+\mathfrak{L}(\mathfrak{l},u)$, meaning that ${\mathcal H}_n$ is in a loop which will be repeated again and again forever.  Within that loop ${\mathcal H}_n$ makes no moves, for otherwise the run-tape-content component of the configurations would keep changing (expanding). \end{proof}  

\subsection{The procedure \texorpdfstring{$\simm$}{Sim}}\label{ssim}

We define an {\bf organ}\label{xorgan}  to be a pair $O=(\vec{\alpha},p)$, where $\vec{\alpha}$, called the 
{\bf payload}\label{xpayload} of $O$, is a (possibly empty) finite sequence of moves, and $p$, called the 
{\bf scale}\label{xscale} of $O$, is a positive integer. 

A {\bf signed organ}\label{xsuperorgan} $S$  is $-O$ or $+O$, where $O$ is an organ. In the first case we say that $S$ is 
{\bf negative},\label{xneggg} and in the second case we say that it is  {\bf positive}.\label{xpostv} The {\bf payload} and the {\bf scale} of such an $S$ mean those of $O$.

A {\bf body}\label{xbody} is a  tuple $B=(O_1,\ldots,O_s)$ of organs.
 The number $s$ is said to be the {\bf size}\label{xsize} of such a body $B$.

A {\bf $\simm$-appropriate triple}\label{xsat} is $(A,B,n)$, where $n\in\{0,\ldots,k\}$, $B$ is a nonempty body, and $A$ is a body required to be empty if $n=0$.

Our ${\mathcal M}_k$ simulates the work of the machines ${\mathcal H}_0,\ldots,{\mathcal H}_k$ through running the procedure $\simm$\label{xsim1} defined below. This procedure takes a $\simm$-appropriate triple $(A,B,n)$ as an argument, and returns  a pair $(S,u)$, where $S$ is a  signed organ and $u$ is a natural number. We indicate this relationship by writing $\simm_n(A,B)=(S,u)$.
We usually understand $\simm_{n}$ as the two-argument procedure --- and/or the corresponding function --- resulting from fixing the third argument of $\simm$ to $n$. Similarly for the later-defined $\simm_{n}^{\bullet}$, $\simm_{n}^{\leftarrow}$,
 $\simm_{n}^{\rightarrow}$.

We first take a brief informal  look at $\simm_n$ with $1\leq n\leq k$ ($\simm_0$ needs to be considered separately). Assume $A=\bigl((\vec{\alpha}_1,p_1),\ldots,(\vec{\alpha}_a,p_a)\bigr)$ and $B=\bigl((\vec{\beta}_1,q_1),\ldots,(\vec{\beta}_b,q_b)\bigr)$.  The argument 
 $(A,B)$ determines the scenario of the work of ${\mathcal H}_n$ that needs to be simulated. In this scenario, the moves made by ${\mathcal H}_n$'s adversary in the antecedent (resp. consequent) of $F'(n-1)\mli F'(n)$ come from 
$ \vec{\alpha}_1,\ldots,\vec{\alpha}_a$ (resp. $\vec{\beta}_1,\ldots,\vec{\beta}_b$). The simulation starts by ``{\em fetching}'' the organ $(\vec{\beta}_1,q_1)$  from $B$ and  tracing the first $q_1$ steps of ${\mathcal H}_n$ in the scenario where, at the very beginning of the play, i.e., on clock cycle $0$, the adversary made the moves $\vec{\beta}_1$ in the consequent of $F'(n-1)\mli F'(n)$, all at once. 
Which organ  is fetched next   depends on how things have evolved so far, namely, on whether within the above $q_1$ 
steps ${\mathcal H}_n$ has responded by a nonempty or empty sequence $\vec{\nu}$ of moves in the consequent of $F'(n-1)\mli F'(n)$. If 
$\vec{\nu}\not=\seq{}$, then the next organ  to be fetched will be  the first not-yet-fetched organ of $B$, i.e., $(\vec{\beta}_2,q_2)$; and if 
$\vec{\nu}=\seq{}$, then  the next organ  to be fetched  will be the first not-yet-fetched organ of $A$, i.e.,  $(\vec{\alpha}_1,p_1)$. 
After fetching such an organ $(\vec{\delta},r)\in\{(\vec{\beta}_2,q_2),(\vec{\alpha}_1,p_1)\}$,  the simulation of ${\mathcal H}_n$ rolls back to the point $w$ at which ${\mathcal H}_n$ made its last move (if there are no such moves, then $w=0$), and continues from there for additional $r$ steps in the scenario where, at the very beginning of the episode, i.e., at step $w$, ${\mathcal H}_n$'s imaginary adversary responded by the moves $\vec{\delta}$, all at once, in the corresponding component (consequent if $\vec{\nu}\not=\seq{}$ and antecedent if $\vec{\nu}=\seq{}$) of $F'(n-1)\mli F'(n)$. 
 As in the preceding case, what to fetch next ---  the leftmost not-yet-fetched organ of $B$ or that of $A$ --- depends on whether within the above $r$ steps (i.e., steps $w$ through $w+r$) ${\mathcal H}_n$ responds by a nonempty or an empty sequence of moves in the consequent of $F'(n-1)\mli F'(n)$. 
And similarly for the subsequent steps: whenever ${\mathcal H}_n$  responds to the last series $\vec{\beta}_i$ (resp. $\vec{\alpha}_i$) of the imaginary adversary's moves with a nonempty   sequence $\vec{\nu}$ of moves in the consequent of   $F'(n-1)\mli F'(n)$ within $ {q_i}$ (resp. $ {p_i}$) steps, the  next organ $(\vec{\delta},r)$ to be fetched will be the first not-yet-fetched organ of $B$; 
otherwise such a $(\vec{\delta},r)$ will be  the  first not-yet-fetched organ of $A$. In either case, the simulation of ${\mathcal H}_n$ rolls back to the point $w$ at which ${\mathcal H}_n$ made its last move, and continues from there for additional $r$ steps in the scenario where, at step $w$, ${\mathcal H}_n$'s imaginary adversary responded by the moves $\vec{\delta}$ in the corresponding component (consequent if $\nu\not=\seq{}$ and antecedent if $\nu=\seq{}$) of the game. 
The overall procedure ends when it tries to fetch the next not-yet-fetched organ of   $A$ (resp. $B$) but finds that there are no such organs remaining. Then the  $S$ part of the output $(S,u)$ of $\simm_n(A,B)$ is stipulated to be $-(\vec{\sigma},r)$ (resp. $+(\vec{\sigma},r)$), where 
$\vec{\sigma}$ is the sequence of moves made by ${\mathcal H}_n$ in the antecedent (resp. consequent) of $F'(n-1)\mli F'(n)$  since the last organ of $A$ (resp. $B$) was fetched, and $r$ is the scale of that organ. As for the $u$ part of the output $(S,u)$, in either case it is simply the maximum number of non-blank cells on any (any {\em one}) work tape of ${\mathcal H}_n$ at the end of the simulated episode. 

The case of $\simm_0((),B)$ is similar but simpler. In fact, $\simm_0((), B)$ is a special case of $\simm_n(A,B)$ if we  
think of $F'(0)$ as the implication $F'(-1)\mli F'(0)$ with the dummy antecedent $F'(-1)=\twg$.

In order to be able to define $\simm_0$ or $\simm_n$ ($1\leq n\leq  k$) more formally, we need a couple of notational conventions.

When $\vec{\alpha}=\seq{\alpha_1,\ldots,\alpha_s}$ is a sequence of moves,  $\omega$ is a string over the keyboard alphabet (such as, say, ``$0.$'', ``$1.$'' or the empty string) and $\xx$ is one of the players $\pp$ or $\oo$, we shall write $\xx\omega\vec{\alpha}$\label{xder} for the run  $\seq{\xx\omega\alpha_1,\ldots,\xx\omega\alpha_s}$. 

Next, when $W$ is a configuration of ${\mathcal H}_n$ ($0\leq n\leq k$) and $\Theta$ is a finite sequence of labmoves, we shall write $W\oplus\Theta$\label{xopl} to denote the configuration that results from $W$ by appending $\Theta$ to the (description of the) run-tape content component   of $W$.

In precise terms, this is how the {\bf procedure  $\simm_0((),B)$} works.
 It creates two  
integer-holding variables $b$ and $u$, with $b$ initialized to $1$ and $u$ to $0$. It further creates a variable $\vec{\nu}$ to hold move sequences, initialized to the empty sequence $\seq{}$. It further creates a configuration-holding variable $W$, initialized to the start configuration of ${\mathcal H}_0$ where the run tape  is empty (and, of course, so are the work tapes and the buffer). Finally, it creates  two signed-organ-holding variables $S$ and $R$, with $S$ having no initial value and $R$ initialized to $+O$, where $O$ is the first organ of $B$ (remember that $B$ is required to be nonempty).\footnote{The presence of the variable $S$ may seem redundant at this point, as $\simm_0((),B)$ (and likewise  $\simm_n(A,B)$ with $n\geq 1$) could be defined in a simpler way without it. The reason why we want to have $S$ will become clear in Section \ref{stf}. Similarly, in the present case we could have done without the variable $R$ as well --- it merely serves the purpose of ``synchronizing'' the cases of $n=0$ and $n\geq 1$.}   After this initialization step, the procedure goes into the following loop $\loopp_0$. Each iteration of the latter  simulates a certain number of steps of $H_0$ starting from a certain configuration (namely, the then-current value of $W$) in the scenario where ${\mathcal H}_0$'s imaginary adversary makes no moves other than those already present in configuration $W$ (i.e., already made by the time $W$ was reached).\vspace{10pt}  

{\bf Procedure $\loopp_0$}:  Let $+(\vec{\omega},p)$ be the value of $R$ ($R$ never takes negative values when $n=0$). Change the value of $W$ to $W\oplus\oo\vec{\omega}$. Then simulate/trace $p$ steps  of ${\mathcal H}_0$ starting from  configuration $W$.   
 While performing this simulation, keep track of the maximum number of non-blank cells on any (any one) of the work tapes of ${\mathcal H}_0$, and increment $u$ to that number every time the latter exceeds $u$. Also, every time ${\mathcal H}_0$ makes a move $\mu$, update $\vec{\nu}$ by adding $\mu$ at the end of it, and, additionally, update $W$ to the configuration in which such a move $\mu$ was made. Once the simulation of $p$ steps is complete, do the following. If $\vec{\nu}$  is empty, set the value of $S$ to $-(\vec{\nu},p)$ and return $(S,u)$. Suppose now  $\vec{\nu}$  is nonempty. In this case 
set the value of $S$ to $+(\vec{\nu},p)$. Then, if $b$ equals the size of $B$, return $(S,u)$. Otherwise, increment $b$ to $b+ 1$,  set $R$ to the $b$th organ of $B$ prefixed with ``$+$'', and repeat $\loopp_0$.\vspace{10pt}

Next, this is  how the {\bf procedure $\simm_n(A,B)$}\label{xsim} exactly works when  $n\geq 1$. It creates three integer-holding variables $a,b,u$, with $b$ initialized to $1$ and  $a,u$ to $0$.\footnote{Intuitively, $b$ keeps track of how many organs of $B$ have been fetched so far, and $a$ does the same for $A$. }  It further creates two move-sequence-holding variables $\vec{\psi}$ and $\vec{\nu}$, both initialized to the empty sequence $\seq{}$.  It further creates a configuration-holding variable $W$, initialized to the start configuration of ${\mathcal H}_n$ where the run tape  is empty. Finally, it creates  two signed-organ-holding variables $S$ and $R$, with $S$ having no initial value and $R$ initialized to $+O$, where $O$ is the first organ of $B$.
 After this initialization step, the procedure goes into the following loop $\loopp_n$.
As before, each iteration of the latter  simulates a certain number of steps of $H_n$ starting from a certain configuration (namely, $W$) in the scenario where the imaginary adversary makes no new moves.\vspace{10pt}  

{\bf Procedure $\loopp_n$}:  Let $+(\vec{\omega},p)$ (resp. $-(\vec{\omega},p)$) be the value of $R$. Change the value of $W$ to 
$W\oplus\oo 1.\vec{\omega}$ (resp. $W\oplus\oo 0.\vec{\omega}$).
Then simulate/trace $p$ steps  of ${\mathcal H}_n$ starting from  configuration $W$.   
 While performing this simulation, keep track of the maximum number of non-blank cells on any of the work tapes  of ${\mathcal H}_n$, and increment $u$ to that number every time the latter exceeds $u$. Also, every time ${\mathcal H}_n$ makes a move $\mu$ in the antecedent (resp. consequent) of the game, update $\vec{\psi}$ (resp. $\vec{\nu}$) by adding $\mu$ at the end of it, and, additionally, update $W$ to the configuration in which such a move $\mu$ was made. Once the simulation of $p$ steps is complete, do the following.
\begin{itemize}
  \item If $\vec{\nu}$ is nonempty, set the value of $S$ to $+(\vec{\nu},p)$. Then, if $b$ equals the size of $B$, return $(S,u)$; otherwise, increment $b$ to $b+ 1$, set $R$ to the $b$th organ of $B$ prefixed with ``$+$'', reset $\vec{\nu}$ to $\seq{}$, and repeat $\loopp_n$. 
  \item If $\vec{\nu}$ is empty, set the value of $S$ to $-(\vec{\psi},p)$. Then, if $a$ equals the size of $A$, return $(S,u)$. Otherwise, increment $a$ to $a+1$, set $R$ to the $a$th organ of $A$ prefixed with ``$-$'', reset $\vec{\psi}$ to $\seq{}$, and repeat $\loopp_n$.\vspace{10pt}
\end{itemize}

\noindent For a $\simm$-appropriate triple $(A,B,n)$, we shall write \[\simm_{n}^{\bullet}(A,B)\label{xsb}\] 
to refer to the signed organ $S$ such that $\simm_{n}(A,B)=(S,u)$ for some (whatever) $u$.

Later, we may write $\simm_{n}(A,B)$ to refer to either the {\em procedure} $\simm_{n}$ applied to arguments $A$ and $B$, or to the {\em output} $(S,u)$ of that procedure on the same arguments. It will be usually clear from the context which of these two is meant.
The same applies to  $\simm_{n}^{\bullet}(A,B)$   which, seen as a procedure,  runs exactly like $\simm_n(A,B)$, and only differs from the latter in that it just outputs $S$ rather than $(S,u)$. 

Consider any two bodies  $B=(O_1,\ldots,O_t)$ and $B'=(O'_1,\ldots,O'_{t'})$. We say that $B'$ is an {\bf extension}\label{xbe} of $B$, and that $B$ is a {\bf restriction}\label{xbr} of $B'$, iff  $t\leq t'$ and $O_1=O'_1,\ldots,O_t=O'_t$. As expected, ``{\bf proper extension}'' means ``extension but not restriction''. Similarly for ``{\bf proper restriction}''. 
 
\begin{lem}\label{golemma}
Consider any $\simm$-appropriate triple $(A,B,n)$. 
\begin{enumerate}[label=\arabic*.]
\item If $\simm_{n}^{\bullet}(A,B)$ is negative, then, for every extension $B'$ of $B$, $\simm_n(A,B')=\simm_n(A,B)$.

\item If $\simm_{n}^{\bullet}(A,B)$ is positive and $n\not=0$, then, for every extension $A'$ of $A$,  $\simm_n(A',B)=\simm_n(A,B)$.

\item Whenever  $\simm_{n}^{\bullet}(A,B)$ is positive, the size of $B$ does not exceed $\mathfrak{e}_\top$.
\end{enumerate}
\end{lem}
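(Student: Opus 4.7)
The plan is to analyze the deterministic procedure $\simm_n$ by tracking which organs of $A$ and $B$ it actually fetches, and to use the unconditional quasilegality of ${\mathcal H}_n$. Observe first that the execution of $\simm_n(A,B)$ is completely determined by the initial configuration of ${\mathcal H}_n$ together with the sequence of organs it requests: $\simm_n$ consults $B$ only in the branch where $\vec{\nu}\not=\seq{}$, and (when $n\geq 1$) consults $A$ only in the branch where $\vec{\nu}=\seq{}$; in those branches the next request is always the $(b+1)$-th organ of $B$ or the $(a+1)$-th organ of $A$, respectively.

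For clause 1, a negative value of $\simm_n^\bullet(A,B)$ means termination from the ``$\vec{\nu}$ empty'' branch, a branch in which no fresh organ is ever drawn from $B$. Consequently the value of $b$ at termination is at most the size of $B$, and only the initial $b$ organs of $B$ were ever consulted during the entire run. Since these initial organs coincide with the corresponding organs of any extension $B'$, an easy iteration-by-iteration induction shows that the runs of $\simm_n(A,B)$ and $\simm_n(A,B')$ produce the same sequence of configurations and therefore the same output. Clause 2 is symmetric: in the $n\geq 1$ positive case, termination happens inside the ``$\vec{\nu}$ nonempty'' branch, the index $a$ is never advanced there, and so only the first $a$ organs of $A$ were consulted --- and these agree with those of any extension $A'$.

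For clause 3, the proof would be by counting. The variable $b$ starts at $1$ and increments by one in every iteration whose body ends with $\vec{\nu}\not=\seq{}$ \emph{without} returning; a positive return occurs exactly when such an iteration has $b$ equal to the size of $B$. Hence the number of iterations finishing with nonempty $\vec{\nu}$ equals the size of $B$, and each of those iterations contributes at least one $\pp$-labeled move of ${\mathcal H}_n$ in the consequent component of the game it is playing (or in $F'(0)$ itself when $n=0$), with moves across distinct iterations being distinct. Thus the simulated play of ${\mathcal H}_n$ contains at least as many $\pp$-labeled moves in that component as the size of $B$. By the global reasonableness assumption, ${\mathcal H}_n$ plays \emph{unconditionally} quasilegally, so these $\pp$-labeled moves extend to a legal run of $F'(n)$ regardless of the fact that the simulated adversary moves are fabricated from arbitrary organ payloads. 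Since $F'(n)$ has the same game-tree structure as $F'(0)$, any legal run contains at most $\mathfrak{e}_\top$ $\pp$-labeled moves, giving the desired bound on the size of $B$.

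The delicate point I anticipate is the justification in clause 3 that \emph{unconditional} quasilegality really applies to the simulated scenario: one must check that the moves recorded in successive $\vec{\nu}$'s genuinely correspond to the $\pp$-labeled moves in $F'(n)$ of some run actually generated by ${\mathcal H}_n$ (against some adversary behavior), so that the $\mathfrak{e}_\top$ bound is applicable. Clauses 1 and 2, by contrast, should reduce to essentially routine bookkeeping about the evolution of the fetching indices.
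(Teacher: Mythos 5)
Your proof is correct and follows essentially the same route as the paper's: the paper dismisses clauses 1--2 as a ``straightforward analysis of the work of $\simm_n$,'' and for clause 3 counts the positive values taken by the variable $S$ (which equals the size of $B$), observes that each has a nonempty payload of moves made by ${\mathcal H}_n$ in the consequent, and invokes quasilegality to cap that count at $\mathfrak{e}_\top$ --- exactly your argument. The worry you flag at the end is already taken care of by the paper's definition of an HPM ``playing quasilegally'': this is quantified over \emph{every} run the machine generates, not merely the $\oo$-legal ones, so the $\mathfrak{e}_\top$ bound applies regardless of how exotic the organ payloads fed to ${\mathcal H}_n$ may be.
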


\begin{proof} Clauses 1-2 can be verified through a straightforward analysis of the work of $\simm_n$. For clause 3, assume $\simm_n(A,B)=+(\vec{\omega},p)$, and let $s$ be the size of $B$. Observe that, in the process of computing $\simm_n(A,B)$, the payloads of all positive values that the variable $S$ ever takes, including its last value $+(\vec{\omega},p)$, are nonempty. All such payloads consist of moves made by ${\mathcal H}_n$ in the consequent of $F'(n-1)\mli F'(n)$.  From the work of $\simm_n$ we can see that altogether there are $s$ positive values taken by $S$. Now, remembering our assumption that ${\mathcal H}_n$ plays quasilegally, implying that it does not make more than $\mathfrak{e}_\top$ moves in the consequent of $F'(n-1)\mli F'(n)$, it is clear that $s$ cannot exceed $\mathfrak{e}_\top$. \end{proof}

By a {\bf saturated}\label{xsaturated} triple we shall mean a $\simm$-appropriate triple   $(A,B,n)$ such that:   
\begin{enumerate}[label=\arabic*.]
\item If $\simm_{n}^{\bullet}(A,B)$ is negative, then, for every nonempty proper restriction  $B'$ of $B$,  $\simm_{n}^{\bullet}(A,B')$ is positive. 

\item If $\simm_{n}^{\bullet}(A,B)$ is positive, then, for every  proper restriction  $A'$ of $A$,  $\simm_{n}^{\bullet}(A',B)$ is negative.
\end{enumerate}\bigskip 
 
\noindent For a body $B=(O_1,\ldots,O_s)$, we will write 
$B^{\odd}$\label{xbodd} (resp. $B^{\even}$)
to denote the body $(O_1,O_3,\ldots)$ (resp. $(O_2,O_4,\ldots)$) obtained from $B$ by deleting each $O_i$ with an even (resp. odd) $i$.  

\begin{defi}\label{ap20a}
Consider a saturated triple $(A,B,n)$. Let $A=(A_1,\ldots,A_a)$ and $B=(B_1,\ldots,B_b)$. Further let   $-P_1,\ldots,-P_p $ be the (sequence of the) negative values that the variable $S$ of the procedure $\simm_n$ goes through when computing $\simm_{n}(A,B)$, and let $+Q_1,\ldots,+Q_q $ be the (sequence of the) positive values that  $S$ goes through. Observe that $a\leq p\leq a+1$ and $q\leq b\leq q+1$. 
\begin{enumerate}[label=\arabic*.]
\item We define $\simm_{n}^{\leftarrow}(A,B)$\label{xslar} as the body $(P_1,A_1,P_2,A_2,\ldots)$ --- that is, the (unique) body $C$ such that $C^{\odd}=(P_1,\ldots,P_p)$ and $C^{\even}=(A_1,\ldots,A_a)$.

\item We define $\simm_{n}^{\rightarrow}(A,B)$\label{xsrar} as the
  body $(B_1,Q_1,B_2,Q_2,\ldots)$ --- that is, the (unique) body $C$
  such that $C^{\odd}=(B_1,\ldots,B_b)$ and
  $C^{\even}=(Q_1,\ldots,Q_q)$.
\end{enumerate}
\end{defi}\bigskip

\noindent Let $B=\bigl((\vec{\alpha}_1,p_1),\ldots,(\vec{\alpha}_s,p_s)\bigr)$ be a body. 
  We define  $\overline{B}\label{xoverl}$ 
 as   the run $\seq{\oo\vec{\alpha}_1,\pp\vec{\alpha}_2,\ldots}$ obtained from $\seq{\vec{\alpha}_1,\ldots,\vec{\alpha}_s}$ by replacing each $\vec{\alpha}_i$ ($1\leq i\leq s$) with $\oo\vec{\alpha}_i$ if $i$ is odd, and with $\pp\vec{\alpha}_i$ if $i$ is even.

Some more notation and terminology. When $\Gamma$ and $\Delta$ are runs, we write $\Gamma\preceq \Delta$\label{xx85} to mean that $\Gamma$ is a (not necessarily proper) initial segment of $\Delta$. Next, as always in CoL, $\gneg \Gamma$\label{xnki} means the result of changing in $\Gamma$ each label $\pp$ to $\oo$ and vice versa.  $\Gamma^{0.}$\label{xgnol}  means the result of deleting from $\Gamma$ all moves (together with their labels, of course) except those of the form $0.\alpha$, and then further deleting the prefix ``$0.$''  in the remaining moves. Similarly for $\Gamma^{1.}$. Intuitively, when $\Gamma$ is a play of a parallel disjunction $G_0\mld G_1$ or conjunction $G_0\mlc G_1$ of games,  $\Gamma^{0.}$ (resp. $\Gamma^{1.}$) is the play that has taken place --- according to the scenario of $\Gamma$ --- in the $G_0$ (resp. $G_1$) component.

\begin{lem}\label{ap20b} Consider any saturated $\simm$-appropriate triple $(A,B,n)$. Let $\simm_{n} (A,B )= \bigl(\pm(\vec{\omega},v),u\bigr)$, where $\pm\in\{+,-\}$.
\begin{enumerate}[label=\arabic*.]
\item   The case of $n=0$ (and hence $A=()$):
\begin{enumerate} 
\item There is a run $\Upsilon$ generated by ${\mathcal H}_0$ such that  $\overline{\simm_{0}^{\rightarrow }  ((),B )}\preceq \Upsilon$. 

\item Furthermore, if $\overline{\simm_{0}^{\rightarrow }  ((),B )}$ is a reasonable run of $F'(0)$ and    $v\geq \mathfrak{L}(\mathfrak{l},u)$, then, for such an $\Upsilon$, we simply have $\overline{\simm_{0}^{\rightarrow }  ((),B )}= \Upsilon$.
\end{enumerate}

\item The case of $1\leq n\leq k$:
\begin{enumerate}
\item There is a run $\Upsilon$ generated by ${\mathcal H}_n$ such that  $\overline{\simm_{n}^{\rightarrow }  (A,B )}\preceq \Upsilon^{1.}$ and $\gneg\overline{\simm_{n}^{ \leftarrow}  (A,B )}\preceq \Upsilon^{0.}$. 

\item Furthermore, if $\simm_{n}^{\bullet}  (A,B )$ is negative, $\overline{\simm_{n}^{\rightarrow }  (A,B )}$     is a reasonable run of $F'(n)$, $\overline{ \simm_{n}^{ \leftarrow}  (A,B )}$   is a reasonable run of $F'(n-1)$ and    $v\geq \mathfrak{L}(\mathfrak{l},u)$, then, for such an $\Upsilon$, we simply have $\overline{\simm_{n}^{\rightarrow }  (A,B )}= \Upsilon^{1.}$ and $\gneg\overline{\simm_{n}^{\leftarrow }  (A,B )}=\Upsilon^{0.}$.
\end{enumerate}
\end{enumerate}
\end{lem}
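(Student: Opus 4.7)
The plan is to reconstruct, directly from the trace of $\simm_n(A,B)$, a genuine computation branch of $\mathcal{H}_n$, and then read off the claimed prefix and equality relations from the definitions of $\simm_n^{\leftarrow}$, $\simm_n^{\rightarrow}$ and $\overline{\cdot}$.

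First I would unfold $\simm_n$'s behaviour. During the procedure, the configuration-valued variable $W$ evolves by alternately (i) having the payload of a freshly fetched organ appended to its run-tape component as $\oo$-labelled moves, and (ii) undergoing $p$ steps of the simulation of $\mathcal{H}_n$ from the current configuration, after which $W$ is reset to the configuration in which $\mathcal{H}_n$ made its last move during that simulation. Since only the \emph{post}-last-move tail of each episode is discarded, and no machine moves occur in that tail, concatenating the episodes exhibits an actual computation of $\mathcal{H}_n$ against an adversary who, right after each move of $\mathcal{H}_n$ (and at cycle $0$ before any move), posts the payload of the next fetched organ on its run tape. Let $\Upsilon$ be the run spelled by this computation branch, continued beyond the last simulated cycle by letting $\mathcal{H}_n$ run forever without further adversary activity.

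For part (a) in both cases, I would then unfold $\simm_n^{\leftarrow}(A,B)$ and $\simm_n^{\rightarrow}(A,B)$ according to Definition~\ref{ap20a}. The payloads of the $B_j$'s and $A_i$'s are exactly the adversary payloads posted in the consequent and antecedent of $F'(n-1)\mli F'(n)$ (with the antecedent vacuous when $n=0$), while the payloads of the $Q_j$'s and $P_i$'s are exactly the moves made by $\mathcal{H}_n$ in the consequent and antecedent during the corresponding episodes, in order. Applying the labelling conventions of $\overline{\cdot}$ and $\gneg(\cdot)$, one reads off that $\overline{\simm_n^{\rightarrow}(A,B)}$ and $\gneg\overline{\simm_n^{\leftarrow}(A,B)}$ coincide with $\Upsilon^{1.}$ and $\Upsilon^{0.}$, respectively, up through the end of the final simulated episode. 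Hence the initial-segment relations stated in part (a) hold; the $n=0$ case collapses to the single assertion $\overline{\simm_0^{\rightarrow}((),B)} \preceq \Upsilon$.

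For part (b), I would apply Lemma~\ref{m29a} to the final episode. Let $c$ be the clock cycle at which that episode begins, right after its prescribed batch of adversary moves has been appended to $W$. In the reconstructed branch, the adversary makes no moves at any time $d > c$ (the simulation posts no further payloads), and the reasonability hypotheses on $\overline{\simm_n^{\rightarrow}(A,B)}$ and $\overline{\simm_n^{\leftarrow}(A,B)}$, together with the already-established reasonable play of $\mathcal{H}_n$, guarantee that the adversary of $\mathcal{H}_n$ plays reasonably up to cycle $c$. The maximum spacecost on any single work tape over the $v$ cycles of the last episode is bounded by $u$, so the hypothesis of Lemma~\ref{m29a} is met at cycle $c$ with this $u$. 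Since $v \geq \mathfrak{L}(\mathfrak{l},u)$, the lemma yields that $\mathcal{H}_n$ makes no further moves after cycle $c + \mathfrak{L}(\mathfrak{l},u) \leq c + v$; for $n\geq 1$ the assumption that $\simm_n^{\bullet}(A,B)$ is negative is what confirms that the last episode concluded without an unreported machine move on the consequent side, so all of $\mathcal{H}_n$'s moves have been captured. The prefix relations of part (a) therefore tighten to the equalities claimed in part (b). The main obstacle I anticipate is precisely the bookkeeping in the first two paragraphs: articulating that the ``roll-back'' of $W$ at the end of each episode introduces no discontinuity in the reconstructed branch, because the discarded tail contains no machine moves and can harmlessly be replaced by the adversary's posting of the next organ immediately after $\mathcal{H}_n$'s last move. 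Once that correspondence is pinned down, the rest of the argument is definition-chasing plus one invocation of Lemma~\ref{m29a}.
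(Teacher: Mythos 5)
Your proof takes essentially the same route as the paper's: reconstruct from the episodes of $\simm_n$ a genuine computation branch of ${\mathcal H}_n$, observe that each roll-back of $W$ discards only a move-free tail so the concatenation is a legitimate branch, read off the initial-segment relations of part (a) directly from the definitions of $\simm_n^{\leftarrow}$, $\simm_n^{\rightarrow}$ and $\overline{\,\cdot\,}$, and then for part (b) invoke Lemma~\ref{m29a} on the final episode (with $c$ the cycle at which it begins and $v\geq\mathfrak{L}(\mathfrak{l},u)$ guaranteeing no further moves) to tighten $\preceq$ to $=$. One small slip worth correcting: when $\simm_{n}^{\bullet}(A,B)$ is positive the moves that may go unreported are ${\mathcal H}_n$'s moves in the \emph{antecedent} (accumulated in $\vec{\psi}$ but never emitted as a negative organ), not the consequent --- this is precisely why, in that case, $\Phi^{0.}$ may properly extend $\gneg\overline{\simm_{n}^{\leftarrow}(A,B)}$ by some $\pp$-labeled moves, and why the negativity assumption in clause 2(b) is needed.
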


\begin{proof} Assume the conditions of the lemma. Let $A=\bigl((\vec{\alpha}_1,p_1),\ldots,(\vec{\alpha}_a,p_a)\bigr)$ and $B=\bigl((\vec{\beta}_1,q_1),\ldots,(\vec{\beta}_b,q_b)\bigr)$. Further let   $-(\vec{\gamma}_1,r_1)$, \ldots, $-(\vec{\gamma}_c,r_c)$ be the  negative values that the variable $S$ of the procedure $\simm_n$ goes through when computing $\simm_{n}(A,B)$, and let 
$+(\vec{\delta}_1,s_1)$, \ldots, $+(\vec{\delta}_d,s_d)$ be the  positive values that  $S$ goes through.

\begin{enumerate}[label=\arabic*.]
\item Assume $n=0$, and thus $A=()$, i.e.,  $a=0$. 
 Analyzing the definitions of $\simm_0$ and $\simm_{0}^{\rightarrow}$ and   taking into account that $\bigl((),B,0\bigr)$ is saturated, we see that, what the procedure $\simm_0\bigl((),B\bigr)$
does is that it simulates the first $t$  steps of a certain computation branch $C$ of ${\mathcal H}_0$ for a certain  
$t$ with $v=q_b\leq t\leq q_1+\ldots +q_b$, and the position spelled on ${\mathcal H}_0$'s imaginary run tape by the end of this episode (without counting the initial moves $\oo \vec{c}$ --- see Remark \ref{apology}) is nothing but $\overline{\simm_{0}^{\rightarrow }  \bigl((),B \bigr)}$. Let $\Upsilon$ be the run spelled by $C$. Then $\Upsilon$ satisfies the promise of clause 1(a) of the lemma. For clause 1(b), additionally assume that $\overline{\simm_{0}^{\rightarrow }  \bigl((),B \bigr)}$ is a reasonable run of $F'(0)$ and    $v\geq \mathfrak{L}(\mathfrak{l},u)$.  We may assume that, in the above branch $C$, ${\mathcal H}_0$'s adversary makes no moves after (beginning from) time $t-v$. Then, by Lemma \ref{m29a}, ${\mathcal H}_0$ makes no moves after (beginning from) time $t$. Thus, the run $\Upsilon$ contains no labmoves in addition to those that are in $\simm_{n}^{\rightarrow }  (A,B )$, meaning that 
 $\simm_{n}^{\rightarrow }  (A,B )= \Upsilon$, as desired.\vspace{4pt}

\item Assume $1\leq n\leq k$. Again, taking into account that $(A,B,n)$ is saturated, we can see that, what the procedure $\simm_n\bigl(A,B\bigr)$ 
does is that it simulates the first $t$  steps of a certain computation branch $C$ of  ${\mathcal H}_n$ for a certain number $t$ with $v\leq t\leq p_1+\ldots+p_a+q_1+\ldots +q_b$. Note that here $v$ is either $p_a$ or $q_b$. Let $\Phi$ be the position spelled on ${\mathcal H}_n$'s imaginary run tape by the end of this episode. It is not hard to see that $\Phi^{1.}=\overline{\simm_{n}^{\rightarrow }  \bigl(A,B \bigr)}$. Further, if  $\simm_{n}^{\bullet}  (A,B )$ is negative, then we also have $\Phi^{0.}=\gneg\overline{\simm_{n}^{\leftarrow }  \bigl(A,B \bigr)}$. Otherwise, if $\simm_{n}^{\bullet}  (A,B )$ is positive, $\Phi^{0.}$   is a (not necessarily proper) extension of $\gneg\overline{\simm_{n}^{\leftarrow }  \bigl(A,B \bigr)}$ through some $\pp$-labeled moves. Let $\Upsilon$ be the run spelled by $C$. Then, in view of the observations that we have just made, $\Upsilon$ satisfies the promise of clause 2(a) of the lemma.

For clause 2(b), additionally assume that 
$\simm_{n}^{\bullet}  (A,B )$ is negative, $\overline{\simm_{n}^{\rightarrow }  (A,B )}$     
is a reasonable run of $F'(n)$, $\gneg \overline{\simm_{n}^{ \leftarrow}  (A,B )}$   is a reasonable run of 
$F'(n-1)$, and    $v\geq \mathfrak{L}(\mathfrak{l},u)$.  As observed in the preceding paragraph, on our 
present assumption of $\simm_{n}^{\bullet}  (A,B )$'s being negative, we have 
$\Phi^{0.}=\gneg\overline{\simm_{n}^{\leftarrow }  \bigl(A,B \bigr)}$ and $\Phi^{1.}=\overline{\simm_{n}^{\rightarrow }  \bigl(A,B \bigr)}$.    
We may assume that, in the above branch $C$, ${\mathcal H}_n$'s adversary makes no moves after (beginning from) time 
$t-v$. Then, by Lemma \ref{m29a}, ${\mathcal H}_n$ makes no 
moves after (beginning from) time $t$. Thus, the run $\Upsilon$ contains 
no labmoves in addition to those that are (after removing the prefixes ``$0.$'' and ``$1.$'') 
in $\gneg\overline{\simm_{n}^{\leftarrow }  (A,B )}$ and $\overline{\simm_{n}^{\rightarrow }  (A,B )}$, meaning that 
 $\gneg\overline{\simm_{n}^{\leftarrow }  (A,B )}= \Upsilon^{0.}$ and
 $\overline{\simm_{n}^{\rightarrow }  (A,B )}= \Upsilon^{1.}$, as
 desired.\qedhere
\end{enumerate}
\end{proof}

\subsection{Aggregations}\label{saggr}
By an {\bf entry}\label{xentry} we shall mean a pair $E=[n,B]$, where $n$, called the {\bf index}\label{xindex} of $E$, is an element of $\{0,\ldots,k\}$, and $B$, called the {\bf body of $E$},\label{xbodyof} is a body. The {\bf size} of an entry $E$ should be understood as the size of its body. By saying that an entry is {\bf $n$-indexed} we shall mean that $n$ is the index of that entry.

An {\bf aggregation} is a nonempty finite sequence  
$\vec{E}$ of entries such that:
\begin{enumerate}[label=\bf(\roman*)]
  \item The last entry of $\vec{E}$ is $k$-indexed, and its body is odd-size. We call it the 
{\bf master entry}\label{xheadentry} of $\vec{E}$, and call all other entries  (if there are any) 
{\bf common entries}.\label{xtailentry} 
  \item  The indices of the entries of $\vec{E}$ are strictly increasing. That is, the index of any given entry is strictly smaller than the index of any entries to the right of it.
\item Each even-size entry (if there are such entries) is to the left of each odd-size entry. 
\item The sizes of the even-size entries are strictly decreasing. That is, the size of any even-size entry is strictly smaller than the size of any (even-size) entry to the left of it.
\item The sizes of the odd-size common entries are strictly increasing. That is, the size of any odd-size common entry is strictly smaller than the size of any (odd-size) common entry to the right of it.
\item There are no entries of size $0$.
\end{enumerate}

\noindent The {\bf central triple}\label{xcentral} of an  aggregation $\vec{E}$ is $(L,R,n)$, where: 
\begin{enumerate}
  \item $n$ is the index of the leftmost odd-size entry of $\vec{E}$. 
  \item $R$ is the body of the above $n$-indexed entry of $\vec{E}$.
  \item If $\vec{E}$ does not have an entry whose index is $n-1$,\footnote{This condition is always automatically satisfied when $n=0$.}   then $L$ is the empty body $()$. Otherwise, $L$ is the body of the $(n-1)$-indexed  entry of $\vec{E}$.  
\end{enumerate}\medskip

\noindent Consider any aggregation $\vec{E}$. The {\bf master body} of $\vec{E}$ is the body of the master entry of $E$; the {\bf master organ}\label{xmaster} of $\vec{E}$ is the last organ of the master body of $\vec{E}$; and the {\bf master payload} (resp. {\bf master scale})  of $\vec{E}$ is the payload (resp. scale) of the  master organ of $\vec{E}$.

\subsection{The procedure {\sc Main}}\label{sn}

We are now ready to finalize our description of the work of ${\mathcal M}_k$. This is a machine that creates an aggregation-holding variable $\vec{E}$ and an integer-holding variable $U$, with $\vec{E}$ initialized to the aggregation $\seq{[k,((\seq{},1))]}$\footnote{I.e., the single-entry aggregation where the master body is of size $1$, the master payload is empty and the master scale is $1$.}  and $U$ initialized to $0$. After this initialization step, ${\mathcal M}_k$ goes into the below-described loop \mainn. As already noted, our description of $\mathcal M$ and hence of \mainn\ and our subsequent analysis of its work relies on the  Clean Environment Assumption.\medskip

{\bf Terminology}: In our description of \mainn, whenever we say \repeatt,\label{xrepeat} it is to be understood as repeating (going to) \mainn\ without changing the values of $U$ and $\vec{E}$.\label{xeu} On the other hand, whenever we say \restart,\label{xrestart} it is to be understood as resetting $U$ to $0$, modifying $\vec{E}$ by deleting all common entries in it (but leaving the master entry unchanged), and then repeating \mainn. Finally, when we say ``{\em Environment has made a new move}'', we mean that the run tape of ${\mathcal M}_k$ contains a $(q+1)$th $\oo$-labeled move 
(which we refer to as ``{\em the new move}''), where $q$ is the total number of moves in (all moves in the payloads  of the organs of) $B^{\odd}$, where $B$ is the master body  of $\vec{E}$.\medskip

{\bf Procedure} \mainn.\label{xmain} Let $(L,R,n)$ be the central triple of $\vec{E}$. Start running the procedure $\simm_n$ on $(L^{\even},R^{\odd})$ while, in parallel, at some constant rate,  polling the run tape to see if Environment has made a new move.\footnote{Clarifying: the polling routine is called, say, after every $1000$ steps of performing $\simm_n$; such a call --- which, itself, may take more than a constant amount of time --- interrupts $\simm_n$, saves its state, checks the run tape to see if a new move is made and, if not, returns control back to the caller.}    
Then act depending on which of the following two cases is ``the case'': 
\begin{description}
\item[Case 1] Before  $\simm_n$ terminates, one of the calls of the polling routine detects a new move $1.\theta$ (i.e., the move $\theta$ in the consequent of $k\leq \mathfrak{b}|\vec{d}|\mli F'(k)$) by Environment.  
Let $\theta'$ be the $F'(k)$-prudentization 
of $\theta$.    
 Modify $\vec{E}$ by  adding $\theta'$ to its master payload, and resetting the master scale to $1$. Then \restart.

\item[Case 2] $\simm_n$ terminates without any of the calls of the polling routine meanwhile detecting a new move by Environment.
Let  $(S,u)$ be the value computed/returned by $\simm_n(L^{\even},R^{\odd})$. Update $U$ to $\max(u,U)$.  Then 
act depending on whether $S$ is positive or negative.

\item[Subcase 2.1] $S$ is positive, namely, $S=+(\vec{\omega},s)$. Let $B$ be the body of the $n$-indexed entry of $\vec{E}$.
Act depending on whether $n<k$ or not. 

\item[Subsubcase 2.1.1] $n<k$.  Update $\vec{E}$ by adding $(\vec{\omega},s)$ as a new organ  to $B$. Further modify $\vec{E}$ by deleting all $(<n)$-indexed entries whose size does not exceed  that of the $n$-indexed entry, if such entries exist.   Then \repeatt.

\item[Subsubcase 2.1.2] $n=k$.  Update $\vec{E}$ by adding $(\vec{\omega},s)$ and $(\seq{},s)$ as two new organs  to $B$.  Then make the moves $\vec{\omega}$ in the consequent of (the real play of) $k\leq \mathfrak{b}|\vec{d}|\mli F'(k) $. Finally, \repeatt. 

\item[Subcase 2.2] $S$ is negative, namely, $S=-(\vec{\omega},s)$. Act depending on whether $n>0$ or not. 

\item[Subsubcase 2.2.1] $n>0$. Then,
if $\vec{E}$ has an $(n-1)$-indexed entry $E$,  modify $\vec{E}$ by adding $(\vec{\omega},s)$ as a new  organ to the body of $E$; otherwise 
  modify $\vec{E}$ by inserting into it the entry $E=[n-1,((\vec{\omega},s))]$ immediately on the left of the $n$-indexed entry. In either case, further modify $\vec{E}$ by deleting all $\geq n$-indexed common entries whose size does not exceed  that of the $(n-1)$-indexed entry, if such entries exist.   
After that  \repeatt.

\item[Subsubcase 2.2.2] $n=0$. Let $v$ be the master scale of $\vec{E}$. Act depending on whether $v< \mathfrak{L}(\mathfrak{l},U)$ or not.\footnote{For $\mathfrak{L}$, remember clause 4 of Notation \ref{not1}.}

\item[Subsubsubcase 2.2.2.1] $v<\mathfrak{L}(\mathfrak{l},U)$. Then modify $\vec{E}$ by doubling its master scale $v$,  and \restart. 

\item[Subsubsubcase 2.2.2.2] $v\geq \mathfrak{L}(\mathfrak{l},U)$.  Keep polling the run tape of ${\mathcal M}_k$ to see if Environment has made a new move $1.\theta$. If and when such a move is detected, 
modify $\vec{E}$ by  adding the $F'(k)$-prudentization $\theta'$ of $\theta$ to the master payload of $\vec{E}$,  and resetting the master scale to $1$. Then \restart.
\end{description}

\subsection{\texorpdfstring{${\mathcal M}$}{M} is a solution of the target game}\label{stf}
In this subsection we want to verify that ${\mathcal M}_k$ indeed wins $k\leq \mathfrak{b}|\vec{d}|\mli F'(k)$ and hence ${\mathcal M}$ wins $x\leq \mathfrak{b}|\vec{s}|\mli F(x,\vec{v})$. 
For this purpose, when analyzing the work and behavior of ${\mathcal M}_k$, we will implicitly have some arbitrary but fixed computation branch (``play'') of ${\mathcal M}_k$ in mind. So, for instance, when we say ``the $i$th iteration of \mainn'', it should be understood in the context of that branch.

\begin{nota}\label{not3}
{\em In what follows, $\mathbb{I}$\label{xiii} will stand for the set of positive integers $i$ such that  $\mainn$ is iterated at least $i$ times. 
Also, for each $i\in \mathbb{I}$, $\vec{E}_{i}$\label{xnot3} will stand for the value of the aggregation/variable $\vec{E}$ at the beginning  of the $i$th iteration of $\mainn$. 
}\end{nota}

\begin{lem}\label{bei}
For any $i\in \mathbb{I}$ and any entry $E$ of $\vec{E}_i$, the size of $E$ 
 does not exceed $2\mathfrak{e}_\top + 1$.
\end{lem}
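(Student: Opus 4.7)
The plan is induction on $i\in\mathbb{I}$, tracking how the sizes of entries in $\vec{E}$ can grow from one iteration of $\mainn$ to the next. The base case is immediate: $\vec{E}_1=\seq{[k,((\seq{},1))]}$ consists of a single entry of size $1\leq 2\mathfrak{e}_\top+1$. For the inductive step, assuming every entry of $\vec{E}_i$ has size at most $2\mathfrak{e}_\top+1$, I would inspect every point in $\mainn$ at which $\vec{E}$ is modified, and verify that the bound is preserved in $\vec{E}_{i+1}$.

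The steps performed in Case 1, Subsubsubcase 2.2.2.1 and Subsubsubcase 2.2.2.2 either restart (discarding all common entries and leaving the master entry's number of organs unchanged, only its payload or scale being updated), or merely double the master scale. In all these subcases no entry's size increases, so the bound trivially persists. The only interesting situations are therefore Subsubcase 2.1.1, Subsubcase 2.1.2 and Subsubcase 2.2.1, where a new organ is actually appended to the body of some entry; of course, the subsequent deletions of entries never raise any remaining entry's size.

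The positive subcases 2.1.1 and 2.1.2 are handled by invoking clause 3 of Lemma~\ref{golemma}. Let $(L,R,n)$ be the central triple of $\vec{E}_i$, so that $R$ is the body of the $n$-indexed entry. Since $R$ is of odd size and $\simm_{n}^{\bullet}(L^{\even},R^{\odd})$ is positive, clause 3 of Lemma~\ref{golemma} gives $|R^{\odd}|\leq\mathfrak{e}_\top$, and hence $|R|=2|R^{\odd}|-1\leq 2\mathfrak{e}_\top-1$. In Subsubcase 2.1.1 we append a single organ, yielding size $\leq 2\mathfrak{e}_\top$; in Subsubcase 2.1.2 we append two organs, yielding size $\leq 2\mathfrak{e}_\top+1$. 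Either way the bound holds.

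The negative Subsubcase 2.2.1 is where the structural constraints of an aggregation come into play. By the very definition of the central triple, $n$ is the index of the \emph{leftmost} odd-size entry of $\vec{E}_i$; consequently, any $(n-1)$-indexed entry of $\vec{E}_i$ has \emph{even} size. If no such entry exists, Subsubcase 2.2.1 inserts a fresh $(n-1)$-indexed entry of size $1\leq 2\mathfrak{e}_\top+1$. Otherwise, the existing $(n-1)$-indexed entry has even size which by the induction hypothesis is at most $2\mathfrak{e}_\top+1$, and hence (being even) at most $2\mathfrak{e}_\top$; appending one new organ produces size at most $2\mathfrak{e}_\top+1$, as required. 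The sole step that warrants any care is this appeal to the evenness of the $(n-1)$-indexed entry, which I expect to be the main (but modest) conceptual obstacle; everything else is bookkeeping against the clauses of the definition of an aggregation and the description of $\mainn$.
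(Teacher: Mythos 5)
Your proposal is correct, and the key fact invoked — clause~3 of Lemma~\ref{golemma} — is the same one the paper relies on, but your overall organization is genuinely different from the paper's. The paper argues by contradiction via a minimal counterexample: it picks the earliest $j$ at which some $n$-indexed entry first reaches size $2\mathfrak{e}_\top+2$ (or $2\mathfrak{e}_\top+3$ for $n=k$), observes that the offending transition must have followed Subsubcase 2.1.1 (resp.\ 2.1.2), and then derives a contradiction from clause~3 of Lemma~\ref{golemma}. You instead run a direct induction on $i$, checking every modification of $\vec{E}$ that \mainn\ can make. The content overlaps heavily — minimal-counterexample and strong induction are interchangeable here — but your version makes explicit two pieces of reasoning the paper leaves tacit. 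First, you explicitly dispose of the restarting and scale-doubling cases (Case~1, 2.2.2.1, 2.2.2.2) rather than passing over them. Second, and more substantively, you handle Subsubcase 2.2.1 on its own, noting that the $(n-1)$-indexed entry must be \emph{even}-size (a consequence of clauses (ii)--(iii) of the aggregation definition and the choice of $n$ as the leftmost odd-size index), so that the inductive bound $\leq 2\mathfrak{e}_\top+1$ sharpens to $\leq 2\mathfrak{e}_\top$ before the new organ is appended. The paper gets the same parity information for free by noting that the odd-to-even jump to size $2\mathfrak{e}_\top+2$ can \emph{only} occur via Subsubcase 2.1.1, and so never needs to examine 2.2.1 at all. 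Both routes are sound; yours is a touch longer but more self-documenting about where each aggregation invariant is used.
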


\begin{proof} For a contradiction, assume $i\in \mathbb{I}$, and  $\vec{E}_i$ has an entry of size greater than $2\mathfrak{e}_\top + 1$. Let $n$ be the index of such an entry. 

First, consider the case $n<k$. 
Let $j\leq i$ be the smallest number in $\mathbb{I}$ such that $\vec{E}_j$ has 
 a $(2\mathfrak{e}_\top +2)$-size, $n$-indexed entry $[n,(O_1,\ldots,O_{2\mathfrak{e}_\top +2})]$ --- it is not hard to see that such a $j$ exists, and $j> 1$ because $\vec{E}_1$ has no common entries.  The only way the above entry could have emerged in $\vec{E}_j$ is that $\vec{E}_{j-1}$ contained the entry $[n,(O_1,\ldots,O_{2\mathfrak{e}_\top +1})]$, and its body ``grew'' into $(O_1,\ldots,O_{2\mathfrak{e}_\top +2})$ on the transition from $\vec{E}_{j-1}$ to $\vec{E}_j$ according to the prescriptions of Subsubcase 2.1.1 of the description of \mainn. 
This in turn means that the central triple of $\vec{E}_{j-1}$ was $(A,(O_1,\ldots,O_{2\mathfrak{e}_\top +1}),n)$ for a certain body $A$, and $\simm_{n}^{\bullet}(A^{\even},(O_1,\ldots,O_{2\mathfrak{e}_\top +1})^{\odd} )=+O_{2\mathfrak{e}_\top +2}$. This, however, is impossible by clause 3 of Lemma \ref{golemma}, because the size of  $(O_1,\ldots,O_{2\mathfrak{e}_\top +1})^{\odd}$ is $\mathfrak{e}_\top +1$, exceeding $\mathfrak{e}_\top$. 

The case $n=k$ is similar, only with ``$k$'' instead of ``$n$'', and ``$2\mathfrak{e}_\top +3$'' instead of ``$2\mathfrak{e}_\top +2$''. 
\end{proof}

\begin{lem}\label{beijing}
There is a bound $\mathfrak{z}(w)\in{\mathcal R}\tim$ such that the cardinality of $\mathbb{I}$ does not exceed 
$\mathfrak{z}(\mathfrak{l})$. 
\end{lem}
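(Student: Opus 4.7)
The plan is to bound $|\mathbb{I}|$ by multiplying three separate bounds corresponding to the ways \mainn\ can leave an iteration: (a) restarts triggered by a new Environment move (Case 1 or the tail of Subsubsubcase 2.2.2.2), (b) restarts triggered by scale-doubling (Subsubsubcase 2.2.2.1), and (c) iterations ending in \repeatt\ (Subcases 2.1.1, 2.1.2, 2.2.1). Call two iterations \emph{macro-adjacent} if no restart of type (a) occurs between them, and within a macro-session call them \emph{mini-adjacent} if no restart of type (b) occurs between them either. It then suffices to bound the number of macro-sessions, the number of mini-sessions inside each macro-session, and the length of each mini-session, and to verify that the product of these bounds lies in ${\mathcal R}\tim$ as a function of $\mathfrak{l}$.

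For the first bound, the Clean Environment Assumption combined with the quasi-legality of ${\mathcal M}_k$'s play in $k\leq \mathfrak{b}|\vec{d}|\mli F'(k)$ implies that Environment makes at most $\mathfrak{e}_\bot$ moves in the consequent, so the number of macro-sessions is at most $\mathfrak{e}_\bot+1$, a constant. For the second bound, inside a macro-session the master payload is unchanged and the master scale $v$, initially $1$, is doubled at each visit of 2.2.2.1 while $v<\mathfrak{L}(\mathfrak{l},U)$; hence the number of scale-doublings per macro-session is at most $\lceil \log_2\mathfrak{L}(\mathfrak{l},U^*)\rceil$ for any uniform upper bound $U^*$ on the variable $U$. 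Such a $U^*$ can be extracted from the unconditional space complexity $\mathfrak{s}\in{\mathcal R}\spa$ of ${\mathcal N}$ and ${\mathcal K}$ by bounding the background of every simulated ${\mathcal H}_n$ in terms of the lengths of the initial constants $\vec{c}$ (at most $\mathfrak{l}$) and the prudent sizes of the organ payloads (at most $\mathfrak{G}(\mathfrak{l})$); Lemma \ref{lagg} and the composition-closure clauses of Definition 2.2 of \cite{AAAI} then place $U^*$ in ${\mathcal R}\spa$. Since the dominant factor of $\mathfrak{L}(w,u)$ in Notation \ref{not1} is $\mathfrak{q}^{\mathfrak{g}u}$, the quantity $\log_2\mathfrak{L}(\mathfrak{l},U^*)$ is linear in $U^*$ plus polylogarithmic in $\mathfrak{l}$ and $\mathfrak{G}(\mathfrak{l})$, putting it in ${\mathcal R}\tim$.

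The third and most delicate bound measures the length of a single mini-session, during which only Subcases 2.1.1, 2.1.2, and 2.2.1 fire. My plan is to construct an integer-valued potential function $\Phi$ on aggregations that strictly increases on each such iteration and whose range over reachable aggregations is polynomially bounded in $k$. The structural axioms (iv)--(vi) of Section \ref{saggr} cap the number of entries in any aggregation by $2\mathfrak{e}_\top+1$, and Lemma \ref{bei} caps the size of each entry by the same quantity. The decisive difficulty is that Subcase 2.2.1 both enlarges the $(n-1)$-indexed entry \emph{and} discards some $\geq n$-indexed common entries, so neither a straight lexicographic nor a reverse-lexicographic weighting of the size-tuple monotonically increases under all three subcases simultaneously. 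I expect $\Phi$ to be a suitably weighted hybrid---for instance, a sum of the form $\sum_i M^{s_i} K^i$ for appropriate constants $M,K$ depending only on $\mathfrak{e}_\top$, or a multiset ordering on entry sizes---designed so that in each subcase the contribution of the enlarged entry strictly dominates that of any smaller discarded entries, using the fact that deletions in 2.1.1 and 2.2.1 occur only for entries whose size is no greater than that of the modified entry. This combinatorial verification is the main obstacle; once $\Phi$ is in place, each mini-session lasts at most $\mathrm{range}(\Phi)$ iterations, which I expect to be polynomial in $k$, hence polynomial in $\mathfrak{b}(\mathfrak{l})\in{\mathcal R}\amp$, hence in ${\mathcal R}\tim$ by condition 5 of Definition 2.2 of \cite{AAAI}.

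Assembling, $|\mathbb{I}| \leq (\mathfrak{e}_\bot+1)\cdot\bigl(\lceil\log_2\mathfrak{L}(\mathfrak{l},U^*)\rceil+1\bigr)\cdot \mathrm{range}(\Phi)$, which as a unary function of $\mathfrak{l}$ lies in ${\mathcal R}\tim$ by the polynomial-closure condition 3 and the amplitude/space-to-time closure condition 5 of Definition 2.2 of \cite{AAAI}, and this is the desired $\mathfrak{z}(w)$. The only non-routine step is the design and verification of $\Phi$; everything else is bookkeeping with Lemma \ref{lagg} and the closure conditions on $\mathcal R$.
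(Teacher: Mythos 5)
Your three-level decomposition (Environment-move restarts / scale-doubling restarts / maximal runs of \repeatt-iterations) is a legitimate reorganization of the argument, and your first two bounds are essentially the ones the paper uses: it too caps the master scale by $2\mathfrak{L}\bigl(\mathfrak{l},\mathfrak{s}(\mathfrak{G}(\mathfrak{l}))\bigr)$ and shows $|\mathfrak{L}(w,\mathfrak{s}(\mathfrak{G}(w)))|\preceq{\mathcal R}\tim$ exactly as you sketch (modulo the small unverified point that locking iterations, which rewrite the master organ, preserve rather than perturb the master scale). The problem is the third bound, which is where all the content of the lemma lives: you do not construct the potential $\Phi$, you only list the properties it would need and explicitly identify its construction as ``the main obstacle.'' A correct observation that neither lexicographic nor reverse-lexicographic weighting works is not a substitute for exhibiting a weighting that does and checking it against Subcases 2.1.1, 2.1.2 and 2.2.1 together with the deletions they trigger. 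As written, the proposal therefore has a genuine gap at precisely the step you flagged.

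For what it is worth, the paper does not split into sessions at all; it defines a single quantity $\mbox{\it Rank}(i)=\sum_j c_j\mathfrak{k}^j$, read as a $\mathfrak{k}$-ary numeral of fixed length $2\mathfrak{e}_\top+5$ in a base $\mathfrak{k}=\mathfrak{K}(\mathfrak{l})$ exceeding $k$, $\mathfrak{e}_\bot$ and $|\mathfrak{L}(\mathfrak{l},\mathfrak{s}(\mathfrak{G}(\mathfrak{l})))|$, and argues that it strictly increases on \emph{every} iteration. The three most significant digits are the size of the master body, the number of moves in the master payload, and the bit-length $|v|$ of the master scale; these absorb locking iterations and both kinds of restart, so your macro/mini structure is already implicit in the digit ordering. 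The digit at position $j\in\{1,\ldots,2\mathfrak{e}_\top+1\}$ encodes the index $n$ of the size-$j$ common entry (if any), and the trick that resolves your ``decisive difficulty'' is the \emph{asymmetric} encoding: $c_j=n+1$ when $j$ is even but $c_j=k-n$ when $j$ is odd. With this choice, growing the leftmost odd-size ($n$-indexed) entry to even size $j+1$ in Subcase 2.1.1 raises digit $j+1$ from some $n'+1$ with $n'<n$ to $n+1$, and growing (or creating) the $(n-1)$-indexed entry to odd size $j'+1$ in Subsubcase 2.2.1 raises digit $j'+1$ from some $k-m$ with $m\geq n$ to $k-n+1$; in both cases the accompanying deletions only zero out digits at positions no higher than the one that increased, because only entries of no greater size are deleted. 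Restricted to the common-entry digits together with the master-body-size digit, this is exactly the $\Phi$ your plan is missing.
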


\begin{proof} 
In this proof we will be using $\mathfrak{d}$ as an abbreviation of $2\mathfrak{e}_\top +1$. Whenever we say ``$\vec{E}$ always (never, etc.) so and so'', it is to be understood as that, throughout the work of \mainn, the value of the variable $\vec{E}$ always (never, etc.) so and so. Similarly for $U$. ``Case'', ``Subcase'', etc. mean those of the description of \mainn.

According to Lemma \ref{bei}, we have:
\begin{equation}\label{park1}
\mbox{\it The size of no entry of $\vec{E}$ ever exceeds $\mathfrak{d}$.}
\end{equation}
  
Our next claim is the following: 
\begin{equation}\label{park2}
\begin{array}{l}
\mbox{\it The number of moves in the payload of no organ}\\
\mbox{\it  of the master body of $\vec{E}$ ever exceeds }\max(\mathfrak{e}_\top,\mathfrak{e}_\bot).
\end{array}
\end{equation} 
Indeed, let $(O_1,\ldots,O_a)$ be the master body of $\vec{E}$ at a given stage of the work of \mainn, and consider any organ $O_i=(\vec{\alpha},s)$ ($1\leq i\leq a$) of this body. From an analysis of the work of \mainn\ we can see that, if $i$ is odd, then $\vec{\alpha}$ are moves made by Environment within the $F'(k)$ component in the real play. Therefore, in view of the Clean  Environment Assumption, the number of such moves is at most $\mathfrak{e}_\bot$.  If $i$ is even, then $\vec{\alpha}$ are moves made by ${\mathcal H}_k$ in a certain play simulated through $\simm_k$. As in the preceding case, the number of such moves cannot exceed $\mathfrak{e}_\top$  because, as we have agreed, ${\mathcal H}_k$ plays quasilegally.

Taking into account that each ${\mathcal H}_n$ ($\mathcal N$ and $\mathcal K$, that is) plays unconditionally prudently and that Environment's moves in $F'(k)$ are also prudentized when copied by \mainn\ according to the prescriptions of Case 1 or Subsubsubcase 2.2.2.2  (and 
that every move that emerges in $\vec{E}$ originates either from Environment or from one of ${\mathcal H}_i$), one can see that the run tape of any 
simulated machine does not contain moves whose magnitude is greater than $\mathfrak{G}(\mathfrak{l})$ where, as we remember,  $\mathfrak{G}$ is the superaggregate bound of  $F(x,\vec{v})$. 
Since the ${\mathcal H}_n$s ($\mathcal N$ and $\mathcal K$, to be more precise) play in unconditional space $\mathfrak{s}$, we then find that  the value of the variable $U$ of \mainn\  never exceeds  $\mathfrak{s}\bigl(\mathfrak{G}(\mathfrak{l})\bigr)$.  
Thus, the maximum value of $\mathfrak{L}(\mathfrak{l},U)$ is bounded by 
$\mathfrak{L}\bigl(\mathfrak{l},\mathfrak{s}(\mathfrak{G}(\mathfrak{l}))\bigr)$. The master scale $v$ of  $\vec{E}$ increases --- namely, doubles --- only according to the prescriptions of Subsubsubcase 2.2.2.1, and such an increase happens only when $v$  is smaller than $\mathfrak{L}(\mathfrak{l},U)$. For this reason, we have:
\begin{equation}\label{park3}
\mbox{\it The master scale of $\vec{E}$ is always smaller than $2\mathfrak{L}\bigl(\mathfrak{l},\mathfrak{s}(\mathfrak{G}(\mathfrak{l}))\bigr)$.}
\end{equation} 

Let $f$ be the unarification of the bound $\mathfrak{b}\in{\mathcal R}\tim$ from (\ref{r22}). Note that, since $k\leq \mathfrak{b}|\vec{d}|$, we have $k\leq f(\mathfrak{l})$. 

Let  $\mathfrak{K}(w)$ be the unary function defined by
\begin{equation}\label{ka}
\mathfrak{K}(w)\ =\ \max\Bigl(|\mathfrak{L}\bigl(w,\mathfrak{s}(\mathfrak{G}(w))\bigr)|,\ f(w), \ \mathfrak{d},\ \mathfrak{e}_{\bot} \Bigr)+1,
\end{equation}  
and let $\mathfrak{k}$ be an abbreviation of $\mathfrak{K}(\mathfrak{l})$. 

With each element $i$ of $\mathbb{I}$ we now associate an integer $\mbox{\it Rank}(i)$ defined as follows:

\[\mbox{\it Rank}(i)=c_0\times\mathfrak{k}^0+ c_1\times\mathfrak{k}^1 + c_2\times\mathfrak{k}^2+\ldots +c_{\mathfrak{d}}\times\mathfrak{k}^{\mathfrak{d}}+
c_{\mathfrak{d}+1}\times\mathfrak{k}^{\mathfrak{d}+1}+ c_{\mathfrak{d}+2}\times\mathfrak{k}^{\mathfrak{d}+2}+ c_{\mathfrak{d}+3}\times\mathfrak{k}^{\mathfrak{d}+3},\]
where:

\begin{itemize}
  \item $c_0=0$. Take a note of the fact  that $c_0<\mathfrak{k}$. 
  \item For each even $j\in\{1,\ldots,\mathfrak{d}\}$: If $\vec{E}_i$ contains a common entry of size $j$, then $c_j$ is $n+1$, where $n$ is the index of that entry; otherwise $c_j=0$. Thus, $c_j$ cannot exceed $k$ and, since $k\leq f(\mathfrak{l})$,  from (\ref{ka}) we can see that $c_j<\mathfrak{k}$. 
  \item For each odd $j\in\{1,\ldots,\mathfrak{d}\}$: If $\vec{E}_i$ contains a common entry of size $j$, then $c_j$ is $k-n$, where $n$ is the index of that entry; otherwise $c_j=0$. Again, we have $c_j<\mathfrak{k}$. 
  \item $c_{\mathfrak{d}+1}$ is $|v|$, where $v$ is the master scale of  $\vec{E}_i$. In view of (\ref{park3}), we find $c_{\mathfrak{d}+1}<\mathfrak{k}$.
  \item $c_{\mathfrak{d}+2}$ is the number of moves in the master payload of  $\vec{E}_i$. From (\ref{park2}), we see that $c_{\mathfrak{d}+2}<\mathfrak{k}$.
  \item $c_{\mathfrak{d}+3}$ is the size of the master body of $\vec{E}_i$. The fact (\ref{park1}) guarantees that $c_{\mathfrak{d}+3}<\mathfrak{k}$.
\end{itemize}

\noindent As we have observed in each case above, all of the factors $c_0,c_1,\ldots,c_{\mathfrak{d}+3}$ from $\mbox{\it Rank}(i)$ are smaller than $\mathfrak{k}$. This allows us to think of   $\mbox{\it Rank}(i)$ as a $\mathfrak{k}$-ary numeral of length $\mathfrak{d}+4$, with the least significant digit being $c_{0}$ and the most significant digit being $c_{\mathfrak{d}+3}$. 

With some analysis of the work of \mainn, which we here leave to the reader, one can see that 

\begin{equation}\label{ff2}
\mbox{\it For any $i$ with $(i+ 1)\in \mathbb{I}$, \ $\mbox{\it Rank}(i)<\mbox{\it Rank}(i+1)$.}
\end{equation}
But, by our observation $c_0,c_1,\ldots,c_{\mathfrak{d}+3}<\mathfrak{k}$,  no rank can exceed the (generously taken) number 
\[ (\mathfrak{k}-1)\times\mathfrak{k}^0+ (\mathfrak{k}-1)\times\mathfrak{k}^1 + (\mathfrak{k}-1)\times\mathfrak{k}^2+\ldots +  (\mathfrak{k}-1)\times\mathfrak{k}^{\mathfrak{d}+3},\]
i.e., the number $\mathfrak{M}(\mathfrak{l})$, where $\mathfrak{M}(w)$ is the unary function 
 \[(\mathfrak{K}(w)-1)\times(\mathfrak{K}(w))^0+ (\mathfrak{K}(w)-1)\times(\mathfrak{K}(w))^1 + (\mathfrak{K}(w)-1)\times(
\mathfrak{K}(w))^2+\ldots +  (\mathfrak{K}(w)-1)\times
(\mathfrak{K}(w))^{\mathfrak{d}+3}.\]
 Thus:
\begin{equation}\label{iin}
 \mbox{\it For any $i\in \mathbb{I}$, \ $\mbox{\it Rank}(i)\leq \mathfrak{M}(\mathfrak{l})$.}   
\end{equation}
By the conditions of ${\mathcal R}$-Induction, $F(x,\vec{v})$ is ${\mathcal R}\spa$-bounded. Hence, by Lemma \ref{lagg},  $\mathfrak{G}(w)\preceq{\mathcal R}\spa$.  This, by condition 4 of Definition 2.2 of \cite{AAAI},  implies  $\mathfrak{s}(\mathfrak{G}(w))\preceq {\mathcal R}\spa$. The two facts $\mathfrak{G}(w)\preceq{\mathcal R}\spa$ and $\mathfrak{s}(\mathfrak{G}(w))\preceq {\mathcal R}\spa$, 
 by condition 5 of Definition 2.2 of \cite{AAAI}, further yield   $\mathfrak{G}(w)\preceq{\mathcal R}\tim$ and
$\mathfrak{s}(\mathfrak{G}(w))\preceq {\mathcal R}\tim$.
Looking back at our definition of $\mathfrak{L}$ in Notation \ref{not1}(4), we see that 
\begin{equation}\label{2244}
|\mathfrak{L}(w,u)|=O(|w|+|\mathfrak{G}(w)|+u)
\end{equation}
 and thus $|\mathfrak{L}(w,\mathfrak{s} (\mathfrak{G}(w) ))|=O(|w|+|\mathfrak{G}(w)|+\mathfrak{s} (\mathfrak{G}(w) ))$. This, together with  $\mathfrak{G}(w)\preceq{\mathcal R}\tim$ and $\mathfrak{s}(\mathfrak{G}(w))\preceq {\mathcal R}\tim$, by the linear closure of ${\mathcal R}\tim$, 
implies 
\begin{equation}\label{sevv1}
|\mathfrak{L}(w,\mathfrak{s} (\mathfrak{G}(w) ))|\preceq {\mathcal R}\tim.
\end{equation}
Since $f$ is the unarification of $\mathfrak{b}\in{\mathcal R}\tim$, we obviously have $f\preceq {\mathcal R}\tim$. This, together with 
(\ref{sevv1}), (\ref{ka}) and the fact of ${\mathcal R}\tim$'s being linearly closed, implies that $\mathfrak{K}\preceq {\mathcal R}\tim$. 
The latter, in turn, in view of ${\mathcal R}\tim$'s being polynomially closed, implies that $\mathfrak{M}\preceq {\mathcal R}\tim$. 
So,  there is a bound $\mathfrak{z}(w)$ in 
${\mathcal R}\tim$ with $\mathfrak{M}(w)\preceq \mathfrak{z}(w)$ and hence $\mathfrak{M}(\mathfrak{l})\leq \mathfrak{z}(\mathfrak{l})$. In view of (\ref{iin}),   no rank can thus ever exceed 
$\mathfrak{z}(\mathfrak{l})$. 
 But, by (\ref{ff2}), different elements of $\mathbb{I}$ have different ranks. Hence, the cardinality of $\mathbb{I}$ does not exceed $\mathfrak{z}(\mathfrak{l})$ either, as desired. 
 \end{proof}
 
For a number $h\in \mathbb{I}$, we define the set $\mathbb{I}^h$\label{xih} by 
\[\mathbb{I}^h\ =\ \{i\ |\ i\in \mathbb{I} \ \mbox{\it and } i\leq h\}=\{1,\ldots,h\}.
\] 

We say that a given iteration of \mainn\ is {\bf restarting}\label{xring} (resp. {\bf repeating}) iff it terminates and calls the next iteration through \restart\ (resp. \repeatt). The repeating iterations are exactly those that proceed according to Subcase 2.1 or Subsubcase 2.2.1 of \mainn; and the restarting iterations are those that proceed according to Case 1 or Subsubsubcase 2.2.2.1, as well as the terminating iterations that proceed according to Subsubsubcase 2.2.2.2.   
Next, we say that a given iteration of \mainn\ is {\bf locking}\label{xlocking} iff it proceeds according to Subsubcase 2.1.2 of \mainn. 

Consider any $h\in \mathbb{I}$ and any $i\in \mathbb{I}^h$. We say that the $i$th iteration of \mainn\  is {\bf $\mathbb{I}^h$-transient}\label{xtransient} iff there is a $j\in \mathbb{I}^h$  such that  the following three conditions are satisfied:
\begin{itemize} 
  \item $i\leq j<h$.
  \item The $j$th iteration of \mainn\ restarting.
\item There is no $e$ with $i\leq e< j$ such that the $e$th iteration  of \mainn\ is locking. 
\end{itemize}

For a number $h\in \mathbb{I}$, we define \[\mathbb{I}_{!}^{h}\label{xihe} = \{i\ |\ i\in \mathbb{I}^h\mbox{\em and the $i$'th iteration of \mainn\ is not $\mathbb{I}^h$-transient}\}.\]

We say that two bodies are {\bf consistent}\label{xconsis} with each other iff one is an extension of the other. This, of course, includes the case of their being simply equal.

\begin{lem}\label{jinan}
Consider any $n\in\{0,\ldots,k\}$, $h\in \mathbb{I}$ and  $i,j\in \mathbb{I}_{!}^{h}$.  Suppose $\vec{E}_i$ has an entry $[n,B_i]$, and $\vec{E}_j$ has an entry $[n,B_j]$.  Then $B_i$ and $B_j$ are consistent with each other. 
\end{lem}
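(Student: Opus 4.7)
Without loss of generality I assume $i<j$, and the plan is to prove the stronger statement that $B_j$ is an extension of $B_i$. The idea is to catalogue the ways in which an $n$-indexed entry's body can evolve during the successive iterations of \mainn, and then to use the non-transience of $i$ to rule out the only transitions that could destroy the compatibility of $B_i$ and $B_j$.

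\textbf{Step 1 (classifying transitions).} A case-by-case reading of \mainn\ shows that the organs of an already-existing entry are never modified in place by the non-restarting transitions: Subsubcases 2.1.1, 2.1.2 and 2.2.1 only \emph{append} organs to a specific entry, together with possibly \emph{deleting} certain entries whose size does not exceed the size of the just-grown entry. The only transitions that mutate an existing organ are Case 1 and Subsubsubcases 2.2.2.1 and 2.2.2.2, and each of these is furthermore a restarting iteration that wipes out every common entry of $\vec{E}$.

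\textbf{Step 2 (the master case, $n=k$).} Suppose some iteration $r$ with $i\leq r<j$ triggers one of the three mutating cases. Then $r<h$, so the non-transience of $i$ supplies a locking iteration $e$ with $i\leq e<r$. Between $i$ and $e$ neither a locking nor a mutating transition of the master body has taken place (else $e$ would not be the first such), so the master body at iteration $e$ still has $B_i$ as a prefix; the locking at $e$ appends two organs, pushing the ``master organ'' strictly beyond $B_i$, and no later restart can thereafter touch the organs of $B_i$. A straightforward induction on the number of iterations between $i$ and $j$ then yields $B_j$ as an extension of $B_i$.

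\textbf{Step 3 (the common case, $n<k$).} Here the $n$-indexed entry can vanish mid-run, either by a restart or by a size-based deletion inside Subsubcase 2.1.1 or 2.2.1, and then reappear later. I would prove the following strengthened invariant by \emph{downward} induction on the index $m$, with base case $m=k$ handled by Step 2: for every $m$ and every $e\in\mathbb{I}_{!}^{h}$ at which $\vec{E}_e$ contains an $m$-indexed entry, the body of that entry coincides with the one produced by applying $\simm_m$ and $\simm_{m+1}$ to the central triples determined by the (already consistent, by induction hypothesis) higher-indexed entries. Lemma \ref{golemma}, clauses 1 and 2, then guarantees that every re-invocation of $\simm$ on inputs that extend those used earlier returns the same signed organ, so each deletion-and-recreation rebuilds the body as an extension of its previous content. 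Applying the invariant at $m=n$ yields the desired conclusion.

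\textbf{Main obstacle.} The delicate part is Step 3: one must verify that at every recreation of the $n$-indexed entry between iterations $i$ and $j$, the $L$- and $R$-components of the driving central triple really are extensions of those that governed earlier recreations, so that Lemma \ref{golemma} applies as advertised. This is precisely where the non-transience of $i$ combines with the downward induction: Step 2 anchors the master body, the inductive hypothesis anchors the higher-indexed bodies, and together they certify that each \simm-input used by a non-transient iteration extends the ones used at earlier non-transient iterations.
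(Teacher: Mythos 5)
Your Step 2 (the case $n=k$) is essentially the paper's argument: the first restarting-or-locking iteration after $i$ must, by non-transience of $i$, be locking rather than restarting, and once the locking iteration pushes the master organ past $B_i$, later restarts --- which only mutate the current last organ --- cannot disturb the $B_i$ prefix.

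The genuine gap is Step 3, and it is structural rather than a bookkeeping detail. The $n$-indexed body grows in two ways: odd-position organs are \emph{negative} outputs of $\simm_{n+1}$, computed from the $n$- and $(n+1)$-indexed bodies, while even-position organs are \emph{positive} outputs of $\simm_{n}$, computed from the $(n-1)$- and $n$-indexed bodies. To justify that the positive $\simm_{n}$-outputs appearing in the two constructions coincide (via Lemma \ref{golemma}, clause 2), you must already know that the $(n-1)$-indexed bodies occurring in the corresponding central triples are consistent with each other --- but $n-1<n$, so a \emph{downward} induction on the index has not yet reached them. Reversing the direction of the induction breaks the $\simm_{n+1}$ side symmetrically, so no monotone sweep over indices can close. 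The ``obstacle'' you name at the end is precisely this circularity, and your sketch leaves it open. The paper avoids it by running a complete induction on $i+j$ with $n$ fixed: the inductive step appeals to the IH at $(i-1,j-1)$ --- a smaller sum --- for whichever adjacent index ($n-1$ or $n+1$) governs the last growth step, which is legitimate because a restart would have erased the $n$-indexed entry and hence $i-1,j-1\in\mathbb{I}_{!}^{h}$. A temporal induction thus absorbs the two-sided adjacency dependency that defeats the index-based one you propose.
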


\begin{proof} Assume the conditions of the lemma. The case $i=j$ is trivial, so we shall assume that $i<j$. 

First, consider the case $n=k$. We thus want to show that the master bodies of $\vec{E}_i$ and $\vec{E}_j$ are consistent with each other.   Notice that only those iterations of \mainn\ affect the master body of (the evolving) $\vec{E}$ that are either restarting or locking. So, if no iteration between\footnote{Here and later, we may terminologically identify iterations with their ordinal numbers.} $i$ and $j$ is either restarting or locking, then the master entry of $\vec{E}_j$ is the same as that of $\vec{E}_i$, and we are done. Now suppose there is an  $e$ with  $i\leq e\leq j$ such that the $e$th iteration is restarting or locking. We may assume that $e$ is the smallest such number. Then the $e$th iteration cannot be  restarting, because this would make  the $i$th iteration  $\mathbb{I}^h$-transient. Thus,  the $e$th iteration  is locking. Such an iteration ``locks'' the master body of $\vec{E}_i$, in the sense that no later iterations   can destroy what is already there --- such iterations will only extend the   master body. So, the master body of $\vec{E}_j$ is an extension of that of $\vec{E}_i$, implying that, as desired, the two bodies are consistent with each other. 

Now, for the rest of this proof, assume $n<k$. Note that $i,j>1$, because $\vec{E}_1$ has no common ($n$-indexed) entries. Further note that the $(i-1)$th and $(j-1)$th iterations  are not restarting ones, because \restart\ erases all common entries. Hence, obviously, both $i-1$ and $j-1$ are in $\mathbb{I}_{!}^{h}$. 

The case of either $B_i$ or $B_j$ being empty is  trivial, because the empty body is consistent with every body. Thus, we shall assume that 
\[\mbox{\it $B_i$ looks like $\bigl((\vec{\alpha}_1,p_1),\ldots,(\vec{\alpha}_a,p_a),(\vec{\alpha},p)\bigr)$ and $B_j$ looks like 
$\bigl((\vec{\beta}_1,q_1),\ldots,(\vec{\beta}_b,q_b),(\vec{\beta},q)\bigr)$}\] for some $a,b\geq 0$. In what follows, we will be 
using $\vec{P}$ and $\vec{Q}$ as abbreviations of ``$(\vec{\alpha}_1,p_1)$, \ldots, $(\vec{\alpha}_a,p_a)$'' and 
``$(\vec{\beta}_1,q_1)$, \ldots, $(\vec{\beta}_b,q_b)$'', respectively. Thus, 
$B_i=\bigl(\vec{P},(\vec{\alpha},p)\bigr)$ and $B_j=\bigl(\vec{Q},(\vec{\beta},q)\bigr)$.

We prove the lemma by complete  induction on $i+ j$. Assume the aggregation $\vec{E}_{i-1}$ contains the entry  $[n,B_i]$. Since $(i-1)+j<i+j$ and (as we established just a while ago) $(i-1)\in \mathbb{I}_{!}^{h}$, the induction hypothesis applies, according to which $B_i$ is consistent with $B_j$, as desired. The case of $\vec{E}_{j-1}$ containing  the entry  $[n,B_j]$ is  similar. Now, for the rest of the present proof, we assume that 
\begin{equation}\label{hunqian}
\mbox{\it $\vec{E}_{i-1}$ does not have the entry $[n,B_i]$, and  $\vec{E}_{j-1}$ does not have the entry $[n,B_j]$.}
\end{equation}
Assume $a<b$. Then $b\geq 1$. In view of this fact and (\ref{hunqian}), it is easy to see that $\vec{E}_{j-1}$ contains an $n$-indexed entry whose body is $(\vec{Q})$. By the induction hypothesis, $\bigl(\vec{P},(\vec{\alpha},p)\bigr)$ is consistent with $(\vec{Q})$, meaning (as $a+1\leq b$) that the latter is an extension of the former. Hence, $\bigl(\vec{P},(\vec{\alpha},p)\bigr)$ is just as well consistent with $\bigl(\vec{Q},(\vec{\beta},q)\bigr)$, as desired. The case of $b<a$ will be handled in a similar way. 

Now, for the rest of this proof, we further assume that $a=b$.  
We claim that 
\begin{equation}\label{hun}
\vec{P}=\vec{Q},\mbox{\it \ i.e., \ }\bigl((\vec{\alpha}_1,p_1),\ldots,(\vec{\alpha}_a,p_a) \bigr)=\bigl((\vec{\beta}_1,q_1),\ldots,(\vec{\beta}_b,q_b)\bigr).
\end{equation}
Indeed, the case of $a,b=0$ is trivial. Otherwise, if $a,b\not=0$, in view of (\ref{hunqian}),  obviously   $\vec{E}_{i-1}$ contains the entry 
$[n,(\vec{P}) ]$ and  $\vec{E}_{j-1}$ contains the entry $[n,(\vec{Q})]$. Hence, by the induction hypothesis, the two bodies $(\vec{P})$ and 
$(\vec{Q})$ are consistent, which, as $a=b$, simply means that they are identical. (\ref{hun}) is thus verified. 
In view of (\ref{hun}), all that now remains to show is that $(\vec{\alpha},p)=(\vec{\beta},q)$. 

Assume $a$ is odd. Analyzing the work of $\mainn$ and keeping (\ref{hunqian}) in mind, we see that the $(i-1)$th iteration of \mainn\ proceeds according to Subsubcase 2.1.1, where the central triple of $\vec{E}_{i-1}$ is $\bigl(C,(\vec{P}),n\bigr)$ for a certain even-size body $C$, and  $
\simm_{n}^{\bullet}\bigl(C^{\even},(\vec{P})^{\odd}\bigr)=+(\vec{\alpha},p)$. 
Similarly, the $(j-1)$th iteration of \mainn\ proceeds according to Subsubcase 2.1.1, where the central triple of $\vec{E}_{j-1}$ is $\bigl(D,(\vec{Q}),n\bigr)$ --- which, by (\ref{hun}), is the same as $(D,(\vec{P}),n)$ --- for a certain even-size body $D$, and 
$\simm_{n}^{\bullet}\bigl(D^{\even},(\vec{P})^{\odd}\bigr)=+(\vec{\beta},q)$. Here, if one of the bodies $C,D$ is empty, the two bodies are consistent with each other.  Otherwise obviously $n>0$, $\vec{E}_{i-1}$ contains the entry $[n-1,C]$, and $\vec{E}_{j-1}$ contains the entry $[n-1,D]$. Then, by the induction hypothesis, again, $C$ is consistent with $D$. Thus, in either case, $C$ and $D$ are consistent. Then clause 2 of Lemma \ref{golemma} implies that 
$\simm_{n}\bigl(C^{\even},(\vec{P})^{\odd}\bigr)=\simm_{n}\bigl(D^{\even},(\vec{P})^{\odd}\bigr)$. Consequently,  $(\vec{\alpha},p)=(\vec{\beta},q)$, as desired. 

The case of $a$ being even is rather similar. In this case, the $(i-1)$th iteration of \mainn\ deals with Subsubcase 2.2.1, where the central triple of $\vec{E}_{i-1}$ is $\bigl((\vec{P}),C,n+1\bigr)$ for a certain odd-size body $C$, with $\simm_{n+1}^{\bullet}\bigl((\vec{P})^{\even},C^{\odd}\bigr)=-(\vec{\alpha},p)$. 
And the $(j-1)$th iteration of \mainn\ also deals with Subsubcase 2.2.1, where the central triple of $\vec{E}_{j-1}$ is $\bigl((\vec{P}),D,n+1\bigr)$ for a certain odd-size body $D$, with  
$\simm_{n+1}^{\bullet}\bigl((\vec{P})^{\even},D^{\odd}\bigr)=-(\vec{\beta},q)$. So, $\vec{E}_{i-1}$ contains the entry $[n+1,C]$ and $\vec{E}_{j-1}$ contains the entry $[n+1,D]$. Therefore,   by the induction hypothesis, $C$ and $D$ are again consistent. Then clause 1 of Lemma \ref{golemma} implies that 
$\simm_{n+1}\bigl((\vec{P})^{\even},C^{\odd}\bigr)=\simm_{n+1}\bigl((\vec{P})^{\even},D^{\odd}\bigr)$, meaning that, as desired,   $(\vec{\alpha},p)=(\vec{\beta},q)$.   
\end{proof}

Consider any $n\in\{0,\ldots,k\}$ and $h\in \mathbb{I}$.  We define  
\begin{equation}\label{eqdef}
 \mathbb{B}_{n}^{h} 
\end{equation}  
as the smallest-size body such that, for every $i\in \mathbb{I}_{!}^{h}$, whenever 
$\vec{E}_i$ has an $n$-indexed entry, $\mathbb{B}_{n}^{h}$ is a (not necessarily proper) extension of that entry's body. In view of Lemma   \ref{jinan}, such a $\mathbb{B}_{n}^{h}$ always exists. 
We further define the bodies   $\mathbb{B}_{n}^{h}\hspace{-2pt}\uparrow$ and   $\mathbb{B}_{n}^{h}\hspace{-2pt}\downarrow$ as follows. Let $\mathbb{B}_{n}^{h}=(O_1,\ldots,O_s)$. We agree that below and later, where $t$ is $0$ or a negative integer, the denotation of an expression like $(P_1,\ldots,P_t)$ should be understood as the empty tuple $()$. Then: 
\[
\mathbb{B}_{n}^{h}\hspace{-2pt}\uparrow=\left\{\begin{array}{ll}
(O_1,\ldots,O_s) & \mbox{\it if $s$ is even;}\\
(O_1,\ldots,O_{s-1}) & \mbox{\it if $s$ is odd.}
\end{array}\right.\label{x235}
\]

\[\mathbb{B}_{n}^{h}\hspace{-2pt}\downarrow=\left\{\begin{array}{ll}
(O_1,\ldots,O_s) & \mbox{\it if $s$ is odd;}\\
(O_1,\ldots,O_{s-1}) & \mbox{\it if $s$ is even.} 
\end{array}\right. \]

Assume $h\in \mathbb{I}$, $n\in\{0,\ldots,k\}$, and $(P_1,\ldots,P_p)$ is a nonempty, not necessarily proper, restriction of the  body $\mathbb{B}_{n}^{h}$. By the {\bf $(h,n)$-birthtime}\label{xbirth} of $(P_1,\ldots,P_p)$ 
we shall mean  the smallest number $i\in \mathbb{I}^{h}_{!}$ such that,   for some (not necessarily proper) extension $B$ of 
$(P_1,\ldots,P_p)$, $\vec{E}_i$ has the entry $[n,B]$. We extend this concept to the case  $p=0$  by stipulating that the $(h,n)$-birthtime of the empty body $()$ is always $0$. In informal discourses  we may say ``$(O_1,\ldots,O_p)$  
was {\bf $(h,n)$-born}\label{xbrno} at time $i$'' to mean that $i$ is the $(h,n)$-birthtime of $(O_1,\ldots,O_p)$.  When $h$ and $n$ are fixed or clear from the context, we may omit a reference to $(h,n)$ and simply say ``birthtime'' or ``born''. 

\begin{lem}\label{wuhano} Consider any $h\in \mathbb{I}$ and $n\in\{1,\ldots,k\}$. Let $\mathbb{B}_{n-1}^{h}\hspace{-2pt}\downarrow =(P_1,\ldots,P_p)$ and 
$\mathbb{B}_{n}^{h}\hspace{-2pt}\uparrow =(Q_1,\ldots,Q_q)$, where $q>0$. Further let $i_P$  be the $(h,n-1)$-birthtime of 
$(P_1,\ldots,P_p)$ and $i_Q$ be the $(h,n)$-birthtime of $(Q_1,\ldots,Q_q)$. 
\begin{enumerate}[label=\arabic*.]
\item If  $i_Q>i_P$, then we have:
\begin{eqnarray}
& & \simm_{n}^{\bullet}\bigl((\mathbb{B}_{n-1}^{h})^{\even}, (Q_1,\ldots,Q_{q-1})^{\odd}\bigr)=+Q_q;\label{aba0}\\
 & & \mbox{\it The triple $\bigl((\mathbb{B}_{n-1}^{h})^{\even}, (Q_1,\ldots,Q_{q-1})^{\odd},n\bigr)$ is saturated;}\label{aba1}\\
& & \simm_{n}^{\rightarrow} \bigl((\mathbb{B}_{n-1}^{h})^{\even}, (Q_1,\ldots,Q_{q-1})^{\odd}\bigr)=\mathbb{B}^{h}_{n}\hspace{-2pt}\uparrow;\label{aba3}\\
& & \simm_{n}^{\leftarrow} \bigl((\mathbb{B}_{n-1}^{h})^{\even}, (Q_1,\ldots,Q_{q-1})^{\odd}\bigr)=\mathbb{B}^{h}_{n-1}.\label{aba2}
\end{eqnarray}
 
\item If  $i_P>i_Q$,  then we have:
\begin{eqnarray}
& & \simm_{n}^{\bullet}\bigl((P_1,\ldots,P_{p-1})^{\even}, (\mathbb{B}_{n}^{h})^{\odd}\bigr)=-P_p;\label{caca0}\\
 & & \mbox{\it The triple $\bigl((P_1,\ldots,P_{p-1})^{\even}, (\mathbb{B}_{n}^{h})^{\odd},n\bigr)$ is saturated;}\label{caca1}\\
& & \simm_{n}^{\rightarrow} \bigl((P_1,\ldots,P_{p-1})^{\even}, (\mathbb{B}_{n}^{h})^{\odd}\bigr)=\mathbb{B}^{h}_{n};\label{caca3}\\
& & \simm_{n}^{\leftarrow} \bigl((P_1,\ldots,P_{p-1})^{\even}, (\mathbb{B}_{n}^{h})^{\odd}\bigr)=\mathbb{B}^{h}_{n-1}\hspace{-2pt}\downarrow.\label{caca2}
\end{eqnarray}
\end{enumerate}
\end{lem}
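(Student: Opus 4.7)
My plan is to prove Case 1 ($i_Q>i_P$) in detail, noting that Case 2 is entirely symmetric: the roles of $Q$'s and $P$'s, of $+$ and $-$, and of Subsubcase 2.1.1 and Subsubcase 2.2.1 swap, with the critical iteration now being the one that attaches $P_p$ to the $(n-1)$-indexed entry. For Case 1, I first pinpoint the iteration $i^{*}$ at which the final organ $Q_q$ of $\mathbb{B}^{h}_{n}\hspace{-2pt}\uparrow$ was attached to the $n$-indexed entry. Since $q$ is even and only Subsubcase 2.1.1 (if $n<k$) or 2.1.2 (if $n=k$) can grow an $n$-indexed body from odd to even size, $i^{*}$ must execute one of these subcases. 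Reading off the central-triple mechanics there, the simulation performed at $i^{*}$ was exactly $\simm_{n}(L^{\even},(Q_1,\ldots,Q_{q-1})^{\odd})$, where $L$ is the body of the $(n-1)$-indexed entry of $\vec{E}_{i^{*}}$ (the empty body if no such entry exists), and it returned $+Q_q$. This yields (\ref{aba0}) once $L^{\even}$ is identified with $(\mathbb{B}^{h}_{n-1})^{\even}$.

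The identification $L^{\even}=(\mathbb{B}^{h}_{n-1})^{\even}$ is where the hypothesis $i_Q>i_P$ enters. By Lemma \ref{jinan}, the $(n-1)$-indexed body $L$ at time $i^{*}$ is consistent with every $(n-1)$-indexed body at any non-transient time $\leq h$; combined with the minimality of $\mathbb{B}^{h}_{n-1}$, and with the fact that $i^{*}\geq i_P$ forces $L$ to extend $(P_1,\ldots,P_p)=\mathbb{B}^{h}_{n-1}\hspace{-2pt}\downarrow$, the body $L$ must sit between $(P_1,\ldots,P_p)$ and $\mathbb{B}^{h}_{n-1}$. A short case analysis on the parity of the length of $\mathbb{B}^{h}_{n-1}$, using that any ``extra'' trailing element of $\mathbb{B}^{h}_{n-1}$ over $L$ would itself have to agree with the corresponding element pinned down by Lemma \ref{jinan}, then yields the required identification.

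Equations (\ref{aba3}) and (\ref{aba2}) follow by unwinding Definition \ref{ap20a}. The sequence of negative values taken by the variable $S$ during the run of $\simm_{n}(L^{\even},(Q_1,\ldots,Q_{q-1})^{\odd})$ coincides with the odd-indexed organs of $\mathbb{B}^{h}_{n-1}$ (each having been installed in the aggregation at an earlier Subsubcase 2.2.1 step whose $\simm$-output was exactly that organ with a $-$ sign), while the sequence of positive values coincides with $Q_2,Q_4,\ldots,Q_q$ (installed by earlier Subsubcase 2.1.1 or 2.1.2 steps). Interleaving these sequences with the input bodies as Definition \ref{ap20a} prescribes reconstructs $\mathbb{B}^{h}_{n}\hspace{-2pt}\uparrow$ and $\mathbb{B}^{h}_{n-1}$, respectively. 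The saturation statement (\ref{aba1}) then follows from Lemma \ref{golemma}(1,2): any nonempty proper restriction of the right (resp. any proper restriction of the left) argument forces the terminal $S$-value to remain positive (resp. negative), because the strict alternation of signs is determined by the historical chain of Subsubcase 2.1.1 and 2.2.1 events that produced these bodies in the first place.

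The principal obstacle I anticipate is justifying $L^{\even}=(\mathbb{B}^{h}_{n-1})^{\even}$ rigorously. Subsubcase 2.1.1 can delete $(<n)$-indexed common entries whose sizes do not exceed that of the $n$-indexed entry, and iteration $i^{*}$ itself need not lie in $\mathbb{I}^{h}_{!}$, so it is not \emph{a priori} obvious which $(n-1)$-indexed body is present at $i^{*}$, nor that no subsequent non-transient iteration installs a further even-indexed organ of $\mathbb{B}^{h}_{n-1}$ that is absent from $L$. Handling this requires careful tracking of the aggregation's dynamics across deletion events, restarts, and transient stretches, and will occupy the bulk of the detailed proof; the symmetric issue in Case 2 must be handled analogously.
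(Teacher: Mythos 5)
Your general decomposition (pinpoint the birth iteration of $Q_q$, read off the $\simm$ computation there, then push through Definition \ref{ap20a}) matches the paper's, but there are two substantive gaps. First, the ``principal obstacle'' you anticipate---proving $L^{\even}=(\mathbb{B}_{n-1}^{h})^{\even}$---is a mirage, and the step you use to approach it is unsound. You claim $i^{*}\geq i_P$ forces $L$ to extend $(P_1,\ldots,P_p)$, but Lemma \ref{jinan} gives only \emph{consistency}: one body extends the other, with no control on direction (entries can be deleted by Subsubcase 2.1.1, so the $(n-1)$-indexed body at a later non-transient time can be strictly shorter). Fortunately no such equality is needed: since the $(i_Q-1)$th iteration is not restarting (a restart wipes all common entries, yet $(Q_1,\ldots,Q_q)$ must be there at time $i_Q$), we have $i_Q-1\in\mathbb{I}^{h}_{!}$, so by definition (\ref{eqdef}) alone $\mathbb{B}_{n-1}^{h}$ \emph{extends} $L$, and then clause 2 of Lemma \ref{golemma} (positive output unchanged under extending the left argument) immediately upgrades $\simm_{n}^{\bullet}(L^{\even},(Q_1,\ldots,Q_{q-1})^{\odd})=+Q_q$ to (\ref{aba0}). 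No ``careful tracking of the aggregation's dynamics across deletion events, restarts, and transient stretches'' is required.

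Second, the saturation claim (\ref{aba1}) does not follow from Lemma \ref{golemma}(1,2) in the direct way you suggest: those clauses are about \emph{extending} an argument while keeping the output, whereas saturation asserts a sign flip under \emph{restriction}. The actual argument is a contradiction via birthtimes: suppose some even proper restriction $(P_1,\ldots,P_r)^{\even}$ still gives a positive output; look at the $(h,n-1)$-birthtime $j$ of $(P_1,\ldots,P_{r+1})$, note $j\leq i_P<i_Q$, so the $n$-indexed body at time $j-1$ is a restriction $(Q_1,\ldots,Q_s)$ with $s\leq q-1$, whence $\simm_{n}^{\bullet}((P_1,\ldots,P_r)^{\even},(Q_1,\ldots,Q_s)^{\odd})=-P_{r+1}$, and clause 1 of Lemma \ref{golemma} (negative output unchanged under extending the right argument) propagates the minus sign up to $(Q_1,\ldots,Q_{q-1})^{\odd}$, contradicting the assumed positivity. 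Note also that saturation (\ref{aba1}) must be established \emph{before} (\ref{aba3}) and (\ref{aba2}), since $\simm_{n}^{\rightarrow}$ and $\simm_{n}^{\leftarrow}$ are defined only for saturated triples; your sketch treats them in the opposite order.
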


\begin{proof} Assume the conditions of the lemma. Take a note of the fact that $i_P,i_Q\in\mathbb{I}^{h}_{!}$. 
\begin{enumerate}[label=\arabic*.]
\item
  Assume $i_Q>i_P$. Note that, by the definition of $\mathbb{B}_{n}^{h}\hspace{-2pt}\uparrow$, \ $q$ is even. 

Since ($q>0$ and) $q$ is even, at time $i_Q$ the body $(Q_1,\ldots,Q_q)$ obviously must have been ``born'' --- i.e., the transition from the $(i_Q-1)$th iteration to the $i_Q$th iteration must have happened --- according to the scenario of Subcase 2.1 of \mainn.  Namely, in that scenario,  the central triple of $\vec{E}_{i_Q-1}$ was $\bigl(C,(Q_1,\ldots,Q_{q-1}), n\bigr)$ for a certain even-size body $C$, and 
$\simm_{n}^{\bullet}\bigl(C^{\even},(Q_1,\ldots,Q_{q-1})^{\odd}\bigr)=+Q_{q}$. Since the 
$(i_Q-1)$th iteration of \mainn\ was not a restarting one, $i_Q-1$ is in $\mathbb{I}^{h}_{!}$ just like $i_Q$ is. Therefore, by the definition (\ref{eqdef}) of $\mathbb{B}_{n-1}^{h}$, \ $\mathbb{B}_{n-1}^{h}$ is an extension of $C$. Now, (\ref{aba0}) holds by clause 2 of Lemma \ref{golemma}. 

To verify  claim (\ref{aba1}), deny it for a contradiction. That is, assume there is a proper restriction $D$ of 
$(\mathbb{B}_{n-1}^{h})^{\even}$ such that $\simm_{n}^{\bullet}\bigl(D, (Q_1,\ldots,Q_{q-1})^{\odd}\bigr)$ is positive.  Since $(\mathbb{B}_{n-1}^{h})^{\even}$ has a proper restriction, the size of $\mathbb{B}_{n-1}^{h}$ is at least $2$, and therefore, by the definition of $\mathbb{B}_{n-1}^{h}\hspace{-2pt}\downarrow$, \ $p$ is an odd positive integer. 
Since $D$ is a proper restriction of $(\mathbb{B}_{n-1}^{h})^{\even}$, it is also a (not necessarily proper) restriction of $(P_1,\ldots,P_p)^{\even}$. Furthermore, since $p$ is odd, $(P_1,\ldots,P_p)^{\even}=(P_1,\ldots,P_{p-1})^{\even}$. Consequently, $D=(P_1,\ldots,P_{r})^{\even}$ for some $r$ strictly smaller than $p$. We may assume that $r$ is even, for otherwise $(P_1,\ldots,P_r)^{\even}=(P_1,\ldots,P_{r-1})^{\even}$ and we could have taken $r-1$ instead of $r$. Thus, for the nonnegative even integer $r$ with $r<p$, 
\begin{equation}\label{555}
\mbox{\em $\simm_{n}^{\bullet}\bigl((P_1,\ldots,P_{r})^{\even},(Q_1,\ldots,Q_{q-1})^{\odd}\bigr)$ is positive}.
\end{equation}
Let $j$ be the $(h,n-1)$-birthtime of $(P_1,\ldots,P_{r+1})$. Note that $j\leq i_P$, and hence $j<i_Q$. Since 
${r+1}$ is odd, $(P_1,\ldots,P_{r+1})$ must have been born according to the scenario of Subsubcase 2.2.1 of \mainn. Namely, in that scenario, ($j>1$ and) $j-1\in\mathbb{I}^{h}_{!}$, the central triple of $\vec{E}_{j-1}$ is $\bigl((P_1,\ldots,P_{r}),A,n\bigr)$ for some odd-size body $A$, and 
$\simm_{n}^{\bullet} \bigl((P_1,\ldots,P_{r})^{\even},A^{\odd}\bigr)=-P_{r+1}$. By definition (\ref{eqdef}), the  body $\mathbb{B}_{n}^{h}$ is an extension of $A$. But, 
since $j< i_Q$,  $(Q_1,\ldots,Q_q)$  was not yet $(h,n)$-born at time $j$. So, we must have $A=(Q_1,\ldots,Q_s)$ for some $s\leq q-1$.  Therefore, by clause 1 of Lemma \ref{golemma},  
\[\simm_{n}^{\bullet} \bigl((P_1,\ldots,P_{r})^{\even},(Q_1,\ldots,Q_{q-1})^{\odd}\bigr)=-P_{r+1}.\] 
The above, however, contradicts (\ref{555}). Claim (\ref{aba1}) is thus proven. 

To justify  (\ref{aba3}),  assume 
$\simm_{n}^{\rightarrow}\bigl((\mathbb{B}_{n-1}^{h})^{\even},(Q_1,\ldots,Q_{q-1})^{\odd} \bigr)=(U_1,\ldots,U_u)$. We want to show that $(U_1,\ldots,U_u)=(Q_1,\ldots,Q_{q})$. With (\ref{aba0}) and the evenness of $q$ in mind, we can see directly from the definition of $\simm_{n}^{\rightarrow}$ that $u=q$, and that 
$(U_1,\ldots,U_q)^{\odd}=(Q_1,\ldots,Q_{q-1})^{\odd}$. $q$'s being even  further implies that $(Q_1,\ldots,Q_{q-1})^{\odd}=(Q_1,\ldots,Q_{q})^{\odd}$. So, $(U_1,\ldots,U_u)^{\odd}=(Q_1,\ldots,Q_{q})^{\odd}$.
It remains to show that we have
$(U_1,\ldots,U_u)^{\even}=(Q_1,\ldots,Q_{q})^{\even}$ as well, i.e., $(U_1,\ldots,U_q)^{\even}=(Q_1,\ldots,Q_{q})^{\even}$. 
Consider any even $r\in\{1,\ldots,q\}$. Let $j$ be the $(h,n)$-birthtime of 
$(Q_1,\ldots,Q_r)$. Obviously this body must have been born according to the scenario of Subcase 2.1 of \mainn\ in which  $ j-1\in\mathbb{I}^{h}_{!}$,   
 $\vec{E}_{j-1}$ has the entry $[n,(Q_1,\ldots,Q_{r-1})]$ and, 
with $\bigl(C,(Q_1,\ldots,Q_{r-1}),n\bigr)$ being the central triple of $\vec{E}_{j-1}$ for some even-size body $C$, we have 
$\simm_{n}^{\bullet}\bigl(C^{\even},(Q_1,\ldots,Q_{r-1})^{\odd}\bigr)=+Q_r$. By definition (\ref{eqdef}), $\mathbb{B}_{n-1}^{h}  $ is an extension of $C $. So, by clause 2 of Lemma \ref{golemma}, 
\begin{equation}\label{29aq}
\simm_{n}^{\bullet}\bigl((\mathbb{B}_{n-1}^{h})^{\even},(Q_1,\ldots,Q_{r-1})^{\odd}\bigr)=+Q_r.
\end{equation}
But how does the computation of (\ref{29aq}) 
differ from the computation of (\ref{aba0})?  
 The two computations proceed in exactly the same ways, with the variable $S$ of $\simm_{n}^{\bullet}$ going through exactly the same values in both cases, with the only difference  that, while the computation of  (\ref{29aq}) stops after $S$ takes its $(r/2)$th positive value $+U_r$ and returns that value as $+Q_r$, the computation of (\ref{aba0}) 
 continues further (if $r\not=q$) until the value of $S$ becomes $+U_q$. As we see, we indeed have $U_r=Q_r$ as desired. Claim (\ref{aba3}) is now verified. 

Claim (\ref{aba2}) can be verified in a rather similar way. Assume 
\[\simm_{n}^{\leftarrow}\bigl((\mathbb{B}_{n-1}^{h})^{\even},(Q_1,\ldots,Q_{q-1})^{\odd} \bigr)=(V_1,\ldots,V_v).\] We want to show that $(V_1,\ldots,V_v)=\mathbb{B}^{h}_{n-1}$.
By the 
definition of $\simm_{n}^{\leftarrow}$, \ $(V_1,\ldots,V_v)^{\even}= (\mathbb{B}^{h}_{n-1})^{\even}$. 
It remains to show that we also have  $(V_1,\ldots,V_v)^{\odd}=(\mathbb{B}^{h}_{n-1})^{\odd}$. Notice that 
$(\mathbb{B}^{h}_{n-1})^{\odd}=(\mathbb{B}_{n-1}^{h}\hspace{-2pt}\downarrow)^{\odd}=(P_1,\ldots,P_p)^{\odd}$, and that ($p\leq v$ and) 
$(V_1,\ldots,V_v)^{\odd}=(V_1,\ldots,V_p)^{\odd}$.
 So, what we want to show is $(V_1,\ldots,V_p)^{\odd}=(P_1,\ldots,P_p)^{\odd}$. 
 Consider any odd $r\in\{1,\ldots,p\}$. 
Let $j$ be the $(h,n-1)$-birthtime of 
$(P_1,\ldots,P_r)$. Note that $j\leq i_P$ and hence $j<i_Q$. The birth of $(P_1,\ldots,P_r)$ should have occurred according to Subsubcase 2.2.1 of \mainn, in a situation where $1\leq j-1\in\mathbb{I}^{h}_{!}$, the central triple of $\vec{E}_{j-1}$ is 
$\bigl((P_1,\ldots,P_{r-1}),C,n\bigr)$  for some odd-size body $C$, and  
$\simm_{n}^{\bullet}\bigl((P_1,\ldots,P_{r-1})^{\even},C^{\odd}\bigr)=-P_r$. But  $(Q_1,\ldots,Q_{q})$ is an extension of $C$ because so is $\mathbb{B}^{h}_{n}$. In fact, it is a proper extension, because $(Q_1,\ldots,Q_q)$  was not yet $(h,n)$-born at time $j$. So,
 $(Q_1,\ldots,Q_{q-1})^{\odd}$ is a (not necessarily proper) extension of $C^{\odd}$. Hence, by clause 1 of Lemma \ref{golemma}, 
\begin{equation}\label{29z}
\simm_{n}^{\bullet}\bigl((P_1,\ldots,P_{r-1})^{\even},(Q_1,\ldots,Q_{q-1})^{\odd}\bigr)=-P_r.
\end{equation}
But how does the computation of (\ref{29z}) 
differ from the computation of (\ref{aba0})?  
 The two computations proceed in exactly the same ways, with the variable $S$ of $\simm_{n}^{\bullet}$ going through exactly the same values in both cases, with the only difference that, while the computation of  (\ref{29z}) stops after $S$ takes its $((r+1)/2)$th negative value $-V_r$ and returns that value as $-P_r$, the computation of (\ref{aba0}) 
 continues further until the value of $S$ becomes $+Q_q$. As we see, we indeed have $V_r=P_r$ as desired. This completes our proof of clause 1 of the lemma.\medskip

\item
  Assume $i_P>i_Q$.  Note that $p$ is odd, $q$ is even and $q\not=0$. 

The way $(P_1,\ldots,P_p)$ was born is that the central triple of
$\vec{E}_{i_P-1}$ had the form 
$\bigl((P_1,\ldots,P_{p-1}),C, n\bigr)$ for a certain odd-size body $C$, and 
$\simm_{n}^{\bullet}\bigl((P_1,\ldots,P_{p-1})^{\even},C^{\odd}\bigr)=-P_{p}$. But $\mathbb{B}_{n}^{h}$ is an extension of $C$. Therefore (\ref{caca0}) holds by  Lemma \ref{golemma}.   

To verify  (\ref{caca1}), deny it for a contradiction:  assume there is a proper restriction $D$ of 
$(\mathbb{B}_{n}^{h})^{\odd}$ such that $\simm_{n}^{\bullet}\bigl((P_1,\ldots,P_{p-1})^{\even},D\bigr)$ is negative.  $D$'s being  a proper restriction of $(\mathbb{B}_{n}^{h})^{\odd}$ implies that  $D=(Q_1,\ldots,Q_{r})^{\odd}$ for some odd $r$ --- fix it --- strictly smaller than $q$. Thus, 
\begin{equation}\label{555a}
\mbox{\em $\simm_{n}^{\bullet}\bigl((P_1,\ldots,P_{p-1})^{\even},(Q_1,\ldots,Q_{r})^{\odd}\bigr)$ is negative}.
\end{equation}
Let $j$ be the birthtime of $(Q_1,\ldots,Q_{r+1})$. Note that $j\leq
i_Q$, and hence $j<i_P$. $(Q_1,\ldots,Q_{r+1})$'s birth must have
happened in a situation where $1\leq j-1\in\mathbb{I}^{h}_{!}$,  the
central triple of $\vec{E}_{j-1}$ happens to be $\bigl(A,(Q_1,\ldots,Q_{r}),n\bigr)$ for some even-size body $A$, and 
$\simm_{n}^{\bullet} \bigl(A^{\even},(Q_1,\ldots,Q_{r})^{\odd}\bigr)=+Q_{r+1}$.  $\mathbb{B}_{n-1}^{h}$ is an extension of $A$. But 
since $j< i_P$,  $(P_1,\ldots,P_p)$ was not yet born at time $j$. So,  $A=(P_1,\ldots,P_s)$ for some $s\leq p-1$.  Therefore, by Lemma \ref{golemma},  
\[\simm_{n}^{\bullet} \bigl((P_1,\ldots,P_{p-1})^{\even},(Q_1,\ldots,Q_{r})^{\odd}\bigr)=+Q_{r+1}.\] 
The above, however, contradicts (\ref{555a}). Claim (\ref{caca1}) is thus proven. 

For (\ref{caca3}),  assume 
$\simm_{n}^{\rightarrow} \bigl((P_1,\ldots,P_{p-1})^{\even},
(\mathbb{B}_{n}^{h})^{\odd}\bigr)=(U_1,\ldots,U_u)$. We want to show $(U_1,\ldots,U_u)=\mathbb{B}^{h}_{n}$.
Directly from the 
definition of $\simm_{n}^{\rightarrow}$, \  
$(U_1,\ldots,U_u)^{\odd}=(\mathbb{B}_{n}^{h})^{\odd}$. 
It remains to show that  $(U_1,\ldots,U_u)^{\even}=(\mathbb{B}_{n}^{h})^{\even}$. Note that $(\mathbb{B}_{n}^{h})^{\even}=(\mathbb{B}_{n}^{h}\hspace{-2pt}\uparrow)^{\even}  =(Q_1,\ldots,Q_q)^{\even}$, and that ($q\leq u$ and) $(U_1,\ldots,U_u)^{\even}= (U_1,\ldots,U_q)^{\even}$. So, what we want to show is 
$(U_1,\ldots,U_q)^{\even}=(Q_1,\ldots,Q_q)^{\even}$.   
For this purpose, consider any even $r\in\{1,\ldots,q\}$.  Let $j$ be the $(h,n)$-birthtime of 
$(Q_1,\ldots,Q_r)$. Obviously $(Q_1,\ldots,Q_r)$ must have been born according to the scenario of Subcase 2.1 of \mainn\ in which  $ j-1\in\mathbb{I}^{h}_{!}$,   
 $\vec{E}_{j-1}$ has the entry $[n,(Q_1,\ldots,Q_{r-1})]$ and, 
with $\bigl(C,(Q_1,\ldots,Q_{r-1}),n\bigr)$ being the central triple of $\vec{E}_{j-1}$ for some even-size body $C$, 
$\simm_{n}^{\bullet}\bigl(C^{\even},(Q_1,\ldots,Q_{r-1})^{\odd}\bigr)=+Q_r$. By definition (\ref{eqdef}), $\mathbb{B}_{n-1}^{h}  $ is an extension of $C $. Therefore, by  Lemma \ref{golemma}, $\simm_{n}^{\bullet}\bigl((\mathbb{B}_{n-1}^{h})^{\even},(Q_1,\ldots,Q_{r-1})^{\odd}\bigr)=+Q_r$. However, $(\mathbb{B}_{n-1}^{h})^{\even}=(\mathbb{B}_{n-1}^{h}\hspace{-2pt}\downarrow)^{\even} =(P_1,\ldots,P_p)^{\even}$; further, since $p$ is odd, $(P_1,\ldots,P_p)^{\even}=(P_1,\ldots,P_{p-1})^{\even}$, and hence $(\mathbb{B}_{n-1}^{h})^{\even}=(P_1,\ldots,P_{p-1})^{\even}$. Thus we have:
\begin{equation}\label{29aqqq}
\simm_{n}^{\bullet}\bigl((P_1,\ldots,P_{p-1})^{\even},(Q_1,\ldots,Q_{r-1})^{\odd}\bigr)=+Q_r.
\end{equation}
Comparing the computations of (\ref{caca0}) and  
 (\ref{29aqqq}), we see that  
 the two computations proceed in exactly the same ways,  with the only difference that, while the computation of  (\ref{29aqqq}) stops after variable $S$ of $\simm_{n}^{\bullet}$ takes its $(r/2)$th positive value $+U_r$ and returns that value as $+Q_r$, the computation of (\ref{caca0}) 
 continues further until the value of $S$ becomes $-P_p$. As we see, we indeed have $U_r=Q_r$ as desired. Claim (\ref{caca3}) is verified. 

For (\ref{caca2}), assume 
 $\simm_{n}^{\leftarrow} \bigl((P_1,\ldots,P_{p-1})^{\even}, (\mathbb{B}_{n}^{h})^{\odd}\bigr)=
(V_1,\ldots,V_v)$. We want to show that $(V_1,\ldots,V_v)=\mathbb{B}^{h}_{n-1}\hspace{-2pt}\downarrow$. With (\ref{caca0}) and the oddness of $p$ in mind, we see from  the 
definition of $\simm_{n}^{\leftarrow}$ that $v=p$, and that  $(V_1,\ldots,V_p)^{\even}= (P_1,\ldots,P_{p-1})^{\even}$. The fact that $p$ is odd additionally implies  
$(P_1,\ldots,P_{p-1})^{\even}=(P_1,\ldots,P_{p})^{\even}$. Consequently, $(V_1,\ldots,V_v)^{\even}=(P_1,\ldots,P_{p})^{\even}=
(\mathbb{B}^{h}_{n-1}\hspace{-2pt}\downarrow)^{\even}$.
So, it remains to show that we also have   
$(V_1,\ldots,V_v)^{\odd}=(\mathbb{B}^{h}_{n-1}\hspace{-2pt}\downarrow)^{\odd}$, i.e., $(V_1,\ldots,V_p)^{\odd}=(P_1,\ldots,P_{p})^{\odd}$. 
Consider any odd $r\in\{1,\ldots,p\}$. Let $j$ be the $(h,n-1)$-birthtime of 
$(P_1,\ldots,P_r)$. Note that $j\leq i_P$. 
The birth of $(P_1,\ldots,P_r)$ should have occurred according to Subsubcase 2.2.1 of \mainn, in a situation where $1\leq j-1\in\mathbb{I}^{h}_{!}$, the central triple of $\vec{E}_{j-1}$ is 
$\bigl((P_1,\ldots,P_{r-1}),C,n\bigr)$  for some odd-size body $C$, and  
$\simm_{n}^{\bullet}\bigl((P_1,\ldots,P_{r-1})^{\even},C^{\odd}\bigr)=-P_r$. But $\mathbb{B}^{h}_{n}$ is an extension of $C$.  Hence, by Lemma \ref{golemma}, 
\begin{equation}\label{2229a}
\simm_{n}^{\bullet}\bigl((P_1,\ldots,P_{r-1})^{\even},(\mathbb{B}^{h}_{n})^{\odd}\bigr)=-P_r.
\end{equation}
Compare the computations of  (\ref{2229a}) 
and (\ref{caca0}). The two computations proceed in exactly the same ways, with the only difference  that, while the computation of  (\ref{2229a}) stops after $S$ takes its $((r+1)/2)$th negative value $-V_r$ and returns that value as $-P_r$, the computation of (\ref{caca0}) 
 continues further (if $r\not=p$) until the value of $S$ becomes
 $-P_p$. Thus $V_r=P_r$, as desired.\qedhere 
\end{enumerate}
\end{proof}

\noindent We agree for the rest of Section \ref{ssind} that 
$\hbar\label{xhbar}$
is the greatest element of $\mathbb{I}$. The existence of such an element is guaranteed by Lemma \ref{beijing}. 

\begin{lem}\label{a21} The following statements are true (with $\hbar$ as above): 
\begin{enumerate}[label=\arabic*.]
\item For every $n\in\{0,\ldots,k\}$, the size of  $\mathbb{B}^{\hbar}_{n}$ is odd.

\item For every $n\in\{1,\ldots,k\}$, the $(\hbar,n-1)$-birthtime of $\mathbb{B}_{n-1}^{\hbar}$ is greater than the $(\hbar,n)$-birthtime of $\mathbb{B}_{n}^{\hbar}$. 

\item For every $n\in\{0,\ldots,k\}$, the scale of 
the last organ of $\mathbb{B}_{n}^{\hbar}$ is the same as the master
scale of $\vec{E}_{\hbar}$.
\end{enumerate}
\end{lem}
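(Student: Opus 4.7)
The plan is to establish all three claims jointly by induction on $n$ descending from $k$ to $0$. Before starting the induction, I will characterize the state of iteration $\hbar$ itself. Since $\mathbb{I}$ is finite by Lemma \ref{beijing}, the $\hbar$th iteration of $\mainn$ must be terminal and cannot call an $(\hbar+1)$th iteration. An inspection of $\mainn$ shows that every case and subcase eventually performs \repeatt\ or \restart\ except Subsubsubcase 2.2.2.2 when Environment never moves, which polls the run tape forever. Hence the $\hbar$th iteration blocks in Subsubsubcase 2.2.2.2, and in particular the central triple of $\vec{E}_{\hbar}$ has index $0$. Aggregation condition (iii) then yields a crucial structural fact: since the leftmost odd-size entry of $\vec{E}_{\hbar}$ has index $0$, there are no even-size entries in $\vec{E}_{\hbar}$ at all.

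For the base case $n=k$, the master entry is present in every $\vec{E}_i$, and across non-transient iterations $i\in\mathbb{I}^{\hbar}_{!}$ the master body only grows: either by Subsubcase 2.1.2 (adding two organs) or by Case 1 and the restart portion of Subsubsubcase 2.2.2.2 (which extend the last organ's payload). Combined with Lemma \ref{jinan}, this identifies $\mathbb{B}^{\hbar}_k$ as the master body of $\vec{E}_{\hbar}$. Claim 1 follows from aggregation condition (i), and Claim 3 is immediate because the last organ of the master body is by definition the master organ, with scale equal to the master scale of $\vec{E}_{\hbar}$.

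For the inductive step, I assume the three claims at $n$ and derive them at $n-1$. The only mechanisms affecting $(n-1)$-indexed entries are Subsubcase 2.2.1 (central index $n$, negative $\simm_n$) and Subsubcase 2.1.1 (central index $n-1$, positive $\simm_{n-1}$). Whichever event carries the $(n-1)$-indexed entry to its final body $\mathbb{B}^{\hbar}_{n-1}$, the determinism of $\simm$ and Lemma \ref{wuhano} let me identify the relevant saturated triple --- essentially $\bigl((\mathbb{B}^{\hbar}_{n-1})^{\even}, (\mathbb{B}^{\hbar}_{n})^{\odd}, n\bigr)$ after an appropriate restriction --- and read off the scale of the newly added organ as the scale of the last organ of $\mathbb{B}^{\hbar}_{n}$ fetched by $\simm_n$. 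By the inductive hypothesis (Claim 3 at $n$), this equals the master scale of $\vec{E}_{\hbar}$, yielding Claim 3 at $n-1$. Claim 2 follows because Subsubcase 2.2.1 requires the $n$-indexed entry with body $\mathbb{B}^{\hbar}_{n}$ to already be in the central triple, so $\mathbb{B}^{\hbar}_{n-1}$ is born strictly later than $\mathbb{B}^{\hbar}_{n}$.

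The hard part will be Claim 1 at $n-1$, i.e.\ that $|\mathbb{B}^{\hbar}_{n-1}|$ is odd. If the $(n-1)$-indexed entry is present in $\vec{E}_{\hbar}$, oddness is immediate from the structural observation that $\vec{E}_{\hbar}$ has no even-size entries. Otherwise the entry must have been created and then deleted during the last segment before $\hbar$, and I will argue --- using clause 3 of Lemma \ref{golemma} bounding the number of positive outputs of $\simm_n$ by $\mathfrak{e}_{\top}$, together with the saturation property and the alternating fetch pattern between $A$ and $B$ in $\simm_n$ --- that the body reached at the last non-transient iteration containing this entry has odd size, and then use Lemma \ref{jinan} to conclude that this body equals $\mathbb{B}^{\hbar}_{n-1}$.
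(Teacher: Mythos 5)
Your structural observation that $\vec{E}_\hbar$ has no even-size entries is correct and a nice shortcut, and your base case $n=k$ is sound. But the single descending induction breaks down at the inductive step in two independent ways.

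The scale claim (Claim 3 at $n-1$) is the deepest gap. When $\simm_n$ on $\bigl((P_1,\ldots,P_{p-1})^{\even},(\mathbb{B}^{\hbar}_n)^{\odd}\bigr)$ returns $-P_p$, the scale it carries is the scale of the \emph{last} organ fetched into $R$, and that organ may well lie on the $A$-side rather than the $B$-side: after $R$ takes its final $B$-value $+Q_{q'}$, the procedure can fetch several further $A$-organs, each overwriting the carried scale, before terminating. Propagating the master scale across those $A$-fetches requires knowing that consecutive organs $P_{j-1},P_j$ of $\mathbb{B}^{\hbar}_{n-1}$ share a scale. The paper isolates this as Claim \ref{CL:1} and proves it by an \emph{ascending} induction on $n$ (the base $n=0$ works because $\simm_0$ has no $A$-side; the step at $n$ invokes the claim at $n-1$). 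That ascending auxiliary cannot be absorbed into a single descending pass; the paper accordingly proves clause 1 ascending, clause 2 non-inductively, Claim \ref{CL:1} ascending, and only clause 3 descending.

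The oddness claim (Claim 1 at $n-1$) has its own hole, independent of direction. If the $(n-1)$-indexed entry is present in $\vec{E}_\hbar$ with odd-size body $B'$, you have only that $\mathbb{B}^{\hbar}_{n-1}$ \emph{extends} $B'$, not equality: $\mathbb{B}^{\hbar}_{n-1}$ is taken over all of $\mathbb{I}^{\hbar}_{!}$, and an earlier non-transient iteration, followed by a locking and then a restart wiping the common entries, could have held a strictly larger body. The paper closes this for $n=0$ by showing that a strictly larger $\mathbb{B}^{\hbar}_0$ would force $\simm_{0}^{\bullet}\bigl((),B^{\odd}\bigr)$ to be positive, contradicting the stalling of the $\hbar$th iteration in Subsubsubcase 2.2.2.2; your sketch has no analogue, and the case where the entry is absent is deferred to a future argument. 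Your clause-2 justification has a parallel omission: you assume the $n$-indexed body in the central triple at the birth of $\mathbb{B}^{\hbar}_{n-1}$ already equals $\mathbb{B}^{\hbar}_n$, but it could a priori be a proper restriction; ruling that out is precisely what the paper's contradiction argument for clause 2 accomplishes, using the already-established oddness facts at both $n$ and $n-1$.
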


\begin{proof}\hfill 
\begin{enumerate}[label=\arabic*.]
\item
Assume $n\in\{0,\ldots,k\}$, $\mathbb{B}^{\hbar}_{n}=(Q_1,\ldots,Q_q)$, and $i_Q$ is the $(\hbar,n)$-birthtime of $\mathbb{B}^{\hbar}_{n}$.
 If $n\geq 1$, further assume that $\mathbb{B}_{n-1}^{\hbar}=(P_1,\ldots,P_p)$, and   $i_P$ is the $(\hbar,n-1)$-birthtime of $\mathbb{B}_{n-1}^{\hbar}$. 

We first verify that 
\begin{equation}\label{hhh}
\mbox{\it If $n=0$, then $q$ is odd.}
\end{equation}
Indeed, assume $n=0$.  Consider the last, i.e., $\hbar$th, iteration of $\mainn$. 
This must be an iteration that proceeds according to Subsubsubcase 2.2.2.2, because all other sorts of iterations always either \repeatt\ or \restart. Namely, the central triple of $\vec{E}_\hbar$ is   $\bigl((),B,0\bigr)$ for some odd-size body $B$, and $\simm_{0}^{\bullet}\bigl((),B^{\odd}\bigr)$ is negative. Of course  the $\hbar$th iteration  is not $\hbar$-transient, so  $\hbar\in \mathbb{I}_{!}^{\hbar}$.  By definition (\ref{eqdef}),    $\mathbb{B}_{0}^{\hbar}$ is an extension of $B$. So,
$B=(Q_1,\ldots,Q_a)$ for some odd $a$ with $a\leq q$.  
  Suppose $a<q$. Let $i$ be the $(\hbar,0)$-birthtime of $(Q_1,\ldots,Q_{a+1})$. Obviously the birth of $(Q_1,\ldots,Q_{a+1})$ must have occurred according to the scenario of Subcase 2.1 of \mainn\ in which 
$i>1$, $\vec{E}_{i-1}$ contains the entry  $[0, (Q_1,\ldots,Q_a)]$, i.e., $[0,B]$, and $\simm_{0}^{\bullet}\bigl((),B^{\odd}\bigr)=+Q_{a+1}$. This, however, contradicts with our earlier observation that $\simm_{0}^{\bullet}\bigl((),B^{\odd}\bigr)$ is negative. From this contradiction we conclude that $a=q$. If so, (\ref{hhh}) holds, because, as already noted, $a$ is odd.

We next verify that
\begin{equation}\label{hhhh}
\mbox{\it If $n\in\{1,\ldots,k\}$, then $q$ is odd.}
\end{equation}
  Our proof of (\ref{hhhh}) is, in fact, by induction on $n\geq 1$. Assume $n\in\{1,\ldots,k\}$.   By (\ref{hhh}) if $n=1$ (i.e., if we are dealing with the basis of induction), and by the induction hypothesis if $n>1$ (i.e., if we are dealing with the inductive step), 
 we have: 
\begin{equation}\label{jjj}
\mbox{\it $p$ (the size of $\mathbb{B}_{n-1}^{\hbar}$) is odd.}
\end{equation}
 Obviously, (\ref{jjj}) implies that  $(P_1,\ldots,P_{p})$ was born according to the scenario of Subsubcase 2.2.1 of \mainn\ in which $i_P-1\in\mathbb{I}^{\hbar}_{!}$, the central triple of 
$\vec{E}_{i_P-1}$  is $\bigl((P_1,\ldots,P_{p-1}),C,n\bigr)$
for 
a certain odd-size body $C$, and \[\simm_{n}^{\bullet}\bigl((P_1,\ldots,P_{p-1})^{\even},C^{\odd}\bigr)=-P_p.\] By  definition (\ref{eqdef}), 
$\mathbb{B}_{n}^{\hbar}$ is an extension of $C$. Hence, by clause 1 of Lemma \ref{golemma}, 
\begin{equation}\label{chenghai} 
\simm_{n}^{\bullet}\bigl((P_1,\ldots,P_{p-1})^{\even},(Q_1,\ldots,Q_{q})^{\odd}\bigr)=-P_p.
\end{equation}

For a contradiction suppose (\ref{hhhh}) is false, i.e., assume $q$ is even. Then $q\geq 2$, because $\mathbb{B}_{n}^{\hbar}=(Q_1,\ldots,Q_q)$ is an extension of the odd-size $C$. 
Remember that $i_Q$ is the $(\hbar,n)$-birthtime of $(Q_1,\ldots,Q_q)$. Since $q$ is even, $(Q_1,\ldots,Q_q)$ must have been born according to the scenario of Subcase 2.1 of \mainn\  in which $i_Q-1\in\mathbb{I}^{\hbar}_{!}$, $\vec{E}_{i_Q-1}$ contains the entry $[n,(Q_1,\ldots,Q_{q-1})]$ and, with 
$\bigl(D,(Q_1,\ldots,Q_{q-1}),n\bigr)$ being the central triple of $\vec{E}_{i_Q-1}$ for some even-size restriction $D$ of   $(P_1,\ldots,P_{p})$, 
\[\simm_{n}^{\bullet}\bigl(D^{\even},(Q_1,\ldots,Q_{q-1})^{\odd}\bigr)=+Q_q .\] But  since --- by (\ref{jjj}) --- $p$ is odd,  $(P_1,\ldots,P_{p-1})$ is an extension of $D$. Hence, by clause 2 of Lemma \ref{golemma},  
\[\simm_{n}^{\bullet}\bigl((P_1,\ldots,P_{p-1})^{\even},(Q_1,\ldots,Q_{q-1})^{\odd}\bigr)=+Q_q,\]
which, as $q$ is even and hence $(Q_1,\ldots,Q_{q})^{\odd}=(Q_1,\ldots,Q_{q-1})^{\odd}$, is the same as to say that 
\begin{equation}\label{jj15}
 \simm_{n}^{\bullet}\bigl((P_1,\ldots,P_{p-1})^{\even},(Q_1,\ldots,Q_{q})^{\odd}\bigr)=+Q_q. 
\end{equation}
Comparing (\ref{chenghai})  with (\ref{jj15}), we see a desired contradiction. This completes our proof of (\ref{hhhh}) and hence of clause 1 of the lemma, because the latter is nothing but (\ref{hhh}) and (\ref{hhhh}) put together.\vspace{7pt}

\item
Assume $n\in\{1,\ldots,k\}$, $\mathbb{B}_{n-1}^{\hbar}=(P_1,\ldots,P_p)$,   $i_P$ is the $(\hbar,n-1)$-birthtime of $\mathbb{B}_{n-1}^{\hbar}$, 
$\mathbb{B}^{\hbar}_{n}=(Q_1,\ldots,Q_q)$, and $i_Q$ is the $(\hbar,n)$-birthtime of $\mathbb{B}^{\hbar}_{n}$.
  For a contradiction, further assume $i_P\leq i_Q$. From the already verified clause 1 of the present lemma, we know that both $p$ and $q$ are odd. The oddness of $p$ implies that, at time $i_P$,   
$(P_1,\ldots,P_p)$  was  born  according to the scenario of Subsubcase 2.2.1 of \mainn\ in which $i_P-1\in \mathbb{I}^{\hbar}_{!}$,  the central triple of $\vec{E}_{i_P-1}$ is $\bigl((P_1,\ldots,P_{p-1}),C,n\bigr)$ for some odd-size body $C$, and $\simm_{n}^{\bullet}\bigl((P_1,\ldots,P_{p-1})^{\even},C^{\odd}\bigr)=-P_p$. By definition (\ref{eqdef}),  $(Q_1,\ldots,Q_{q})$ is an extension of $C$. Further, since $q$ is odd and the body $(Q_1,\ldots,Q_{q})$ was not yet  born  at time $i_P-1$, we have $q\geq 3$, with    $(Q_1,\ldots,Q_{q-2})$ being an extension of $C$. Then, by clause 1 of Lemma \ref{golemma},   
\begin{equation}\label{30a}
\simm_{n}^{\bullet}\bigl((P_1,\ldots,P_{p-1})^{\even},(Q_1,\ldots,Q_{q-2})^{\odd}\bigr)=-P_p.
\end{equation}
Let $j$ be the $(\hbar,n)$-birthtime of $(Q_1,\ldots,Q_{q-1})$.  
The birth of $(Q_1,\ldots,Q_{q-1})$ should have occurred according to the scenario Subcase 2.1 of \mainn\ in which $j-1\in\mathbb{I}^{\hbar}_{!}$, the central triple of $\vec{E}_{j-1}$ is $\bigl(D^{\even},(Q_1,\ldots,Q_{q-2})^{\odd}, n\bigr)$ for some even-size body $D$, and   
\[\simm_{n}^{\bullet}\bigl(D^{\even},(Q_1,\ldots,Q_{q-2})^{\odd}\bigr)=+Q_{q-1}.\] By definition (\ref{eqdef}),   $(P_1,\ldots,P_p)$ is an extension of $D$. So, by clause 2 of Lemma \ref{golemma}, 
$\simm_{n}^{\bullet}\bigl((P_1,\ldots,P_p)^{\even},(Q_1,\ldots,Q_{q-2})^{\odd}\bigr)=+Q_{q-1}$. But, since $p$ is odd, we have \[(P_1,\ldots,P_p)^{\even}=(P_1,\ldots,P_{p-1})^{\even}.\] Thus, 
\[\simm_{n}^{\bullet}\bigl((P_1,\ldots,P_{p-1})^{\even},(Q_1,\ldots,Q_{q-2})^{\odd}\bigr)=+Q_{q-1}.\]
The above is in contradiction with (\ref{30a}). 
\vspace{7pt} 

\item
We start with the following claim:\medskip

\begin{cLa}\label{CL:1}
Consider any $n\in\{0,\ldots,k\}$. Assume
$\mathbb{B}^{\hbar}_{n}=(Q_1,\ldots,Q_{q+1})$, and  $t$ is an even
number with $2\leq t \leq q$. Then the scale of $Q_t$ is the same as
that of $Q_{t-1}$.
\end{cLa}

\noindent To verify this claim, assume its conditions. 
We proceed by induction on $n=0,1,\ldots,k$. 

For the basis of induction, consider the case of $n=0$. Let $i$ be the $(\hbar,0)$-birthtime of  $(Q_1,
\ldots,Q_t)$. Obviously  the $(i-1)$th iteration of \mainn\ follows the scenario of Subcase 2.1 where  $i-1\in\mathbb{I}^{\hbar}_{!}$, 
  the central triple of $\vec{E}_{i-1}$ is  $\bigl((),(Q_1,\ldots,Q_{t-1}),0\bigr)$, and 
\begin{equation}\label{qq1}
\simm_{0}^{\bullet}\bigl(( ),(Q_1,\ldots,Q_{t-1})^{\odd}\bigr)=+Q_{t}.
\end{equation}
Looking back at the description of the procedure $\simm_{0}^{\bullet}$, we see that, in computing (\ref{qq1}), 
the procedure simply lets the scale  of the output $+Q_t$ be a copy of the scale of the ``last-fetched'' organ $Q_{t-1}$.  Done.

For the inductive step,  assume $n\geq 1$. Let  $\mathbb{B}^{\hbar}_{n-1}=(P_1,\ldots,P_p)$. From clause 1 of the present lemma we know that both $p$ and $q+1$ are odd. Note that, for this reason, $\mathbb{B}_{n-1}^{\hbar}\hspace{-2pt}\downarrow=(P_1,\ldots,P_p)$ and $\mathbb{B}_{n}^{\hbar}\hspace{-2pt}\uparrow=(Q_1,\ldots,Q_q)$. Let $i_P$ be the $(\hbar,n-1)$-birthtime of $\mathbb{B}_{n-1}^{\hbar}\hspace{-2pt}\downarrow$, and $i_Q$ be the $(\hbar,n)$-birthtime of $\mathbb{B}_{n}^{\hbar}\hspace{-2pt}\uparrow$. Clause 2 of the present lemma implies that $i_P>i_Q$. Hence the statements (\ref{caca0})-(\ref{caca2}) of Lemma \ref{wuhano}, with $\hbar$ in the role of $h$, are true. 
Let us again remember the work of $\simm^\bullet$ and imagine the computation of (\ref{caca0}) (with $h=\hbar$). With some thought and  with 
(\ref{caca0})-(\ref{caca2}) in mind, we can see the following scenario. At some point --- by the end of one of the iterations of $\loopp_n$, to be more specific --- the variable $R$ of $\simm_{n}^{\bullet}$ takes the value $+Q_{t-1}$.   
Let $g$ be the scale of $Q_{t-1}$. 
By the end of the next iteration of $\loopp_n$, the variable $S$ of $\simm_{n}^{\bullet}$ becomes either $+Q_t$, or $-P_{j-1}$ for some even $j\in\{1,\ldots,p\}$, with the scale of $S$ in either case being the same as the scale $g$ of the latest (by that time) value of $R$. Thus, if $S$ becomes $+Q_t$, the scale of $Q_t$ is the same as that of $Q_{t-1}$, and  we are done. If $S$ becomes  $-P_{j-1}$, then, immediately after that (on the same iteration of $\loopp_n$), $R$ takes the value $-P_{j}$. By the induction hypothesis, the scale of $P_{j}$ is the same as the scale $g$ of $P_{j-1}$. On the iterations of $\loopp_n$ that follow, $S$ and $R$ may take 
several (possibly zero) consecutive values from the series $-P_{j+1},-P_{j+3},\ldots$ and $-P_{j+2},-P_{j+4},\ldots$, respectively, and the  scales of all these values will remain to be $g$ for the same reasons as above. Sooner or later, after this series of negative values, $S$ becomes $+Q_t$. The scale of this signed organ, as before, will be the same as the scale $g$ of the latest  value of $R$.  The scale of $Q_t$ is thus the same as that of $Q_{t-1}$, which ends  
 our proof of Claim \ref{CL:1}.\medskip

Now, we prove clause 3 of the lemma by induction on $k-n$.  Let $m$ be the master scale of 
$\vec{E}_{\hbar}$. The basis case of $k-n=0$, i.e., $n=k$,  is straightforward. 
Next, consider any $n\in\{1,\ldots,k\}$. By the induction hypothesis, the scale of the last organ of $\mathbb{B}_{n}^{\hbar}$ is $m$. Let,
as in the inductive step of the above proof of Claim \ref{CL:1}, $\mathbb{B}^{\hbar}_{n-1}=(P_1,\ldots,P_p)$ and $\mathbb{B}^{\hbar}_{n}=(Q_1,\ldots,Q_{q+1})$. Arguing as in that proof --- with $q+1$ in the role of $t-1$, $m$ in the role of $g$ and relying on Claim \ref{CL:1} itself where the proof of the inductive step of the proof of Claim \ref{CL:1} relied on its induction hypothesis --- we find that, in the process of computing (\ref{caca0}) (with $h=\hbar$), at some point, the variable $R$ of the procedure $\simm^{\bullet}_{n}$ takes the value $+Q_{q+1}$ (its last positive value) and that, beginning from that point, the scale $m$ will be inherited by all subsequent negative values that the variables $S$ and $R$ assume, which (in the present case) include the final value $-P_p$ assumed by $S$.
 Thus, as desired, the scale of the last organ $P_p$ 
 of $\mathbb{B}_{n-1}^{\hbar}$ is the same as the master scale $m$ of $\vec{E}_{\hbar}$.\qedhere
\end{enumerate}
 \end{proof}

\begin{lem}\label{wuhanu} Consider an arbitrary member $h$ of $\mathbb{I}$. 
\begin{enumerate}[label=\arabic*.]
\item
\begin{enumerate}
\item There is a run $\Gamma_{0}^{h}$ generated by ${\mathcal H}_0$
  such that $\overline{\mathbb{B}_{0}^{h}}\preceq \Gamma_{0}^{h}$.  
\item Furthermore, if $h$ is the greatest element of $\mathbb{I}$ and
  $\overline{\mathbb{B}_{0}^{h}}$ is a reasonable run of $F'(0)$,
  then, for such a $\Gamma_{0}^{h}$, we simply have
  $\overline{\mathbb{B}_{0}^{h}}=\Gamma_{0}^{h}$.
\end{enumerate}

\item Consider any $n\in\{1,\ldots,k\}$. 
\begin{enumerate}
\item There is a 
run $\Gamma_{n}^{h}$ generated by ${\mathcal H}_n$ such that   $\overline{\mathbb{B}_{n}^{h}\hspace{-2pt}\uparrow}\preceq (\Gamma^{h}_{n})^{1.}$ and $\gneg \overline{\mathbb{B}_{n-1}^{h}\hspace{-2pt}\downarrow}\preceq 
( \Gamma^{h}_{n})^{0.}$. 
\item Furthermore, if $h$ is the greatest element of $\mathbb{I}$,
  $\overline{\mathbb{B}_{n }^{h}}$ is a reasonable run of $F'(n )$ and
  $\overline{\mathbb{B}_{n-1}^{h}}$ is a reasonable run of $F'(n-1)$,
  then, for such a $\Gamma_{n}^{h}$, we simply have
  $\overline{\mathbb{B}_{n}^{h}}= (\Gamma^{h}_{n})^{1.}$ and
  $\gneg\overline{\mathbb{B}_{n-1}^{h}}= (\Gamma^{h}_{n})^{0.}$. 
\end{enumerate}
\end{enumerate}
\end{lem}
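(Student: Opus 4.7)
The lemma will fall out of combining two pieces of machinery already in place: Lemma \ref{wuhano}, which identifies saturated $\simm$-appropriate triples whose $\simm_{n}^{\rightarrow}$ and $\simm_{n}^{\leftarrow}$ outputs coincide with the relevant $\mathbb{B}$-bodies, and Lemma \ref{ap20b}, which converts any saturated triple into a concrete computation branch of $\mathcal{H}_n$ whose run-tape content reflects those $\simm$-outputs. For the equality refinements of clauses 1(b) and 2(b), Lemma \ref{a21} will supply the extra structural information forced by $h = \hbar$.

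For clause 2(a) with $n \in \{1, \ldots, k\}$, write $\mathbb{B}_{n-1}^{h} = (P_1, \ldots, P_p)$ and $\mathbb{B}_{n}^{h} = (Q_1, \ldots, Q_q)$, set aside the easy degenerate cases where $\mathbb{B}_{n}^{h}\hspace{-2pt}\uparrow$ or $\mathbb{B}_{n-1}^{h}\hspace{-2pt}\downarrow$ is empty, and let $i_P, i_Q$ be the $(h, n-1)$- and $(h, n)$-birthtimes of $\mathbb{B}_{n-1}^{h}\hspace{-2pt}\downarrow$ and $\mathbb{B}_{n}^{h}\hspace{-2pt}\uparrow$ respectively. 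I will then split on which of $i_P, i_Q$ is larger. When $i_Q > i_P$, Lemma \ref{wuhano}(1) supplies a saturated triple with positive $\simm_{n}^{\bullet}$, $\simm_{n}^{\rightarrow} = \mathbb{B}_{n}^{h}\hspace{-2pt}\uparrow$ and $\simm_{n}^{\leftarrow} = \mathbb{B}_{n-1}^{h}$; Lemma \ref{ap20b}(2a) then produces the desired $\Gamma_{n}^{h}$, with the $\preceq$ assertion for $\overline{\mathbb{B}_{n-1}^{h}\hspace{-2pt}\downarrow}$ following from the fact that $\mathbb{B}_{n-1}^{h}\hspace{-2pt}\downarrow$ is a restriction of $\mathbb{B}_{n-1}^{h}$. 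The symmetric case $i_P > i_Q$ is handled identically via Lemma \ref{wuhano}(2).

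Clause 1(a) is analogous with $n = 0$; since Lemma \ref{wuhano} is stated only for $n \geq 1$, I will spell out an $n = 0$ analog by tracing the birth history of the organs of $\mathbb{B}_{0}^{h}$ (each even-indexed one arising from a Subcase 2.1.1 iteration at level $0$, each odd-indexed one from a Subsubcase 2.2.1 iteration at level $1$). This will show that the triple $\bigl((), (\mathbb{B}_{0}^{h})^{\odd}, 0\bigr)$ is saturated with $\simm_{0}^{\rightarrow}\bigl((), (\mathbb{B}_{0}^{h})^{\odd}\bigr) = \mathbb{B}_{0}^{h}$ (the absent $P$-body makes the second saturation condition vacuous), after which Lemma \ref{ap20b}(1a) delivers $\Gamma_{0}^{h}$.

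For clauses 1(b) and 2(b), assume $h = \hbar$. Lemma \ref{a21}(1) gives that every $\mathbb{B}_{n}^{\hbar}$ has odd size, so $\mathbb{B}_{n-1}^{\hbar}\hspace{-2pt}\downarrow = \mathbb{B}_{n-1}^{\hbar}$, and Lemma \ref{a21}(2) puts us in the $i_P > i_Q$ branch throughout (together with the $n = 0$ analog), yielding $\simm_{n}^{\rightarrow} = \mathbb{B}_{n}^{\hbar}$ and $\simm_{n}^{\leftarrow} = \mathbb{B}_{n-1}^{\hbar}$. To trigger the equality conclusions of Lemma \ref{ap20b} I must verify the hypothesis $v \geq \mathfrak{L}(\mathfrak{l}, u)$ on the scale $v$ of the final returned organ: by Lemma \ref{a21}(3), $v$ equals the master scale of $\vec{E}_{\hbar}$, and since \mainn\ can only cease iterating by entering Subsubsubcase 2.2.2.2 and awaiting a move that never arrives, the $\hbar$th iteration satisfies the entry condition $v \geq \mathfrak{L}(\mathfrak{l}, U) \geq \mathfrak{L}(\mathfrak{l}, u)$ of that subsubsubcase. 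The main obstacle will be the bookkeeping for the $n = 0$ analog of Lemma \ref{wuhano} (whose proof leans heavily on the presence of a $P$-body and must therefore be reworked), together with confirming that the lower bound on $v$ propagates uniformly across all indices $n \in \{0, \ldots, k\}$ in clause 2(b) — this is precisely what Lemma \ref{a21}(3) achieves by equating every scale to the common master scale.
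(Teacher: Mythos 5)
Your overall strategy matches the paper's: use Lemma \ref{wuhano} (or a direct $n=0$ analogue) plus Lemma \ref{ap20b} for the $\preceq$ assertions, and invoke Lemma \ref{a21} for the equality refinements. Two points deserve attention, one cosmetic and one substantive.

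The cosmetic point concerns clause 1(a). You claim to establish $\simm_{0}^{\rightarrow}\bigl((),(\mathbb{B}_{0}^{h})^{\odd}\bigr)=\mathbb{B}_{0}^{h}$ for general $h$. That equality fails in general: if $\simm_{0}^{\bullet}\bigl((),(\mathbb{B}_{0}^{h})^{\odd}\bigr)$ is positive, then $\simm_{0}^{\rightarrow}$ has even size while $\mathbb{B}_{0}^{h}$ may have odd size, so the former can be a \emph{proper} extension of the latter. The paper is careful to prove only that $\simm_{0}^{\rightarrow}\bigl((),(\mathbb{B}_{0}^{h})^{\odd}\bigr)$ is an \emph{extension} of $\mathbb{B}_{0}^{h}$, which is enough for the $\preceq$ conclusion of 1(a); equality only becomes available (and needed) in 1(b) once $h=\hbar$ forces the negative-output case. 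You would discover this while executing the plan, so it is not fatal, but the proposal as written asserts something false.

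The substantive gap is in clause 2(b), in the verification that $v\geq\mathfrak{L}(\mathfrak{l},u)$. You correctly note that Lemma \ref{a21}(3) gives $v=\,$(master scale of $\vec{E}_{\hbar}$), and that the $\hbar$th iteration proceeds by Subsubsubcase 2.2.2.2, which yields $v\geq\mathfrak{L}(\mathfrak{l},U_{\hbar})$. But you then silently infer $U_{\hbar}\geq u$, where $u$ is the space output of the $\simm_{n}$ call produced on the \emph{$(i_P-1)$th} iteration (when $(P_1,\ldots,P_{p'})$ was born). These are different iterations, and the variable $U$ is reset to $0$ by any \restart. So you need an additional argument that no restarting iteration occurs between $i_P$ and $\hbar$. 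This is precisely what the paper establishes: from Lemma \ref{a21}(2) one gets $i_P>i_L$ where $i_L$ is the birthtime of the master body, hence no locking iterations happen between $i_P$ and $\hbar$; combined with $i_P\in\mathbb{I}^{\hbar}_{!}$, this rules out restarting iterations in that window (else $i_P$ would be $\hbar$-transient). Only then does $U_{i_P}\leq U_{\hbar}$ follow, and from that $u\leq U_{i_P}\leq U_{\hbar}$. Without this step, the chain $v\geq\mathfrak{L}(\mathfrak{l},U_{\hbar})\geq\mathfrak{L}(\mathfrak{l},u)$ is unjustified. Note that for clause 1(b) this complication does not arise, since there the relevant $u$ is computed on the $\hbar$th iteration itself, and the update $U\mapsto\max(u,U)$ immediately gives $U_{\hbar}\geq u$.
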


\begin{proof} Fix an arbitrary $h\in \mathbb{I}$.\vspace{4pt}
\begin{enumerate}[label=\arabic*.]
\item Let $\mathbb{B}_{0}^{h}=(T_1,\ldots,T_t)$.  

If $t=0$,  then the position $\overline{\mathbb{B}_{0}^{h}}$ is empty, and is thus an initial segment of any run. So, an arbitrarily selected run  $\Gamma^{h}_{0}$  generated by ${\mathcal H}_0$ --- such as, for instance, the run in which Environment made no moves at all --- satisfies subclause (a). As for subclause (b), it is trivially satisfied because, by clause 1 of Lemma \ref{a21}, $h$ is not the greatest element of $\mathbb{I}$, for otherwise $t$ would have to be odd. 

 Now, for the rest of our proof of clause 1, assume $t\geq 1$. This automatically makes $\bigl((),(\mathbb{B}_{0}^{h})^{\odd},0\bigr)$ a $\simm$-appropriate triple. We first claim that
\begin{equation}\label{z27a}
\mbox{\it For any nonempty proper restriction $C$ of  $(\mathbb{B}_{0}^{h})^{\odd}$, 
$\simm_{0}^{\bullet}\bigl((),C\bigr)$ is positive.}
\end{equation}
For a contradiction suppose (\ref{z27a}) if false, and assume that, for some nonempty proper restriction $C$ of $(\mathbb{B}_{0}^{h})^{\odd}$,  
$\simm_{0}^{\bullet}\bigl((),C\bigr)$ is negative. Obviously $C=(T_1,\ldots,T_s)^{\odd}$ for some odd $s$ with $s<t$. Fix such an $s$.
 Thus,
\begin{equation}\label{27aaa}
\mbox{\it $\simm_{0}^{\bullet}\bigl((),(T_1,\ldots,T_s)^{\odd}\bigr)$ is negative.}
\end{equation}
Let $i$ be the $(h,0)$-birthtime of 
$(T_1,\ldots,T_{s+1})$. This means that $i-1\in \mathbb{I}^{h}_{!}$, the $(i-1)$th iteration of \mainn\ proceeds according to Subcase 2.1, $\vec{E}_{i-1}$ contains the entry $[0,(T_1,\ldots,T_{s})]$ and, with $\bigl((),(T_1,\ldots,T_{s}),0\bigr)$ being the central triple of $\vec{E}_{i-1}$, we have 
$\simm_{0}^{\bullet}\bigl((),(T_1,\ldots,T_{s})^{\odd}\bigr)=+T_{s+1}$. This, however, contradicts (\ref{27aaa}). 
 Claim (\ref{z27a}) is thus verified.  
 
Now we observe that 
\begin{equation}\label{27a}
\mbox{\it The triple $\bigl((),(\mathbb{B}_{0}^{h})^{\odd},0\bigr)$ is saturated}.
\end{equation}
Indeed, 
if  $\simm_{0}^{\bullet}\bigl((),(\mathbb{B}_{0}^{h})^{\odd}\bigr)$ is positive, then (\ref{27a}) automatically holds because the empty body $()$ has no proper restrictions; and if 
$\simm_{0}^{\bullet}\bigl((),(\mathbb{B}_{0}^{h})^{\odd}\bigr)$ is negative, then (\ref{27a}) is an immediate consequence of (\ref{z27a}). 

Our next claim is that
\begin{equation}\label{27b}
\mbox{\em $\simm_{0}^{\rightarrow}\bigl((),(\mathbb{B}_{0}^{h})^{\odd} \bigr)$ is an extension of $ \mathbb{B}_{0}^{h}$.}
\end{equation}
To justify this claim, assume 
$\simm_{0}^{\rightarrow}\bigl((),(\mathbb{B}_{0}^{h})^{\odd} \bigr)=(W_1,\ldots,W_w)$. From 
the definition of $\simm_{0}^{\rightarrow}$, we have 
$(W_1,\ldots,W_w)^{\odd}=(\mathbb{B}_{0}^{h})^{\odd}$.
So, we only need to show that $(W_1,\ldots,W_w)^{\even}$ is an extension  of $(\mathbb{B}_{0}^{h})^{\even}$, i.e., of $(T_1,\ldots,T_t)^{\even}$.
But indeed, consider any even $r\in\{1,\ldots,t\}$. Let $i$ be the $(h,0)$-birthtime of 
$(T_1,\ldots,T_r)$. This means that $i-1\in\mathbb{I}^{h}_{!}$, the $(i-1)$th iteration of \mainn\ proceeds according to the scenario of Subcase 2.1 where $\vec{E}_{i-1}$ has the entry $[0,(T_1,\ldots,T_{r-1})]$ and, 
with $\bigl((),(T_1,\ldots,T_{r-1}),0\bigr)$ being the central triple of $\vec{E}_{i-1}$,   
$\simm_{0}^{\bullet}\bigl((),(T_1,\ldots,T_{r-1})^{\odd}\bigr)=+T_r$. But how does the computation of 
$\simm_{0}^{\bullet}\bigl((),(T_1,\ldots,T_{r-1})^{\odd}\bigr)$ differ from the computation of 
$\simm_{0}^{\bullet}\bigl((),(T_1,\ldots,T_{t})^{\odd}\bigr)$ (from which the value $(W_1,\ldots,W_w)$ of $\simm_{0}^{\rightarrow}\bigl((),(\mathbb{B}_{0}^{h})^{\odd} \bigr)$ is extracted)? Both computations proceed in exactly the same way, with the variable $S$ of $\simm_{0}^{\bullet}$ going through exactly the same values, with the only difference that, while the computation of  $\simm_{0}^{\bullet}\bigl((),(T_1,\ldots,T_{r-1})^{\odd}\bigr)$ stops after $S$ takes its $(r/2)$th value $+W_r$ and returns that value as $+T_r$, the computation of 
$\simm_{0}^{\bullet}\bigl((),(T_1,\ldots,T_{t})^{\odd}\bigr)$ continues further until that value becomes $+W_w$  (if the output is positive) or $-\bigl((),s\bigr)$ for some $s$ (if the output is negative). Thus $W_r=T_r$, which completes our proof of  claim (\ref{27b}). 

Since, by (\ref{27a}), the triple $\bigl((),(\mathbb{B}_{0}^{h})^{\odd},0\bigr)$ is  saturated, clause 1(a) of Lemma \ref{ap20b} guarantees that there is a run $\Upsilon$ --- let us rename it into $\Gamma_{0}^{h}$ --- generated by ${\mathcal H}_0$ such that 
$\overline{\simm_{0}^{\rightarrow}\bigl((),(\mathbb{B}_{0}^{h})^{\odd} \bigr)}\preceq \Gamma_{0}^{h}$. This, by (\ref{27b}), implies that $\overline{\mathbb{B}_{0}^{h}}\preceq \Gamma_{0}^{h}$, as promised in clause 1(a) of the present lemma. 

For clause 1(b) of the present lemma, let us additionally assume that $h$ is the greatest element of $\mathbb{I}$ and  $\overline{\mathbb{B}_{0}^{h}}$ is a reasonable run of $F'(0)$. Note that the last, $h$th iteration of \mainn\ deals with Subsubsubcase 2.2.2.2, for any other case causes a next iteration to occur. Let $\bigl((),B,0\bigr)$ be the central triple of $\vec{E}_h$. So,   
\begin{equation}\label{iuy}
\simm_{0}\bigl((),B^{\odd}\bigr)=\bigl(-\bigl((),v\bigr),u\bigr)
\end{equation}
 for some numbers $v,u$. Fix these numbers. By definition (\ref{eqdef}), $B$ is a restriction of $\mathbb{B}_{0}^{h}$. And, by clause 1 of Lemma \ref{a21}, the size of $\mathbb{B}_{0}^{h}$ is odd. Consequently, $B$ is not a proper restriction of $\mathbb{B}_{0}^{h}$, because otherwise  $B^{\odd}$ would be a proper restriction of $(\mathbb{B}_{0}^{h})^{\odd}$,  making the statements (\ref{z27a}) and (\ref{iuy}) contradictory. We thus find that $B=\mathbb{B}_{0}^{h}$, which allows us to re-write (\ref{iuy}) as 
\begin{equation}\label{iiuy}
\simm_{0}\bigl((),(\mathbb{B}_{0}^{h})^{\odd}\bigr)=\bigl(-\bigl((),v\bigr),u\bigr).
\end{equation}
In view of $\simm_{0}^{\bullet}\bigl((),(\mathbb{B}_{0}^{h})^{\odd}\bigr)$'s being negative, one can see immediately from the definition of $\simm_{0}^{\rightarrow}$ that 
the size of $\simm_{0}^{\rightarrow}((),(\mathbb{B}_{0}^{h})^{\odd})$ does not exceed the size of $\mathbb{B}_{0}^{h}$. 
This, in combination with   (\ref{27b}), means that
\begin{equation}\label{fgh}
\simm_{0}^{\rightarrow}\bigl((),(\mathbb{B}_{0}^{h})^{\odd} \bigr)= \mathbb{B}_{0}^{h}.
\end{equation}

Imagine the work of $\simm_{0}$ when computing (\ref{iiuy}). Taking (\ref{27a}) into account, we can see that  $v$ is just a copy of the scale of 
the last organ of $(\mathbb{B}_{0}^{h})^{\odd}$ and hence, by clause 1 of Lemma \ref{a21}, of the last organ of $(\mathbb{B}_{0}^{h})$. Consequently, by clause 3 of Lemma \ref{a21}, $v$ is the master scale of $\vec{E}_h$. Then, since the $h$th iteration of \mainn\ proceeds according to Subsubsubcase 2.2.2.2,   
we have $v\geq \mathfrak{L}(\mathfrak{l},U_h)$, where $U_h$ is the value that the variable $U$ of \mainn\ assumes on the $h$th iteration as a result of updating the old value to $\max(u,U)$. We thus have $u\leq U_h$. And the function  
$\mathfrak{L}$ is, of course, monotone. Consequently, from the fact $v\geq \mathfrak{L}(\mathfrak{l},U_h)$, we find that  
$v\geq \mathfrak{L}(\mathfrak{l},u)$. But then, by (\ref{fgh}) and clause 1(b) of Lemma \ref{ap20b}, there is a run $\Upsilon$ generated by ${\mathcal H}_0$ --- let us rename it into $\Gamma^{h}_{0}$ --- such that 
 $\overline{\mathbb{B}_{0}^{h}}= \Gamma_{0}^{h}$. Done.\vspace{4pt}

\item
Fix any $n\in\{1,\ldots,k\}$, and assume  
\[\begin{array}{ll}
\mathbb{B}_{n-1}^{h}\hspace{-2pt}\downarrow =(P_1,\ldots,P_p); \ \ & \ \ \mathbb{B}_{n}^{h}\hspace{-2pt}\uparrow =(Q_1,\ldots,Q_q);\\
 \mathbb{B}_{n-1}^{h} =(P_1,\ldots,P_{p'}); \ \  \ & \ \  \mathbb{B}_{n}^{h} =(Q_1,\ldots,Q_{q'}).
\end{array}\]

For clause 2(a),  we  want to show the existence of a run $\Gamma_{n}^{h}$ generated by ${\mathcal H}_n$ such that 
\begin{equation}\label{cx1}
\mbox{\it  $\overline{(Q_1,\ldots,Q_{q})}\preceq (\Gamma^{h}_{n})^{1.}$ \ and \ $\gneg \overline{(P_1,\ldots,P_{p})}\preceq (\Gamma^{h}_{n})^{0.}$}
\end{equation} 

It is not hard to see that, if $q$ is $0$, then so is $p$, because there is no way for $(P_1)$ to be ever $(h,n-1)$-born. Then 
 the runs $\overline{(P_1,\ldots,P_{p})}$ and $\overline{(Q_1,\ldots,Q_{q})}$ are empty and, therefore, any run $\Gamma_{n}^{h}$ generated by ${\mathcal H}_n$ satisfies  (\ref{cx1}). Now, for the rest of this proof, assume $q$ is non-zero, which, in view of $q$'s being even, means that $q\geq 2$. 
In what follows, we use $i_P$  to denote the $(h,n-1)$-birthtime of $(P_1,\ldots,P_{p})$  and  $i_Q$ to denote the $(h,n)$-birthtime of $(Q_1,\ldots,Q_{q})$. We claim that
\begin{equation}\label{14aa}
i_P\not=i_Q.
\end{equation}
Indeed, it is easy to see that two bodies have identical birthtimes only if they  are both empty (and hence their birthtimes are both $0$). However, as we have already agreed,  $(Q_1,\ldots,Q_{q})$ is nonempty. 
In view of (\ref{14aa}), it is now sufficient to consider the  two cases $i_Q>i_P$ and $i_P>i_Q$.

{\em Case of $i_Q>i_P$}:  In this case, according to clause 1 of Lemma \ref{wuhano},  the triple 
\[\bigl((P_1,\ldots,P_{p'})^{\even}, (Q_1,\ldots,Q_{q-1})^{\odd},n\bigr)\] is  saturated, and we have: 
\[\begin{array}{l}
\simm_{n}^{\rightarrow} \bigl((P_1,\ldots,P_{p'})^{\even}, (Q_1,\ldots,Q_{q-1})^{\odd}\bigr)=(Q_1,\ldots,Q_q);\\
\simm_{n}^{\leftarrow} \bigl((P_1,\ldots,P_{p'})^{\even}, (Q_1,\ldots,Q_{q-1})^{\odd}\bigr)=(P_1,\ldots,P_{p'}).
\end{array}\]
 Therefore, by clause 2(a) of Lemma \ref{ap20b}, there is a run $\Upsilon$ --- let us rename it into $\Gamma^{h}_{n}$ ---  generated by ${\mathcal H}_n$ such that  
$\overline{(Q_1,\ldots,Q_{q})}\preceq (\Gamma^{h}_{n})^{1.}$ and $\gneg\overline{(P_1,\ldots,P_{p'})}\preceq (\Gamma^{h}_{n})^{0.}$. Of course, 
$\gneg\overline{(P_1,\ldots,P_{p'})}\preceq (\Gamma^{h}_{n})^{0.}$ implies $\gneg\overline{(P_1,\ldots,P_{p})}\preceq (\Gamma^{h}_{n})^{0.}$. So, (\ref{cx1}) holds, which takes care of clause 2(a) of the present lemma. 
 As for clause 2(b), it is satisfied vacuously because $h$ is not the greatest element of $\mathbb{I}$. To see why 
$h$ is not the greatest element of $\mathbb{I}$, assume the opposite. Let $i_{P'}$ be the $(h,n-1)$-birthtime of $(P_1,\ldots,P_{p'})$ and $i_{Q'}$ be the $(h,n)$-birthtime of $(Q_1,\ldots,Q_{q'})$. By clause 1 of Lemma \ref{a21}, $p$ is odd, implying that $p'=p$ and hence $i_{P'}=i_{P}$. Next, the fact $q'\geq q$ obviously implies that $i_{Q'}\geq i_{Q}$. So, the condition $i_Q>i_P$ of the present case implies $i_{Q'}>i_{P'}$. But this is in contradiction with clause 2 of Lemma \ref{a21}. 

{\em Case of $i_P>i_Q$}:  In this case, according to clause 2 of Lemma \ref{wuhano}, we have:
\begin{eqnarray}
& & \simm_{n}^{\rightarrow} \bigl((P_1,\ldots,P_{p-1})^{\even}, (Q_1,\ldots,Q_{q'})^{\odd}\bigr)=-P_p;\label{tod1}\\
& & \mbox{\em The triple $\bigl((P_1,\ldots,P_{p-1})^{\even}, (Q_1,\ldots,Q_{q'})^{\odd},n\bigr)$
 is saturated;}\label{tod2}\\
& & \simm_{n}^{\rightarrow} \bigl((P_1,\ldots,P_{p-1})^{\even}, (Q_1,\ldots,Q_{q'})^{\odd}\bigr)=(Q_1,\ldots,Q_{q'});\label{tod3}\\
& & \simm_{n}^{\leftarrow} \bigl((P_1,\ldots,P_{p-1})^{\even}, (Q_1,\ldots,Q_{q'})^{\odd}\bigr)=(P_1,\ldots,P_{p}).\label{tod4}
\end{eqnarray}
From (\ref{tod2})-(\ref{tod4}),  by clause 2(a) of Lemma \ref{ap20b} with $(P_1,\ldots,P_{p-1})^{\even}$ in the role of $A$ and $(Q_1,\ldots,Q_{q'})^{\odd}$ in the role of $B$, there is a run $\Upsilon$ --- let us rename it into $\Gamma^{h}_{n}$ --- generated by ${\mathcal H}_n$ such that  
$\overline{(Q_1,\ldots,Q_{q'})}\preceq (\Gamma^{h}_{n})^{1.}$ and $\gneg\overline{(P_1,\ldots,P_{p})}\preceq (\Gamma^{h}_{n})^{0.}$. But 
$\overline{(Q_1,\ldots,Q_{q'})}\preceq (\Gamma^{h}_{n})^{1.}$ implies $\overline{(Q_1,\ldots,Q_{q})}\preceq (\Gamma^{h}_{n})^{1.}$. So, (\ref{cx1}) holds, which takes care of clause 2(a) of the present lemma. 

For clause 2(b), let us assume moreover that $h$ is the greatest element of $\mathbb{I}$,    $\overline{(Q_1,\ldots,Q_{q'})}$ is a reasonable run of $F'(n )$, and  $\gneg \overline{(P_1,\ldots,P_{p'})}$ is a reasonable run of $F'(n-1)$.  
 By clause 1 of Lemma \ref{a21}, $p'$ is odd, implying that $p=p'$. 
So, (\ref{tod1})-(\ref{tod4}) can be re-written as
\begin{eqnarray}
& & \simm_{n}^{\bullet} \bigl((P_1,\ldots,P_{p'-1})^{\even}, (Q_1,\ldots,Q_{q'})^{\odd}\bigr)=-P_{p'};\label{todd1}\\
& & \mbox{\em The triple $\bigl((P_1,\ldots,P_{p'-1})^{\even}, (Q_1,\ldots,Q_{q'})^{\odd},n\bigr)$
 is saturated;}\label{todd2}\\
& & \simm_{n}^{\rightarrow} \bigl((P_1,\ldots,P_{p'-1})^{\even}, (Q_1,\ldots,Q_{q'})^{\odd}\bigr)=(Q_1,\ldots,Q_{q'});\label{todd3}\\
& & \simm_{n}^{\leftarrow} \bigl((P_1,\ldots,P_{p'-1})^{\even}, (Q_1,\ldots,Q_{q'})^{\odd}\bigr)=(P_1,\ldots,P_{p'}).\label{todd4}
\end{eqnarray}
Let $P_{p'}=(\vec{\omega},v)$. In view of (\ref{todd1}), there is a number $u$ (fix it) such that  
\begin{equation}\label{toddd}
\simm_{n} \bigl((P_1,\ldots,P_{p'-1})^{\even}, (Q_1,\ldots,Q_{q'})^{\odd}\bigr)=\bigl(-(\vec{\omega},v),u\bigr).
\end{equation}

As observed earlier when verifying clause 2(b) of the lemma in the case of $i_Q>i_P$, we have $p=p'$, meaning that $i_P$ is the $(h,n-1)$-birthtime of $\mathbb{B}_{n-1}^{h}=(P_1,\ldots,P_{p'})$. In addition,  let $i_L$ be the $(h,k)$-birthtime of $\mathbb{B}_{k}^{h}$. By clause 2 of Lemma \ref{a21}, $i_P>i_L$. This means that, for any $j\in\{i_P,\ldots,h\}$, the $j$th iteration of \mainn\ is not locking, because a locking iteration always gives birth to a new, ``bigger''  master body.   But the absence of locking iterations between $i_P$ and $h$ implies the following, because otherwise $i_P$ would be $h$-transient:
\begin{equation}\label{fgy}
\mbox{\em For any $j\in\{i_P,\ldots,h\}$, the $j$'th iteration of \mainn\ is not restarting.}
\end{equation}

Since $h$ is the greatest element of $\mathbb{I}$, according to clause 3 of Lemma \ref{a21}, $v$ is the master scale of 
$\vec{E}_{h}$. Also, 
 as observed earlier in the proof of clause 1(b), the $h$th iteration of \mainn\ deals with Subsubsubcase 2.2.2.2, implying that $v\geq \mathfrak{L}(\mathfrak{l},U_h)$, where $U_h$ is the final value of the variable $U$ of \mainn\ (assumed on the $h$th iteration). But note that $U_{i_P}$ --- the value of $U$ assumed on the $i_P$th iteration of \mainn\ --- does not exceed $U_h$. That is because only restarting iterations of \mainn\ can decrease the value of $U$, but, by (\ref{fgy}), there are no such iterations between $i_P$ and $h$. Also, it is clear that, on the $i_P$th iteration, $(P_1,\ldots,P_{p'})$ was  born  according to the scenario of Subsubcase 2.2.1 due to (\ref{toddd}), implying that $U_{i_P}\geq u$, because, at the beginning of that iteration, the variable $U$ was updated to $U_{i_P}=\max(u,U)$.  Thus, $U_h\geq u$ and hence, due to the monotonicity of $\mathfrak{L}$ and the earlier-established fact $v\geq \mathfrak{L}(\mathfrak{l},U_h)$, we have 
\begin{equation}\label{tod0}
v\geq \mathfrak{L}(\mathfrak{l},u).
\end{equation}
  From (\ref{todd2}), (\ref{toddd}),  (\ref{todd3}), (\ref{todd4}) and (\ref{tod0}),  using clause 2(b) of Lemma \ref{ap20b}, with $(P_1,\ldots,P_{p'-1})^{\even}$ in the role of $A$ and $(Q_1,\ldots,Q_{q'})^{\odd}$ in the role of $B$, there is a run $\Upsilon$ --- let us rename it into $\Gamma^{h}_{n}$ --- such that  
  $\overline{(Q_1,\ldots,Q_{q'})}= (\Gamma^{h}_{n})^{1.}$ and
  $\gneg\overline{(P_1,\ldots,P_{p'})}= (\Gamma^{h}_{n-1})^{0.}$, as
  desired.\qedhere
\end{enumerate}
\end{proof}

\begin{lem}\label{shan}
For every $n\in\{0,\ldots,k\}$ and every $h\in \mathbb{I}$, $\overline{\mathbb{B}_{n}^{h}}$ is a reasonable run of $F'(n)$.
\end{lem}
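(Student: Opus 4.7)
The plan is to deduce reasonableness of $\overline{\mathbb{B}_{n}^{h}}$ (as a run of $F'(n)$) from the already-established reasonableness of ${\mathcal H}_{n}$ and ${\mathcal H}_{n+1}$ (and, when $n=k$, from the Clean Environment Assumption), using Lemma~\ref{wuhanu} as the bridge. Since ``reasonable'' decomposes into $\pp$-quasilegal, $\oo$-quasilegal, $\pp$-prudent and $\oo$-prudent, I would verify these four components separately. A useful preliminary observation is that $\overline{\mathbb{B}_{n}^{h}}$ has exactly the same $\pp$-moves as $\overline{\mathbb{B}_{n}^{h}\hspace{-2pt}\uparrow}$ and exactly the same $\oo$-moves as $\overline{\mathbb{B}_{n}^{h}\hspace{-2pt}\downarrow}$, because the $\uparrow$ and $\downarrow$ trimmings only ever discard a trailing organ of the opposite parity.

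For the $\pp$-side, I would invoke Lemma~\ref{wuhanu}(1)(a) when $n=0$ and Lemma~\ref{wuhanu}(2)(a) when $n\geq 1$ to obtain a run $\Gamma_n^h$ generated by ${\mathcal H}_n$ such that $\overline{\mathbb{B}_{0}^{h}}$ is a prefix of $\Gamma_n^h$ (in the $n=0$ case) or $\overline{\mathbb{B}_{n}^{h}\hspace{-2pt}\uparrow}$ is a prefix of $(\Gamma_n^h)^{1.}$ (in the $n\geq 1$ case). Because ${\mathcal H}_n$ is reasonable in its game, $\Gamma_n^h$ is $\pp$-quasilegal and its $\pp$-moves are prudent; both properties descend through consequent-projection and prefix restriction. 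The bound-matching step is routine: the superaggregate of $F(x,\vec{v})\mli F(x\successor,\vec{v})$ equals $\mathfrak{G}$ since unarification collapses all variable positions to a single~$z$, and $|\max(n-1,\vec{c})|\leq|\max(n,\vec{c})|$.

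For the $\oo$-side when $n<k$, I would apply Lemma~\ref{wuhanu}(2)(a) with $n+1$ in place of $n$, yielding a run $\Gamma_{n+1}^h$ of ${\mathcal H}_{n+1}$ such that $\overline{\mathbb{B}_{n}^{h}\hspace{-2pt}\downarrow}\preceq\gneg(\Gamma_{n+1}^h)^{0.}$. Unwinding the $\pp$-quasilegality of $\Gamma_{n+1}^h$ in $F'(n)\mli F'(n+1)$ via a legal witness $\Delta$, the projection $\gneg\Delta^{0.}$ is a legal run of $F'(n)$ that shares the $\bot$-projection of $\gneg(\Gamma_{n+1}^h)^{0.}$, establishing $\oo$-quasilegality; combined with ${\mathcal H}_{n+1}$'s $\pp$-prudence in its own game, this also gives $\oo$-prudence in $F'(n)$. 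When $n=k$, the $\oo$-moves of $\overline{\mathbb{B}_{k}^{h}}$ are precisely the $F'(k)$-prudentized versions of the real environment's moves, added to the master payload by Case~1 and Subsubsubcase~2.2.2.2 of \mainn; $\oo$-prudence is then immediate, and for $\oo$-quasilegality I would argue that the Clean Environment Assumption makes those environment moves legal in $F'(k)$, and that any legal move is already prudent (its numer size being dominated by a choice-quantifier bound and hence by $\mathfrak{G}|\max(k,\vec{c})|$), so prudentization acts as the identity and the resulting sequence is legal.

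The main obstacle will be the bookkeeping of superaggregate bounds and $|\max|$-arguments as we traverse the three distinct games (the one ${\mathcal H}_n$ plays, the one ${\mathcal H}_{n+1}$ plays, and the target $F'(n)$): all three share the same underlying $\mathfrak{G}$ by unarification, and the $|\max|$ of the constants shifts by at most one inductive step at each transition, so verifying monotonicity in the correct direction uniformly across all four ingredients is what ultimately makes the argument go through.
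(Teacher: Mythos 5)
Your proposal follows essentially the same route as the paper's proof: both use Lemma~\ref{wuhanu} to relate $\overline{\mathbb{B}_{n}^{h}}$ (via the $\uparrow$/$\downarrow$ trimmings) to runs generated by ${\mathcal H}_n$ and ${\mathcal H}_{n+1}$, derive reasonableness of the former from the assumed reasonableness of the latter machines, and appeal to the Clean Environment Assumption together with prudentization for $n=k$. The paper argues by contradiction, asking which player is ``responsible'' for the shortest unreasonable initial segment; you instead decompose reasonableness into its four components. Both framings are serviceable.

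There is, however, a concrete error in your $n=k$ handling of $\oo$-quasilegality. You assert that any legal move of $F'(k)$ is already prudent --- its numer size ``being dominated by a choice-quantifier bound'' --- so that prudentization is the identity. That is false. In a bounded subformula $\ada z(|z|\leq\mathfrak{b}|\vec{s}|\mli G)$ the move $\#c$ is legal for \emph{any} constant $c$: the bound $|z|\leq\mathfrak{b}|\vec{s}|$ lives inside the game as an antecedent, not as a syntactic constraint on moves, so Environment may legally choose an oversized constant and simply concede that subgame. The paper explicitly acknowledges this (``And even if it does not play prudently, \mainn\ prudentizes \ldots''). What saves the argument is a different observation, which you do not supply: prudentization preserves quasilegality, because truncating a numer does not alter the move's syntactic conformity to the quantifier it addresses. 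A more minor point: the superaggregate bound of $F(x,\vec{v})\mli F(x\successor,\vec{v})$ is not equal to $\mathfrak{G}$ --- doubling the number of choice-quantifier occurrences lengthens the composition chain inside the $\max$, so the implication's superaggregate dominates $\mathfrak{G}$ rather than coinciding with it; the bookkeeping you flag in your final paragraph would need to replace that equality claim with a suitable inequality.
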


\begin{proof} Fix an $n\in\{0,\ldots,k\}$ and an $h\in \mathbb{I}$. Below, whenever we say that a player $\xx$ has made --- or is responsible for making --- a given run unreasonable, it is to be (or, at least, can be) understood as that the last move of the shortest unreasonable initial segment of the run is $\xx$-labeled. 

First, consider the case  $n=0$. For a contradiction, assume $\overline{\mathbb{B}_{0}^{h}}$ is not a reasonable run of $F'(0)$. By clause 1(a) of Lemma \ref{wuhanu},  $\overline{\mathbb{B}_{0}^{h}}$ is an initial segment of a certain run $\Gamma_{0}^{h}$ generated by ${\mathcal H}_0$. Therefore, in view of our assumption that ${\mathcal H}_0$ plays $F'(0)$ reasonably, the only way $\overline{\mathbb{B}_{0}^{h}}$ could be unreasonable    is if $\oo$ (${\mathcal H}_0$'s adversary) made it so. But, according to clause 2(a) of Lemma \ref{wuhanu}, a certain extension $(\Gamma_{1}^{h})^{0.}$ of 
$\gneg \overline{\mathbb{B}_{0}^{h}\hspace{-2pt}\downarrow}$ is a run generated by ${\mathcal H}_{1}$ (with ${\mathcal H}_{1}$ playing as $\pp$) in the component $\gneg F'(0)$ of $\gneg F'(0)\mld F'(1)$. Therefore, as --- by our assumption --- ${\mathcal H}_1=\top$ plays reasonably, $\top$ cannot be responsible for making $\gneg \overline{\mathbb{B}_{0}^{h}\hspace{-2pt}\downarrow}$ unreasonable. Then $\pp$ cannot be responsible for $\gneg \overline{\mathbb{B}_{0}^{h}}$'s being unreasonable either, because   
$\gneg \overline{\mathbb{B}_{0}^{h}}$ differs from  $\gneg \overline{\mathbb{B}_{0}^{h}\hspace{-2pt}\downarrow}$  only in that the former perhaps has some additional $\oo$-labeled moves at the end. Contradiction.

Next, consider the case $0< n< k$. It is rather similar to the preceding one. For a contradiction, assume $\overline{\mathbb{B}_{n}^{h}}$ is not a reasonable run of $F'(n)$. 
By clause 2(a) of Lemma \ref{wuhanu}, there is a run $\Gamma^{h}_{n}$ generated by ${\mathcal H}_n$ such that  $\overline{\mathbb{B}_{n}^{h}\hspace{-2pt}\uparrow}$ is an initial segment of  $(\Gamma_{n}^{h})^{1.}$.   $\overline{\mathbb{B}_{n}^{h}}$ only differs from $\overline{\mathbb{B}_{n}^{h}\hspace{-2pt}\uparrow}$ in that the former perhaps has some additional $\oo$-labeled moves at the end. For this reason, as ${\mathcal H}_n$ plays $F'(n-1)\mli F'(n)$ reasonably, 
the only way $\overline{\mathbb{B}_{n}^{h}}$ could be unreasonable  is if $\oo$ (${\mathcal H}_n$'s adversary) made it so. Then $\gneg\overline{\mathbb{B}_{n}^{h}}$ is an unreasonable   run of $\gneg F'(n)$, with player $\pp$ 
 being responsible for making it so. But, (again) according to clause 2(a) of Lemma \ref{wuhanu}, a certain extension $(\Gamma_{n+1}^{h})^{0.}$ of $\gneg \overline{\mathbb{B}_{n}^{h}\hspace{-2pt}\downarrow}$ is a run generated by ${\mathcal H}_{n+1}$  in the component $\gneg F'(n)$ of $\gneg F'(n)\mld F'(n+1)$. As ${\mathcal H}_{n+1}=\pp$  plays this game reasonably, it cannot be responsible for making  $\gneg \overline{\mathbb{B}_{n}^{h}\hspace{-2pt}\downarrow}$ an unreasonable run of $\gneg F'(n)$. Then $\pp$ cannot be responsible for making 
$\gneg\overline{\mathbb{B}_{n}^{h}}$ unreasonable either, because $\gneg \overline{\mathbb{B}_{n}^{h}}$ only differs from   $\gneg \overline{\mathbb{B}_{n}^{h}\hspace{-2pt}\downarrow}$   
 in that the former perhaps has some additional $\oo$-labeled moves at the end.  Contradiction.

Finally, consider the case $n=k$. Just as in the preceding cases, ${\mathcal H}_k$ cannot be responsible for making $\overline{\mathbb{B}_{k}^{h}}$ an unreasonable run of $F'(k)$. Looking at Case 1, Subsubcase 2.1.2 and Subsubsubcase 2.2.2.2 of the description of $\mainn$, it is clear that ${\mathcal H}_k$'s imaginary adversary does not make $\overline{\mathbb{B}_{k}^{h}}$  unreasonable either. This is so because, in $F'(k)$,  $\mainn$ lets ${\mathcal H}_k$'s  adversary mimic  ${\mathcal M}_k$'s real environment's play. The latter, by 
the Clean Environment Assumption,   
 plays  (legally and hence) quasilegally. And even if it does not play prudently,  
$\mainn$ prudentizes  ${\mathcal M}_k$'s environment's moves  before copying and adding them to $\overline{\mathbb{B}_{k}^{h}}$ as ${\mathcal H}_k$'s imaginary adversary's moves. 
\end{proof}

Recall that $\hbar$ is the greatest element of $\mathbb{I}$.

\begin{lem}\label{shantou}
For every $n\in\{0,\ldots,k\}$, $\overline{\mathbb{B}_{n}^{\hbar}}$ is  a $\pp$-won run of $F'(n)$.
\end{lem}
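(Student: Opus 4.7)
I would prove this by induction on $n \in \{0, 1, \ldots, k\}$, leveraging the fact that $\hbar$ is the greatest element of $\mathbb{I}$ to invoke the \emph{exact} (rather than merely $\preceq$) statements 1(b) and 2(b) of Lemma \ref{wuhanu}. The high-level idea is simply synchronization: ${\mathcal H}_0$ wins $F'(0)$ and each ${\mathcal H}_n$ wins $F'(n-1)\mli F'(n)$, and Lemma \ref{wuhanu}(b) tells us that the real play recorded in $\overline{\mathbb{B}_{n}^{\hbar}}$ coincides with what happens in the consequent component of ${\mathcal H}_n$'s simulated play.

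For the basis $n=0$: Lemma \ref{shan} gives that $\overline{\mathbb{B}_{0}^{\hbar}}$ is a reasonable run of $F'(0)$. Lemma \ref{wuhanu}(1)(b) then yields $\overline{\mathbb{B}_{0}^{\hbar}}=\Gamma_{0}^{\hbar}$ for a run $\Gamma_{0}^{\hbar}$ actually generated by ${\mathcal H}_0$. Since ${\mathcal H}_0$ is a solution of $F'(0)$, and since reasonableness entails $\oo$-legality, $\overline{\mathbb{B}_{0}^{\hbar}}$ is $\pp$-won in $F'(0)$.

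For the inductive step, fix $n\in\{1,\ldots,k\}$ and assume $\overline{\mathbb{B}_{n-1}^{\hbar}}$ is $\pp$-won in $F'(n-1)$. By Lemma \ref{shan}, both $\overline{\mathbb{B}_{n-1}^{\hbar}}$ and $\overline{\mathbb{B}_{n}^{\hbar}}$ are reasonable, so Lemma \ref{wuhanu}(2)(b) furnishes a run $\Gamma_{n}^{\hbar}$ generated by ${\mathcal H}_n$ with $(\Gamma_{n}^{\hbar})^{1.}=\overline{\mathbb{B}_{n}^{\hbar}}$ and $(\Gamma_{n}^{\hbar})^{0.}=\gneg\overline{\mathbb{B}_{n-1}^{\hbar}}$. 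Because ${\mathcal H}_n$ is a solution of $F'(n-1)\mli F'(n)$, i.e., of $\gneg F'(n-1)\mld F'(n)$, and $\Gamma_{n}^{\hbar}$ is $\oo$-legal (again by reasonableness), ${\mathcal H}_n$ wins $\Gamma_{n}^{\hbar}$. By the semantics of parallel disjunction, this forces either $\gneg\overline{\mathbb{B}_{n-1}^{\hbar}}$ to be $\pp$-won in $\gneg F'(n-1)$ --- equivalently, $\overline{\mathbb{B}_{n-1}^{\hbar}}$ to be $\oo$-won in $F'(n-1)$ --- or $\overline{\mathbb{B}_{n}^{\hbar}}$ to be $\pp$-won in $F'(n)$. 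The induction hypothesis rules out the former, so the latter holds, completing the step.

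The hard part is not this induction itself, which is a one-line semantic consequence, but rather the groundwork already carried out: Lemmas \ref{shan}, \ref{wuhanu}, and \ref{a21}(3) encode the subtle copycat behavior of ${\mathcal M}_k$ and, crucially, the role of the statute-of-limitations scaling in Subsubsubcase 2.2.2.2 of \mainn\ (which forces each simulation to reach completion and is what makes the $(b)$-strengthenings of Lemma \ref{wuhanu} available at the final stage $\hbar$). With that machinery in place, the present lemma follows in a few lines.
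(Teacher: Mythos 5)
Your proof is correct and takes essentially the same route as the paper's: induction on $n$, with the basis via Lemma~\ref{shan} and Lemma~\ref{wuhanu}(1)(b), and the inductive step via Lemma~\ref{shan}, Lemma~\ref{wuhanu}(2)(b), and the semantics of parallel disjunction combined with the induction hypothesis. The only imprecision is the parenthetical ``$\Gamma_n^{\hbar}$ is $\oo$-legal (again by reasonableness)'' --- reasonableness of the components gives quasilegality, not legality, and in fact no legality claim is needed since every run generated by a solution is $\pp$-won; the paper instead appeals to ${\mathcal H}_n$'s quasilegality to ensure every move of $\Gamma_n^{\hbar}$ carries a ``$0.$'' or ``$1.$'' prefix, which is the actual fact needed to split the win across components.
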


\begin{proof} Induction on $n$. According to clause 1(b) of Lemma \ref{wuhanu}, in conjunction with Lemma \ref{shan}, $\overline{\mathbb{B}_{0}^{\hbar}}$ is a run generated by ${\mathcal H}_0$. So, since ${\mathcal H}_0$ wins $F'(0)$, $\overline{\mathbb{B}_{0}^{\hbar}}$ is a $\pp$-won run of $F'(0)$.

Next, consider any $n$ with $0< n\leq k$. According to clause 2(b) of Lemma \ref{wuhanu}, in conjunction with Lemma \ref{shan}, there is a run $\Gamma_{n}^{\hbar}$ generated by ${\mathcal H}_n$ such that $(\Gamma_{n}^{\hbar})^{0.}=\gneg \overline{\mathbb{B}_{n-1}^{\hbar}}$ and $(\Gamma_{n}^{\hbar})^{1.}=\overline{\mathbb{B}_{n}^{\hbar}}$. Note that, since ${\mathcal H}_n$ plays quasilegally, every move of $\Gamma_{n}^{\hbar}$ has one of the two prefixes ``$0.$'' or ``$1.$''. But we know that ${\mathcal H}_n$ wins $\gneg F'(n-1)\mld F'(n)$. So, $\Gamma_{n}^{\hbar}$ has to be a $\pp$-won run of  $\gneg F'(n-1)\mld F'(n)$, meaning that either $(\Gamma_{n}^{\hbar})^{0.}$, i.e.   $\gneg \overline{\mathbb{B}_{n-1}^{\hbar}}$, is a $\pp$-won run of $\gneg F'(n-1)$, or $(\Gamma_{n}^{\hbar})^{1.}$, i.e. $\overline{\mathbb{B}_{n}^{\hbar}}$, is a $\pp$-won run of $F'(n)$.  But, by the induction hypothesis, $\overline{\mathbb{B}_{n-1}^{\hbar}}$ is a $\pp$-won run of $F'(n-1)$. This obviously means that  $\gneg \overline{\mathbb{B}_{n-1}^{\hbar}}$ is a $\oo$-won (and thus not $\pp$-won) run of $\gneg F'(n-1)$. Therefore, $\overline{\mathbb{B}_{n}^{\hbar}}$ is a $\pp$-won run of $F'(n)$. 
\end{proof}

According to Lemma \ref{shantou},  $\overline{\mathbb{B}_{k}^{\hbar}}$ is a $\pp$-won run of $F'(k)$. Therefore, by the known property of static games and delays (see the end of Section 3 of \cite{cl12}) we have:
\begin{equation}\label{trr}
\mbox{\it Whenever a run $\Pi$ is a $\pp$-delay of $\overline{\mathbb{B}_{k}^{\hbar}}$, $\Pi$ is a $\pp$-won run of $F'(k)$.}
\end{equation}

Let $\Theta$ be the run generated by ${\mathcal M}_k$ that took place in the real play of $k\leq \mathfrak{b}|\vec{d}|\mli F'(k)$. How does $\Theta^{1.}$ relate to $\overline{\mathbb{B}_{k}^{\hbar}}$? As promised earlier, the real play in the consequent of $k\leq \mathfrak{b}|\vec{d}|\mli F'(k)$ --- that is, the run $\Theta^{1.}$ --- would be ``essentially synchronized'' with the play $\overline{\mathbb{B}_{k}^{\hbar}}$  by ${\mathcal H}_k$ in the consequent of $F'(k-1)\mli F'(k)$, meaning that $\Theta^{1.}$ is ``essentially the same'' as $\overline{\mathbb{B}_{k}^{\hbar}}$. The qualification ``essentially'' implies that the two runs, while being similar, may not necessarily be strictly identical. 

One reason why $\overline{\mathbb{B}_{k}^{\hbar}}$ may differ from $\Theta^{1.}$ is that, as seen from Case 1 and Subsubsubcase 2.2.2.2 of the description of $\mainn$, if $\Theta^{1.}$ contains a (legal but)  imprudent (with respect to $F'(k)$) move by $\oo$, then such a move appears in 
$\overline{\mathbb{B}_{k}^{\hbar}}$ in the prudentized  form. Namely, if ${\mathcal H}_k$'s adversary chose some ``oversized'' constant $a$ for $z$ in a subcomponent $\ada z G$ of $F'(k)$, then the same move will appear in $\overline{\mathbb{B}_{k}^{\hbar}}$ as if $a'$ was chosen instead of $a$, where $a'$ is a certain ``small'' constant. Note, however, that having made the above imprudent choice makes $\oo$ lose in the $\ada z G$ component. So, prudentizing $\oo$'s imprudent moves can only increase rather than decrease $\oo$'s chances to win the overall game. That is, if $\pp$ (i.e., ${\mathcal M}_k$) wins the game even after such moderation of the adversary's imprudent moves,  it would also win (``even more so'') without moderation.  For this reason, we can and will safely assume that ${\mathcal M}_k$'s environment plays not only legally, but also prudently.

But even if ${\mathcal M}_k$'s adversary has played $\Theta^{1.}$ prudently, there is another reason  that could make $\overline{\mathbb{B}_{k}^{\hbar}}$ ``somewhat'' different from $\Theta^{1.}$. Namely, with some thought, one can see that $\Theta^{1.}$ may be a proper $\pp$-delay of (rather than equal to) $\overline{\mathbb{B}_{k}^{\hbar}}$. Luckily, however, by (\ref{trr}), $\Theta^{1.}$ is still a $\pp$-won run of $F'(k)$.

Thus, as desired, ${\mathcal M}_k$ wins $k\leq \mathfrak{b}|\vec{d}|\mli F'(k)$, and hence $\mathcal M$ wins the conclusion of (\ref{r22}).  

\subsection{\texorpdfstring{${\mathcal M}$}{M} runs in target amplitude}\label{sla}

$\mathcal M$ plays  $x\leq\mathfrak{b}|\vec{s}|\mli F(x,\vec{v})$ prudently, and the latter is an 
 ${\mathcal R}\spa$-bounded formula. By condition 5 of Definition 2.2 of \cite{AAAI}, ${\mathcal R}\spa\preceq {\mathcal R}\amp$. This, of course, implies that   $\mathcal M$  runs in amplitude ${\mathcal R}\amp$, as desired. 

\subsection{\texorpdfstring{${\mathcal M}$}{M} runs in target space}\label{sls}
As we agreed earlier, $(\mathfrak{a},\mathfrak{s},\mathfrak{t})\in{\mathcal R}\amp\times {\mathcal R}\spa\times {\mathcal R}\tim$ is a  common tricomplexity in which the machines $\mathcal N$ and $\mathcal K$ --- and hence the ${\mathcal H}_n$s --- run. All three bounds are unary. 

Remember from Section \ref{sagree}  that $\mathfrak{l}$ is the size of the greatest of the constants chosen by  ${\mathcal M}$'s environment for the free variables of $x\leq \mathfrak{b}|\vec{s}|\mli F(x,\vec{v})$. This, of course, means that the background of any clock cycle of ${\mathcal M}_k$ in any scenario of its work 
 is at least 
$\mathfrak{l}$. For this reason and with Remark 2.4 of \cite{AAAI} in mind,  in order to show that ${\mathcal M}$ runs in  space ${\mathcal R}\spa$, it is sufficient to show that    the spacecost of  any clock cycle of ${\mathcal M}_k$ is  bounded by $O\bigl(\mathfrak{p}(\mathfrak{l})\bigr)$ for some $\mathfrak{p}(z)\in{\mathcal R}\spa$. In what follows, we shall write ${\mathcal R}\spa(\mathfrak{l})$ as an abbreviation of the phrase ``$O\bigl(\mathfrak{p}(\mathfrak{l})\bigr)$ for some $\mathfrak{p}(z)\in{\mathcal R}\spa$''. Similarly for 
${\mathcal R}\tim(\mathfrak{l})$.

In asymptotic terms, the space consumed by ${\mathcal M}_k$ --- namely, by any given $h$th ($h\in \mathbb{I}$) iteration of $\mainn$ --- is the sum of the following two quantities:
\begin{eqnarray}
& \mbox{\it the space needed to hold the value of the aggregation $\vec{E}$;} & \label{spc1}\\
& \mbox{\it the space needed to update $\vec{E}=\vec{E}_h$ to $\vec{E}=\vec{E}_{h+1}$  (if $(h+ 1)\in \mathbb{I}$).}  & \label{spc2}
\end{eqnarray}
Here we did not mention the space needed to hold the value of the variable $U$,  and  to update it to its next value.
 That is because, as it is easy to see,  the space taken by $U$ or its updates does not exceed the maximum possible value of the quantity (\ref{spc2}) (in fact, the logarithm of the latter). So, this component of ${\mathcal M}_k$'s space consumption, being superseded by another component,  
 can be safely ignored in an asymptotic analysis. Consider any $h\in\mathbb{I}$.

In verifying that  (\ref{spc1}) is bounded by ${\mathcal R}\spa(\mathfrak{l})$, 
  we  observe that, by conditions  (iv) and (v) of Section \ref{saggr}, an aggregation cannot contain two 
same-size entries. Next, by Lemma \ref{bei}, the size of an entry never exceeds $2\mathfrak{e}_\top+ 1$. Thus, the 
number of entries in $\vec{E}_h$  is bounded by the constant $2\mathfrak{e}_\top+ 1$. For this reason, it is  sufficient 
for us to just show  that  any given entry $[n,C]$ of $\vec{E}_h$ can be held with ${\mathcal R}\spa(\mathfrak{l})$ space. This is done in the following two paragraphs.

The component $n$ of an entry $[n,C]$ never exceeds $k$. As observed in the proof of Lemma \ref{beijing}, we have 
$k\leq f(\mathfrak{l})$, where $f(z)$ is the unarification of $\mathfrak{b}$.  As further observed near the end of the same proof,  $f(z)\preceq{\mathcal R}\tim$. This, by condition 5 
of Definition 2.2 of \cite{AAAI}, implies that $|f(z)|\preceq {\mathcal R}\spa$. So, $|n|$, which asymptotically is the amount of space needed 
to hold $n$, is bounded by ${\mathcal R}\spa(\mathfrak{l})$.

 As for the component $C$ of an entry $[n,C]$, it is a restriction of (and hence not ``bigger'' than) $\mathbb{B}_{n}^{h}$, so let us talk about $\mathbb{B}_{n}^{h}$ instead. Let $\mathbb{B}_{n}^{h}=\bigl((\vec{\alpha}_1,p_1),\ldots,(\vec{\alpha}_m,p_m)\bigr)$.  By Lemma \ref{shan},  $\overline{\mathbb{B}_{n}^{h}}$  is a reasonable run of $F'(n)$. Consequently, the overall number of moves in it cannot exceed the constant bound $\mathfrak{e}$. Remembering the work of $\simm_{n}^{\bullet}$, we see that only negative values of this procedure's  output may have empty payloads. With this fact in mind, a look back at the work of \mainn\ --- its Subcase 2.1 in particular --- easily reveals that, for each even $i\in\{2,\ldots,m\}$, $\vec{\alpha}_i$ is nonempty. Therefore $m\leq 2\mathfrak{e}+1$. That is, the number of organs in $\mathbb{B}_{n}^{h}$ is bounded by a constant.   So, asymptotically,
$\mathbb{B}_{n}^{h}$ does not take more space than any organ $(\vec{\alpha}_i,p_i)$ of it, which allows us now to just focus on 
$(\vec{\alpha}_i,p_i)$. Due to  $\overline{\mathbb{B}_{n}^{h}}$'s being reasonable, there is only a constant ($\leq \mathfrak{e}$) number of moves  in the payload $\vec{\alpha}_i$ of 
$(\vec{\alpha}_i,p_i)$, and the size of no such move exceeds $O\bigl(\mathfrak{G}(\mathfrak{l})\bigr)$, where $\mathfrak{G}$, as we remember,  is the superaggregate bound of the formula $F(x,\vec{v})$ and hence, by Lemma \ref{lagg}, $\mathfrak{G}\preceq{\mathcal R}\spa$. Thus, ${\mathcal R}\spa(\mathfrak{l})$ space is sufficient to record $\vec{\alpha}_i$.  It now remains to show that the same holds for  $p_i$. 
An analysis of  \mainn\  reveals that, during its work, the only case when a new scale (as opposed to an old, inherited scale) greater than $1$ of whatever organ of whatever entry is ever created is Subsubsubcase 2.2.2.1, and when such a creation takes place, the new scale is smaller than $2\mathfrak{L}(\mathfrak{l},U)$. As observed earlier in this proof when we agreed to ignore $U$, the value of $U$ is bounded by $\mathfrak{s}'(\mathfrak{l})$ for some $\mathfrak{s}'\in {\mathcal R}\spa$. So,  
$p_i<2\mathfrak{L}(\mathfrak{l},\mathfrak{s}'(\mathfrak{l}))$ and thus $|p_i|\leq |2\mathfrak{L}(\mathfrak{l},\mathfrak{s}'(\mathfrak{l}))|$. In view of our earlier observation (\ref{2244}), 
$|2\mathfrak{L}(\mathfrak{l},\mathfrak{s}'(\mathfrak{l}))|=O(|\mathfrak{l}|+|\mathfrak{G}(\mathfrak{l})|+\mathfrak{s}'(\mathfrak{l}))$.   This fact, in conjunction with $\mathfrak{G}\in{\mathcal R}\spa$ and condition  2   of Definition 2.2 of \cite{AAAI}, implies that $|p_i|$, which asymptotically is the amount  of memory needed to hold $p_i$, does not exceed 
${\mathcal R}\spa(\mathfrak{l})$.

Now about the quantity (\ref{spc2}).  Let us only consider the case $n>0$, with the case $n=0$ being similar but simpler. Updating $\vec{E}_h$ to $\vec{E}_{h+1}$ happens through running $\simm_n(A^{\even},B^{\odd})$, where  $(A,B,n)$ is the central triple of $\vec{E}_h$. So, we just need to show that the space consumed by $\simm_n(A^{\even},B^{\odd})$ is bounded by ${\mathcal R}\spa(\mathfrak{l})$.  
This quantity, with asymptotically irrelevant technicalities suppressed,  is the sum of (I)  the space needed for simulating ${\mathcal H}_n$, and (II) the space needed for maintaining (the contents of) the variables $a,b,u,\vec{\psi},\vec{\nu},W,S,R$ of $\simm_n$, as well as the space needed to keep track of how many steps of ${\mathcal H}_n$ have been simulated so far within the present iteration of 
$\loopp_n$.

(I):  There are two groups of moves on the simulated ${\mathcal H}_n$'s run tape. The first group, that we here shall refer to as the 
{\bf early moves},\label{xem} comprises the $\oo$-labeled moves  signifying the initial choices of the constants $n-1$ and $\vec{c}$ for the free variables $x$ and $\vec{v}$ of $F(x,\vec{v})\mli F(x\successor,\vec{v})$. All other moves constitute the second group, which we shall refer to as the {\bf late moves}.\label{xlm} The  information that ${\mathcal M}_k$ needs to keep track of (and whose size is asymptotically relevant) in order to simulate ${\mathcal H}_n$ consists of  the contents (here also including the scanning head locations) of ${\mathcal H}_n$'s run and work tapes, and the content of ${\mathcal H}_n$'s buffer.  
Since $(A,B,n)$ is the central triple of $\vec{E}_h$, $A$ is a restriction of $\mathbb{B}_{n-1}^{h}$ and $B$ is a restriction of $\mathbb{B}_{n}^{h}$. This, in view of Lemma \ref{shan} and in view of ${\mathcal H}_n$'s playing reasonably, obviously implies that 
 the run spelled by the simulated  ${\mathcal H}_n$'s run tape is reasonable. As a result,  there is only a constant number of late moves, and the magnitude of each such move is obviously bounded by  $\mathfrak{G}(\mathfrak{l})$. In view of Lemma \ref{lagg}, this means that the combined size of all late moves is bounded by ${\mathcal R}\spa(\mathfrak{l})$. Since ${\mathcal H}_n$ is unconditionally provident, everything written in its buffer will sooner or later mature into a late move, so, whatever we said about the sizes of the late moves, also applies to the maximum possible size of ${\mathcal H}_n$'s buffer content.   As for the early moves,
they reside on ${\mathcal M}_k$'s own run tape, and hence ${\mathcal M}_k$ does not need to utilize any of its work-tape space to keep track of them. Thus, keeping track of the contents of ${\mathcal H}_n$'s imaginary run tape and buffer does not take ${\mathcal M}_k$ beyond the target ${\mathcal R}\spa(\mathfrak{l})$ space limits. It remains to see that the same holds for the contents of ${\mathcal H}_n$'s work tapes. But indeed, the magnitude of no (early or late) move on ${\mathcal H}_n$'s imaginary run tape exceeds 
$\max(\mathfrak{l},\mathfrak{G}(\mathfrak{l}))$ and hence (as ${\mathcal R}\amp$ is linearly closed and $\mathfrak{G}\in{\mathcal R}\spa\preceq{\mathcal R}\amp$) $\mathfrak{a}'(\mathfrak{l})$ for some $\mathfrak{a}'\in {\mathcal R}\amp$.  But then, since ${\mathcal H}_n$ runs in unconditional space $\mathfrak{s}\in{\mathcal R}\spa$, it consumes at most $\mathfrak{s}(\mathfrak{a}'(\mathfrak{l}))$ space of its work tapes. ${\mathcal M}_k$ can  keep track of the contents of those tapes with asymptotically the same amount 
 $\mathfrak{s}(\mathfrak{a}'(\mathfrak{l}))$ of its own work-tape space. And the latter, by condition 4 of  Definition 2.2 of \cite{AAAI}, does not exceed ${\mathcal R}\spa(\mathfrak{l})$. 

(II): The sizes of the variables $a$ and $b$ of $\simm$ are bounded by a constant (namely, $|2\mathfrak{e}+1|$). As for the sizes of the remaining variables $u,\vec{\psi},\vec{\nu},W,S,R$, as well as the space  needed to keep track of how many steps of ${\mathcal H}_n$ have been simulated so far within the present iteration of 
$\loopp_n$, can be easily seen to be superseded by  (\ref{spc1}) or (I).

\subsection{\texorpdfstring{${\mathcal M}$}{M} runs in target time}\label{slt}

We agree that, for an  $h\in \mathbb{I}$, \ $\mathbb{I}^{h}_{\bullet}$\label{xdea} denotes the set of all numbers $i\in\mathbb{I}^{h}$ satisfying the condition that  there is no $j$ with $i\leq j<h$ such that the $j$th iteration of \mainn\ proceeds according to the scenario of Case 1 or Subsubsubcase 2.2.2.2. Next, $\mathbb{I}^{h}_{\bullet\bullet}$\label{xdea7} 
denotes the set of all numbers $i\in\mathbb{I}^{h}_{\bullet}$ (additionally) satisfying the condition that  there is no $j$ with $i\leq j<h$ such that the $j$th iteration of \mainn\ proceeds according to the scenario of   Subsubsubcase 2.2.2.1. Finally, 
$\mathbb{I}^{h}_{\bullet\bullet\bullet}$\label{xdea8} 
denotes the set of all numbers $i\in\mathbb{I}^{h}_{\bullet\bullet}$ (additionally) satisfying the condition that  there is no $j$ with $i\leq j<h$ such that the $j$th iteration of \mainn\ proceeds according to the scenario of   Subsubsubcase 2.1.2.

\begin{lem}\label{29a}
Consider any $h\in\mathbb{I}$ such that the $h$'th iteration of \mainn\ is locking. Then the master scale of $\vec{E}_h$ is bounded by ${\mathcal R}\tim(\mathfrak{l})$.
\end{lem}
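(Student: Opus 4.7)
Let $v$ denote the master scale of $\vec{E}_h$. The plan is to trace $v$ back through the sequence of iterations and identify which prior iteration last modified the master scale. Inspection of the description of $\mainn$ shows that the master scale changes only at five kinds of iterations: initialization (sets scale to $1$); Case 1 and Subsubsubcase 2.2.2.2 (both reset scale to $1$); Subsubcase 2.1.2, i.e.~a locking iteration (sets scale to the output scale $s$ returned by $\simm_k$); and Subsubsubcase 2.2.2.1 (doubles the scale). Between such iterations the master scale is unchanged. Let $j<h$ be the largest index at which one of these events occurred (taking $j=0$ if only initialization applies).

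If $j$ corresponds to initialization, Case 1, or Subsubsubcase 2.2.2.2, then $v=1$ and the conclusion is immediate. The substantial cases are therefore the doubling case and the prior-locking case, which I intend to handle by induction on $h$.

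For the doubling case ($j$ is an iteration of Subsubsubcase 2.2.2.1) we have $v=2v_j$, where $v_j$ is the master scale of $\vec{E}_j$. The doubling is triggered by $\simm_0$ returning a negative value at iteration $j$, meaning that no move of $\mathcal{H}_0$ was detected within the simulated steps. The plain bound $v_j<\mathfrak{L}(\mathfrak{l},U_j)$ coming from the guard of Subsubsubcase 2.2.2.1 is too weak (the $\mathfrak{L}$-factor contains an exponential $\mathfrak{q}^{\mathfrak{g}u}$ that is not in $\mathcal{R}\tim$). The refinement I plan to use is that $\mathcal{H}_0$ runs in unconditional time $\mathfrak{t}\in\mathcal{R}\tim$, and the simulated run-tape content of $\mathcal{H}_0$ has magnitude at most $\mathfrak{G}(\mathfrak{l})$ (by the prudence of $\mathcal{H}_0$ together with the bound on the run-tape content established in the proof of Lemma \ref{beijing}); hence once $v_j$ exceeds $T:=\mathfrak{t}\bigl(\mathfrak{a}(\mathfrak{G}(\mathfrak{l}))\bigr)$, any move of $\mathcal{H}_0$ that is ever going to occur in the simulation must already have occurred, so $\simm_0$ would have returned a positive value rather than a negative one. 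Thus $v_j\le T$ and $v\le 2T\in\mathcal{R}\tim(\mathfrak{l})$ by the closure of $\mathcal{R}\tim$ under the relevant compositions (conditions 3, 4, 5 of Definition~2.2 of \cite{AAAI}).

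For the prior-locking case ($j$ is a locking iteration), $v$ equals the output scale $s_j$ of $\simm_k$ at $j$, which by inspection of $\simm$ is always the scale of some organ in the input body $R^{\odd}$ where $R$ is the master body of $\vec{E}_j$. All such scales have themselves been set at earlier scale-modifying events of the same five kinds. I propose to close the induction by showing that these in-body scales are all bounded by $\mathcal{R}\tim(\mathfrak{l})$; this reduces to the same case analysis applied to earlier iterations, with the doubling subcase once more handled by the $\mathfrak{t}$-argument above. The main obstacle is purely bookkeeping: verifying that every organ-scale appearing inside the master body traces back (via the two-organ addition of Subsubcase 2.1.2 and via doublings of the last organ in Subsubsubcase 2.2.2.1) either to the initial scale $1$ or to a chain of doublings terminated by $\simm_0$-success, so that the $\mathfrak{t}$-based bound always applies.
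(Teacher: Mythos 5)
Your doubling-case argument does not work, and it carries the whole weight of the proof, so the gap is fatal. You claim that once $v_j$ exceeds $T=\mathfrak{t}(\ldots)$, $\simm_0$ would have returned a positive rather than a negative signed organ. But a negative return from $\simm_0$ simply records that $\mathcal{H}_0$ made no new move during the last simulated batch, which is exactly the expected outcome whenever $\mathcal{H}_0$ has finished reacting to the environment moves currently present --- precisely what happens once the batch length passes $\mathcal{H}_0$'s actual time bound. The unconditional time bound $\mathfrak{t}$ only limits the delay before a move that \emph{is} made; it does not compel $\mathcal{H}_0$ to move at all, so ``$\simm_0$ returned negative'' and ``$v_j > T$'' are perfectly compatible. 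Since $\mainn$ has no access to $\mathfrak{t}$, the guard in Subsubsubcase 2.2.2.1 compares $v$ against the much larger combinatorial bound $\mathfrak{L}(\mathfrak{l},U)$, and doubling really does continue up to that exponentially large threshold. The master scale in a run can and generally will exceed $2T$, so your claimed inequality $v_j\le T$ does not hold.

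Your underlying intuition --- that once the scale is past $T$ further doublings change nothing observable --- is correct, but the right consequence is not a local contradiction at one $\simm_0$ call; it is a \emph{periodicity} of the whole subsequent evolution, and that is what the paper exploits. Once the master scale exceeds $\mathfrak{t}(\max(\mathfrak{l},\mathfrak{G}(\mathfrak{l})))$, the aggregations produced after a restart with scale $v$ and those produced after the next restart with scale $2v$ are ``essentially the same'' step by step (Claim~\ref{CL:2}), since each simulated $\mathcal{H}_n$ emits every move it is ever going to emit within the first $\mathfrak{t}(\ldots)$ steps of each batch. If the master scale at a locking iteration were at least $2^{\mathfrak{e}_\top}\,\mathfrak{t}(\max(\mathfrak{l},\mathfrak{G}(\mathfrak{l})))$, there would be at least $\mathfrak{e}_\top$ consecutive saturated doublings; since there are at most $\mathfrak{e}_\top$ locking iterations altogether, a pigeonhole argument produces two essentially identical epochs with no locking iteration between them, and tracking Claim~\ref{CL:2} along the one that terminates in locking forces a master body of a size already attained earlier --- contradicting the fact that each locking iteration strictly enlarges the master body. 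Your trace-back skeleton (reduce to the last scale-modifying iteration, noting that locking merely propagates the existing scale) is sound, but the doubling case needs this global combinatorial argument, not a local claim about a single negative $\simm_0$ return.
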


\begin{proof} Throughout this proof, $\mathfrak{w}$ will be an abbreviation of the constant $\mathfrak{e}_\top +1$.
Consider any $h\in\mathbb{I}$ such that the $h$th iteration of \mainn\ is locking. 
Let $m$ be the master scale of 
$\vec{E}_h$. We claim that $m$ is smaller than $2^{\mathfrak{w}-1} \mathfrak{t}(\max(\mathfrak{l},\mathfrak{G}(\mathfrak{l})))$ and hence, by Lemma \ref{lagg} and conditions 2, 3 and 4 of Definition 2.2 of \cite{AAAI}, $m$ is bounded by ${\mathcal R}\tim(\mathfrak{l})$. Indeed, 
for a contradiction, assume $m\geq 2^{\mathfrak{w}-1} \mathfrak{t}(\max(\mathfrak{l},\mathfrak{G}(\mathfrak{l})))$. 
We (may) additionally assume that $\mathfrak{t}(\max(\mathfrak{l},\mathfrak{G}(\mathfrak{l})))\not=0$. Let $b_1$ be 
the smallest element of $\mathbb{I}^{h}_{\bullet\bullet}$. So, there are no restarting 
iterations between $b_1$ (including) and $h$ (not including). But only restarting iterations 
of \mainn\ modify the master scale of $\vec{E}$. Thus, the master scale of $\vec{E}_{b_1}$ 
is the same $m$ as that of $\vec{E}_h$. Since $m>1$ 
and $b_1$ is the smallest element of $ \mathbb{I}^{h}_{\bullet\bullet}$, the $(b_1-1)$th iteration of \mainn\ (exists and) is restarting. Besides, that iteration does not proceed 
by the scenario of Case 1 or Subsubsubcase 2.2.2.2 of \mainn, because in either case the
 master scale of the resulting aggregation $\vec{E}_{b_1}$ would be reset to $1$. Hence, the $(b_1-1)$th iteration of \mainn\ proceeds according to the scenario of (the master-scale-doubling) Subsubsubcase 2.2.2.1. This means that the master scale of $\vec{E}_{b_1-1}$ is $m/2$. 
Let $b_2$ be the smallest element of $ \mathbb{I}^{b_1-1}_{\bullet\bullet}$. 
Reasoning as above but this time with $b_2$ and $b_{1}-1$ instead of $b_1$ and $h$, respectively, 
 we find that the master scale of $\vec{E}_{b_2}$ is $m/2$ and the master scale of $\vec{E}_{b_2-1}$ is $m/4$. Continuing this 
pattern, we further define $b_3>b_4>\ldots>b_{\mathfrak{w}}$ in the same style as we defined $b_1,b_2$ and find that the master scales of $\vec{E}_{b_3}$, $\vec{E}_{b_3-1}$, $\vec{E}_{b_4}$, $\vec{E}_{b_4-1}$, \ldots, $\vec{E}_{b_{\mathfrak{w}}}$, $\vec{E}_{b_{\mathfrak{w}}-1}$ are $m/4$, $m/8$, $m/8$, $m/16$, \ldots, $m/2^{\mathfrak{w}-1}$, $m/2^{\mathfrak{w}}$, respectively. Each iteration of \mainn\ that proceeds according to Subsubcase 2.1.2 results in ${\mathcal M}_k$ making a move in the real play of $h\leq \mathfrak{b}|\vec{d}|\mli F'(k)$.  
Since ${\mathcal M}_k$ plays (quasi)legally, altogether it makes fewer than $\mathfrak{w}$ moves.  This means that, altogether, there are fewer than $\mathfrak{w}$ iterations of \mainn\ that proceed according to Subsubcase 2.1.2. Besides, one of such iterations is the $h$th iteration.  Therefore there is at least one $i$ with $1\leq i< \mathfrak{w}$ such that 
$\mathbb{I}^{b_{i}-1}_{\bullet\bullet}=\mathbb{I}^{b_{i}-1}_{\bullet\bullet\bullet}$ and hence 
$b_{i+1}\in\mathbb{I}^{b_{i}-1}_{\bullet\bullet\bullet}$. Pick the smallest such $i$ (fix it!), and let us rename $b_i$ into $c$ and $b_{i+1}$ into $a$. Further, let $d$ be the smallest element of $\mathbb{I}^{h}$ such that $c\leq d$ and the $d$th iteration of \mainn\ is locking. It is not hard to see that such a $d$ exists. Namely, $d\in\{b_1,\ldots,h\}$ if $i=1$ and $d\in\{b_i,\ldots,b_{i-1}-1\}$ if $1<i<\mathfrak{w}$.

In what follows, we shall say that two organs $(\vec{\alpha},p)$ and $(\vec{\beta},q)$ are {\bf essentially the same}\label{xestsm}
 iff $\vec{\alpha}=\vec{\beta}$ and either $p=q$ or $p,q\in\{m/2^{i},m/2^{i-1}\}$, where $i$ is as above. This extends to all pairs $X,Y$ of organ-containing objects/structures (such as signed organs, bodies, aggregations, etc.) by stipulating that  $X$ and $Y$ are essentially the same iff they only differ from each other in that where $X$ has an organ $P$, $Y$ may have an essentially the same organ $Q$ instead. For instance, two signed organs are essentially the same iff they are both in $\{+P,+Q\}$  or both in $\{-P,-Q\}$  for some essentially the same organs $P$ and $Q$;  
two bodies $(P_1,\ldots,P_s)$ and $(Q_1,\ldots,Q_t)$ are essentially the same iff $s=t$ and, for each $r\in\{1,\ldots,s\}$, the organs $P_r$ and $Q_r$ are essentially the same; etc.  

\begin{cLa}\label{CL:2}
For any $j\in\{0,\ldots,d-c+1\}$, the aggregations $\vec{E}_{a+j}$ and $\vec{E}_{c+j}$ are essentially the same.
\end{cLa}

This claim can be proven by induction on $j$. We give an outline of such a proof, leaving more elaborate details to the reader. 
For the basis of induction, we want to show that the aggregations $\vec{E}_{a}$ and $\vec{E}_{c}$ are essentially the same. To see that this is so, observe that the master entry is the only entry of both aggregations. Also, the only iteration of \mainn\ between $a$ (including) and $c$ that modifies the master entry of $\vec{E}$ is the $(c-1)$th iteration, which proceeds according to Subsubsubcase 2.2.2.1 and the only change that it makes in the master body of $\vec{E}$ is that it doubles $\vec{E}$'s master scale $m/2^{i}$, turning it into  $m/2^{i-1}$. 

For the inductive step, consider any $j\in\{0,\ldots,d-c\}$ and make the following observations.  
Updating $\vec{E}_{c+j}$ to $\vec{E}_{c+j+1}$ happens through running $\simm_{n}^{\bullet}$\footnote{Of course, \mainn\ runs  
$\simm_{n}$ rather than $\simm_{n}^{\bullet}$, but this is only relevant to the value of the variable $U$ of \mainn. The latter may only become relevant to the way the variable $\vec{E}$ is updated when a given iteration of \mainn\ proceeds according to Subsubcase 2.2.2. However, no iterations between (including) $c$ and $d$ proceed according to that Subsubcase. So, it is safe to talk about $\simm_{n}^{\bullet}$ instead of $\simm_{n}$ here.}
(for a certain $n$) on certain arguments $A,B$. The same is the case with 
updating $\vec{E}_{a+j}$ to $\vec{E}_{a+j+1}$, where, by the induction hypothesis, the arguments $A'$ and $B'$ on which ${\mathcal H}_n$ is run are essentially the same as  $A$ and $B$, respectively.   
 So, the only difference between the two computations 
$\simm_{n}^{\bullet}(A,B)$ and $\simm_{n}^{\bullet}(A',B')$ is that, occasionally, one traces $m/2^{i-1}$ steps of ${\mathcal H}_n$'s work beginning from a certain configuration $W$  while the other only traces $m/2^{i}$ steps in otherwise virtually the same scenario. 
This guarantees that the outcomes of the two computations --- and hence the ways $\vec{E}_{c+j}$ and $\vec{E}_{a+j}$ are updated to their next values --- are essentially the same. The point is that, 
 since ${\mathcal H}_n$ runs in time $\mathfrak{t}$ and since --- as observed near the end of the preceding subsection --- the magnitude 
of no move on the simulated ${\mathcal H}_n$'s run tape exceeds $\max(\mathfrak{l},\mathfrak{G}(\mathfrak{l}))$, all moves that 
${\mathcal H}_n$ makes within $m/2^{i-1}\geq 2 \mathfrak{t}(\max(\mathfrak{l},\mathfrak{G}(\mathfrak{l})))$ Steps are in fact made within 
the first $m/2^{i}\geq  \mathfrak{t}(\max(\mathfrak{l},\mathfrak{G}(\mathfrak{l})))$ steps of the simulated interval, so the computations of $\simm_{n}^{\bullet}(A,B)$ and $\simm_{n}^{\bullet}(A',B')$ proceed in 
``essentially the same'' ways, yielding essentially the same outcomes.\vspace{4pt} 

Taking $j=d-c+1$, Claim \ref{CL:2} tells us that the master body of  $\vec{E}_{c+(d-c+1)}$ --- i.e., of $\vec{E}_{d+1}$ --- and the master body of $\vec{E}_{a+(d-c+1)}$ are essentially the same. This is however a contradiction, because the size of the former, as a result of the $d$th iterations' being locking, is greater than the size of the master body of any earlier aggregations $\vec{E}_{1},\ldots,\vec{E}_{d}$.  
\end{proof}

\begin{lem}\label{29b}
Consider any $h\in\mathbb{I}$ such that the $h$'th iteration of \mainn\ is locking. Assume $e\in\mathbb{I}^{h}_{\bullet}$, and $(A,B,n)$ is the central triple of $\vec{E}_e$. Then the scales of all organs of $A$ and $B$ are bounded by  ${\mathcal R}\tim(\mathfrak{l})$. 
\end{lem}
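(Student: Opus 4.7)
The plan is to reduce the bound on all organ scales appearing in $\vec{E}_e$ to a bound on master scales at arbitrary earlier iterations, and then to invoke a straightforward generalization of Lemma \ref{29a}.

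First, I would establish the following structural claim by induction on $i\in\mathbb{I}$: every scale $p$ appearing in some organ of $\vec{E}_i$ either equals $1$, or equals the master scale of $\vec{E}_{i'}$ for some $i'\leq i$. The base case $i=1$ is immediate. For the inductive step, I examine each possible type of transition in \mainn. Case 1 and Subsubsubcase 2.2.2.2 only reset the master scale to $1$ and delete common entries, and so preserve the inductive property. Subsubsubcase 2.2.2.1 doubles the master scale, and the new value appears as the master scale of $\vec{E}_{i+1}$. Subsubcases 2.1.1, 2.1.2 and 2.2.1 each add new organs whose scale is that of the output of $\simm_{n}^{\bullet}$, which in turn equals the scale of some input organ and is thus of the required form by the induction hypothesis. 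The delicate case is Subsubcase 2.1.2, where the new master scale is the scale $s$ returned by $\simm_{k}^{\bullet}$; as essentially already observed in the proof of Lemma \ref{29a}, when $\simm_{k}^{\bullet}$ returns a positive value it must have fetched all of its second argument $R^{\odd}$, whose last organ, since $R$ is the master body and hence has odd size by aggregation condition (i) of Section \ref{saggr}, is the master organ itself, so $s$ equals the old master scale. This means that Subsubcase 2.1.2 in fact preserves the master scale value, confirming that only Subsubsubcase 2.2.2.1 genuinely creates new scales.

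Second, I would establish that, for every $j\in\mathbb{I}$ with $j\leq h$, the master scale of $\vec{E}_j$ is bounded by ${\mathcal R}\tim(\mathfrak{l})$. This is a straightforward generalization of Lemma \ref{29a}, obtained from its proof by substituting $j$ for $h$ throughout: one assumes for contradiction that the master scale of $\vec{E}_j$ is at least $2^{\mathfrak{w}-1}\mathfrak{t}(\max(\mathfrak{l},\mathfrak{G}(\mathfrak{l})))$, defines $b_1$ as the smallest element of $\mathbb{I}^{j}_{\bullet\bullet}$, and iterates to obtain $b_2>\ldots>b_{\mathfrak{w}}$ as before. The only step of the original proof that explicitly uses the assumption that the $h$'th iteration is locking is the existence of the auxiliary iteration $d$, and that existence is still guaranteed in the present setting, since $c\leq j\leq h$ and the $h$'th iteration is locking by our standing hypothesis on $h$; in particular, one may always take $d$ to be the smallest locking iteration in $\{c,\ldots,h\}$, which includes $h$ itself. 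All other steps of the proof, including the application of Claim \ref{CL:2}, go through without change.

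Combining these two observations, every scale of every organ in $\vec{E}_e$, and in particular of every organ in $A$ or $B$, either equals $1$ or is the master scale of $\vec{E}_{i'}$ for some $i'\leq e\leq h$; in either case it is bounded by ${\mathcal R}\tim(\mathfrak{l})$, as required. The main obstacle is the structural claim of the first step, and specifically the verification that Subsubcase 2.1.2 preserves the master scale value, which requires carefully matching the termination behavior of $\simm_{k}^{\bullet}$ against the aggregation condition forcing the master body to have odd size.
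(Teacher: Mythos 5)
Your approach diverges from the paper's in a substantive way, and the divergence exposes a gap.

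The paper reduces to bounding the scales of the master body $C$ of $\vec{E}_h$ (via the observation that all scales in $\vec{E}_e$, $e\in\mathbb{I}^h_{\bullet}$, are dominated by scales of $C$), then uses Claim \ref{CL:1} to pair the scales of $C$ and, for each odd-indexed organ $O_i$ of $C$, identifies an earlier \emph{locking} iteration $j-1$ (namely one less than the $(h,k)$-birthtime of $(O_1,\ldots,O_{i+1})$) whose master scale equals $p_i$, so that Lemma \ref{29a} applies as stated. You instead reduce to bounding the master scale of $\vec{E}_j$ for arbitrary $j\leq h$, which requires a genuine generalization of Lemma \ref{29a} to non-locking iterations. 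Your first structural claim is sound, and the observation that Subsubcase 2.1.2 preserves the master scale (because $\simm_{k}^{\bullet}$ returning positively entails that the last-fetched organ is the odd-sized master body's last organ) is correct.

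The problem is the second step. You assert that the only use of ``the $h$'th iteration is locking'' in the proof of Lemma \ref{29a} is to guarantee the existence of $d$, but this overlooks an implicit and essential use: the proof needs the whole interval $[c,d]$ to be free of Case 1 and Subcase 2.2.2 iterations. This is what the footnote in the proof of Claim \ref{CL:2} invokes (``no iterations between $c$ and $d$ proceed according to that Subcase''), and it is also needed so that each update ``happens through running $\simm_{n}^{\bullet}$'' at all, since a Case 1 interrupt would break the parallel-computation argument. In the original proof this is automatic: for $i=1$ one has $[c,d]\subseteq[b_1,h]$, and $b_1$ is the smallest element of $\mathbb{I}^{h}_{\bullet\bullet}$, so $[b_1,h)$ contains no restarting and no 2.2.2.1 iterations; for $i>1$ the minimality of $i$ forces $d<b_{i-1}-1$ and the same conclusion holds. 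In your version, $b_1$ is the smallest element of $\mathbb{I}^{j}_{\bullet\bullet}$, which only guarantees that $[b_1,j)$ is free of such iterations. When $i=1$ and there happens to be no locking iteration in $[b_1,j)$, you get $d\geq j$, and then the interval $[j,d)$ (which lies inside $[c,d]$) may contain Case 1 or Subsubsubcase 2.2.2.1 iterations. In that case the inductive step of Claim \ref{CL:2} no longer goes through, and ``all other steps \ldots\ go through without change'' is not justified. You would need either a genuinely new argument for the generalized Lemma \ref{29a}, or else a reduction (like the paper's) that only ever applies Lemma \ref{29a} at locking iterations.
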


\begin{proof} Assume $h$ is an element of $\mathbb{I}$ such that the $h$th iteration of \mainn\ is locking. Let $C$ be the master body of $\vec{E}_h$. It is not hard to see (by induction on $e-e_0$, where $e_0$ is the smallest element of $\mathbb{I}^{h}_{\bullet}$) that, for any $e\in\mathbb{I}^{h}_{\bullet}$, the scale of any organ of the body of any entry of $\vec{E}_e$ is either the same as the scale of one of the organs of  $C$, or one half, or one quarter, or\ldots  of such a scale. Thus,  the scales of the organs of $C$ (at least, the greatest of such scales) are not smaller that the scales of the organs of the entries of $\vec{E}_e$, including the scales of the organs of $A$ and $B$. For this reason, it is sufficient to prove that the scales of all organs of  $C$ are bounded by  ${\mathcal R}\tim(\mathfrak{l})$. 

Let $C=(O_1,\ldots,O_{2m},O_{2m+1})$, and let $p_1,\ldots,p_{2m},p_{2m+1}$ be the corresponding scales. 
Note that, since the $h$th iteration of \mainn\ is locking, we have $h\in\mathbb{I}^{\hbar}_{!}$ and, consequently, $C$ is a restriction of $\mathbb{B}^{\hbar}_{k}$. Therefore, 
according to  Claim \ref{CL:1} from the proof of Lemma \ref{a21}, we have $p_1=p_2$, $p_3=p_4$, \ldots, $p_{2m-1}=p_{2m}$. So, it is sufficient to consider $p_i$ where $i$ is an odd member of $\{1,\ldots,2m+1\}$. 
The case of $i=2m+1$ is immediately taken care of by Lemma \ref{29a}. Now consider any odd member $i$ of $\{1,\ldots,2m-1\}$.
Let $j$ be the $(h,k)$-birthtime of $(O_1,\ldots,O_{i+1})$. Thus, the $(j-1)$th iteration of \mainn\  is locking. But note that $p_i$ is the master scale of $\vec{E}_{j-1}$. Then, according to Lemma \ref{29a}, $p_i$ is bounded by ${\mathcal R}\tim(\mathfrak{l})$.
\end{proof}

Now we are ready to argue that ${\mathcal M}$ runs in target time. We already know from Lemma \ref{beijing} that, for a certain bound $\mathfrak{z}\in {\mathcal R}\tim$, \mainn\ is iterated only $\mathfrak{z}(\mathfrak{l})$ times. In view of $ {\mathcal R}\tim$'s being at least polynomial as well as polynomially closed,  
it remains to see that each relevant iteration takes a polynomial (in $\mathfrak{l}$) amount of time. Here ``relevant'' means an iteration that  is followed (either within the same iteration or in some later iteration) by an ${\mathcal M}_k$-made  move without meanwhile being interrupted by Environment's moves. In other words, this is an $e$th iteration with $e\in\mathbb{I}^{h}_{\bullet}$ for some $h\in\mathbb{I}$ such that the $h$th iteration of \mainn\ is locking. Consider any such $e$.

There are two asymptotically relevant sources/causes of the time consumption of the $e$th (as well as any other) iteration of \mainn:   running $\simm_n(A^{\even},B^{\odd})$, where $(A,B,n)$ is the central triple of $\vec{E}_e$,  and  periodically polling ${\mathcal M}_k$'s run tape to see if Environment has made any new moves.

Running $\simm_n(A^{\even},B^{\odd})$ requires simulating 
the corresponding machine ${\mathcal H}_n$ in the scenario determined by $A^{\even}$ 
and $B^{\odd}$. With asymptotically irrelevant or superseded  details suppressed, simulating a single step of ${\mathcal H}_n$ requires going, a constant number of times,  through ${\mathcal M}_k$'s work and run tapes to collect the information necessary for updating ${\mathcal H}_n$'s ``current'' configuration to the next one, and to actually make such an update. As we already know from Section \ref{sls}, the size of (the non-blank, to-be-scanned portion of) ${\mathcal M}_k$'s work tape is bounded by ${\mathcal R}\spa$.\footnote{Of course, a  tape is infinite in the rightward direction, but in contexts like the present one we treat the leftmost blank cell of a tape as its ``end''.}  And the size of ${\mathcal M}_k$'s run tape is $O(\mathfrak{l})$ (the early moves) plus $O(\mathfrak{G}(\mathfrak{l}))$ (the late moves). Everything together, in view of the linear closure of ${\mathcal R}\tim$ (condition 3 of Definition 2.2 of \cite{AAAI}) and the facts $\mathfrak{G}\in{\mathcal R}\spa$ (Lemma \ref{lagg}) and ${\mathcal R}\spa\preceq {\mathcal R}\tim$ (condition 5 of Definition 2.2 of \cite{AAAI}), is well within the target ${\mathcal R}\tim(\mathfrak{l})$. 

The amount of steps of ${\mathcal H}_n$ to be simulated when running $\simm_n(A^{\even},B^{\odd})$ is obviously at most a constant   times the greatest of the scales of the organs of $A$ and $B$, which, in view of Lemma \ref{29b}, is ${\mathcal R}\tim(\mathfrak{l})$.  

Thus, the time $T$ needed for running $\simm_n(A^{\even},B^{\odd})$ is the product of the two ${\mathcal R}\tim(\mathfrak{l})$   quantities established in the preceding two paragraphs. By the polynomial closure of ${\mathcal R}\tim$, such a product remains ${\mathcal R}\tim(\mathfrak{l})$.    
How much time is added to this by the polling routine? Obviously the latter is repeated at most $T$ times. Any given repetition  does not require more time than it takes to go from one end of the run tape of ${\mathcal M}_k$ to the other 
end. And this quantity, as we found just a while ago, is ${\mathcal R}\tim(\mathfrak{l})$. Thus,  the $e$th iteration of \mainn\ takes ${\mathcal R}\tim(\mathfrak{l})+{\mathcal R}\tim(\mathfrak{l})\times {\mathcal R}\tim(\mathfrak{l})$ time, which, by ${\mathcal R}\tim$'s being polynomially closed,  remains ${\mathcal R}\tim(\mathfrak{l})$ as promised.

\section{Final remarks}\label{sfr}
In writing this paper and its predecessor, the author has tried to keep balance between  generality and simplicity, often sacrificing the former for the sake of the latter. Among the ways that the present results could be strengthened is relaxing the concept of a regular theory $\areleven_{\mathcal A}^{\mathcal R}$. Specifically, the condition of ${\mathcal R}\amp$'s being linearly closed  can be removed as long as  Definition 2.2 of \cite{AAAI}  is correspondingly refined/readjusted. This condition, in fact, amounts to adopting an asymptotic view of amplitude complexity, which significantly simplifies the completeness proofs, allowing us to avoid numerous annoying exceptions and details one would need to otherwise deal with. As noted in \cite{cl12}, however, unlike time and space complexities, we may not always be willing to --- and it is almost never really  necessary to --- settle for merely asymptotic analysis when it comes to amplitude complexity. A non-asymptotic approach to amplitude complexity would make it possible to consider much finer amplitude complexities, such as ``strictly $\ell$'' (``non-size-increasing'', as studied in \cite{bbb1}), ``$\ell$ plus a constant'', etc.

\appendix

\section{Proof of Lemma \ref{vasa}}\label{sap2}
Lemma  \ref{vasa} states:

\begin{quote}{\em There is an effective procedure that takes an arbitrary bounded formula $H(\vec{y})$,
 an arbitrary HPM $\mathcal L$ and constructs an HPM $\mathcal M$ such that, as long as $\mathcal L$ is a provident solution of $H(\vec{y})$, 
 the following conditions are satisfied: 
\begin{enumerate}[label=\arabic*.]
\item
 $\mathcal M$ is a quasilegal and unconditionally provident solution of $H(\vec{y})$.

\item
 If $\mathcal L$ plays $H(\vec{y})$ prudently, then $\mathcal M$ plays $H(\vec{y})$ unconditionally prudently.

\item For any arithmetical functions $\mathfrak{a},\mathfrak{s},\mathfrak{t}$,    
if $\mathcal L$ plays $H(\vec{y})$ in 
tricomplexity $(\mathfrak{a},\mathfrak{s},\mathfrak{t})$, then $\mathcal M$ plays in unconditional tricomplexity $(\mathfrak{a},\mathfrak{s},\mathfrak{t})$.
\end{enumerate}}\end{quote}\medskip 

\noindent Consider an arbitrary  HPM $\mathcal L$ and an arbitrary bounded formula $H(\vec{y})$ with all free variables displayed. We want to (show how to) construct an HPM $\mathcal M$\label{xmm4} --- with the same number of work tapes as $\mathcal L$--- satisfying the above conditions 1-3. From our construction of $\mathcal M$ it will be immediately clear that $\mathcal M$ is built effectively from $H(\vec{y})$ and $\mathcal L$. As usual, we may not always be very careful about the distinction between $H(\vec{y})$ and $\ada H(\vec{y})$, but which of these two is really meant can always easily be seen from the context.

We agree on the following terminology. 
A {\bf semiposition}\label{xsemiposition} is a string 
$S$ of the form $\xx_1\alpha_1\ldots\xx_n\alpha_n\omega$, where each $\xx_i$  is a  label $\pp$ or $\oo$, each $\alpha_i$ 
is a string over the keyboard alphabet, and $\omega\in \{\epsilon, \blank\}$ (remember that $\epsilon$ stands for the empty string).   When 
$\omega$ is $\blank$, we say that  $S$ is {\bf complete};\label{xcsmp} otherwise $S$ is {\bf incomplete}. We say that a semiposition   $S'$ is a {\bf completion}\label{xcdfd} of $S$ iff (1) either $S$ is complete and $S'=S$, or (2) $S$ is incomplete and $S'=S\beta\blank$ for some  (possibly empty) string $\beta$ over the keyboard alphabet. 
When $S$ is complete --- namely, is $\xx_1\alpha_1\ldots\xx_n\alpha_n\blank$ --- then the {\bf position spelled by $S$}, as expected, is the position 
$\seq{\xx_1\alpha_1,\ldots,\xx_n\alpha_n}$. 
We say that a semiposition $S$ is {\bf legitimate}\label{xdswe} (resp. {\bf quasilegitimate}\label{xdswepp}) iff there is a completion $S'$ of $S$ such that the  position spelled by $S'$ is a legal (resp. quasilegal) position of $\ada H(\vec{y})$.  
The {\bf compression}\label{xcdcfgf}  of a legitimate or quasilegitimate semiposition $S$ is the expression  $\overline{S}$ resulting from $S$ through replacing 
the numer  of every numeric move by the symbol $\star$.  
Note that, while generally there are infinitely many  possible legitimate or quasilegitimate semipositions,  the number of their compressions is finite. The reason is that an infinite variety of legal runs of $\ada H(\vec{y})$ exists only due to numer variations   within numeric moves; in compressions, however, all numers degenerate into $\star$.

In the context of a given step $i$ of a given computation branch of a given HPM, by the  
{\bf so-far-seen   semiposition}\label{xsofar}
we shall mean the semiposition $W$ written at time $i$ on the initial section of the run tape that has ever been visited (at steps $\leq i$) by the run-tape scanning head, except that the last symbol of $W$ should be $\blank$ if the corresponding cell contained a $\blank$ at the time when it was last seen by the scanning head, even if the content of that cell changed (namely, became  
$\pp$ or $\oo$) later. Intuitively, such a $W$ is exactly what the machine knows at time $i$ about its run-tape content based on what it has seen there so far. Next, let $Z$ be the semiposition $\pp\delta_1\ldots\pp\delta_m$, where $\delta_1,\ldots,\delta_m$ are the moves made by the machine so far (at steps $\leq i$). And let $\pi$ be the  string residing in the buffer at time $i$.  Then by the  
{\bf so-far-authored   semiposition}\label{xsofarau}
we shall mean the (complete) semiposition $Z\blank$ if $\pi$ is empty, and the (incomplete) semiposition $Z\pp\pi$ if 
$\pi$ is nonempty. The {\bf windup}\label{xwindup} 
of a quasilegitimate yet incomplete  semiposition  $V$
 of the form $\pp\delta_1\ldots\pp\delta_m\pp\pi$ 
is the lexicographically smallest string $\omega$ such that  
$\seq{\pp\delta_1,\ldots,\pp\delta_m,\pp\pi\omega}$ is a 
$\pp$-quasilegal position of $\ada H(\vec{y})$. 
Note that there is only a constant number of strings that are windups of some incomplete quasilegitimate semipositions. Also note that knowing the compression $\overline{V}$ of an (incomplete quasilegitimate) semiposition $V$ is sufficient to determine $V$'s windup.
   
We let $\mathcal M$ keep partial track of the so-far-authored quasilegitimate semiposition $V$ through remembering its compression $\overline{V}$. 
Similarly, $\mathcal M$ keeps track of the so-far-seen legitimate semiposition $W$ through remembering its compression $\overline{W}$; besides, one of the symbols of $\overline{W}$ is marked to indicate (keep track of) the current location of $\mathcal M$'s run-tape scanning head.\footnote{Namely, a marked symbol of $\overline{W}$ other than $\star$ indicates that the head is looking at the corresponding symbol of $W$, and a marked $\star$ indicates that the head is looking at one of the bits of the corresponding numer.}    With appropriately arranged details that are not worth discussing here, it is possible for $\mathcal M$, this way, to be able to immediately detect if and when $W$ becomes illegitimate. If and when this happens, we let $\mathcal M$ retire; besides, if $V$ is quasilegitimate yet incomplete at the time of this event, then $\mathcal M$ puts $V$'s windup into the buffer and, simultaneously, enters a move state before retiring. We shall refer to a move made this way as a {\bf retirement move}.\label{xretmo}
Maintaining the above $\overline{W}$ (together with its mark) and $\overline{V}$ only requires a  constant  amount of memory, so this can be fully done through $\mathcal M$'s state rather than tape memory. This means that, as long as $W$ remains legitimate,  $\mathcal M$ can follow the work of $\mathcal L$ step-by-step without having any  time or space overhead, and act (reposition heads, put things into the buffer, move, etc.) exactly like $\mathcal L$, with the only difference between the two machines being that $\mathcal M$ has a greater number of states than $\mathcal L$ does, with any given state of $\mathcal L$ being imitated by one of many ``counterpart'' states of $\mathcal M$, depending on the present values of $\overline{V}$ and the marked $\overline{W}$ that each such state ``remembers'' (e.g., is labeled with). 
 
For the rest of this appendix, assume $\mathcal L$ is a provident solution of $H(\vec{y})$. Fix an arbitrary computation branch $B$ of $\mathcal M$, and let $\Gamma^{B}_{\infty}$ be the run spelled by $B$. From now on, whenever a context requires a reference to a computation branch but such a reference is missing, it should be understood as that we are talking about $B$. 
For simplicity, we shall assume that, in $B$, Environment made (legal) initial moves that brought $\ada H(\vec{y})$ down to 
$H(\vec{c})$ for some constants $\vec{c}$. Fix these $\vec{c}$.  The case of $B$ violating this assumption is not worth our attention for the reason of being trivial or, at least, much simpler than the present case. 

We also fix arbitrary arithmetical functions $\mathfrak{a},\mathfrak{s},\mathfrak{t}$. We may assume that all three functions are unary, or else replace them with their unarifications. Since the parameters $B$, $\Gamma_{\infty}^{B}$, $\vec{c}$, $\mathfrak{a}$, $\mathfrak{s}$, $\mathfrak{t}$ are arbitrary, it is sufficient for us to focus on them
 and just show that the three conditions of the lemma are satisfied in the context of these particular parameters. For instance, to show that $\mathcal M$ plays $\ada H(\vec{y})$ quasilegally, it is sufficient to show that $\Gamma^{B}_{\infty}$ is a $\pp$-quasilegal run of $\ada H(\vec{y})$. 

We extend the notation $\Gamma^{B}_{\infty}$\label{xgamm} from $B$ to any computation branch $C$ of either $\mathcal M$ or $\mathcal L$, stipulating that $\Gamma^{C}_{\infty}$ is the run spelled by $C$. We further agree that, for any $i\geq 0$, $\Gamma^{C}_{i}$ stands for the position spelled on the run tape of the corresponding machine at step $i$ of branch $C$, and $\ell^{C}_{i}$\label{xellc} stands for the background of that step. We also agree that $W_{i}^{C}$\label{xdublie} denotes the so-far-seen semiposition at step $i$ of branch $C$, and $V_{i}^{C}$\label{xvublie} denotes the so-far-authored semiposition at step $i$ of  $C$. Finally, since $\ada H(\vec{y})$ and $H(\vec{c})$ are the only formulas/games we deal with in this appendix, without risk of ambiguity  we will often omit references to them 
when saying ``legal'', ``quasilegal'', ``prudent'' etc. 

Consider any $i$ such that $W_{i}^{B}$ is legitimate. The legitimacy of   $W_{i}^{B}$ means it has a completion  $U=\xx_1\alpha_1\ldots\xx_n\alpha_n\beta\blank$  such that 
the position $\Omega=\seq{\xx_1\alpha_1,\ldots,\xx_n\alpha_n\beta}$  spelled by $U$ 
is  legal. Let $k$ be the number of $\oo$-labeled moves in $\Omega$. And let $C$ be the computation branch of $\mathcal M$ in which Environment acts exactly as it does in $B$, with only the following two differences:
(1) Environment stops making any moves after it makes its $k$th move (meaning that, if $k=0$, Environment simply never moves); (2) If $\xx_n=\oo$, Environment's $k$th move (i.e., the $n$th move of the play) is $\alpha_n\beta$. Of course, $C$ spells a legal run. For this reason, in this branch $\mathcal M$ behaves just like $\mathcal L$ in the branch $D$ where the environment makes exactly the same moves, in exactly the same order and at exactly the same times,  as in $C$. We call such a $D$ the 
{\bf $W_{i}^{B}$-induced branch of $\mathcal L$}.\label{xwiin} The following two lemmas are simple observations, hardly requiring any proofs: 

\begin{lem}\label{wha}
Consider any $j\geq 0$ such that $\Gamma_{j}^{B}$ is legitimate, and let $D$ be the $W_{j}^{B}$-induced branch of $\mathcal L$. We have:
\begin{enumerate}[label=\arabic*.]
\item In $D$, $\mathcal L$'s environment  makes no moves at any step $e$ with $e>j$. 

\item $\Gamma_{\infty}^{D}$ is a legal run of $\ada H(\vec{y})$. 

\item The initial segment of $\Gamma_{\infty}^{B}$ that brings $\ada H(\vec{y})$ down to $H(\vec{c})$ is also an initial segment of 
$\Gamma_{\infty}^{D}$.

\item $V_{j+1}^{D}=V_{j+1}^{B}$, and hence also $(\Gamma^{D}_{j+1})^\top=(\Gamma^{B}_{j+1})^\top$. 

\item For any $e\geq 0$, $\ell_{e}^{D}\leq \ell_{e}^{B}$.
\end{enumerate}
\end{lem}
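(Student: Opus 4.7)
The plan is to verify each of the five clauses by carefully unfolding the construction of the $W_{j}^{B}$-induced branch $D$ of $\mathcal L$, which was built via the auxiliary branch $C$ of $\mathcal M$ in which Environment reproduces its $B$-moves up to and including its $k$th one (with that $k$th move possibly extended to $\alpha_n\beta$ so as to complete $W_j^B$ to a legal position $\Omega$), and then transferring this Environment scenario to $\mathcal L$. Clause 1 is immediate from this construction: Environment's $k$th move in $B$ had already been (at least partially) scanned by step $j$, and since in $C$ (and hence in $D$) Environment is stipulated not to move beyond its $k$th move, no Environment move happens in $D$ at any step $e>j$. Clause 2 is equally immediate: the full run of Environment moves in $D$ is an initial segment of $\Omega$ (namely the first $k$ $\oo$-labeled moves of $\Omega$, with the $k$th enlarged to $\alpha_n\beta$), which is a legal position of $\ada H(\vec y)$; since $\mathcal L$ is assumed to be a solution of $H(\vec y)$, it cannot respond with an illegal move when its adversary plays legally. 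Clause 3 follows because any legal position of $\ada H(\vec y)$, including $\Omega$, must begin with the bring-down moves $\oo c_1,\ldots,\oo c_n$, and these are among the first $k$ Environment moves that $D$ replays at the same cycles as $B$.

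Clause 4 is the crucial link between $\mathcal M$ and $\mathcal L$. The key invariant, built into the very construction of $\mathcal M$, is that so long as the so-far-seen semipositions $W_0^B,\ldots,W_j^B$ remain legitimate, $\mathcal M$ imitates $\mathcal L$ cycle by cycle: both machines scan the same symbols, transition in matching ways (differing only in the state-level bookkeeping of $\overline V$ and the marked $\overline W$), and emit identical buffer writes and moves at identical times. Combined with the observation that through step $j$ the scanning head has seen only those run-tape cells that hold $W_j^B$, which agree in $B$ and in $D$ because $D$'s Environment reproduces those cells verbatim, this gives $V_{j+1}^{D}=V_{j+1}^{B}$, from which the equality of the $\pp$-labeled projections through step $j+1$ follows at once.

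Clause 5 reduces to the observation that the background at a given cycle depends solely on the $\oo$-labeled moves made by that cycle. For $e\leq j+1$ the two runs agree on Environment's moves, so $\ell_e^D=\ell_e^B$; for $e>j+1$, clause 1 guarantees that Environment contributes no new labmoves in $D$, whence $\ell_e^D=\ell_{j+1}^{D}=\ell_{j+1}^{B}\leq \ell_e^B$. The one subtle point, which I flag as the main (mild) obstacle, is to justify that the $k$th Environment move in $D$, namely $\alpha_n\beta$, does not exceed in magnitude the corresponding move in $B$: this is secured by choosing $\beta$ to be the lexicographically shortest string witnessing the legitimacy of $W_j^B$, so that $\alpha_n\beta$ is no longer than whatever the actual $k$th move of Environment in $B$ was (that move's magnitude already dominates $|\alpha_n\beta|$ by minimality of $\beta$). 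With this choice in place, the background of $D$ cannot surpass that of $B$ at any step, completing the proof.
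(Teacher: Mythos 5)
The paper itself offers no proof of this lemma: it and Lemma \ref{whaha} are dismissed as ``simple observations, hardly requiring any proofs.'' So there is no official argument to match against; your approach of unfolding the construction of the $W_j^B$-induced branch is the natural (indeed the only) one, and your handling of clauses 1--4 is sound at the paper's level of rigor.

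Your clause 5 argument, however, contains a flawed step at exactly the spot you flag. The parenthetical ``that move's magnitude already dominates $|\alpha_n\beta|$ by minimality of $\beta$'' conflates the \emph{length} of the string $\alpha_n\beta$ (which is what $|\alpha_n\beta|$ denotes) with its \emph{magnitude}, i.e.\ the size of its numer; these are different quantities, and the asserted domination is simply false in general --- for instance $B$'s actual $k$th $\oo$-move $\delta$ could be non-numeric (magnitude $0$) while $\alpha_n\beta$ certainly has positive length. Minimality of $\beta$ controls length, not magnitude, and the one does not transfer to the other. What the argument actually needs, and what a careful choice of completion does deliver, is that the \emph{numer} of $\alpha_n\beta$ is no longer than the numer of $\delta$: e.g.\ by taking $\beta$ so that $\alpha_n\beta$ is a prefix of $\delta$ whenever $\delta$ is itself a legal move, and otherwise so that $\alpha_n\beta$ ends in an empty (or at any rate short) numer. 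You are right that the paper's unconstrained ``some completion'' leaves room for a $\beta$ that would falsify clause 5, so a specific choice genuinely has to be made --- but the justification you attach to the choice should compare numers, not string lengths. A separate but harmless overstatement: ``For $e\leq j+1$ the two runs agree on Environment's moves, so $\ell_e^D=\ell_e^B$'' is not an equality --- $B$'s run tape may already carry $\oo$-labeled moves at cycles $\leq j$ that $\mathcal M$'s run-tape head has not yet visited, hence absent from $W_j^B$ and therefore from $D$; only $\ell_e^D\leq\ell_e^B$ holds, which is all clause 5 needs.
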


\begin{lem}\label{whaha}
There is a number $s$ such that, for every $j\geq s$, $W_j=W_s$. The smallest of such numbers $s$ we call the 
{\bf $W$-stabilization point}.\label{xlvw}  
\end{lem}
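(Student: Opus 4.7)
The plan is to analyze how $W_{j}^{B}$ can evolve with $j$ and argue that it must eventually become constant. Only two mechanisms can change $W_{j}^{B}$: either $\mathcal{M}$'s run-tape scanning head visits a cell that has never been visited before (extending the visited initial section), or Environment writes a new labmove on the run tape (turning previously blank cells into labeled ones). Since a semiposition admits at most one trailing $\blank$, extending the visited section by pure blanks beyond the first such cell contributes nothing further to $W_{j}^{B}$.

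I would split the analysis into two cases depending on whether $W_{j}^{B}$ ever becomes illegitimate. Suppose first that it does, at some smallest step $j_{0}$. By the construction of $\mathcal{M}$ described at the opening of this appendix, $\mathcal{M}$ then retires (possibly after emitting a single retirement move), so from some step $j_{1}\geq j_{0}$ onward its scanning head never moves again and the visited section of the run tape is permanently frozen. Beyond that step, the only way $W_{j}^{B}$ could still change is for Environment to write additional labmoves; but each such write appends to the run immediately after the current last labmove, thereby strictly extending the labeled prefix of the visited section. Since the portion of the visited section lying to the right of the current last labmove has bounded length, only finitely many such extensions can occur before Environment's writes fall entirely outside the visited section and become invisible to $W_{j}^{B}$. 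Therefore $W_{j}^{B}$ changes only finitely often in this case and must stabilize.

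Suppose instead that $W_{j}^{B}$ remains legitimate for every $j$. Then $\mathcal{M}$ never retires and, by construction, mimics $\mathcal{L}$ step-for-step in the corresponding induced branch $D$. By Lemma \ref{wha}, $\Gamma_{\infty}^{D}$ is a legal run of $\ada H(\vec{y})$; since $H(\vec{y})$ is bounded, any legal run contains at most the constant $\mathfrak{e}$ labmoves, and since $\mathcal{L}$ is a provident solution, every move a player begins to construct is completed within finite time. Hence after some finite step no further labmoves are ever added to the run, so the labeled part of the run tape is thereafter fixed. Any subsequent motion of $\mathcal{M}$'s scanning head into previously unvisited cells then encounters only blanks, and extending the visited section by blanks changes $W_{j}^{B}$ at most once (by toggling the trailing $\omega$ from $\epsilon$ to $\blank$). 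Consequently $W_{j}^{B}$ stabilizes in this case as well, and the smallest step beyond which no further change occurs is the desired $W$-stabilization point.

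The main point requiring care is the bookkeeping for the illegitimate case: one must verify that every Environment-write that touches the visited section strictly shortens the visited blank suffix, so that this process terminates after boundedly many such writes, and that all subsequent writes falling entirely past the visited section have no effect on $W_{j}^{B}$. Once these observations are in place, the conclusion follows uniformly in both cases.
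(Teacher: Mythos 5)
The paper actually offers no argument for this lemma at all: it is dismissed, together with Lemma \ref{wha}, as a "simple observation, hardly requiring any proofs." So there is no canonical proof to compare against, and the only question is whether your argument is sound. Your overall strategy — split on whether the so-far-seen semiposition ever goes illegitimate — is a reasonable organisation, but both halves rest on a misreading of what $W_j^B$ actually is.

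In your first case you treat $W_j^B$ as if it reflected the current content of the run tape on the visited section, so that Environment's later writes can keep altering it and you must count "extensions." This overlooks the $\blank$-exception in the definition: $W_j^B$ records only what the head has actually seen, and a cell observed as $\blank$ and never revisited stays $\blank$ in $W_j^B$ no matter what Environment writes there afterwards (this is exactly the "what the machine knows" intuition the paper spells out right after the definition). Once $\mathcal M$ retires and the head freezes, $W_j^B$ is therefore already stable, except possibly for one refinement of the cell the head happens to be parked on. Moreover, since an HPM's head can never sit past the leftmost blank cell, every visited cell except the rightmost is already non-blank, so at most one write could ever fall inside the visited section anyway. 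Your conclusion is right, but the mechanism you describe is the wrong one, and the "boundedly many extensions" bookkeeping you flag as the delicate point is an artifact of the misreading.

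In your second case the pivotal claim — "after some finite step no further labmoves are ever added to the run, so the labeled part of the run tape is thereafter fixed" — simply does not follow. Lemma \ref{wha} and the providence of $\mathcal L$ govern the $W_j^B$-induced branches $D$, each of which is a $\oo$-legal run of $\ada H(\vec y)$; they say nothing about $\mathcal M$'s own run in $B$, where Environment is under no obligation to ever stop moving. You are quietly identifying $\mathcal M$'s run tape in $B$ with $\mathcal L$'s run tape in (some varying) $D$, but these need not have the same content, and there is no single $D$ to refer to: it changes with $j$. The argument that actually closes the case is more direct and needs neither providence nor Lemma \ref{wha}: if $W_j^B$ is always legitimate, it has a completion spelling a legal position of $\ada H(\vec y)$ and hence contains at most $\mathfrak{D}$ labels (the appendix's constant, not $\mathfrak{e}$, which belongs to Section 5); all visited cells but the last are non-blank and so lie within those at most $\mathfrak{D}$ labmoves of the branch $B$, each a finite string; hence the length of the visited section, and with it $W_j^B$, is bounded. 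Since $W_j^B$ changes only when the head makes a fresh observation (it extends, or a trailing $\blank$ gets refined to a non-blank symbol), it is monotone and bounded, so it stabilises.
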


Having set up the above preliminaries, we prove the lemma clause by clause.\medskip
\begin{description}
\item[{\rm\em CLAUSE 1}] Our goal is to show that: 
\begin{eqnarray}
 && \mbox{\em $\Gamma^{B}_{\infty}$ is $\pp$-won (so, $\mathcal M$ is a solution of $\ada H(\vec{y})$);}\label{po1}\\
 && \mbox{\em $\Gamma^{B}_{\infty}$ is  $\pp$-quasilegal (so, $\mathcal M$ plays  quasilegally);}\label{po2}\\ 
 && \mbox{\em $B$ is provident (so, $\mathcal M$ plays unconditionally providently).}\label{po3} 
\end{eqnarray}
\begin{description}
\item[(\ref{po1})] From the description of $\mathcal M$ we can see that, if $\Gamma^{B}_{\infty}$ is  $\oo$-legal and thus the so-far-seen semiposition always remains legitimate, $\mathcal M$ interacts with its environment exactly like $\mathcal L$ interacts with its environment in the corresponding scenario\footnote{Namely, in the computation branch where $\mathcal L$'s environment makes exactly the same moves at exactly the same times and in exactly the same order as in $B$.}  and, since $\mathcal L$ is a solution of $\ada H(\vec{y})$, $\Gamma^{B}_{\infty}$ is $\pp$-won. And if $\Gamma^{B}_{\infty}$ is $\oo$-illegal, then $\Gamma^{B}_{\infty}$ is  automatically $\pp$-won. 

\item[(\ref{po2})]  For a contradiction, suppose $\Gamma^{B}_{\infty}$ is not $\pp$-quasilegal.  Let $i$ be the smallest number such that the position $\Gamma^{B}_{i}$  is not  $\pp$-quasilegal. Let $\phi$ be the (``offending'') move that $\mathcal M$ made at step $i$ of $B$. 

Assume $W_{i-1}$ is legitimate. Let $D$ be the $W_{i-1}$-induced branch of $\mathcal L$. According to clause 4 of Lemma \ref{wha}, $(\Gamma_{i}^{D})^\top=(\Gamma_{i}^{B})^\top$. So, $\Gamma_{i}^{D}$ is not $\pp$-quasilegal, and then the same holds for the extension 
$\Gamma_{\infty}^{D}$ of $\Gamma_{i}^{D}$. Of course, $\Gamma_{\infty}^{D}$'s not being $\pp$-quasilegal implies that it is simply illegal. But this contradicts clause 2 of Lemma \ref{wha}, according to which  $\Gamma_{\infty}^{D}$ is legal. 

Now assume $W_{i-1}$ is not legitimate. Note that $i\geq 2$, because,
at the initial step $0$, $\mathcal M$ would not be able to see an illegitimate semiposition (at best, $\mathcal M$ would only see the label $\oo$ in the leftmost cell, nothing else).
Further note that the semiposition 
$W_{i-2}$ is legitimate, because otherwise $\mathcal M$ would have retired right after seeing it and thus would not have moved at step $i$. As soon as $\mathcal M$ sees the illegitimate $W_{i-1}$, it retires. Thus, the move $\phi$ made at step $i$ is a retirement move. Looking back at the conditions under which $\mathcal M$ makes a retirement move, we see that the so-far-authored semiposition $V_{i}^{B}$  should be complete and quasilegitimate. Let $\Sigma_{i}^{B}$ be the position spelled by $V_{i}^{B}$. So, $\Sigma_{i}^{B}$ is $\pp$-quasilegal. But note that $(\Sigma_{i}^{B})^\top=(\Gamma_{i}^{B})^\top$, and thus we are facing a contradiction because, as we remember, $\Gamma_{i}^{B}$ is not $\pp$-quasilegal.

\item[(\ref{po3})] As already noted in the proof of (\ref{po1}), if the run $\Gamma^{B}_{\infty}$ is  $\oo$-legal, $\mathcal M$ and its environment behave exactly like $\mathcal L$ and its environment in the corresponding scenario. Then, since $\mathcal L$ plays providently, 
$B$ is a provident branch. Suppose now  $\Gamma^{B}_{\infty}$ is $\oo$-illegal. 

First, assume the so-far-seen semiposition in $B$ becomes illegitimate at some step $i$. Note that ($i>0$ and)  $W_{i-1}^{B}$ is legitimate. Let $D$ be the $W_{i-1}^{B}$-induced branch of $\mathcal L$. By clauses 2 and 4 of Lemma \ref{wha}, 
$\Gamma^{D}_{\infty}$  is   $\oo$-legal and $V_{i}^{D}=V_{i}^{B}$. The semiposition $V_{i}^{D}$  must be quasilegitimate because otherwise, as can be seen with a little thought, (the provident) $\mathcal L$ will have to make an illegal move in branch $D$ at some point.  But, in branch $B$, $\mathcal M$ retires immediately after seeing the non-legitimate $W_i$. The only possibility for the buffer content of $\mathcal M$ to remain nonempty after retirement would be if 
$V_{i}^{B}$ was not quasilegitimate. However, as just observed, this is not the case. 

Now assume  the so-far-seen semiposition in $B$ never becomes illegitimate. Let $i$ be the $W$-stabilization point (which exists according to Lemma \ref{whaha}).  And let $D$ be the $W_{i}^{B}$-induced branch of $\mathcal L$. 
It is not hard to see that, throughout the entire play, $\mathcal M$ behaves --- makes moves, puts strings into the buffer, repositions scanning heads --- the same way  in $B$ as $\mathcal L$ behaves in $D$. From clause 2 of lemma \ref{wha}, we also know that  $D$ spells a $\oo$-legal run and hence, due to $\mathcal L$'s playing providently, 
$D$ contains infinitely many steps with empty buffer contents. Then so
does $B$. That is, $B$ is provident.
\end{description}

\item[{\rm\em CLAUSE 2}] Assume $\mathcal L$ is a prudent solution of $H(\vec{y})$. We want to show that the run $\Gamma^{B}_{\infty}$ is $\pp$-prudent. For a contradiction, deny this. Let $i$ be the smallest number such that 
$\Gamma^{B}_{i}$  is not $\pp$-prudent. Note that $i>0$. It is obvious that  a move is made in $B$ at step $i$. Let us call that move $\phi$. 

Assume $W_{i-1}^{B}$ is legitimate. Let $D$ be the $W_{i-1}^{B}$-induced branch of $\mathcal L$.  Clauses 3 and 4 of Lemma \ref{wha} imply that 
$\Gamma_{i}^{D}$ is not $\pp$-prudent, and then the same holds for the extension $\Gamma_{\infty}^{D}$ of $\Gamma_{i}^{D}$. At the same time, by clause 2 of the same lemma,  $\Gamma_{\infty}^{D}$ is legal. This is a contradiction, because $\mathcal L$ is a prudent solution of $H(\vec{y})$ and, as such, it could not have generated a ($\oo$-)legal run ($\Gamma_{\infty}^{D}$) that is not $\pp$-prudent. 

Now assume $W_{i-1}^{B}$ is not legitimate. Then, just as in the last paragraph of our proof of (\ref{po2}),  $i\geq 2$, $W_{i-2}^{B}$ is legitimate, and  $\phi$  is a retirement move. Let $D$ be the $W_{i-2}^{B}$-induced branch of $\mathcal L$. 
Analyzing the conditions under which $\mathcal M$ makes a retirement move, we see that $\phi$ (rather than some proper prefix of it) was the content of $\mathcal M$'s buffer at step $i-1$ of $B$. Then, by clause 4 of Lemma \ref{wha}, the same is the case for $\mathcal L$'s buffer in branch $D$. But, since $\mathcal L$ plays providently and (by clause 2 of Lemma \ref{wha}) $\Gamma^{D}_{\infty}$ is legal, in $D$, sooner or later $\mathcal L$ will have to make a move $\phi'$ such that $\phi$ is a prefix of $\phi'$. Obviously such a move $\phi'$, if legal, will inherit the imprudence of $\phi$. This, together with clause 2 of Lemma \ref{wha}, contradicts  our assumption that $\mathcal L$ is a prudent solution of $H(\vec{y})$.\medskip

\item[\rm\em CLAUSE 3]: Assume $\mathcal L$ is a  $(\mathfrak{a},\mathfrak{s},\mathfrak{t})$ tricomplexity solution of $H(\vec{y})$.\vspace{4pt}

{\em Amplitude}: Assume $\mathcal M$ makes a move $\phi$ at a step $i$. Let $m_\phi$ be the magnitude of $\phi$. We want to show that $m_\phi\leq \mathfrak{a}(\ell_{i}^{B})$. 

First, suppose  
$W_{i-1}^{B}$ is legitimate. Let $D$ be the $W_{i-1}^{B}$-induced branch of $\mathcal L$. In view of clause 4 of Lemma \ref{wha}, in $D$, $\mathcal L$ makes the same move $\phi$ at  the same time $i$. Since $\mathcal L$ plays in amplitude $\mathfrak{a}$ and since, by clause 2 of Lemma \ref{wha}, the run $\Gamma_{\infty}^{D}$ is legal, we have $m_\phi\leq \mathfrak{a}(\ell_{i}^{D})$. 
The desired $m_\phi\leq \mathfrak{a}(\ell_{i}^{B})$ follows from here by clause 5 of Lemma \ref{wha}.

Now suppose $W_{i-1}^{B}$ is not legitimate.  Then, as  in the last paragraph of our proof of (\ref{po2}) $i\geq 2$,  $W_{i-2}^{B}$ is legitimate, and  $\phi$ is a retirement move. Let $D$ be the $W_{i-2}^{B}$-induced branch of $\mathcal L$. And let $\beta$ be the content of $\mathcal M$'s buffer at step $i-1$ of $B$. By clause 4 of Lemma \ref{wha}, the same $\beta$ is in the buffer of $\mathcal L$ at step $i-1$ of $D$. At some step $s\geq i$ of $D$, the provident $\mathcal L$ should make a move $\gamma$ such that $\beta$ is a prefix of $\gamma$. Let $m_\gamma$ be the magnitude of that move.  Since the 
run spelled by $D$ is legal (clause 2 of Lemma \ref{wha}) and  $\mathcal L$ plays in amplitude $\mathfrak{a}$, we have $m_\gamma\leq \mathfrak{a}(\ell_{s}^{D})$. But, in view of clause 1 of Lemma \ref{wha}, $\ell_{s}^{D}=\ell_{i}^{D}$. Thus, $m_\gamma\leq \mathfrak{a}(\ell_{i}^{D})$. This, in view of clause 5 of Lemma \ref{wha}, implies $m_\gamma\leq \mathfrak{a}(\ell_{i}^{B})$. 
From the way we measure magnitudes  and from the way the windup operation is defined, it is clear that $m_\phi\leq m_\gamma$. Consequently, $m_\phi\leq \mathfrak{a}(\ell_{i}^{B})$.\vspace{4pt}

{\em Space}:  
Let $i$ be the $W$-stabilization point, and let $D$ be the $W_{i}^{B}$-induced branch of $\mathcal L$. If  $W_{i}^{B}$ is legitimate, then, as observed in the last paragraph of our proof of (\ref{po3}), $\mathcal M$'s behavior throughout $B$ is indistinguishable from that of  $\mathcal L$ in $D$; this, in view of clause 5 of Lemma \ref{wha}, 
 means that $B$, just like $D$, does not violate the $\mathfrak{s}$ space limits.  Now suppose $W_{i}^{B}$ is not legitimate. Whatever we said above still applies to the behavior of $\mathcal M$ up to (including) step $i-1$. After that it makes a transition to step $i$ and retires without consuming any additional space. So, the space consumption again remains within the limits of $\mathfrak{s}$.\vspace{4pt}

{\em Time}: 
Again, let  $i$ be the stabilization point, and let $D$ be the $W_{i}^{B}$-induced branch of $\mathcal L$.
If $W_{i}^{B}$ is legitimate, then, for the same reasons as in the case of space, $B$ does not violate the $\mathfrak{t}$ time limits.
 Now suppose $W_{i}^{B}$ is not legitimate. Whatever we said in the preceding sentence 
still applies to the behavior of $\mathcal M$ in $B$ up to (including) step $i-1$. Then $\mathcal M$  makes a transition to step $i$ and retires. If no move is made upon this transition, all is fine. And if a move is made, then, in view of the relevant clauses of Lemma \ref{wha}, it can be seen that the timecost of that move does not exceed the timecost of the move that the provident $\mathcal L$ would have to make in $D$ sooner or later after time $i-1$. So, the time bound $\mathfrak{t}$ is not violated.  
\end{description}

\section{Proof of Lemma \ref{reason}}\label{sap1}

Lemma  \ref{reason}, to a proof of which this appendix is exclusively devoted, reads: 
\begin{quote}{\em 
There is an effective procedure that takes an arbitrary bounded formula $H(\vec{y})$,  
 an arbitrary HPM $\mathcal N$ and constructs an HPM $\mathcal K$ such that, for any regular boundclass triple $\mathcal R$, if $H(\vec{y})$ is ${\mathcal R}\spa$-bounded and $\mathcal N$ is an $\mathcal R$ tricomplexity solution of $H(\vec{y})$, then $\mathcal K$ is a provident and prudent $\mathcal R$ tricomplexity solution of $H(\vec{y})$. 
}\end{quote}

\subsection{Getting started}\label{sgst}
Pick and fix an HPM $\mathcal N$ and a bounded formula $H=H(\vec{y})=H(y_1,\ldots,y_\mathfrak{u})$ with all free variables displayed. The case of $H(\vec{y})$ being elementary is trivial, so we assume that $H(\vec{y})$ contains at least one choice operator.  Fix $\mathfrak{D}$\label{xetr} as the maximum number of labmoves in any legal run of $\ada H$.
Further fix   $\mathfrak{S}$\label{xgtrft} as the superaggregate bound of $H$.  

 Assume $ {\mathcal R} $ is a regular boundclass triple such that the formula $H(\vec{y})$ is ${\mathcal R}\spa$-bounded and 
$\mathcal N$ is an ${\mathcal R}$ tricomplexity solution of $H(\vec{y})$. Note  that, by Lemma \ref{lagg}, $\mathfrak{S}\in{\mathcal R}\spa$. It is important to point out that our construction of $\mathcal K$ below does not depend on $\mathcal R$ or any assumptions on it.

In view of Lemma 10.1 of \cite{cl12} and with Remark 2.4 of \cite{AAAI} in mind, we may and will assume that $\mathcal N$ plays $H$ providently. Then Lemma \ref{vasa} (whose proof does not rely on the present lemma) allows us to further assume that $\mathcal N$ is a quasilegal, unconditionally provident and unconditionally $\mathcal R$ tricomplexity solution of $H$.

Following the notational practice of Section \ref{sls}, 
we shall write ${\mathcal R}\spa(\ell)$ as an abbreviation of the phrase 
``$O\bigl(\mathfrak{p}(\ell)\bigr)$ for some $\mathfrak{p}(z)\in{\mathcal R}\spa$''. Similarly for  ${\mathcal R}\tim(\ell)$ and ${\mathcal R}\amp(\ell)$.

The technique that we employ below is very similar to the one used in Section 11 of \cite{cl12}. 
 Our goal is to construct a machine \ ${\mathcal K}$\label{xkk} such that $\mathcal K$ is a provident and prudent ${\mathcal R} $-tricomplexity solution of 
$H(\vec{y})$.  From our construction it will be immediately clear that the construction is effective  as required. 

In both our description of the work of  $\mathcal K$ and our subsequent analysis of it, we shall rely --- usually only implicitly ---  on the  Clean Environment Assumption.  Making this assumption is safe because the desired properties of $\mathcal K$ are (1) being a solution of $H(\vec{y})$, (2) playing $H(\vec{y})$ providently, (3) playing $H(\vec{y})$ prudently  and (4)  playing $H(\vec{y})$ in   $\mathcal R$ tricomplexity. The definitions of all four  of these properties, unlike, for instance, the definitions of the {\em unconditional} versions of the last three  (cf. Section \ref{sq}),  only look at the $\oo$-legal plays of $\ada H$ by $\mathcal K$. This means  that it virtually does not matter what happens if $\mathcal K$'s adversary starts playing illegally. 

We design  $\mathcal K$  as a single-work-tape HPM.\label{xghjy} At the beginning of the play, as usual,  it waits --- without consuming any space --- till Environment chooses constants $\vec{c}$ for all $\mathfrak{u}$ free variables $\vec{y}$ of $H$. If this never happens, $\mathcal K$ is an automatic winner trivially complying with the providence, prudence and $\mathcal R$ tricomplexity conditions. Having said that, for the rest of this construction and our subsequent analysis of it, we shall assume that, in the scenario that we are considering, Environment indeed chose the constants $\vec{c}$ (fix them!) for $\vec{y}$  during an initial episode of the play. 

Let us agree that 
a {\bf quasilegal move}\label{xqlm} (of $H(\vec{c})$)  means a move that may appear,  with either label, in some quasilegal run of $H(\vec{c})$. And the {\bf truncation}\label{xtrun} of a move $\alpha$ is    the $H(\vec{c})$-prudentization of the 
longest  prefix  $\alpha'$ of $\alpha$  such that $\alpha'$ is also a prefix of some quasilegal move   
 Note that, in view of our earlier assumption that $H$ is not elementary, every move  has a (possibly empty) truncation.  

 Once all constants $\vec{c}$ are chosen by Environment, ${\mathcal K}$ 
computes the value of $\mathfrak{S}|\max(\vec{c})|$ and remembers it for possible use in the future. It is not hard to see that, in view of  the basic closure properties of boundclasses and the relevant conditions of Definition 2.2 of \cite{AAAI}, $\mathfrak{S}|\max(\vec{c})|$ can be computed and recorded in space ${\mathcal R}\spa|\max(\vec{c})|$ and time ${\mathcal R}\tim|\max(\vec{c})|$. For this reason, when trying to show that  $\mathcal K$  runs in tricomplexity $\mathcal R$, the present episode of computing and remembering  $\mathfrak{S}|\max(\vec{c})|$ can (and will) be safely ignored.

Upon the completion of the above step, ${\mathcal K}$  
 starts simulating $\mathcal N$ in the scenario where, at the very beginning of the play --- on cycle $0$, that is ---  the imaginary adversary of the latter chose the same constants $\vec{c}$ for the free variables of $H$ as ($\mathcal K$'s real) Environment did. A simulation would generally require  maintaining and continuously updating configurations of $\mathcal N$. However, the challenge is that ${\mathcal K}$ cannot afford to fully represent such configurations on its work tape. For instance, if all bounds in 
${\mathcal R}\spa$ are sublinear, representing the run tape content of $\mathcal N$ would require more than ${\mathcal R}\spa$ space. Similarly, the size of the content of the buffer of $\mathcal N$ could occasionally go beyond the ${\mathcal R}\spa$ bound. 
For the above reasons, when dealing with a $j$th computation step of the simulated $\mathcal N$,  
 we let  ${\mathcal K}$, on its work tape, only keep representations of the other (and some additional, previously redundant)  components of the corresponding configuration of ${\mathcal N}$. 
 Namely, with ``current'' below referring to an arbitrary given $j$th computation step of $\mathcal N$, on its work tape ${\mathcal K}$ maintains  the following pieces of information\footnote{
Together with the never-changing representation of the transition function of $\mathcal N$, as well as the earlier computed $\mathfrak{S}|\max(\vec{c})|$. Whenever possible, we prefer 
 not to mention explicitly these or similar, asymptotically irrelevant/superseded, pieces of information or events.} --- call them together the {\bf sketch}\label{xsketch} of the $j$th configuration (computation step) of ${\mathcal N}$:\medskip
\begin{description}
\item[1st component] The current state of ${\mathcal N}$. 

\item[2nd component] The current contents of the work tapes of ${\mathcal N}$. 

\item[3rd component]  The current  locations of the work-tape heads of ${\mathcal N}$. 

\item[4th component] The current  location of the run-tape head of ${\mathcal N}$. 
 
\item[5th component] The number of moves that ${\mathcal N}$ has made so far (at steps $\leq j$) in the play. 

\item[6th component] The current number of symbols in the buffer of ${\mathcal N}$. 

\item[7th component] The (possibly empty) string $\alpha$ that has been added to  the buffer of $\mathcal N$ when it made a transition to the $j$th step from the preceding, $(j-1)$th, step; here we stipulate that, if $j=0$, i.e., if there is no preceding step, then such a  string $\alpha$ is empty. 

\item[8th component] The truncation $\alpha'$ of the move $\alpha$ currently written in the buffer. 
\end{description}

\begin{lem}\label{lsk}
For any $j$, with $\ell$ standing for the background of the $j$'th step of the simulated $\mathcal N$, 
maintaining the sketch for that step takes ${\mathcal R}\spa(\ell)$ space. 
\end{lem}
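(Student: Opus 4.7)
The plan is to verify that each of the eight listed components of the sketch can be stored in ${\mathcal R}\spa(\ell)$ space, and then conclude by the linear closure of ${\mathcal R}\spa$ (condition 3 of Definition 2.2 of \cite{AAAI}), so that the total sketch also fits in ${\mathcal R}\spa(\ell)$. Since $\ell$ is, by the statement of the lemma, the background of $\mathcal N$'s $j$th step itself, the space bounds promised by $\mathcal N$'s own complexity profile can be invoked directly on input $\ell$, without any intermediate rescaling.

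First I would dispatch the constant-sized components. Component 1 (the current state of $\mathcal N$) requires $O(1)$ bits, as $\mathcal N$ has only finitely many states. Component 5 (the total number of moves $\mathcal N$ has already made) is bounded by the constant $\mathfrak{D}$, since $\mathcal N$ plays quasilegally; a binary counter of $O(\log \mathfrak{D})=O(1)$ bits suffices. Component 7 (the string appended to the buffer during the transition $j{-}1\to j$) is a single symbol, since an HPM writes at most one buffer symbol per transition. Components 3 and 4 (the work-tape head positions and the run-tape head position) are merely binary coordinates, of sizes $O(\log \mathfrak{s}(\ell))$ and $O(\log \ell)$ respectively, both trivially dominated by subsequent bounds.

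Next I would use $\mathcal N$'s unconditional complexity. By the assumption inherited from Section \ref{sgst}, $\mathcal N$ runs in unconditional space $\mathfrak{s}\in{\mathcal R}\spa$, so the combined contents of all work tapes (component 2) occupy at most $\mathfrak{s}(\ell)$ cells, which is ${\mathcal R}\spa(\ell)$ by definition. For component 6, the count of buffer symbols, I would invoke the fact that $\mathcal N$ plays $H(\vec{c})$ unconditionally prudently: whatever move the buffer eventually matures into is a prudent move of $H(\vec{c})$, and hence the current length of the buffer string cannot exceed $\mathfrak{S}|\max(\vec{c})|+O(1)$. Since the initial moves $\oo\# c_1,\ldots,\oo\# c_{\mathfrak{u}}$ reside on $\mathcal N$'s imaginary run tape, we have $|\max(\vec{c})|\le \ell$, and thus the buffer length is at most $\mathfrak{S}(\ell)+O(1)$; its binary encoding takes $O\bigl(\log\mathfrak{S}(\ell)\bigr)$ bits, which by Lemma \ref{lagg} is within ${\mathcal R}\spa(\ell)$.

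The most delicate component, and the one I expect to be the main obstacle, is component 8, the truncation $\alpha'$. By definition, $\alpha'$ is the $H(\vec{c})$-prudentization of a prefix of a quasilegal move, and therefore prudent in $H(\vec{c})$; consequently $|\alpha'|\le \mathfrak{S}|\max(\vec{c})|+O(1)\le \mathfrak{S}(\ell)+O(1)$ by monotonicity of $\mathfrak{S}$. Lemma \ref{lagg} then places this inside ${\mathcal R}\spa(\ell)$. This is the step that genuinely exploits the design choice of truncating rather than storing the raw buffer content, since the raw buffer could in principle be as large as $\mathfrak{a}(\ell)$, which need not lie in ${\mathcal R}\spa$. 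Once all eight bounds are established, their sum is absorbed by the linear closure of ${\mathcal R}\spa$, concluding that the whole sketch fits in ${\mathcal R}\spa(\ell)$ space, as claimed.
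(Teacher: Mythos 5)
Your proposal follows the same component-by-component outline as the paper, and several of the entries (1, 2, 3, 5, 7, 8) are handled in essentially the same way. However, there is a genuine gap in how you handle components 4 and 6, stemming from an assumption you invoke but do not actually have.

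For component 6 you write that you would ``invoke the fact that $\mathcal N$ plays $H(\vec{c})$ unconditionally prudently.'' That fact is not available. The assumptions set up in Section \ref{sgst} are that $\mathcal N$ is quasilegal, unconditionally provident, and of unconditional $\mathcal R$ tricomplexity --- but \emph{not} prudent. Indeed, prudence is exactly what Lemma \ref{reason} is in the business of \emph{manufacturing} via the truncation mechanism; assuming $\mathcal N$ already plays prudently comes close to begging the question and, in any case, is not something Lemma \ref{vasa} or the upstream reduction gives you. The paper instead bounds the buffer-length counter by appealing to unconditional providence (the buffer contents will mature into a move) together with quasilegality and the unconditional ${\mathcal R}\amp$ amplitude bound, yielding $m\leq {\mathcal R}\amp(\ell)$, whose binary representation is then pushed into ${\mathcal R}\spa$ via clause 5 of Definition 2.2 of \cite{AAAI}. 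Your tighter bound $\mathfrak{S}(\ell)+O(1)$ is not justified without prudence.

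Relatedly, your bound for component 4 --- that the run-tape head position is ``a binary coordinate of size $O(\log\ell)$'' --- is incorrect for the same underlying reason. Since $\mathcal N$ need not be prudent, the $\pp$-labeled moves on its imaginary run tape can have magnitude up to $\mathfrak{a}(\ell)$, so the run tape can have length ${\mathcal R}\amp(\ell)$, not $O(\ell)$. The head position therefore requires $|{\mathcal R}\amp(\ell)|$ bits, and one again needs clause 5 of Definition 2.2 of \cite{AAAI} to land in ${\mathcal R}\spa(\ell)$. You also skip the argument that the $\oo$-labeled moves on $\mathcal N$'s run tape have size $O(\ell)$, which the paper obtains from the fact that they are copies of $\mathcal K$'s adversary's moves and the Clean Environment Assumption. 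Your parenthetical ``trivially dominated by subsequent bounds'' does not repair either of these, since the subsequent bounds you actually establish rely on the unavailable prudence assumption.
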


\begin{proof} It is sufficient to verify that each of the eight components of the sketch, individually, can be maintained/recorded with ${\mathcal R}\spa(\ell)$ space. Below we shall implicitly rely on Remark 2.4 of \cite{AAAI}.
\begin{description}
\item[{\rm\em  1st component}] Recording this component, of course, takes a constant and hence  ${\mathcal R}\spa(\ell)$ amount of space. 

\item[{\rm\em  2nd component}] Since $\mathcal N$ runs in unconditional space ${\mathcal R}\spa$, this component can be represented with ${\mathcal R}\spa(\ell)$ space. 

\item[{\rm\em  3rd component}]  The amount of space needed for recording this component obviously does not exceed the preceding amount --- in fact, it is logarithmic in  ${\mathcal R}\spa(\ell)$.

\item[{\rm\em  4th component}] By our definition of HPMs from \cite{cl12}, the run-tape head can never go beyond the leftmost blank cell. So, how many non-blank cells may be on the imaginary run tape of $\mathcal N$? Since $\mathcal N$ plays in unconditional amplitude ${\mathcal R}\amp$, and since it plays $H$  quasilegally and hence makes at most $\mathfrak{D}$ moves, the $\pp$-labeled moves residing on $\mathcal N$'s  run tape only take ${\mathcal R}\amp(\ell)$ space. Next, as we are going to see later, all $\oo$-labeled moves residing on $\mathcal N$'s   run tape are copies (made by $\mathcal K$) of $\oo$-labeled moves residing on $\mathcal K$'s run tape, by the Clean Environment Assumption meaning that their quantity is bounded by  $\mathfrak{D}$, and also implying that those moves are quasilegal, due to which (not only their magnitudes but also) their sizes do not   exceed  $O(\ell)$.
For this reason, the $\bot$-labeled moves of $\mathcal N$'s run tape,  just like the $\pp$-labeled moves, 
 only take ${\mathcal R}\amp(\ell)$ of total space. Thus,  there are at most 
${\mathcal R}\amp(\ell)$ different possible locations of $\mathcal N$'s run-tape head. Representing any of such locations  takes $|{\mathcal R}\amp(\ell)|$ and hence --- by clause 5   of Definition 2.2 of \cite{AAAI} 
 ---  ${\mathcal R}\spa(\ell)$ space.  

\item[{\rm\em  5th component}] Since $\mathcal N$ plays $H$ quasilegally, the number of moves that ${\mathcal N}$ has made so far can never exceed $\mathfrak{D}$, so holding the 5th component in memory only takes a constant amount of space.

\item[{\rm\em  6th component}] Let $m$ be the number of symbols currently in $\mathcal N$'s buffer. Assume $m>0$, for otherwise holding it takes no space. Consider the scenario where $\mathcal N$'s adversary does not make any moves beginning from the current point. Since $\mathcal N$ is unconditionally provident, sooner or later it should make a move $\alpha$ that is an extension of the move currently in the buffer, so the number of symbols in $\alpha$ is at least $m$. But, since $\mathcal N$ plays $H$ quasilegally and runs in unconditional ${\mathcal R}\amp$ amplitude, the number of symbols in $\alpha$ cannot exceed ${\mathcal R}\amp(\ell)$. That is, $m$ does not exceed 
${\mathcal R}\amp(\ell)$. Holding such an $m$ therefore requires at most  $|{\mathcal R}\amp(\ell)|$ space, and hence --- again by clause 5 of Definition 2.2 of \cite{AAAI} ---   ${\mathcal R}\spa$ space. 

\item[{\rm\em  7th component}] Recording this component, of course,  only takes a constant amount of space. 

\item[{\rm\em  8th component}] With a moment's thought and with Lemma
  \ref{lagg} in mind, it can be seen that, since  $\alpha'$ is a
  truncation, the number of symbols in it does not exceed ${\mathcal
    R}\spa(\ell)$.\qedhere
\end{description}
\end{proof}\smallskip

\noindent Unfortunately, the sketch of a given computation step $j$ of ${\mathcal N}$ alone is not sufficient to fully trace the subsequent steps of ${\mathcal N}$ and thus successfully conduct simulation. 
The reason is that, in order to compute (the sketch of) the $(j+1)$th step of ${\mathcal N}$, one needs to know the content of the cell scanned by the run-tape head of ${\mathcal N}$. However, sketches do not keep track of what is on $\mathcal N$'s run tape, and that information --- unless residing on the run tape of ${\mathcal K}$ itself by good luck --- is generally forgotten. 
We handle this difficulty by letting the simulation routine  recompute the missing information every time such information is needed. This is done  through recursive calls to the routine itself. 
Properly materializing this general idea requires quite some care though. Among the crucial conditions for our recursive procedure to work within the required space limits is to make sure that the depth of the recursion stack never exceeds a certain constant bound. 

To achieve the above goal, we let ${\mathcal K}$, in addition to the sketches for the simulated steps of $\mathcal N$, maintain what we call the {\bf global history}.\label{xglhis} The latter is a list of all moves made by $\mathcal N$ and its adversary  throughout the  imaginary play of $H$ ``so far''. More precisely, this is not a list of moves themselves, but rather entries with certain partial information on those moves. Namely, the entry for each  
 move $\alpha$ 
does not indicate the actual content of $\alpha$ (which could require more than ${\mathcal R}\spa$ space),  but rather only the label of $\alpha$ ($\twg$ or $\tlg$, depending on whether $\alpha$ was made by $\mathcal N$ or its adversary) and the size of $\alpha$, i.e., the number of symbols in $\alpha$. Recording this information only takes $|{\mathcal R}\amp(\ell)|$ and hence ${\mathcal R}\spa(\ell)$  space.   Further, according to the forthcoming observation (\ref{gfjs}), the number of entries in the global history never exceeds $2\mathfrak{D}$ (in fact $\mathfrak{D}$, but why bother). Since $\mathfrak{D}$ is a constant, we find that ${\mathcal K}$ only consumes an ${\mathcal R}\spa(\ell)$  amount of space for maintaining the overall global history. While a move $\alpha$ is not the same as the entry for it in the global history, in the sequel we may terminologically identify these two. 

What do we need the global history for? As noted earlier, during its work, ${\mathcal K}$ will often have to resimulate some already simulated portions of the work of ${\mathcal N}$. To make such a resimulation possible, it is necessary to have information on the times at which the adversary of ${\mathcal N}$ has made its moves in the overall scenario that we are considering and re-constructing. Recording the actual move {\em times} as they were detected during the initial simulation, however, could take us beyond our target space limits. After all, think of a situation where $\mathcal N$ waits ``very long'' before its environment makes a move. So, instead, we only keep track --- via the global history --- of the {\em order} of moves. Then we neutralize the problem of not remembering the ``actual'' times of ${\mathcal N}$'s adversary's moves by simply assuming that ${\mathcal N}$'s adversary always makes its moves instantaneously in response to ${\mathcal N}$'s moves. The point is that, if ${\mathcal N}$ wins $H$, it does so in all scenarios, including the above scenario of instantaneously responding adversary.

It is important to note that, as will be immediately seen from our description of the work of ${\mathcal K}$,  the moves recorded in the global history at any step of the work of ${\mathcal K}$ are the same as the  moves on the run tape of ${\mathcal N}$. And   the latter, in turn, are copies 
of moves on the run tape of $\mathcal K$, with the only difference that, on $\mathcal K$'s run tape, the $\pp$-labeled moves appear in truncated forms. 	The orders of moves in the global history and on the run tape of $\mathcal N$ are exactly the same.  As for the run spelled on the run tape of $\mathcal K$, even if  truncation did not really modify $\mathcal N$'s moves, it may not necessarily be the same as the run spelled on the run tape of $\mathcal N$. Instead, the former is only guaranteed to be a $\pp$-delay of the latter (see Section 3 of \cite{cl12}). However, this kind of a difference, just like having the $\pp$-labeled moves truncated,  for our purposes (for $\mathcal K$'s chances to win) is just as good as --- or ``even better than'' --- if the two runs were exactly the same. 

The work of ${\mathcal K}$ relies on the three subprocedures called \transition,  \call\ and \main. We start with \transition.

\subsection{Procedure \transition}\label{sus}
 In the context of a given global history $\mathbb{H}$, this procedure  
takes the sketch $\mathbb{S}_{j}$ of a given computation step $j$ of $\mathcal N$, and returns the sketch $\mathbb{S}_{j+1}$ of the next computation step $j+1$ of $\mathcal N$. 

 Let $m$ be the 5th component of $\mathbb{S}_{j}$. The number $m$ tells us how many moves ${\mathcal N}$ had made by time $j$. 
In most cases, \transition\ will be used while re-constructing some past episode of ${\mathcal N}$'s work. It is then possible that the global history contains an $(m+1)$th  move by ${\mathcal N}$ (i.e., with label $\twg$). If so, then such a move, as well as all subsequent moves of $\mathbb{H}$,  are ``future moves'' from the perspective of the $j$th step of ${\mathcal N}$ that \transition\ is currently dealing with.
 This means that, when ``imagining'' the situation at the $j$th step of 
 ${\mathcal N}$, those moves should be discarded. So, let $\mathbb{H}'$ be the result of deleting from  $\mathbb{H}$ the $(m+1)$th $\pp$-labeled move  and all subsequent, whatever-labeled moves (if there are no such moves, then simply $\mathbb{H}'=\mathbb{H}$).
 Thus, $\mathbb{H}'$ is exactly a record of the moves that $\mathcal N$ would see --- in the same order as they appear in $\mathbb{H}'$ --- on its run tape at step $j$.  

The information contained in $\mathbb{S}_{j}$ is ``almost'' sufficient for \transition\ to calculate the sought value of $\mathbb{S}_{j+1}$. The only missing piece of information is the symbol $s$ scanned by the run-tape head of ${\mathcal N}$ on step $j$.  \transition\ thus needs, first of all, to figure out what that symbol $s$ is. To do this, \transition\ computes the sum $p$ of the sizes of all moves (including their labels) of $\mathbb{H}'$. 
Next, let $q$ (found in the 4th component of $\mathbb{S}_{j}$) be the number indicating the location of the run-tape head of ${\mathcal N}$ on step $j$. Note that, in the scenario that \transition\  is dealing with, the length of the ``active'' content of ${\mathcal N}$'s run tape is $p$, with cell $\#(p+1)$ and all subsequent cells being blank. So, \transition\ compares $q$ with $p$. If $q>p$, it concludes that $s$ is \blank. Otherwise, if $q\leq p$, $s$ should be one of the symbols of one of the  moves $\alpha$ recorded in $\mathbb{H}'$. From $\mathbb{H}$,  using some easy logarithmic-space arithmetic, \transition\ figures out the author/label $\xx$ of $\alpha$, and also finds two integers $k$ and $n$. Here $k$ is the number of moves made by $\xx$ before it made the move $\alpha$. And $n$ is the number such that the sought symbol $s$ is the $n$th symbol of $\alpha$. If $\xx=\oo$, using $k$ and $n$, \transition\ finds the sought symbol $s$ on the run tape of ${\mathcal K}$. Otherwise, if $\xx=\pp$, \transition\ calls the below-described procedure \call\ on $(k,n)$. As will be seen later, \call\  then returns the sought symbol $s$. Thus, in any case, \transition\ now knows the symbol $s$ read by the run-tape head of ${\mathcal N}$ on step $j$.

Keeping the above $s$ as well as the earlier computed value $\mathfrak{S}|\max(\vec{c})|$ in mind,\footnote{This value is (could be) needed for determining the 8th component of $\mathbb{S}_{j+1}$.} \transition\ now additionally consults $\mathbb{S}_{j}$  
 and finds 
(all 8 components of) the sought sketch $\mathbb{S}_{j+1}$   using certain rather obvious logarithmic space calculations, details of which we omit.

\subsection{Procedure \call}\label{sfs}
 In the context of a given global history $\mathbb{H}$, this procedure takes  two numbers $k,n$, where $k$ is smaller than the number of $\pp$-labeled moves in $\mathbb{H}$,
and $n$ is a positive integer not exceeding the length of the $(k+1)$th $\pp$-labeled move there. The goal of \call\ is to return, through rerunning ${\mathcal N}$, the $n$th symbol of the $(k+1)$th $\pp$-labeled move of $\mathbb{H}$.

To achieve the above goal, \call\ creates a sketch-holding variable $\mathbb{S}$, and sets the initial value of $\mathbb{S}$ to the {\bf initial sketch}.\label{xinsk} By the latter we mean the sketch of the initial configuration of ${\mathcal N}$, i.e., the configuration where ${\mathcal N}$ is in its start state, the buffer and the work tapes are empty,\footnote{As for the run tape, what is on it is irrelevant because a sketch has no record of the run-tape content anyway.} and all  scanning heads are looking at the leftmost cells of their tapes. 

After the above initialization step, \call\ performs the following subprocedure:

\begin{enumerate}
  \item Perform \transition\ on $\mathbb{S}$. Let $\mathbb{S}'$ be the resulting sketch, and let $\sigma$ be the 7th component of $\mathbb{S}'$. Below, as always, $|\sigma|$ means the length of (number of symbols in) $\sigma$. 
  \item Let $a$ and $b$ be the 5th and 6th components  of $\mathbb{S}$, respectively. If $a= k$ and $b< n\leq b+ |\sigma|$, then return the $(n- b)$th symbol of $\sigma$. Otherwise, update the value of 
 $\mathbb{S}$  to $\mathbb{S}'$,  and go back to step 1.
\end{enumerate}
Before proceeding, the reader may want to convince himself or herself that, as promised, \call\ indeed returns the $n$th symbol of the $(k+1)$th $\pp$-labeled move of $\mathbb{H}$.\enlargethispage{\baselineskip}

\subsection{Procedure \main}\label{smh}
This procedure takes a global history $\mathbb{H}$ as an argument  and, treating $\mathbb{H}$ as a variable that may undergo updates,  acts according to the following prescriptions: 
\begin{description}
  \item[Stage $1$] Create a variable 
$\mathbb{S}$ and initialize its value to the initial sketch of $\mathcal N$. Proceed to Stage $2$. 
\item[Stage $2$] Check out $\mathcal K$'s run tape to see if Environment has made a new move (this can be done, say, by counting the $\oo$-labeled moves on the run tape, and comparing their number with the number of $\oo$-labeled moves recorded in the global history). If yes,   update  $\mathbb{H}$ by adding to it a record for that move, and repeat \main.  If not, go to Stage $3$. 
\item[Stage $3$]  \ 
\begin{enumerate}[label=\({\alph*}]
  \item Perform \transition\ on $\mathbb{S}$. Let $\mathbb{T}$ be the resulting sketch.
  \item If $\mathcal N$ did not make a {\bf globally new} move\label{xgnm} on its transition from $\mathbb{S}$ to $\mathbb{T}$,\footnote{Here and later in similar contexts, we terminologically identify sketches with the corresponding steps of $\mathcal N$.} change the value of the variable $\mathbb{S}$ to $\mathbb{T}$, and  go back to Stage $2$.  
Here and later in similar contexts, by a ``globally new'' move we mean a move not recorded in the global history $\mathbb{H}$. Figuring out whether $\mathcal N$ made a globally new move is easy. Technically, $\mathcal N$ made a globally new move if and only if, firstly, it {\em did} make a move, i.e., the 1st component of $\mathbb{T}$ is a move state; and secondly, such a move is not recorded in $\mathbb{H}$, meaning that the 5th component of $\mathbb{T}$ exceeds the total number of $\pp$-labeled moves recorded in $\mathbb{H}$. 
  \item Suppose now $\mathcal N$ made a globally new move $\alpha$.  Let $\alpha'$ be the 8th component of $\mathbb{S}$. Thus, $\alpha'$ is the truncation of $\alpha$. 
Copy $\alpha'$ to the buffer (of ${\mathcal K}$) symbol by symbol, after which   go to a move state. This results in ${\mathcal K}$ making the move $\alpha'$  in the real play. Now update the global history $\mathbb{H}$ by adding to it a record for the move $\alpha$, and repeat \main.\enlargethispage{\baselineskip}
\end{enumerate}  
\end{description}

\subsection{The overall strategy and an example of its run} 
We continue our description of the overall work ${\mathcal K}$, started on page \pageref{xghjy} but interrupted shortly thereafter.  As we remember, at the very beginning of the play, ${\mathcal K}$ waited till Environment specified the $\mathfrak{u}$ constants $\vec{c}=c_1,\ldots,c_\mathfrak{u}$ for all free variables of $H$.  What ${\mathcal K}$  does after that is that it  creates the variable $\mathbb{H}$, initializes its value to record the sequence $\seq{\oo c_1,\ldots,\oo c_\mathfrak{u}}$, and then switches to running \main\ forever. This completes our description of $\mathcal K$.

Here we look at an example scenario to make sure we understand the work of  ${\mathcal K}$. Let 
\[H\ =\ \ada y\Bigl(|y|\leq | x |\mli \ade z\bigl(|z|\leq | x |\mlc p(z, y)\bigr)\Bigr)\mld \ada u\Bigl(|u|\leq | x |\mli \ade v\bigl(|v|\leq | x |\mlc q(u,v)\bigr)\Bigr).\]

\noindent Note that the superaggregate bound  of this formula is the identity function  $\mathfrak{S}(w)= w$. 

At the beginning of its work,  ${\mathcal K}$ waits till Environment specifies a value for $x$. Let us say $1001$ is that value. After calculating $\mathfrak{S}|1001|$, which in the present case is $4$,  ${\mathcal K}$ creates the variable $\mathbb{H}$ and sets its value to contain a record for the (single) labmove $\oo \#1001$.  The rest of the work of ${\mathcal K}$  just consists in running \main. So, in what follows, we can use ``${\mathcal K}$'' and ``\main'' as synonyms. 
 
During its initialization Stage $1$, \main\ creates 
the variable $\mathbb{S}$ and sets its value to the initial sketch of ${\mathcal N}$. 
The result of this step reflects the start situation,  where ``nothing has yet happened'' in the mixture of the real play of
 $H$ by ${\mathcal K}$ and the simulated play of $H$ by  $\mathcal N$, except for Environment's initial move $\#1001$.  

Now \main\ starts performing, over and over,  Stages $2$ and $3$. The work in those two stages can be characterized as ``global simulation''. This is a routine that keeps updating, one step at a time, the sketch $\mathbb{S}$ (Stage $3$) to the sketch of the ``next configurations'' of $\mathcal N$ 
in the scenario where the imaginary adversary  of $\mathcal N$ has made the move $\#1001$ at the very beginning of the play; every time the simulated $\mathcal N$ is trying to read some symbol of this move, ${\mathcal K}$ finds that symbol on its own run tape and feeds it back to the simulation. Simultaneously, \main\ keeps checking (Stage $2$) the run tape of ${\mathcal K}$ to see if Environment has made a new move. This will continue until either Environment or the simulated $\mathcal N$ is detected to make a new move. In our example, let us imagine that Environment makes the move $0.\#10$, signifying choosing the constant $10$ for $y$ in $H$. What happens in this case? 

\main\ simply restarts the global simulation by resetting the sketch $\mathbb{S}$  to the initial sketch of ${\mathcal N}$.  The earlier-described ``Stage $2$ and Stage $3$ over and over'' routine will be repeated, with the only difference that the global history $\mathbb{H}$ is now showing the presence of both  $\oo\# 1001$ and $\oo 0.\#10$. This means that the simulation of $\mathcal N$ will now proceed in the scenario where, at the very beginning of the play, $\mathcal N$'s adversary had made the two moves $\# 1001$ and $ 0.\#10$.  So,  every time the simulated $\mathcal N$ tries to read one of the symbols of either move on its imaginary run tape, \main --- ${\mathcal K}$, that is --- looks that symbol up on its own run tape. By switching to this new scenario, \main, in fact, deems the previous scenario invalid, and simply forgets about it.
This routine will continue until either Environment or $\mathcal N$, 
 again, is detected to make a move.

Let us say it is now $\mathcal N$, which makes the imprudent move $0.1.\#1111111$, signifying choosing the ``oversized'' (of size $>4$) constant $1111111$ for $z$ in $\mathcal H$.    In this event,  \main\ --- ${\mathcal K}$, that is --- assembles the truncation $0.1.\#1111$ of 
$0.1.\#1111111$ in its buffer copying it from the 8th component of $\mathbb{S}$, and then makes the move $0.1.\#1111$ in the real play. After that, 
as always when a new (lab)move  is detected, the global simulation restarts. Now the global history $\mathbb{H}$ is showing records for the sequence $\seq{\oo \#1001, \oo 0.\#10, \pp 0.1.\#1111111}$ of three moves.  In the present, 3rd attempt of global simulation, just like in the 2nd attempt, $\mathcal N$ is resimulated in the scenario where, at the beginning of the play, its adversary had made the moves $\# 1001$ and $ 0.\#10$. The only difference between the present attempt of global simulation and the previous one is that, once $\mathcal N$ is detected to make the expected move $0.1.\#1111111$, nothing special happens. Namely, the global history is not updated (as $0.1.\#1111111$ is already recorded there);  the move $0.1.\#1111$ is not made in the real play (as it already has been made); and  the global simulation continues in the ordinary fashion rather than restarts. 
The present attempt of global simulation, again,  
 will be interrupted if and when either Environment or the simulated $\mathcal N$  is detected to make a globally new move, i.e., a move not recorded in the global history. 

Let us say it is again Environment, which makes the move $1.\#1$, signifying choosing the constant $1$ for $u$ in $H$. As always, a record for the new move is added to $\mathbb{H}$,  and the global simulation restarts. The resimulation of $\mathcal N$ will start in the scenario where, at the beginning of the play, its adversary had made the moves $\# 1001$ and $ 0.\#10$. We already know that, in this scenario, sooner or later, $\mathcal N$ will make its previously detected move 
$0.1.\#1111111$. Once this event is detected, ${\mathcal N}$'s simulation continues for the scenario where its adversary responded by the move $1.\#1$ {\em immediately} after $\mathcal N$ made the move $0.1.\#1111111$. 

Imagine that the final globally new move detected is one by $\mathcal N$, and such a move is $1.1.\#0$, signifying choosing the constant $0$ for $v$ in $H$.  \main\ copies this move in the truncated form --- which remains the same $1.1.\#0$ because this move is quasilegal and prudent --- in the real play. Then, as always, $\mathbb{H}$ is correspondingly updated,  and the global simulation is restarted with that updated $\mathbb{H}$.

The last attempt of global simulation (the one that never got discarded/reconsidered) corresponds to the ``ultimate'' scenario that determined  ${\mathcal K}$'s real play. Namely, in our present example, the ``ultimate'' scenario in which ${\mathcal N}$ was simulated is that, at the very beginning of the play, ${\mathcal N}$'s adversary had made the moves  $\# 1001$ and $ 0.\#10$, to which ${\mathcal N}$ later responded with $0.1.\#1111111$, to which ${\mathcal N}$'s adversary immediately responded with $1.\#1$, to which, some time later, $\mathcal N$ responded with $1.1.\#0$, and no moves were made ever after. While the imaginary run generated by $\mathcal N$ in this scenario is 
\begin{equation}\label{amind1}
\seq{\oo \# 1001,\ \oo 0.\#10, \ \pp 0.1.\#1111111,\ \oo 1.\#1,\ \pp 1.1.\#0},
\end{equation}
 the real run generated by ${\mathcal K}$ is
\begin{equation}\label{amind2}
\seq{\oo \# 1001,\  \oo 0.\#10, \ \pp 0.1.\#1111,\ \oo 1.\#1,\ \pp 1.1.\#0},
\end{equation}
with (\ref{amind2}) being nothing but the result of replacing in (\ref{amind1}) all $\pp$-labeled moves by their truncations.  Since it is our assumption that $\mathcal N$ wins $H$, (\ref{amind1}) is a $\pp$-won run of $H$. But then so is (\ref{amind2}) because, as noted earlier, truncating a given player's moves can (increase but) never decrease that player's chances to win.

Why do we need to restart the global simulation every time a globally new move is detected? The reason is that otherwise we generally would not be able to rely on calls of \call\ for obtaining required symbols.  Going back to our example, imagine we did not restart the global simulation (\main) after the moves $\# 1001$ and $0.\#10$ were made by Environment. Perhaps (but not necessarily), as before, $\mathcal N$ would still make its move $0.1.\#1111111$ sometime after  $0.\#10$. Fine so far. But the trouble starts when, after that event, ${\mathcal N}$ tries to read some symbol of $0.1.\#1111111$ from its imaginary run tape. A way to provide such a symbol is to invoke \call, which will resimulate $\mathcal N$ to find that symbol. However, in order to properly resimulate $\mathcal N$ up to the moment when it made the move $0.1.\#1111111$ (or, at least, put the sought symbol of the latter into its buffer), we need to know when (on which computation steps of $\mathcal N$), exactly, the labmoves $\oo \# 1001$ and $\oo 0.\# 10$ emerged on $\mathcal N$'s run tape. Unfortunately,  we do not remember this piece of information, because, as noted earlier, remembering the exact times (as opposed to merely remembering the order) of moves may require more space than we possess. So, instead, we assume that the moves $\# 1001$ and $0.\# 10$ were made right at the beginning of $\mathcal N$'s play. This assumption, however, disagrees with the scenario of the original simulation, where $\# 1001$ was perhaps only made at step $888$, and $0.\# 10$ perhaps at step $77777$. Therefore, there is no guarantee that $\mathcal N$ will still generate the same move $0.1.\#1111111$ in response to those two moves. Restarting the global simulation --- as we did --- right after $\# 1001$ was made, and then restarting it again after 
$0.\# 10$  was detected,  neutralizes this problem. If $\mathcal N$ made its move $0.1.\#1111111$ after $0.\# 10$ in this new scenario (the scenario where its imaginary adversary always acted instantaneously), then every later resimulation, no matter how many times \main\ is restarted, will again take us to the same move $0.1.\#1111111$ made after $0.\# 10$, because the global history, which ``guides''  resimulations, will always be showing the first three labmoves in the order $\oo \# 1001, \oo 0.\#10,  \pp 0.1.\#1111111$. To see this, note that all updates of the global history only add some moves to it, and otherwise do not affect the already recorded moves or their order.

We also want to understand one remaining issue. As we should have noticed, \call\ always calls \transition, and the latter, in turn, may again call \call. Where is a guarantee that infinitely many or ``too many'' nested calls will not occur? Let us again appeal to our present example, and imagine we (\transition, that is) are currently simulating a step of $\mathcal N$ sometime after it already has made the move $1.1.\#0$. Whenever $\mathcal N$ tries to read a symbol of $\pp 1.1.\#0$, \call\ is called to resimulate  ${\mathcal N}$ and find that symbol. While resimulating ${\mathcal N}$, however, we may find that, at some point, its run-tape head is trying to read a symbol of the earlier labmove $\pp 0.1.\#1111111$. To get that symbol, \call\ will be again called  to resimulate $\mathcal N$  and find that symbol. Can this process of mutual calls go on forever? Not really. Notice that, when \call\ is called  to find the sought symbol of $\pp 0.1.\#1111111$, \call, guided by the global history, will resimulate $\mathcal N$ only up to the moment when it made the move $0.1.\#1111111$. But during that episode of $\mathcal N$'s work, the labmove $\pp 1.1.\#0$ was not yet on its run tape. So, \call\ will not have to be called further. Generally, as we are going to see in Section \ref{sdfs}, there can be at most a constant number of nested invocations of \call\ or \transition.

\subsection{\texorpdfstring{$\mathcal K$}{K} is a provident and prudent solution of \texorpdfstring{$H$}{H}} 
Consider an arbitrary play by (computation branch of)  $\mathcal K$, and fix it for the rest of this appendix.

As seen from the description of \main, the $\oo$-labeled moves recorded in  $\mathbb{H}$ are the moves made by ($\mathcal K$'s real) Environment. Since the latter is assumed to play legally, the number of $\oo$-labeled moves in $\mathbb{H}$ cannot exceed $\mathfrak{D}$. Similarly, the $\pp$-labeled moves of $\mathbb{H}$ are the moves made by $\mathcal N$ in a certain play.  Therefore, as $\mathcal N$ is a quasilegal solution of $\ada H$, the number of such moves cannot exceed $\mathfrak{D}$, either. Thus, with ``never'' below meaning ``at no stage of the work of $\mathcal K$'',  we have:
\begin{equation}\label{gfjs}
\mbox{\em The number of labmoves in $\mathbb{H}$ never exceeds $2\mathfrak{D}$.}\footnote{In fact, with some additional analysis,  $2\mathfrak{D}$ can be lowered to $\mathfrak{D}$ here, but why bother.}
\end{equation}
Since every iteration of \main\ increases the number of labmoves in $\mathbb{H}$, an immediate corollary of (\ref{gfjs}) is that 
\begin{equation}\label{opop}
\mbox{\em \main\ is iterated at most $2\mathfrak{D}$ times.}
\end{equation}

Since \main\  is restarted only finitely many times, the last iteration of it never terminates. Let $\Gamma$ be the sequence of labmoves recorded in the final value of $\mathbb{H}$ (i.e., the value of $\mathbb{H}$ throughout the last iteration of \main). This is the run generated by the simulated $\mathcal N$ in what we referred to as the ``ultimate scenario'' in the preceding subsection (scenario = computation branch). Next, let $\Delta$ be the run generated by $\mathcal K$ in the  real play that we are considering. Since $\mathcal N$ is a solution of $\ada H$, $\Gamma$ is a $\pp$-won run of $\ada H$. We want to verify that then $\Delta$ is also a $\pp$-won run of $\ada H$, meaning that $\mathcal K$, too, is a solution of $\ada H$. 

How do $\Gamma$ and $\Delta$ differ from each other? As noted at the end of Section \ref{sgst}, an analysis of the work of $\mathcal K$, details of which are left to the reader, reveals that there are only two differences. 

The first difference is that the $\pp$-labeled moves of $\Gamma$ appear in $\Delta$ in truncated forms. This is so because, whenever $\mathcal K$ makes a move  (according to the prescriptions of Stage 3(c) of \main), it copies that move 
from the 8th component of the sketch of the step of $\mathcal N$ on which the latter made a move $\alpha$; but the 8th component of a sketch always holds the truncation of the move residing in $\mathcal N$'s buffer; thus, the move $\alpha'$ made by $\mathcal K$ in the real play/run $\Delta$ is the truncation of the move $\alpha$ made by $\mathcal N$ in the imaginary play/run $\Gamma$.  

Let us use $\Omega$ to denote the result of changing in $\Gamma$  all $\pp$-labeled moves to their truncations.

The second difference between $\Gamma$ and $\Delta$ is that, even if we ignore the first difference --- that is, even if we consider $\Omega$ instead of $\Gamma$ --- the run is still not guaranteed to be exactly the same as $\Delta$; rather, we only know that the latter is a $\pp$-delay of the former. The reason for this discrepancy is that, while performing \main,  $\mathcal K$ may notice a move $\delta$ by Environment with some delay, only after it has first noticed a move $\gamma$ by $\mathcal N$ and made the truncation $\gamma'$ of $\gamma$ as a move in the real play; if this happens, $\pp\gamma$ will appear before $\oo\delta$ in $\Gamma$ but after $\oo\delta$ in $\Delta$. But the game $\ada H$ is static, as are all games studied in CoL. And, by the very definition of static games (cf. Section 3 of \cite{cl12}), $\Delta$'s being a $\pp$-delay of $\Omega$ implies that, if $\Omega$ is a $\pp$-won run of $\ada H$, then so is $\Delta$. 
This means that, in order to achieve our goal of proving that $\Delta$ is a $\pp$-won run of $\ada H$, it is sufficient to simply show that $\Omega$ is a $\pp$-won of $\ada H$. This is what the rest of this subsection is devoted to, for the exception of the last paragraph of it.  

We may and will assume that different occurrences of quantifiers in $\ada H$ bind different variables. This is a legitimate assumption, because, if it is false,  we can rename variables in $\ada H$ so as to make it true, with the new sentence, as a game, being  virtually the same  as the old sentence. 

By a {\bf unit}\label{xcwo} we shall mean a subformula $U$ of $H$ of the form $\ada r(|r|\leq \mathfrak{b}|\vec{s}|\mli E)$ (a {\bf $\ada$-unit}) or $\ade r(|r|\leq \mathfrak{b}|\vec{s}|\mlc E)$ (a {\bf $\ade$-unit}). Here $r$ is said to be the 
{\bf master variable}\label{xmstvarb} of $U$, and $|r|\leq \mathfrak{b}|\vec{s}|$ is said to be the {\bf master condition}\label{xbptrg} of $U$. ``Subunit'' and ``superunit'', applied to units, mean the same as ``subformula'' and ``superformula''. The {\bf depth} of a unit\label{xdpthu} $U$ is the number of its superunits (including $U$ itself).  
A unit $U$ is {\bf resolved}\label{xrsutt} iff $\Gamma$ contains a move signifying choosing a constant for $U$'s master variable. For instance, if $H$ is $\ade x(|x|\leq |y|\mlc x=0)\mlc \ada z(|z|\leq |y|\mli \ade t(|t|\leq |z|\successor \mlc t=z+z))$ and $\Gamma$ is $\seq{\oo \#1000,\oo 1.\#11,\pp 1.1.\#110}$, then the units $\ada z(|z|\leq |y|\mli \ade t(|t|\leq |z|\successor \mlc t=z+z))$ and 
$\ade t(|t|\leq |z|\successor \mlc t=z+z)$ are resolved while $\ade x(|x|\leq |y|\mlc x=0)$ is not.  When $w$ is a free variable of $H$ or the master variable  of some resolved unit, then the {\bf resolvent}\label{xrslf} of $w$ is the constant chosen for $w$ in (according to) $\Gamma$. 
For instance, if $H$ and $\Gamma$ are as above, $1000$ is the resolvent of $y$, $11$ is the resolvent of $z$ and $110$ is the resolvent of $t$. 
A unit $U$ is {\bf well-resolved}\label{xuwrte} iff $U$ is resolved and the result of replacing all free variables by their resolvents in $U$'s master condition is true.  A unit is   
{\bf ill-resolved} iff it is resolved but not well-resolved.\label{xillres} A {\bf critical} \label{xcrun} unit is an ill-resolved unit all of whose proper 
superunits are well-resolved.  

Let $f$ be the subaggregate bound of $H$. For $i\in\{1,2,\ldots\}$, we define \[\mathfrak{S}_i(z)\label{xgikl} \ = \ \max(f(z),f^2(z),\ldots,f^i(z)).\] Note that the superaggregate bound $\mathfrak{S}$ of $H$ is nothing but $\mathfrak{S}_{\mathfrak{H}}$, where $\mathfrak{H}$ is the total number of all units. For this reason, taking into account that the depth of no unit can exceed $\mathfrak{H}$, we have:
\begin{equation}\label{jsd}
\mbox{\em Whenever $i$ is the depth of some unit,  $\mathfrak{S}_i\preceq \mathfrak{S}$.}
\end{equation} 

\begin{lem}\label{clai2}
Consider an arbitrary resolved unit $U$. Let $i$ be its depth, and $a$ be the resolvent of its master variable.  If all superunits of $U$ (including $U$ itself) are well-resolved, then  $|a|\leq \mathfrak{S}_i|\max(\vec{c})|$. 
\end{lem}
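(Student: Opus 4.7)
The plan is to prove the statement by induction on the depth $i$ of $U$, unfolding the master conditions along the chain of superunits of $U$ and invoking the defining property of the unarification $f$ (from Section~12 of \cite{cl12}), namely that $\mathfrak{b}_j(u_1,\ldots,u_k)\le f(\max(u_1,\ldots,u_k))$ for each of the bounds $\mathfrak{b}_j$ occurring in $H$, and that $f$ is monotone nondecreasing and dominates the identity.

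For the base case $i=1$, the unit $U$ is outermost, so its master condition $|r|\le\mathfrak{b}_1|\vec{s}_1|$ mentions only the free variables $\vec{y}$ of $H$. Well-resolvedness of $U$ yields $|a|\le\mathfrak{b}_1(|\vec{s}_1^\ast|)$, where $\vec{s}_1^\ast$ results from substituting $\vec{c}$ for $\vec{y}$. Each component size $|s^\ast_{1,l}|$ is bounded by $|\max(\vec{c})|$, so the unarification property gives $\mathfrak{b}_1(|\vec{s}_1^\ast|)\le f(|\max(\vec{c})|)=\mathfrak{S}_1|\max(\vec{c})|$, as desired.

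For the inductive step $i>1$, I would list the unique chain of superunits $U_1\supsetneq U_2\supsetneq\cdots\supsetneq U_i=U$ of $U$, with corresponding master variables $r_j$, resolvents $a_j$ and master conditions $|r_j|\le\mathfrak{b}_j|\vec{s}_j|$. For each $j<i$, the unit $U_j$ and all its own superunits are well-resolved (being superunits of $U$), so the induction hypothesis applies and gives $|a_j|\le\mathfrak{S}_j|\max(\vec{c})|$. Monotonicity of $f$ yields $\mathfrak{S}_j\le\mathfrak{S}_{i-1}$ for $j\le i-1$, so in fact $|a_j|\le\mathfrak{S}_{i-1}|\max(\vec{c})|$ for every $j<i$. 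The master condition of $U$ mentions only variables from $\vec{y},r_1,\ldots,r_{i-1}$, and after substituting their resolvents the size of each component of $|\vec{s}_i^\ast|$ is bounded by $\mathfrak{S}_{i-1}|\max(\vec{c})|$ (for the $\vec{y}$-resolvents one uses $\mathfrak{S}_{i-1}(z)\ge z$). Then well-resolvedness of $U$ together with the unarification property give $|a|\le\mathfrak{b}_i(|\vec{s}_i^\ast|)\le f\bigl(\mathfrak{S}_{i-1}|\max(\vec{c})|\bigr)$.

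The closing step is then a purely arithmetic computation: monotonicity of $f$ yields $f(\mathfrak{S}_{i-1}(z))=f(\max(f(z),\ldots,f^{i-1}(z)))\le\max(f^2(z),\ldots,f^i(z))\le\mathfrak{S}_i(z)$, which with $z=|\max(\vec{c})|$ produces the desired bound $|a|\le\mathfrak{S}_i|\max(\vec{c})|$. The main obstacle is not the induction itself but the interface with the unarification construction: a careful write-up must invoke the precise definition of Section~12 of \cite{cl12} to confirm that $f$ is monotone, dominates the identity, and majorizes each of the bounds $\mathfrak{b}_1,\ldots,\mathfrak{b}_n$ of $H$ in the stated sense, so that the two applications of unarification (in the base case and in the inductive step) and the final telescoping go through uniformly.
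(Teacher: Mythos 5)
Your proposal is, structurally, the same induction on depth that the paper uses, and the core moves (well-resolvedness gives the $\mathfrak{b}$-bound, unarification and monotonicity translate this into a bound via $f$, the inductive hypothesis controls the superunit resolvents, and $f\circ\mathfrak{S}_{i-1}\preceq\mathfrak{S}_i$ closes the loop) are all correct and all appear in the paper. Your closing computation $f(\mathfrak{S}_{i-1}(z))=\max(f^2(z),\ldots,f^i(z))\leq\mathfrak{S}_i(z)$ is exactly right.

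There is, however, one step that does not go through as written. In the inductive step you fold the sizes of the $\vec{y}$-resolvents into the bound $\mathfrak{S}_{i-1}|\max(\vec{c})|$ by invoking ``$\mathfrak{S}_{i-1}(z)\geq z$'', which amounts to assuming $f(z)\geq z$. You flag this yourself as something that ``a careful write-up must invoke the precise definition of Section~12 of \cite{cl12} to confirm.'' In fact the unarification of a bound need \emph{not} dominate the identity (a bound may be, say, a constant pterm or a logarithmic one), and the paper's own proof takes care to avoid this assumption: it splits the argument of $f$ into the part $d$ coming from the free variables of $H$ and the part $e$ coming from master variables of proper superunits, and bounds $f|d|\leq\mathfrak{S}_i|\max(\vec{c})|$ \emph{directly} from $|d|\leq|\max(\vec{c})|$ and $f\preceq\mathfrak{S}_i$, rather than by first lifting $|d|$ up to $\mathfrak{S}_{i-1}|\max(\vec{c})|$. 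To repair your argument without the unwarranted assumption, bound the component sizes by $\max\bigl(|\max(\vec{c})|,\ \mathfrak{S}_{i-1}|\max(\vec{c})|\bigr)$ and then apply $f$ to both alternatives, using $f\preceq\mathfrak{S}_i$ for the first and your telescoping inequality for the second; this recovers exactly the paper's reasoning.
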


\begin{proof} Induction on 
$i$. Assume the conditions of the lemma.  Let $w$ be the master variable of $U$, and 
let $|w|\leq \mathfrak{b}(|x_1|,\ldots,|x_k|,|z_1|,\ldots,|z_m|)$ be the master condition of $U$, with all free variables displayed, where $x_1,\ldots,x_k$ are from among the free variables of $H$, and $z_1,\ldots,z_m$ are from among the master variables of the proper superunits of $U$. Let $d_1,\ldots,d_k,e_1,\ldots,e_m$ be the resolvents of $x_1,\ldots,x_k,z_1,\ldots,z_m$, respectively. 
Below we shall use $c$, $d$ and $e$ as abbreviations of $\max(\vec{c})$, $\max(d_1,\ldots,d_k)$ and $\max(e_1,\ldots,e_m)$, respectively. Let $\mathfrak{b}'$ be the unarification of $\mathfrak{b}$. \enlargethispage{\baselineskip}

$U$'s being well-resolved means that $|a|$ does not exceed $\mathfrak{b}(|d_1|,\ldots,|d_k|,|e_1|,\ldots,|e_m|)$. Hence, by the monotonicity of $\mathfrak{b}$, we have 
 $|a|\leq \mathfrak{b}'|\max(d,e)|$.  But, of course, $\mathfrak{b}'\preceq f$ (recall that $f$ is the subaggregate bound of $H$). 
Thus, $|a|\leq f|\max(d,e)|$. This means that, in order to verify our target  $|a|\leq \mathfrak{S}_i|c|$, it is sufficient  to show that both $f|d|\leq \mathfrak{S}_i|c|$ and $f|e|\leq \mathfrak{S}_i|c|$. 

That $f|d|\leq \mathfrak{S}_i|c|$ follows from the straightforward observations that $d\leq c$ and $f\preceq \mathfrak{S}_i$.   

As for $f|e|\leq \mathfrak{S}_i|c|$, first assume $i=1$. Then $m=0$ and hence $e=0$; also, $G_i$ is $f$. Thus, we want to show that $f|0|\leq f|c|$. But this is immediate from the monotonicity of $f$. Now assume $i>1$. 
By the induction hypothesis, $|e|\leq \mathfrak{S}_{i-1}|c|$. So, $f|e|\leq f(\mathfrak{S}_{i-1}|c|)$. But, of course, 
$ f(\mathfrak{S}_{i-1}|c|)\leq \mathfrak{S}_{i}|c|$. Thus, $f|e|\leq \mathfrak{S}_i|c|$. 
\end{proof}

We are now in the position to see that $\Omega$ inherits  $\Gamma$'s being a $\pp$-won run of $\ada H$. 
Let 
\[\ada u_1\bigl(|u_1|\leq \mathfrak{p}_1|\vec{r}_1| \mli A_1\bigr),\ \ldots,\ \ada u_a\bigl(|u_a|\leq \mathfrak{p}_a|\vec{r}_a| \mli A_a\bigr)\]   be  all the critical $\ada$-units, and let $u_{1}^{\circ},\vec{r}_{1}^{\hspace{1pt}\circ},\ldots, u_{a}^{\circ},\vec{r}_{a}^{\hspace{1pt}\circ}$ be the resolvents of $u_{1},\vec{r}_{i},\ldots, u_{a},\vec{r}_{a}$, respectively. 
Similarly, let 
\[\ade v_1\bigl(|v_1|\leq \mathfrak{q}_1|\vec{s}_1| \mlc B_1\bigr),\ldots,\ade v_b\bigl(|v_b|\leq \mathfrak{q}_b|\vec{s}_b| \mlc B_b\bigr)\]   be all the critical $\ade$-units, and let $v_{1}^{\circ},\vec{s}_{1}^{\hspace{1pt}\circ},\ldots, v_{b}^{\circ},\vec{s}_{b}^{\hspace{1pt}\circ}$ be the resolvents of $v_{1},\vec{s}_{1},\ldots, v_{b},\vec{s}_{a}$, respectively. 

It is not hard to see that, following the notational conventions of Section 5.3 of \cite{AAAI} and Section 7 of \cite{cl12}, the paraformula  $\seq{\Gamma}!\ada H$ can be written as 
\begin{equation}\label{sa1}
\begin{array}{l}
X\Bigl[|u_{1}^{\circ}|\leq \mathfrak{p}_1|\vec{r}_{1}^{\hspace{1pt}\circ}| \mli A^{\circ}_{1},\ \ldots,\ |u_{a}^{\circ}|\leq \mathfrak{p}_a|\vec{r}_{a}^{\hspace{1pt}\circ}| \mli  A^{\circ}_{a},\\
\ |v_{1}^{\circ}|\leq \mathfrak{q}_1|\vec{s}_{1}^{\hspace{1pt}\circ}| \mlc B^{\circ}_{1},\ \ldots,\  |v_{b}^{\circ}|\leq \mathfrak{q}_b|\vec{s}_{b}^{\hspace{1pt}\circ}| \mlc B^{\circ}_{b}\Bigr]\end{array}\end{equation}
 for some $X$, $A_{1}^{\circ},\ldots,A_{a}^{\circ},B_{1}^{\circ},\ldots,B_{b}^{\circ}$. With some additional analysis of the situation and with (\ref{jsd}) and Lemma \ref{clai2} in mind, one can see that 
the paraformula $\seq{\Omega}!\ada H$ can then be written as 
\begin{equation}\label{sa2}
\begin{array}{l}
 X\Bigl[|u_{1}^{\circ}|\leq \mathfrak{p}_1|\vec{r}_{1}^{\hspace{1pt}\circ}| \mli A^{\bullet}_{1},\ \ldots,\ |u_{a}^{\circ}|\leq \mathfrak{p}_a|\vec{r}_{a}^{\hspace{1pt}\circ}| \mli  A^{\bullet}_{a},\\
\ |v_{1}^{\bullet}|\leq \mathfrak{q}_1|\vec{s}_{1}^{\hspace{1pt}\circ}| \mlc B^{\bullet}_{1},\ \ldots,\  |v_{b}^{\bullet}|\leq \mathfrak{q}_b|\vec{s}_{b}^{\hspace{1pt}\circ}| \mlc B^{\bullet}_{b}\Bigr]\end{array}\end{equation}
for  some  $v_{1}^{\bullet},\ldots,v_{b}^{\bullet}$, $A_{1}^{\bullet},\ldots,A_{a}^{\bullet}$, $B_{1}^{\bullet},\ldots,B_{b}^{\bullet}$ (and with all other parameters the same as in (\ref{sa1})).

By the definition of the prefixation operation (Definition 2.2 of \cite{cl12}), the fact that $\Gamma$ is a 
$\pp$-won run of $\ada H$ --- written as $\seq{\Gamma}\ada H=\pp$ --- implies (in fact, means the same as) that the empty run $\seq{}$ is a $\pp$-won run of $\seq{\Gamma}\ada H$, which, since $\seq{\Gamma}\ada H= (\ref{sa1})$, can be written as 
\begin{equation}\label{ioloi}
\begin{array}{l}
\seq{}X\Bigl[|u_{1}^{\circ}|\leq \mathfrak{p}_1|\vec{r}_{1}^{\hspace{1pt}\circ}| \mli A^{\circ}_{1},\ \ldots,\ |u_{a}^{\circ}|\leq \mathfrak{p}_a|\vec{r}_{a}^{\hspace{1pt}\circ}| \mli A^{\circ}_{a},\\
\ |v_{1}^{\circ}|\leq \mathfrak{q}_1|\vec{s}_{1}^{\hspace{1pt}\circ}| \mlc B^{\circ}_{1},\ \ldots,\ 
 |v_{b}^{\circ}|\leq \mathfrak{q}_b|\vec{s}_{b}^{\hspace{1pt}\circ}| \mlc B^{\circ}_{b}\Bigr]  =\pp.
\end{array}\end{equation}
Consider any $i\in\{1,\ldots,b\}$. Since the unit $\ade v_i\bigl(|v_i|\leq \mathfrak{q}_i|\vec{s}_i| \mlc B_i\bigr)$ is critical and hence ill-resolved,  $|v_{i}^{\circ}|$ exceeds $\mathfrak{q}_i|\vec{s}_{i}^{\hspace{1pt}\circ}|$. Hence 
 $ \seq{} \bigl(|v_{i}^{\circ}|\leq \mathfrak{q}_i|\vec{s}_{i}^{\hspace{1pt}\circ}| \mlc B^{\circ}_{i}\bigr)=\oo$.
This clearly allows us to rewrite (\ref{ioloi}) as 
\[\seq{}X\Bigl[|u_{1}^{\circ}|\leq \mathfrak{p}_1|\vec{r}_{1}^{\hspace{1pt}\circ}| \mli A^{\circ}_{1},\ \ldots,\ |u_{a}^{\circ}|\leq \mathfrak{p}_a|\vec{r}_{a}^{\hspace{1pt}\circ}| \mli A^{\circ}_{a},\ \oo,\ \ldots,\ \oo \Bigr]  =\pp.\]
The monotonicity of the operators $(\hspace{-2pt}\mlc\hspace{-2pt},\hspace{-2pt}\mld\hspace{-2pt},\cla,\cle)$ of $X$, just as in classical logic, allows us to replace the $\oo$s by whatever games in the above equation, so the latter can be   further rewritten as 
\begin{equation}\label{poil}
\begin{array}{l}
\seq{}X\Bigl[|u_{1}^{\circ}|\leq \mathfrak{p}_1|\vec{r}_{1}^{\hspace{1pt}\circ}| \mli A^{\circ}_{1},\ \ldots,\ |u_{a}^{\circ}|\leq \mathfrak{p}_a|\vec{r}_{a}^{\hspace{1pt}\circ}| \mli A^{\circ}_{a},\\
\ |v_{1}^{\bullet}|\leq \mathfrak{q}_1|\vec{s}_{1}^{\hspace{1pt}\circ}| \mlc B^{\bullet}_{1},\ \ldots,\ 
 |v_{b}^{\bullet}|\leq \mathfrak{q}_b|\vec{s}_{b}^{\hspace{1pt}\circ}| \mlc B^{\bullet}_{b}\Bigr]  =\pp.\end{array}
\end{equation} 
 Next,  for similar reasons, for every $i\in\{1,\ldots,a\}$ we have  $|u_{i}^{\circ}|> \mathfrak{p}_i|\vec{r}_{i}^{\hspace{1pt}\circ}|$ and hence 
$ \seq{} \bigl(|u_{i}^{\circ}|\leq \mathfrak{p}_i|\vec{r}_{i}^{\hspace{1pt}\circ}| \mli A^{\bullet}_{i}\bigr)=\pp$, which allows us to rewrite (\ref{poil}) as 
\begin{equation}\label{po}
\begin{array}{l}
\seq{}X\Bigl[|u_{1}^{\circ}|\leq \mathfrak{p}_1|\vec{r}_{1}^{\hspace{1pt}\circ}| \mli A^{\bullet}_{1},\ \ldots,\ |u_{a}^{\circ}|\leq \mathfrak{p}_a|\vec{r}_{a}^{\hspace{1pt}\circ}| \mli A^{\bullet}_{a},\\
\ |v_{1}^{\bullet}|\leq \mathfrak{q}_1|\vec{s}_{1}^{\hspace{1pt}\circ}| \mlc B^{\bullet}_{1},\ \ldots,\ 
 |v_{b}^{\bullet}|\leq \mathfrak{q}_b|\vec{s}_{b}^{\hspace{1pt}\circ}| \mlc B^{\bullet}_{b}\Bigr]  =\pp.
\end{array}\end{equation}
However, 
the $X[\ldots]$ part of (\ref{po}) is identical to (\ref{sa2}), which, in turn, is nothing but 
$\seq{\Omega}\ada H$. If so, the target  $\seq{\Omega}\ada H=\pp$ is an immediate consequence of (\ref{po}).

Thus, $\mathcal K$ is a solution of $H$, as desired. As such, it is both provident and prudent. $\mathcal K$ is provident because, as a simple examination shows, it only puts something into its buffer while acting according to clause 3(c) of the description of \main; however, at the end of the same clause, we see a prescription for $\mathcal K$ to move, and thus empty the buffer. As for prudence, it is automatically achieved because $\mathcal K$ only makes truncated moves, and such moves are always prudent.

\subsection{\texorpdfstring{$\mathcal K$}{K} plays in target tricomplexity} It remains to show that $\mathcal K$ plays $H$ in tricomplexity $\mathcal R$. Our analysis is going to be asymptotic,  
implicitly relying on Remark 2.4 of \cite{AAAI}. 

\subsubsection{Amplitude}
Since $\mathcal K$ merely mimics --- perhaps in the truncated form and perhaps with some delay --- $\mathcal N$'s moves, it is obvious that the amplitude complexity of the former does not exceed that of the latter.

In fact, $\mathcal K$'s  running in the target amplitude is also  guaranteed by the facts that 
${\mathcal R}\spa\preceq {\mathcal R}\amp$ (clause 5 of Definition 2.2 of \cite{AAAI}), $H$ is ${\mathcal R}\spa$-bounded and  ${\mathcal K}$ plays $H$ prudently.

\subsubsection{Space}\label{sdfs}
Let $\mathbb{H}$ be a global history, and $m$ a natural number. We define the 
{\bf $\mathbb{H}$-index}\label{xhindx} of $m$ as the number of moves in $\mathbb{H}'$, where $\mathbb{H}'$ is the result of deleting from $\mathbb{H}$ the $(m+1)$th $\pp$-labeled  move  and all subsequent whatever-labeled moves; if here $\mathbb{H}$ does not contain more than $m$ $\pp$-labeled moves, then $\mathbb{H}'$ is simply $\mathbb{H}$. Next, where 
$\mathbb{S}$ is a sketch, we define the {\bf $\mathbb{H}$-index} of $\mathbb{S}$  as the $\mathbb{H}$-index of $m$, where $m$ is the value of the 5th component of $\mathbb{S}$. We extend the concept of $\mathbb{H}$-index to particular runs/iterations  of \transition\ and \call\ in the process of performing \main. Namely, \transition\ is always run on a sketch $\mathbb{S}$, and we define the {\bf $\mathbb{H}$-index} of that  run of \transition\ to be the $\mathbb{H}$-index of $\mathbb{S}$. Similarly, 
\call\ is always called on a pair $(k,n)$ for some numbers $k$ and $n$, and we define the {\bf $\mathbb{H}$-index} of such a  call/run of \transition\ as the $\mathbb{H}$-index of  $k$ ($n$ is thus irrelevant here). If $\mathbb{H}$ is fixed or clear from the context, 
we may omit ``$\mathbb{H}$-'' and simply say ``{\bf index}''.

 \begin{lem}\label{mmm}
In the process of any given iteration of \main\ and in the context of the then-current value of the global history variable $\mathbb{H}$, we have: 
\begin{enumerate}[label=\arabic*.]
\item The index of any run of \transition\  does not exceed $2\mathfrak{D}$.

\item Whenever a given  run of \transition\ calls \call, the index of the callee is strictly smaller than  that of the caller.  

\item Whenever a given run of \call\ calls \transition, the index of the callee does not exceed 
that of the caller.
\end{enumerate}
\end{lem}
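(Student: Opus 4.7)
The plan is to track the $\mathbb{H}$-index through the mutual-recursion structure of \transition\ and \call, relying on two basic observations: first, the $\mathbb{H}$-index of a number $m$ is monotonically nondecreasing in $m$ (increasing $m$ only enlarges the initial segment of $\mathbb{H}$ that is preserved), and is in all cases bounded by $|\mathbb{H}|\leq 2\mathfrak{D}$ via (\ref{gfjs}); second, the 5th component of any sketch simply counts the moves made by the simulated $\mathcal N$ up to that step. Clause 1 is then immediate from the bound $|\mathbb{H}|\leq 2\mathfrak{D}$.

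For clause 2, I would analyze the circumstance that forces \transition, running on a sketch $\mathbb{S}$ with 5th component $m$, to invoke \call. This happens exactly when the run-tape head of the simulated $\mathcal N$ points at a symbol belonging to some $\pp$-labeled move $\alpha$ residing on $\mathcal N$'s imaginary run tape, i.e., inside the truncated history $\mathbb{H}'$ (obtained from $\mathbb{H}$ by deleting the $(m{+}1)$th $\pp$-move and everything after). If $k$ is the number of $\pp$-moves preceding $\alpha$ in $\mathbb{H}$, then $\alpha$ is the $(k{+}1)$th $\pp$-move of $\mathbb{H}$ and $k+1\leq m$. The callee's $\mathbb{H}$-index counts moves strictly preceding $\alpha$, whereas the caller's index counts everything up through $\alpha$ and beyond, up to the $(m{+}1)$th $\pp$-move (or the end of $\mathbb{H}$). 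The discrepancy contains at least $\alpha$ itself, so the callee's index is strictly smaller than the caller's.

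Clause 3 is the delicate step. I would establish the invariant that, throughout any execution of \call$(k,n)$, the 5th component $a$ of the sketch variable $\mathbb{S}$ never exceeds $k$. The variable $\mathbb{S}$ starts with 5th component $0$, and its 5th component increases only when the just-performed \transition-step witnesses $\mathcal N$ making a new move. Once $a$ reaches $k$, the simulated $\mathcal N$ is in the process of assembling its $(k{+}1)$th move inside its buffer; since that move must have length at least $n$ --- it is literally the $(k{+}1)$th $\pp$-move of $\mathbb{H}$, whose existence and sufficient length are preconditions on the call to \call --- the buffer size will eventually reach $n$, at which iteration the test $a=k \wedge b<n\leq b+|\sigma|$ succeeds and \call\ returns before $a$ can advance to $k+1$. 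By monotonicity of the $\mathbb{H}$-index, every \transition\ invoked inside this execution therefore receives a sketch whose index is at most the index of $k$, which is precisely the index of the enclosing \call-call. The main obstacle is justifying this invariant rigorously, which requires noting that the two simulations in play --- the one driven by the enclosing \transition\ and the fresh one driven by \call\ --- proceed deterministically from the same initial state according to the same $\mathbb{H}$-record, so the $(k{+}1)$th $\pp$-move really does emerge with the shape the precondition promises.
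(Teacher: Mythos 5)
Your proposal is correct and follows the same route the paper has in mind --- indeed, the paper's own proof simply leaves the analysis of clauses 2 and 3 ``to the reader,'' and your argument supplies exactly the kind of index-tracking details that remark was gesturing at. In particular, your observation that the callee's truncated history $\mathbb{H}''$ omits at least the $(k{+}1)$th $\pp$-move that the caller's $\mathbb{H}'$ retains (clause 2), and your invariant that the 5th component of $\mathbb{S}$ stays $\leq k$ throughout a run of \call\ because the $n$th buffer symbol is reached and returned before $\mathcal N$ completes its $(k{+}1)$th move (clause 3), are precisely what makes the ``straightforward analysis'' go through.
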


\begin{proof} Clause 1 is immediate from the obvious fact that an index can never exceed the number of moves in the global history, and the latter, according to (\ref{gfjs}), is bounded by $2\mathfrak{D}$.  Clauses 2 and 3 can be verified through a rather straightforward (albeit perhaps somewhat long) analysis of the two procedures \transition\ and \call;  details of such an analysis are left to the reader. \end{proof}

We are now ready to examine the space complexity of ${\mathcal K}$. The space consumption of ${\mathcal K}$ comes from the need to simultaneously maintain the global history and various sketches. As observed earlier,  maintaining the global history consumes ${\mathcal R}\spa$ space, and, by  Lemma \ref{lsk}, each sketch also consumes ${\mathcal R}\spa$ space. At any given time, the global history is kept in memory in a single copy. So, to show that the overall space consumption is ${\mathcal R}\spa$, we need to show that, at any given time, the number of sketches simultaneously kept in the memory of ${\mathcal K}$ does not exceed a certain constant. But this is indeed so. Looking 
back at the work of \main, we see that,  at any time, its top level  maintains a single sketch. It also keeps going through this sketch and updating it through \transition,  one step at a time. Since updates are done sequentially, space used for them can be recycled, so  space consumptions for updating different sketches (this includes not only the top-level sketch of \main, but also many additional sketches that will emerge during calls to \call\ when updating each individual sketch) do not add together unless those sketches happen to be on a same branch of nested recursive calls that \transition\ and \call\ make to each other. In view of Lemma \ref{mmm}, however, the depth of recursion (the height of the recursion stack at any time) is bounded, because the index of \transition\ in the topmost level of recursion does not exceed $2\mathfrak{D}$, and every pair of successor levels of recursion strictly decreases the index of the corresponding call of \transition. 

\subsubsection{Time}

As we observed in (\ref{opop}), during the entire work of ${\mathcal K}$, \main\ is iterated at most $2\mathfrak{D}$ times. The last iteration  
runs forever, but ${\mathcal K}$ is not billed for that time because it makes no moves during that period. 
 Likewise, ${\mathcal K}  $ will not be billed for the time spent on an iteration of \main\ that was terminated at Stage 2, because a move by Environment resets $\mathcal K$'s time counter to $0$. Call the remaining sorts of iterations of \main\ --- namely, the iterations that terminate according to the scenario of case (c) of Stage 3 --- 
{\bf time-billable}.\label{xtmbl}  So, it 
is  sufficient for us to understand how much time a single time-billable iteration of \main\ takes.
 Pick  any such iteration and fix it throughout the 
context of the rest of this section, including the forthcoming Lemma \ref{mmmm}. 
We will use $\ell$ to denote the background of the last clock cycle of that iteration.  
\enlargethispage{\baselineskip}

\begin{lem}\label{mmmm} 
The time consumed by any single run of \transition\ or 
 \call\   is  ${\mathcal R}\tim(\ell)$.
\end{lem}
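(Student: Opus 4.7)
The plan is to bound both procedures simultaneously by induction on the depth of the mutual-recursion stack, exploiting Lemma \ref{mmm} to cap that depth by a constant. First, I would isolate the ``raw'' cost of a single invocation of \transition, ignoring any call it may trigger to \call. Such an invocation performs: some arithmetic on $\mathbb{H}$ to determine the sum $p$, the author $\xx$, and the pair $(k,n)$; possibly a single sweep of $\mathcal K$'s own run tape (to fetch a symbol of an $\oo$-labeled move); and a constant amount of bookkeeping to consult $\mathbb{S}$, apply $\mathcal N$'s transition function, and assemble the eight components of the next sketch (including, for the 8th component, a comparison against the pre-computed $\mathfrak{S}|\max(\vec{c})|$ and a truncation step). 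Since $\mathbb{H}$ and $\mathbb{S}$ each occupy at most ${\mathcal R}\spa(\ell)$ space by Lemma \ref{lsk}, and $\mathcal K$'s run tape is of total length $O(\ell+\mathfrak{G}(\ell))$, these manipulations are all bounded by a polynomial in an ${\mathcal R}\spa(\ell)$ quantity, hence, via ${\mathcal R}\spa\preceq{\mathcal R}\tim$ (condition 5 of Definition 2.2 of \cite{AAAI}) and the polynomial closure of ${\mathcal R}\tim$, lie in ${\mathcal R}\tim(\ell)$.

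Second, I would bound how many inner iterations a single run of \call\ executes before it returns. By construction, \call\ repeatedly invokes \transition\ starting from the initial sketch until the target symbol appears in the 7th component, and it must stop no later than the step at which $\mathcal N$ commits the $(k+1)$'th $\pp$-labeled move. Because $\mathcal N$ plays in unconditional time $\mathfrak{t}\in{\mathcal R}\tim$, and because (as already observed in the space analysis of Section \ref{sls}) the magnitudes of all moves on $\mathcal N$'s imaginary run tape during any resimulation are bounded by some $\mathfrak{a}'(\ell)$ with $\mathfrak{a}'\in{\mathcal R}\amp$, the number of simulated steps of $\mathcal N$ before the target symbol surfaces is at most $\mathfrak{t}(\mathfrak{a}'(\ell))$, which lies in ${\mathcal R}\tim(\ell)$ by condition 4 of Definition 2.2 of \cite{AAAI}. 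Thus a single \call\ triggers ${\mathcal R}\tim(\ell)$ iterations of \transition, apart from any deeper nested calls.

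Third, I would combine these two bounds using Lemma \ref{mmm}. Define $T(d)$ to be an upper bound on the running time of any invocation of \transition\ or \call\ whose induced tree of nested recursive calls has depth at most $d$. By Lemma \ref{mmm}, indices of nested calls form a strictly decreasing sequence along any path involving a \call, and every index is bounded by $2\mathfrak{D}$, so $d\leq 2\mathfrak{D}+1$, a constant. The base case $T(1)$ is ${\mathcal R}\tim(\ell)$ by the first two paragraphs. In the inductive step, a depth-$d$ \call\ contributes ${\mathcal R}\tim(\ell)$ iterations of \transition, each of which is a single raw \transition\ (cost ${\mathcal R}\tim(\ell)$) plus at most one \call\ of recursion depth $d-1$ (cost $T(d-1)$); similarly for a depth-$d$ \transition. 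Hence $T(d)\leq {\mathcal R}\tim(\ell)\bigl(1+T(d-1)\bigr)$, and since $d$ is constant, iterating this recurrence and invoking the polynomial closure of ${\mathcal R}\tim$ one final time keeps $T(d)$ within ${\mathcal R}\tim(\ell)$, as required.

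The main obstacle I expect is the verification that during a recursive resimulation, the background of every simulated step of $\mathcal N$ really is bounded by ${\mathcal R}\amp(\ell)$; this requires reusing the reasoning from Section \ref{sls} to argue that the $\oo$-labeled moves available on $\mathcal K$'s run tape are quasilegal and of total size $O(\ell+\mathfrak{G}(\ell))$, that all $\pp$-labeled moves simulated by $\mathcal N$ have magnitudes bounded by $\mathfrak{a}(\ell)$ because $\mathcal N$ plays in unconditional amplitude $\mathfrak{a}$, and that this holds \emph{uniformly} for every level of recursion, because every invocation works off the same pair $(\mathbb{H},\text{run tape})$. Once this uniform bound on simulated backgrounds is in place, the remaining steps are routine arithmetic manipulations within ${\mathcal R}\tim$.
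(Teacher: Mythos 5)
Your proposal is essentially the paper's own argument, with a slightly different parameterization of the induction. The paper inducts on the $\mathbb{H}$-index $i\leq 2\mathfrak{D}$ of the call (handling $\transition$ at index $i$ first, then $\call$ at index $i$, both via Lemma \ref{mmm}); you instead induct on the depth $d$ of the recursion stack and pose the recurrence $T(d)\leq{\mathcal R}\tim(\ell)\bigl(1+T(d-1)\bigr)$, again using Lemma \ref{mmm} to cap $d$ by a constant. Since Lemma \ref{mmm} precisely says indices strictly decrease along $\transition\to\call$ edges and never increase along $\call\to\transition$ edges, the two parameterizations are interchangeable, and the substantive steps coincide: bound the raw cost of one step of $\transition$ by the sizes of $\mathcal K$'s run tape, $\mathbb{H}$ and $\mathbb{S}$; bound the number of $\transition$-iterations inside a $\call$ by $\mathcal N$'s ${\mathcal R}\tim$ running time; then close the recurrence by polynomial closure of ${\mathcal R}\tim$. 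Two trivial inaccuracies: the constant-background argument you invoke should be attributed to the proof of Lemma \ref{lsk} (the 4th- and 6th-component analysis in this appendix), not to Section \ref{sls}, which concerns the machines ${\mathcal H}_n$ in the induction-rule proof; and the depth bound is more like $2(2\mathfrak{D}+1)$ rather than $2\mathfrak{D}+1$ because $\transition$ and $\call$ alternate, but any constant suffices.
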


\begin{proof} We verify this lemma by induction on the index $i\in\{0,\ldots,2\mathfrak{D}\}$ of the corresponding call/run of \transition\ or \call. Assume $i\geq 0$ is the index of a given run of \transition.
Looking back at our description of \transition, we 
see that this routine makes at most one call of \call. First, assume no such call is made.  Due to $\mathcal K$'s playing prudently, $\max(\ell,\mathfrak{S}(\ell))$ is 
the maximum magnitude of any move that may appear on $\mathcal K$'s run tape at any given time of the iteration. We also know from Lemma \ref{lagg}  that $\mathfrak{S}\preceq{\mathcal R}\spa$.  So, ${\mathcal K}$'s run-tape 
size (by which, as usual, we mean the size of the non-blank segment of the tape) is $O(\ell)+ {\mathcal R}\spa(\ell)$ and hence, by the relevant clauses of Definition 2.2 of \cite{AAAI}, is ${\mathcal R}\tim(\ell)$. We also know 
that the sketch and the global history are both of size ${\mathcal R}\spa(\ell)$ and hence ${\mathcal R}\tim(\ell)$. Keeping these facts 
in mind, 
with some analysis it is obvious that, in this case, \transition\  spends ${\mathcal R}\tim(\ell)$ time.   Now assume \transition\ {\em does} call \call. 
By clause 2 of Lemma \ref{mmm}, the index $j$ of such a call is less than $i$. Hence, by the induction hypothesis, the time taken by the latter is ${\mathcal R}\tim(\ell)$. In addition to this, \transition\ only spends the same amount ${\mathcal R}\tim(\ell)$
of time as in the preceding case to complete its work. Thus, in either case,  the time consumption of \transition\ is  
${\mathcal R}\tim(\ell)$.   

 Now consider a run of \call, and let $i\geq 0$ be its index. By clause 3 of Lemma \ref{mmm}, the index of any call of \transition\ that the given run of \call\  makes is at most $i$. By the induction hypothesis,  each such call of \transition\ consumes at most ${\mathcal R}\tim(\ell)$ time.  Processing any such  call (doing additional work related to it), in turn, obviously takes at most ${\mathcal R}\tim(\ell)$ time. So, each call of \transition\ costs our run of \call\ at most 
${\mathcal R}\tim(\ell)$ time. How many such calls of \transition\ will \call\ make? Since $\mathcal N$ runs in time ${\mathcal R}\tim$,  with a little thought one can see that the number of calls of \transition\ is at most 
${\mathcal R}\tim(\ell)$. So, the overall time cost of the run of \call\ is ${\mathcal R}\tim(\ell)\times {\mathcal R}\tim(\ell)$, which, in view of the closure of ${\mathcal R}\tim$ under $\times$, remains ${\mathcal R}\tim(\ell)$. \end{proof}

We are now ready to look at the time consumption of the single time-billable iteration of \main\ fixed earlier. 

Stage 1 of \main\ obviously takes a constant amount of time, and this stage is iterated only once. So, asymptotically, it contributes nothing to the overall time consumption of the procedure.  

Stage 2 checks out the run tape, which may require moving the run-tape head of ${\mathcal K}$  from one end  of the (non-blank segment of the) run tape to the other end.  Additionally, the global history needs to be updated, but  this can be done even faster. So, this stage  
obviously takes as much time as the size of  $\mathcal K$'s run tape, which, as observed in the proof of Lemma \ref{mmmm}, is ${\mathcal R}\tim(\ell)$.   

Stage 3 starts with performing \transition\ (Substage 1), and this, by Lemma \ref{mmmm}, takes ${\mathcal R}\tim(\ell)$ time.  With a little thought, the time taken by Substages (b) and (c) of Stage 3 can be seen to be at most quadratic in the size of  $\mathcal K$'s run tape. We know that the latter is ${\mathcal R}\spa(\ell)$. Hence so is $\bigl({\mathcal R}\spa(\ell)\bigr)^2$,  because 
${\mathcal R}\tim$ is closed under $\times$.

To summarize, none of the $3$ stages of \main\ takes more than ${\mathcal R}\tim(\ell)$ time. Stage 1 is repeated only once, and the remaining two stages are repeated at most ${\mathcal R}\tim(\ell)$ times as can be seen with a little thought, keeping in mind that the iteration of \main\ that we are dealing with is a time-billable one. If so, due to ${\mathcal R}\tim$'s closure under $\times$, the overall time consumption is ${\mathcal R}\tim(\ell)$, which obviously implies that $\mathcal K$ plays $\ada H$ in time ${\mathcal R}\tim$,
as desired.

\makeatletter
\renewenvironment{theindex}
               {\section*{\indexname}%
                \@mkboth{\MakeUppercase\indexname}%
                        {\MakeUppercase\indexname}%
                \thispagestyle{plain}\parindent\z@
                \parskip\z@ \@plus .3\p@\relax
                \columnseprule \z@
                \columnsep 15\p@
                \let\item\@idxitem}
               {}
\makeatother
\twocolumn
\begin{theindex}
\item $\mathbb{B}_{n}^{h}$ \pageref{eqdef}
\item $\mathbb{B}_{n}^{h}\hspace{-2pt}\uparrow$, $\mathbb{B}_{n}^{h}\hspace{-2pt}\downarrow$ \pageref{x235}
\item birthtime (of an entry) \pageref{xbirth}
\item body \pageref{xbody}  
\item body (of an entry) \pageref{xbodyof}
\item born \pageref{xbrno}
\item central triple \pageref{xcentral} 
\item common entry \pageref{xtailentry} 
\item complete semiposition \pageref{xcsmp}
\item critical unit \pageref{xcrun}
\item completion (of a semiposition) \pageref{xcdfd}
\item compression (of a semiposition) \pageref{xcdcfgf}
\item consistent (bodies) \pageref{xconsis}
\item   $\mathfrak{D}$ \pageref{xetr}
\item   depth (of a unit) \pageref{xdpthu} 
\item $\mathfrak{e}$, $\mathfrak{e}_\top$, $\mathfrak{e}_\bot$  \pageref{xe}
\item $\vec{E}$ \pageref{xeu}
\item $\vec{E}_{i}$ \pageref{xnot3}
\item early moves \pageref{xem} 
\item  entry (of an aggregation) \pageref{xentry} 
\item essentially the same (aggregations,\\ bodies etc.) \pageref{xestsm}
\item $^{\even}$ (superscript) \pageref{xbodd} 
\item  extension (of a body) \pageref{xbe} 
\item $F$ \pageref{r2}
\item $F'$ \pageref{xfp}
\item \call\ \pageref{sfs}
\item $\mathfrak{g}$ \pageref{xgqa}   
\item $\mathfrak{G}$ \pageref{xagrba}
\item $\Gamma^{C}_{\infty}$, $\Gamma^{C}_{i}$ \pageref{xgamm}
\item global history \pageref{xglhis}
\item globally new  move \pageref{xgnm}
\item $\mathfrak{h}$ \pageref{xhqa}
\item $\hbar$ \pageref{xhbar}
\item ${\mathcal H}_0$ \pageref{xhnill1}
\item ${\mathcal H}_n$ \pageref{xhnill2}
\item HPM \pageref{xHPM}
\item $\mathbb{I}$ \pageref{xiii}
\item $\mathbb{I}^h$ \pageref{xih}
\item $\mathbb{I}_{!}^{h}$ \pageref{xihe}
\item $\mathbb{I}^{h}_{\bullet}$ \pageref{xdea}
\item $\mathbb{I}^{h}_{\bullet\bullet}$ \pageref{xdea7}
\item $\mathbb{I}^{h}_{\bullet\bullet\bullet}$ \pageref{xdea8}
\item ill-resolved unit \pageref{xillres}
\item incomplete semiposition \pageref{xcsmp}
\item index (of an entry) \pageref{xindex}  
\item  index ($\mathbb{H}$-index) \pageref{xhindx}
\item initial sketch \pageref{xinsk}
\item $k$ \pageref{xkqa}
\item ${\mathcal K}$ \pageref{xkk}
\item $\mathfrak{l}$ \pageref{xl}
\item $\mathfrak{L}$ \pageref{xlla}
\item $\ell^{C}_{i}$ \pageref{xellc}
\item late moves \pageref{xlm}
\item legitimate semiposition \pageref{xdswe} 
\item locking (iteration of \mainn) \pageref{xlocking} 
\item $\mathcal M$ \pageref{xmm1},\pageref{xmm2},\pageref{xmm3},\pageref{xmm4}
\item ${\mathcal M}_k$ \pageref{xmk}
\item \mainn\ \pageref{xmain}
\item \main\ \pageref{smh}
\item master condition \pageref{xbptrg}
\item master entry \pageref{xheadentry} 
\item master: organ, payload, scale \pageref{xmaster}
\item master variable (of a unit) \pageref{xmstvarb} 
\item $\max(\vec{c })$ \pageref{xmax}
\item negative (signed organ) \pageref{xneggg}  
\item numer \pageref{xnmr}
\item numeric (lab)move \pageref{xnm}
\item $^{\odd}$ (as a superscript) \pageref{xbodd} 
\item organ \pageref{xorgan}
\item payload \pageref{xpayload}
\item positive (signed organ) \pageref{xpostv}
\item provident(ly) (branch, solution, play) \pageref{xprovid} 
\item prudent move \pageref{xprm}
\item prudent(ly) play \pageref{xprpl}
\item prudent run \pageref{xprrun}
\item prudent  solution \pageref{xprsol}
\item prudentization \pageref{xpz}
\item $\mathfrak{q}$ \pageref{xqqa} 
\item  quasilegal move \pageref{xqlm}
\item quasilegal(ly) play \pageref{xqp},\pageref{xgh}
\item quasilegal run \pageref{xqr}
\item quasilegal solution \pageref{xqs}
\item quasilegitimate semiposition \pageref{xdswepp}
\item $\mathfrak{r}$ \pageref{xrqa}
\item reasonable (play, solution) \pageref{xreas},\pageref{xgh} 
\item \repeatt\ \pageref{xrepeat}  
\item repeating (iteration of \mainn) \pageref{xring}
\item resolved unit \pageref{xrsutt}
\item resolvent \pageref{xrslf}
\item \restart\ \pageref{xrestart} 
\item restarting (iteration of \mainn) \pageref{xring} 
\item restriction (of a body) \pageref{xbr} 
\item retirement move \pageref{xretmo}
\item  $\mathfrak{S}$ \pageref{xgtrft}
\item $\mathfrak{S}_i$ \pageref{xgikl}
\item saturated \pageref{xsaturated}
\item scale \pageref{xscale}
\item  semiposition \pageref{xsemiposition}
\item signed organ \pageref{xsuperorgan}
\item $\simm$ \pageref{xsim1},\pageref{xsim}
\item $\simm^{\bullet}$ \pageref{xsb}
\item $\simm^{\leftarrow}$ \pageref{xslar}
\item $\simm^{\rightarrow}$ \pageref{xsrar}
\item $\simm$-appropriate triple \pageref{xsat}
\item size (of a body) \pageref{xsize}
\item sketch \pageref{xsketch}
\item so-far-authored   semiposition \pageref{xsofarau}
\item so-far-seen   semiposition \pageref{xsofar}
\item subaggregate bound \pageref{xsubggf} 
\item superaggregate bound \pageref{xab}
\item symbolwise length \pageref{xsyl}
\item synchronizing \pageref{xsync}
\item time-billable \pageref{xtmbl}
\item transient (iteration of \mainn) \pageref{xtransient} 
\item truncation \pageref{xtrun} 
\item \transition\ \pageref{sus}
\item $U$ \pageref{xeu}
\item unconditional (amplitude, space, time) \pageref{x0amplitude}
\item unconditionally provident(ly) \pageref{xupp}
\item unconditionally prudent(ly) \pageref{xup},\pageref{xgh}
\item unit \pageref{xcwo}
\item unreasonable (play, solution) \pageref{xreas}
\item $V_{i}^{C}$ \pageref{xvublie}
\item $W_{i}^{C}$ \pageref{xdublie}
\item $W_{i}^{B}$-induced branch of $\mathcal L$  \pageref{xwiin}
\item $W$-stabilization point \pageref{xlvw}
\item well-resolved unit \pageref{xuwrte}
\item  windup \pageref{xwindup}
\item $\mathfrak{v}$ \pageref{xvaq}
\item  \ 
\item  \ 
\item  \ 
\item  \ 
\item  \ 
\item  \ 
\item  \ 
\item  \ 
\item  \ 
\item $\xx\omega\vec{\alpha}$, $\pp\omega\vec{\alpha}$, $\oo\omega\vec{\alpha}$ \pageref{xder}
\item $\overline{B}$ (where $B$ is a body) \pageref{xoverl}
\item $\Gamma^\top$,  $\Gamma^\bot$ \pageref{xcb}
\item $\Gamma^{0.}$, $\Gamma^{1.}$ \pageref{xgnol}
\item $\oplus$ \pageref{xopl}
\item $\preceq$ (as a relation between runs) \pageref{xx85}
\item $\gneg$ (as an operation on runs) \pageref{xnki}
\item $\uparrow$, \ $\downarrow$ \pageref{x235}
\end{theindex}


\begin{thebibliography}{99}

\bibitem{bbb1}
K. Aehlig, U. Berger, M. Hoffmann and H. Schwichtenberg. {\em An arithmetic for non-size-increasing polynomial-time computation}. {\bf Theoretical Computer Science} 318 (2004), pp. 3-27.

\bibitem{cla4} G. Japaridze. {\em Introduction to clarithmetic I}. {\bf Information and Computation} 209 (2011), pp. 1312-1354.

\bibitem{cl12} G. Japaridze. {\em On the system CL12 of computability logic}. 
{\bf Logical Methods in Computer Science} 11 (2015), Issue 1, Paper 1, pp. 1-71.

\bibitem{cla5} G. Japaridze. {\em Introduction to clarithmetic II}. {\bf Information and Computation} 247 (2016), pp. 290-312.

\bibitem{AAAI} G. Japaridze. {\em Build your own clarithmetic I: Setup and completeness}. arXiv:1510.08564 (2015). 

\end{thebibliography}
\end{document}